\newtheorem{thm}{Theorem}
\newtheorem{lem}[thm]{Lemma}
\newtheorem{cor}[thm]{Corollary}
\newtheorem{claim}{Claim}
\newtheorem{remark}{Remark}
\newtheorem{pro}[thm]{Proposition}
\newcommand{\eps}{\varepsilon}
\newcommand{\tdk}{$2DK$\xspace}
\newcommand{\tdkr}{$2DKR$\xspace}
\newcommand{\epst}{\varepsilon_{box}}
\newcommand{\epss}{\varepsilon_{small}}
\newcommand{\epsl}{\varepsilon_{large}}
\newcommand{\epsb}{\varepsilon_{box}}
\newcommand{\epsr}{\varepsilon_{ring}}
\newcommand{\epsau}{\varepsilon_{ra}}
\newcommand{\apx}{APX}
\newcommand{\opt}{OPT}
\newcommand{\optco}{OPT_{corr}}
\newcommand{\optla}{OPT_{large}}
\newcommand{\optsm}{OPT_{small}}
\newcommand{\optsk}{OPT_{skew}}
\newcommand{\optho}{OPT_{hor}}
\newcommand{\optve}{OPT_{ver}}
\newcommand{\optln}{OPT_{long}}
\newcommand{\optfa}{OPT_{fat}}
\newcommand{\optth}{OPT_{thin}}
\newcommand{\optki}{OPT_{kill}}
\newcommand{\optin}{OPT_{int}}
\newcommand{\optbo}{OPT_{box}}
\newcommand{\optrc}{OPT_{\fontL\&C}}
\newcommand{\Rsm}{I_{small}}
\newcommand{\Rla}{I_{large}}
\newcommand{\Rho}{I_{hor}}
\newcommand{\Rve}{I_{ver}}
\newcommand{\Rsk}{I_{skew}}
\newcommand{\Rin}{I_{int}}
\newcommand{\R}{I}
\renewcommand{\L}{\mathcal{L}}
\newcommand{\fontL}{L}
\newcommand{\T}{OPT_T}
\newcommand{\il}{I_{\mathrm{long}}}
\newcommand{\ilopt}{OPT_{\mathrm{long}}}
\newcommand{\is}{I_{\mathrm{short}}}
\newcommand{\ilong}{\il}
\newcommand{\ishort}{\is}
\newcommand{\isopt}{OPT_{\mathrm{short}}}
\newcommand{\efl}{\varepsilon_{\fontL}}
\newcommand{\ilthin}{\ilopt \cap \T}
\newcommand{\ilfat}{\ilopt \setminus \T}
\newcommand{\isfat}{\isopt \setminus \T}
\newcommand{\isthin}{\isopt \cap \T}
\newcommand{\isfhor}{(\isopt \setminus \T)_{hor}}
\newcommand{\isfver}{(\isopt \setminus \T)_{ver}}
\newcommand{\F}{OPT_F}
\newcommand{\OK}{OPT_K}
\newcommand{\LF}{OPT_{LF}}
\newcommand{\ST}{OPT_{ST}}
\newcommand{\LT}{OPT_{LT}}
\newcommand{\SF}{OPT_{SF}}
\global\long\def\cell{C}
\global\long\def\C{\mathcal{C}}
\global\long\def\K{\mathcal{K}}
\newcommand{\bottomc}{bottom}
\newcommand{\topc}{top}
\newcommand{\leftc}{left}
\newcommand{\rightc}{right}
\newcommand{\height}{h}
\newcommand{\width}{w}
\newcommand{\profit}{p}
\newcommand{\shift}{shift}
\newcommand{\base}{base}
\newcommand{\cT}{{\cal T}}
\newcommand{\cR}{{\cal R}}
\newcommand{\fab}[1]{\textcolor{red}{#1}}
 \newcommand{\sal}[1]{\textcolor{green}{#1}}
 \newcommand{\ari}[1]{\textcolor{blue}{#1}}
 \newcommand{\wal}[1]{\textcolor{magenta}{#1}}
 \newcommand{\andy}[1]{\textcolor{cyan}{#1}}
 \newcommand{\san}[1]{\textcolor{orange}{#1}}
  \def\rem#1{{\marginpar{\raggedright\scriptsize #1}}}
  \newcommand{\fabr}[1]{\rem{\textcolor{red}{$\bullet$ #1}}}
  \newcommand{\salr}[1]{\rem{\textcolor{green}{$\bullet$ #1}}}
  \newcommand{\arir}[1]{\rem{\textcolor{blue}{$\bullet$ #1}}}
  \newcommand{\walr}[1]{\rem{\textcolor{magenta}{$\bullet$ #1}}}
  \newcommand{\andyr}[1]{\rem{\textcolor{cyan}{$\bullet$ #1}}}
  \newcommand{\sanr}[1]{\rem{\textcolor{orange}{$\bullet$ #1}}}
  \newcommand{\fab}[1]{#1}
  \newcommand{\sal}[1]{#1}
  \newcommand{\ari}[1]{#1}
  \newcommand{\wal}[1]{#1}
  \newcommand{\andy}[1]{#1}
  \newcommand{\san}[1]{#1}
  \newcommand{\fabr}[1]{}
  \newcommand{\salr}[1]{}
  \newcommand{\arir}[1]{}
  \newcommand{\walr}[1]{}
  \newcommand{\andyr}[1]{}
  \newcommand{\sanr}[1]{}
\title{Approximating Geometric Knapsack \fab{via L-packings}\thanks{{\fontsize{11}{20}The authors from IDSIA are partially supported by ERC Starting Grant NEWNET 279352 and SNSF Grant APXNET 200021$\_$159697$/$1. 
Arindam Khan is  supported in part by the European Research Council, Grant Agreement No. 691672, the work was primarily done when the author was at IDSIA.
Sandy Heydrich is in part supported by the Google Europe PhD Fellowship.}}}
\author[*]{Waldo G\'alvez}
\author[*]{Fabrizio Grandoni}
\author[**]{Sandy Heydrich}
\author[*]{Salvatore Ingala}
\author[***]{Arindam Khan}
\author[****]{Andreas Wiese}
\affil[*]{IDSIA, USI-SUPSI, Switzerland\\
	\texttt{[waldo,fabrizio,salvatore]@idsia.ch}}
\affil[**]{MPI for Informatics and Saarbr\"ucken Graduate School of Computer Science, Germany,
	\texttt{heydrich@mpi-inf.mpg.de}}
\affil[***]{Department of Computer Science, Technical University of Munich, Munich, Germany,
	\texttt{arindam.khan@in.tum.de}}
\affil[****]{Department of Industrial Engineering and Center for Mathematical Modeling, Universidad de Chile, Chile,
	\texttt{awiese@dii.uchile.cl}}
\date{\empty}
\begin{document}
\maketitle

\thispagestyle{empty}

\begin{abstract}
\noindent We study the two-dimensional geometric knapsack problem (\tdk) in which we
are given a set of $n$ axis-aligned rectangular items, each one with an associated profit, and an axis-aligned square knapsack. The goal is to find a \fab{(non-overlapping) packing} of a maximum profit subset of items inside the knapsack \fab{(without rotating items)}.
The best-known polynomial-time approximation factor for this problem (even just in the cardinality case) is $2+\eps$
{[}Jansen and Zhang, SODA 2004{]}. In this paper we break the $2$ approximation barrier, achieving a polynomial-time $\frac{17}{9}+\eps<1.89$ approximation, which improves to $\frac{558}{325}+\eps<1.72$ in the cardinality case. 

Essentially all prior work on \tdk~approximation packs items inside a constant number of rectangular containers, \fab{where items inside each container are packed  using a simple greedy strategy}.
We deviate for the first time from this setting: we show that there exists a large profit solution where items are packed inside a constant number of containers \emph{plus} one \fontL-shaped region at the boundary of the knapsack 
which contains items that are high and narrow and items that are wide and thin. The items of these two types possibly
interact in a complex manner at the corner of the $\fontL$. 

The above structural result is not enough however: the best-known approximation ratio for the subproblem in the \fontL-shaped 
region is $2+\eps$ \fab{(obtained via a trivial reduction to one-dimensional knapsack by considering tall or wide items only).}\fabr{I think it is important to stress that the best known here is trivial}
Indeed this is one of the simplest special settings of the problem for which this is the best known approximation factor. 
As a second major, \fab{and the main algorithmic} contribution of this paper, we present a PTAS for this case.
We believe that this will turn out to be useful in future work in geometric packing problems. 

We also consider the variant of the problem \emph{with rotations} (\tdkr), where items can be rotated by $90$ degrees. Also in this case the best-known polynomial-time approximation factor (even for the cardinality case) is $2+\eps$
{[}Jansen and Zhang, SODA 2004{]}. Exploiting part of the machinery developed for \tdk plus a few additional ideas, we obtain a polynomial-time $3/2+\eps$-approximation for \tdkr, which improves to $4/3+\eps$ in the cardinality case. 
%
%
%
%
\end{abstract}

\newpage

\setcounter{page}{1}

\section{Introduction}

The \emph{(\fab{two}-dimensional) geometric knapsack} problem (\tdk) is the geometric variant of the classical (\fab{one}-dimensional) knapsack problem. We are given a set of $n$ items $I=\{1,\ldots,n\}$, where each item $i\in I$ is an axis-aligned open rectangle $(0,\width(i))\times (0,\height(i))$ in the \fab{two}-dimensional plane, and has an associated profit $\profit(i)$. Furthermore, we are given an axis-aligned square knapsack $K=[0,N]\times [0,N]$. W.l.o.g. we next assume that all values $\width(i)$, $\height(i)$, $\profit(i)$ and $N$ are positive integers. 
Our goal is to select a subset of items $OPT\subseteq I$ of maximum total profit $opt=\profit(OPT):=\sum_{i\in OPT}\profit(i)$ and to place them so that the selected rectangles are pairwise disjoint and fully contained in the knapsack. More formally, for each $i\in OPT$ we have to define a pair of coordinates $(\leftc(i),\bottomc(i))$ that specify the position of the bottom-left corner of $i$ in the packing. In other words, $i$ is mapped into a rectangle $R(i):=(\leftc(i),\rightc(i))\times (\bottomc(i),\topc(i))$, with $\rightc(i)=\leftc(i)+\width(i)$ and $\topc(i)=\bottomc(i)+\height(i)$. For any two $i,j\in OPT$, we must have $R(i)\subseteq K$ and $R(i)\cap R(j)=\emptyset$.

Besides being a natural mathematical problem, \tdk~is well-motivated by practical applications. For instance, one might want to place advertisements on a board or a website, or cut rectangular pieces from a sheet of some material. Also, it models
a scheduling setting where each rectangle corresponds to a job that
needs some ``consecutive amount'' of a given resource (memory storage, frequencies, etc.). In all these cases, dealing with rectangular shapes only is a reasonable simplification and often the developed techniques can be extended to deal with more general instances.

\tdk is NP-hard \cite{leung1990packing}, and it was intensively studied from the point of view of approximation algorithms. The best known polynomial time approximation
algorithm for it is due to Jansen and Zhang and yields a $(2+\eps)$-approximation~\cite{Jansen2004}.
This is the best known result even in the \emph{cardinality} case (with all profits being $1$).
However, there are reasons to believe that much better polynomial
time approximation ratios are possible: there is a QPTAS under the assumption that  $N=n^{\fab{\mathrm{poly}(\log n)}}$~\cite{adamaszek2015knapsack},
and there are PTASs if the profit of each item equals its area~\cite{bansal2009structural},
if the size of the knapsack can be slightly increased (resource augmentation)~\cite{fishkin2005packing,jansen2007new}, if all items are relatively small \cite{FishkinGJ05}
and if all input items are squares~\cite{jansen2008ipco,HeydrichWiese2017}. \fab{Note that, with no restriction on $N$, the current best approximation for \tdk is $2+\eps$ \emph{even in quasi-polynomial time}}\footnote{\fab{The role of $N$ in the running time is delicate, as shown by recent results on the related \emph{strip packing} problem \cite{AKPP16,GGIK16,HarrenJPS14,JR16,nadiradze2016approximating}}.}.

All prior polynomial-time approximation algorithms for \tdk~implicitly or explicitly exploit a \emph{container-based} packing approach. The idea is to partition the knapsack into a constant number of axis-aligned rectangular regions (\emph{containers}). The sizes (and therefore positions) of these containers can be \emph{guessed} in polynomial time. Then items are packed inside the containers in a simple way: either one next to the other from left to right or from bottom to top (similarly to the one-dimensional case), or by means of the \fab{simple greedy} Next-Fit-Decreasing-Height algorithm. 
Indeed, also the QPTAS in~\cite{adamaszek2015knapsack} can be cast in this framework, with the relevant difference that the number of containers in this case is poly-logarithmic (leading to a quasi-polynomial running time).\fabr{I think here we wish to put the focus on L\&C-packings rather than on N versus n.}

One of the major bottlenecks to achieve approximation factors better than $2$ (in polynomial-time) is that items that are high and narrow (\emph{vertical} items) and items that are wide and thin (\emph{horizontal}
items) can interact in a very complicated way. Indeed, consider the following seemingly simple \fontL\emph{-packing} problem: we are given a set of items $i$ with either $\width(i)>N/2$ (horizontal items) or $\height(i)>N/2$ (vertical items). 
Our goal is to pack a maximum profit subset of them inside an $\fontL$-shaped region $\fontL=([0,N]\times [0,\height_\fontL]) \cup ([0,\width_\fontL]\times [0,N])$, so that horizontal (resp., vertical) items are packed in the bottom-right (resp., top-left) of $\fontL$. 
To the best of our knowledge, the best-known approximation ratio for L-packing is $2+\eps$: Remove either all vertical or all horizontal items, and then pack the remaining items by a simple reduction to one-dimensional knapsack (for which an FPTAS is known).
\fab{It is unclear whether a container-based packing can achieve a better approximation factor, and we conjecture that this is not the case.}\fabr{Is this ok?}
As we will see, a better understanding of L-packing will play a major role in the design of improved approximation algorithms for \tdk.

%
%
%

\subsection{Our contribution}

In this paper we break the $2$-approximation barrier for \tdk. In
order to do that, we substantially deviate for the first time from
\emph{pure} container-based packings, \fab{which are, either implicitly or explicitly, at the hearth of prior work}. Namely,
we consider \emph{L\&C-packings} that combine $O_\eps(1)$ containers
\emph{plus} one L-packing of the above type (\fab{see Fig.\ref{fig:packing+ring}.(a)}), \fab{and show that one such packing has large enough profit.} 

\fabr{I was not happy with the flow of the previous version. I tried another variant. I think the PTAS for L\&C-packing can be left as implicit}
While it is easy to pack almost optimally items into containers, the mentioned $2+\eps$ approximation for L-packings is not sufficient to achieve altogether a better than $2$ approximation factor: indeed, the items of the L-packing might carry all the profit! The \fab{main algorithmic contribution} of this paper is a PTAS for the L-packing problem.\fabr{I stressed that there is some algorithmic content, and also that the hard thing is poly-time} 
\fab{It is easy to solve this problem optimally in pseudo-polynomial time $(Nn)^{O(1)}$ by means of dynamic programming}. We show that a $\fab{1+\eps}$ approximation can be obtained by restricting the top (resp., right) coordinates of horizontal (resp., vertical) items to a proper set that can be
computed in polynomial time \fab{$n^{O_\eps(1)}$}. \fab{Given that, one can adapt the above dynamic program to run in polynomial time.}


\begin{thm}\label{thm:main:Lpacking} There is a PTAS for the L-packing
problem. 
\end{thm}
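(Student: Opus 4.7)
The plan is first to design a pseudo-polynomial-time dynamic program for the L-packing problem, and then to discretize the set of candidate placements into a polynomial-size grid that loses only a $(1+\eps)$ factor in the profit.

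\textbf{Canonical form and pseudo-polynomial DP.} By a standard exchange argument, I would first show that there is an optimal L-packing in \emph{canonical form}: each horizontal item is pushed against the bottom-right of $\fontL$ (right edge at $x=N$, and as low as possible), each vertical item against the top-left (top edge at $y=N$, as left as possible), and moreover the horizontal items are stacked in non-increasing order of width (widest at the bottom) while the vertical items are sorted left to right in non-increasing order of height. In this form, the left profile of the horizontal stack and the bottom profile of the vertical stack are both monotone staircases, and feasibility reduces to a single local condition where these staircases meet at the corner. This suggests a DP that processes horizontal items in decreasing order of width and vertical items in decreasing order of height, with state $(i,j,Y,X)$: $i,j$ count the items already processed in each group, $Y$ is the top of the current horizontal stack, and $X$ is the right edge of the current vertical stack. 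Each transition either discards the next item or places it on top (resp.\ on the right); feasibility at the corner can be checked in $O(1)$ from $(Y,X)$ together with the current item's dimensions. This runs in pseudo-polynomial time $(nN)^{O(1)}$.

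\textbf{A polynomial candidate set.} To obtain a true PTAS, I would construct a set $\cT \subseteq [0,N]$ of candidate top coordinates for horizontal items and a set $\cR \subseteq [0,N]$ of candidate right edges for vertical items, each of size $n^{O_\eps(1)}$. The construction is by iterated enumeration: start with $\{0\}$, and for $O_\eps(1)$ rounds close the set under the operation $t \mapsto t + \height(i)$ over all horizontal items $i$; finally close under snapping to an arithmetic grid of step $\eps \height_\fontL / n$. Since each round multiplies the size by at most $n+1$, the final $\cT$ has size $n^{O_\eps(1)}$, and analogously for $\cR$. Running the DP above with $Y$ restricted to $\cT$ and $X$ restricted to $\cR$ then takes time $n^{O_\eps(1)}$, so it remains to prove that this restriction loses only a $(1+\eps)$ factor.

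\textbf{Snapping lemma.} The key technical step is to show that the optimal canonical solution can be perturbed so that all relevant coordinates lie in $\cT \times \cR$ at the cost of at most an $\eps$-fraction of the profit. I would classify horizontal items as \emph{large} (height $\geq \eps \height_\fontL$) or \emph{small}; there are only $O(1/\eps)$ large items, so by the iterative construction of $\cT$ the top of every large item lies in $\cT$ automatically. The small items together have total height at most $\height_\fontL$, and by a Baker-style shifting argument one can discard an $O(\eps)$-fraction of them (by profit) so that in each of the $O(1/\eps)$ ``slots'' between consecutive large items the remaining small items have tops that may be rounded onto the $\eps \height_\fontL / n$ grid inside $\cT$ at negligible loss. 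The symmetric argument handles vertical items. I expect the main difficulty to be the coupling between horizontal and vertical snapping at the corner: rounding one stack upward can create overlaps with the other stack. I would resolve this by simultaneously shifting both staircases outward by the accumulated rounding error and absorbing the induced loss into the same $\eps$-fraction of discarded small items, which suffices because the shift is at most $\eps \height_\fontL$ on the horizontal side and $\eps \width_\fontL$ on the vertical side. Combining the snapping lemma with the restricted DP yields the claimed PTAS.
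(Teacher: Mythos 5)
Your canonical form, the pseudo-polynomial DP with state $(i,j,Y,X)$, and the overall plan of restricting $Y$ and $X$ to polynomial-size candidate sets $\cT,\cR$ all coincide with the paper (Lemma~\ref{lem:DPrestricted}). The gap is in your snapping lemma, and it sits exactly where the paper's analysis is most delicate. A first, local error: the top coordinate of a large horizontal item is the sum of the heights of \emph{all} items below it in the width-sorted stack, including arbitrarily many small ones, so closing $\{0\}$ under $t\mapsto t+\height(i)$ for $O_\eps(1)$ rounds does not capture it ``automatically''. The fundamental problem is that rounding top coordinates onto a uniform grid of step $\eps\height_\fontL/n$ necessarily moves horizontal items \emph{up} (each item must be re-placed at the next grid point above the rounded top of its predecessor to stay disjoint), and an upward shift of a horizontal item --- like a rightward shift of a vertical item --- can create new overlaps at the corner of the $\fontL$. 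This cannot be charged to an $\eps$-fraction of the profit: take $n$ horizontal items with pairwise distinct heights close to $\eps\height_\fontL/(2n)$, in generic position so that no prefix sum lies in your candidate set. Keeping them disjoint with tops on the grid forces the $i$-th item's top up to at least $i\cdot\eps\height_\fontL/n$, i.e.\ the upper half of the stack rises by $\Omega(\eps\height_\fontL)$, and a single profitable vertical item of height $N-\tfrac34\eps\height_\fontL$ whose right edge exceeds $N-\width(i)$ then collides with a constant fraction of them. Your proposed fix --- ``shifting both staircases outward'' and absorbing the loss in an $\eps$-fraction of discarded small items --- is not a defined operation, and deleting small items of small total \emph{profit} does not create vertical slack \emph{below} the items that must rise; the same example shows that achieving only-downward shifts onto a uniform grid would require deleting essentially every item.

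The paper's resolution is to guarantee that no item ever moves up or to the right (Lemma~\ref{lem:Lpacking:shiftDown}), which makes the corner interaction a non-issue, and this forces the rounding granularity to be tied to item heights rather than to $\height_\fontL$. Concretely, it extracts a growing subsequence $g_1,g_2,\ldots$ of items of non-decreasing height (recursively, over up to $1/\eps$ levels, deleting either an entire cheap level or every $(1/\eps)$-th element of a level), so that every surviving item $i$ has some deleted item $d$ below it with $\height(k)\le\height(d)$ for all $k$ between $d$ and $i$ (Lemma~\ref{lem:propertiesG}). The space $\height(d)$ freed by deleting $d$ absorbs both rounding the baseline up to a multiple of $\height(d)/2$ and rounding the intervening heights up to multiples of $\height(d)/(2n)$, so $\shift(i)\le\topc_B(i)$ always holds while each coordinate becomes a sum of $O_\eps(1)$ input-determined quantities, giving the recursive sets $\cT^r$ of Lemma~\ref{lem:sizeTr}. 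To salvage a uniform-grid approach you would need a genuinely new argument for tolerating upward/rightward shifts at the corner; as written, that step fails.
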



In order to illustrate the power of our approach, we next sketch a simple $\frac{16}{9}+O(\eps)$ approximation for the cardinality case of \tdk (details in Section~\ref{sec:tdk_car:simple}).\fabr{Some discussion on large items helps the reader IMO} By standard arguments\footnote{There can be at most $O_\eps(1)$ such items in any feasible solution, and if the optimum solution contains only $O_\eps(1)$ items we can solve the problem optimally by brute force.} it is possible to discard \emph{large} items with both sides longer than $\eps\cdot N$. The remaining items have height or width smaller than $\eps\cdot N$ (\emph{horizontal} and \emph{vertical} items, resp.). Let us delete all items intersecting a random vertical or horizontal strip of width $\eps\cdot N$ inside the knapsack. We can pack the remaining items into $O_\eps(1)$ containers by exploiting the PTAS under one-dimensional resource augmentation for \tdk in
\cite{jansen2007new}\footnote{Technically this PTAS is not container-based, however in Section \ref{sec:resource-augmentation} we show that it can be cast in that framework. Our version of the PTAS simplifies the algorithms and \fab{works also in the case with rotations}: this might be a handy black-box tool.}. \fab{A vertical strip deletes vertical items with $O(\eps)$ probability, and horizontal ones with probability roughly proportional to their width, and symmetrically for a horizontal strip. In particular, let us call \emph{long} the items with longer side larger than $N/2$, and \emph{short} the remaining items. Then the above argument gives in expectation roughly one half of the profit $opt_{long}$ of long items, and three quarters of the profit $opt_{short}$ of short ones. This is already good enough unless $opt_{long}$ is large compared to $opt_{short}$.}

At this point L-packings and our PTAS come into play. We shift long items such that they form $4$ stacks
at the sides of the knapsack in a \emph{ring-shaped} region, see Fig.\ref{fig:packing+ring}\fab{.(b)-(c)}: this is possible since any vertical long item cannot have
a horizontal long item \emph{both} at its left and at its right, and vice
versa. Next we delete the least profitable of these stacks and rearrange
the remaining long items into an L-packing, \fab{see Fig.\ref{fig:packing+ring}.(d)}. Thus using our PTAS for L-packings, we can compute a solution of profit \fab{roughly three quarters of $opt_{long}$}. The reader might check that the \fab{combination of these two algorithms} gives the claimed approximation factor. 

\fab{Above we used either $O_\eps(1)$ containers or one L-packing: by combining the two approaches together} and with a more sophisticated case analysis we achieve the following result (see Section \ref{sec:tdk_car:refined}).

\begin{thm}\label{trm:tdk_car:refined} There is a polynomial-time
$\frac{558}{325}+\eps<1.72$ approximation algorithm for cardinality \tdk.
\end{thm}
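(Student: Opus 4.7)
The plan is to build on the $\frac{16}{9}+O(\eps)$ sketch above by running several polynomial-time subroutines in parallel and returning the best output, then tightening the analysis with a case split on the distribution of profit across item types.

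After brute-forcing the $O_\eps(1)$ large items (both sides $>\eps N$), I classify the remaining items by orientation and length into four classes: horizontal-long (HL: $w>N/2$, $h\le\eps N$), horizontal-short (HS), vertical-long (VL: $h>N/2$, $w\le\eps N$) and vertical-short (VS). Let $\optln$ be the total profit of HL plus VL in OPT, and $\optsh$ that of HS plus VS. Using Theorem~\ref{thm:main:Lpacking} for $\fontL$-regions and the container-based resource-augmentation PTAS of Section~\ref{sec:resource-augmentation} as the two algorithmic workhorses, I would consider a small family of polynomial-time algorithms whose guarantees can each be written as an explicit linear lower bound in the four class-wise profits. Representative members: (A) strip-removal plus the resource-augmentation PTAS---the random-orientation variant gives $\tfrac{1}{2}\optln+\tfrac{3}{4}\optsh$ as in the $\tfrac{16}{9}$ argument, and deterministic strip choices yield sharper bounds when one of the four classes dominates; (B) pure L-packing of long items: apply the four-stacks shifting argument, discard the least-profitable stack, reassemble the remaining three into an $\fontL$-region, and invoke Theorem~\ref{thm:main:Lpacking} for a guarantee of $\tfrac{3}{4}\optln$; and (C) a combined $L\&C$-packing that guesses an $\fontL$-region at a corner of $K$ together with $O_\eps(1)$ container rectangles in its complement, packs long items inside the $\fontL$ via Theorem~\ref{thm:main:Lpacking}, and packs short items in the complementary containers via resource augmentation.

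The final step reduces to a linear-programming-style argument on the profit simplex: for every nonnegative assignment of the four class-wise profits summing to $\opt$, the maximum of the lower bounds produced by (A), (B) and (C) is shown to be at least $\tfrac{325}{558}\,\opt$. By horizontal/vertical symmetry it suffices to analyse a constant number of representative regimes (long-dominated, short-dominated, balanced). The constant $\tfrac{558}{325}$ then emerges as the tight worst-case ratio, attained at a profile where two or three of the algorithms' bounds meet and no single algorithm is individually better.

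I expect the main obstacle to be constructing the combined $L\&C$-algorithm~(C) with a guarantee strong enough to push the worst-case profile below $\tfrac{558}{325}$. The structural claim to establish is that after shifting OPT's long items into the boundary $\fontL$-region (which may sweep over some short items' positions), a constant fraction of OPT's short items can nevertheless be packed into $O_\eps(1)$ container rectangles inside the complementary rectangle, with only a $(1+\eps)$-loss via resource augmentation. This relies on the fact that the shifted long items form a stripe of small thickness in at least one coordinate (since each long item has short side $\le\eps N$ after removing the large items) and on the $n^{O_\eps(1)}$-time guessability of the finitely many container configurations. Once this lemma is in hand, the remainder is the systematic LP-style optimisation described above, from which $\tfrac{325}{558}$ falls out as the tight ratio.
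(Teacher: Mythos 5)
Your high-level architecture---run a constant family of polynomial-time algorithms, write each guarantee as a linear lower bound in a small number of class-wise profits, and certify the ratio by an LP over the profit simplex---is exactly the paper's (Section~\ref{sec:tdk_car:refined} literally sets up this LP and exhibits dual feasible solutions in Table~\ref{tab:summary}). However, the classification you optimise over is not the one that makes the analysis work. The paper's four quantities are $|\ilthin|,|\ilfat|,|\isthin|,|\isfat|$, i.e.\ long/short crossed with the \emph{thin/fat} distinction coming from the corridor decomposition of Section~\ref{sec:weighted}, not horizontal/vertical crossed with long/short. Thinness is what guarantees (via Lemma~\ref{lem:boxProperties}) that $\isthin$ has total height/width $O(\eps N)$ and can ride along in two narrow containers next to the $\fontL$, and it is what generates the extra inequalities (Lemmas~\ref{lem:onlyFat}--\ref{lem:evenOdd} restated as inequalities (3)--(5) in Section~\ref{sec:bound_apx}). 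With only the orientation-based split, the inequalities you actually justify are those of Theorem~\ref{thm:16/9-apx}, namely $\frac34\,p(\optln)$ and $\frac12 p(\optln)+\frac34 p(\optsh)$, whose best combination is exactly $16/9$; there is no leverage to go below it.

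The second, more serious gap is in your algorithm (C), which is where all the difficulty lives. Your structural claim---that the shifted long items form a stripe of small thickness in one coordinate because each long item has short side $\le\eps N$---is false: the \emph{stack} of long items can have total width or height up to $N/2$, so the boundary $\fontL$ need not be thin and the complementary rectangle can be small. This is precisely why the paper's proof splits into cases on the dimensions $w_{\fontL},h_{\fontL}$ of the $\fontL$-region (Cases 1, 2A, 2B) and, within those, on $a(\isfat)$, and uses Steinberg's theorem (Theorem~\ref{thm:steinberg} and Corollary~\ref{lem:smallStein}) to quantify exactly what fraction of the short fat items fits in the leftover area as a function of $w_{\fontL},h_{\fontL}$. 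The resulting coefficients, e.g.\ $\frac{7}{48}|\isfat|$ and $\frac{5}{36}|\isfat|$ in the hardest sub-cases 2A(iii)a/b, are what the LP needs; the constant $\frac{325}{558}$ is the value of the worst of six explicit case-LPs (attained in case 2A(iii)b), not something that falls out of a generic three-algorithm trade-off. As written, your proposal does not supply any quantitative guarantee for (C), so the final optimisation cannot be carried out.
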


\begin{figure}
\begin{centering}
\includegraphics[height=4cm]{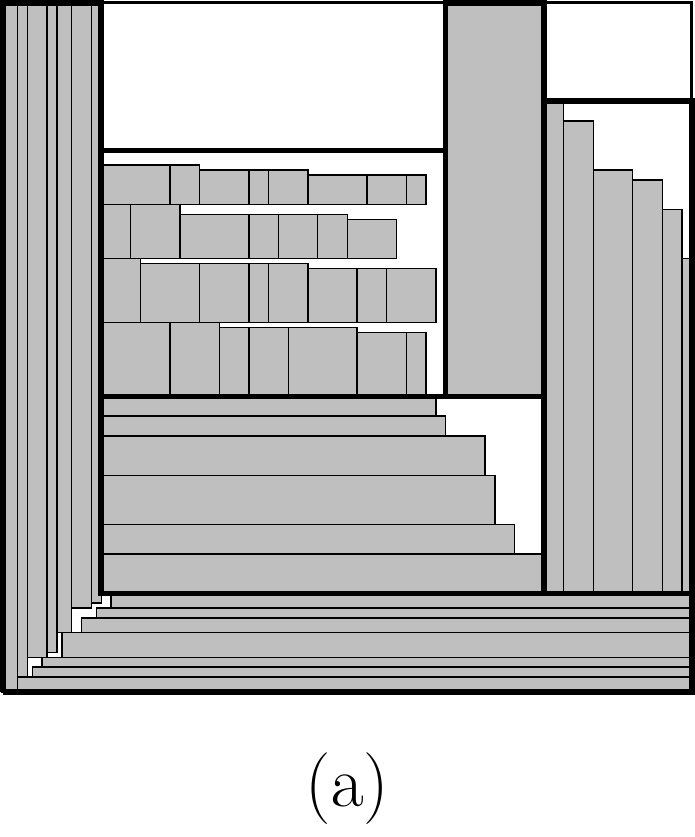}~~~~~~~~~~~~\includegraphics[height=4cm]{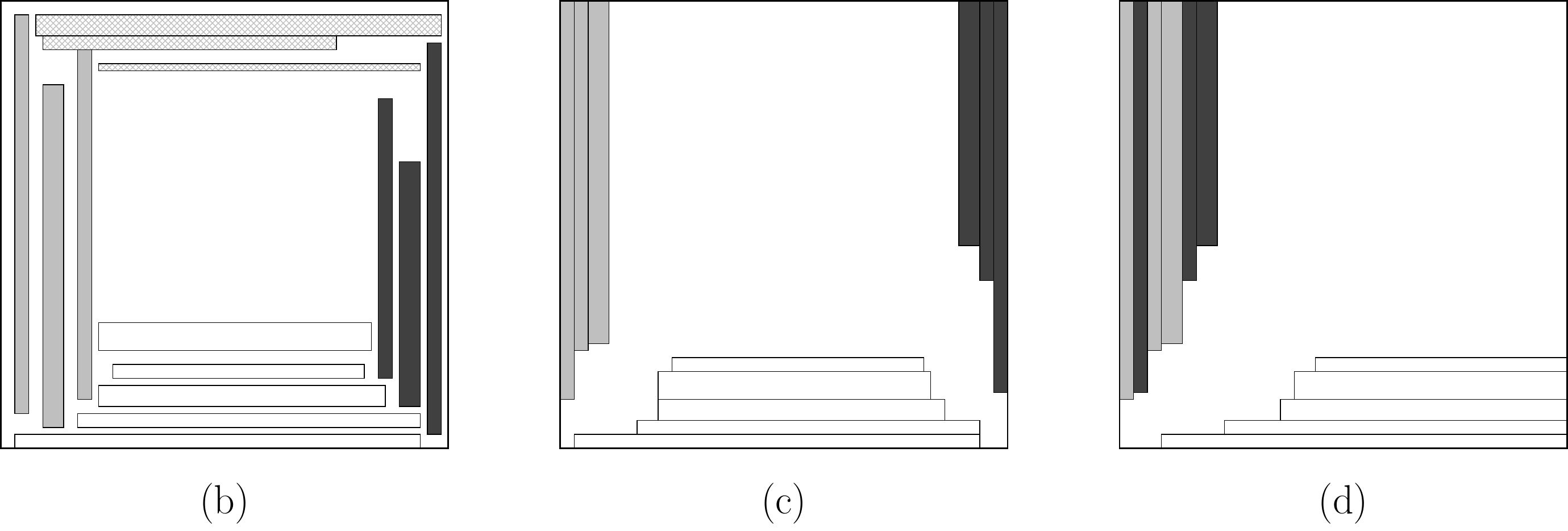}
\par\end{centering}
\caption{\fab{(a) An L\&C-packing with $4$ containers, where the top-left container is packed by means of Next-Fit-Decreasing-Height. (b) A subset of long items. (c) Such items are shifted into $4$ stacks at the sides of the knapsack, and the top stack is deleted. (d) The final packing into an L-shaped region.}}
\label{fig:packing+ring} 
\end{figure}\sanr{Removed the term ``skewed'' from the caption of fig. 1 as this was not introduced. ``Long items'' should be clear enough.}

For weighted \tdk we face severe technical complications for proving
that there is a profitable L\&C-packing. One key reason is that in \fab{the weighted case we cannot discard large items since even one such item might contribute a large fraction to the optimal profit}.
In order to circumvent these difficulties, we
exploit the \emph{corridor-partition} at the hearth of the QPTAS for
\tdk~in \cite{adamaszek2015knapsack} (in turn inspired by prior
work in \cite{AW2013}). Roughly speaking, there exists a partition
of the knapsack into $O_{\eps}(1)$ \emph{corridors}, \fab{consisting of the \emph{concatenation} of $O_{\eps}(1)$ \san{(partially overlapping)} rectangular regions (\emph{subcorridors}).}\fabr{Subcorridors are used later so we need to define them.}
In \cite{adamaszek2015knapsack} the authors partition
the corridors into a \emph{poly-logarithmic} number of containers. Their main algorithm then guesses these containers
in time $n^{\mathrm{poly}(\log n)}$. However, we can only \fab{handle} a \emph{constant} number of containers in polynomial time.
Therefore, we present a different way to partition the corridors into
containers: \fab{here} we lose the profit of a set of \emph{thin} items,
which in some sense play the role of long items in the previous discussion.
These thin items fit in a \emph{very narrow} ring at the boundary of the
knapsack and we map them to an \fontL-packing in the same way as
in the cardinality case above. Some of the remaining non-thin items
are then packed into $O_{\eps}(1)$ containers that are placed in
the (large) part of the knapsack not occupied by the \fontL-packing.
Our partition of the corridors is based on a somewhat intricate case
analysis that exploits the fact that \fab{\emph{long} consecutive subcorridors are arranged in the shape of \emph{rings} or \emph{spirals}: this is used to show the existence of a profitable L\&C-packing.} 
\begin{thm}\label{trm:tdk_weight} There is a polynomial-time
$\frac{17}{9}+\eps<1.89$ approximation algorithm for (weighted) \tdk.
\end{thm}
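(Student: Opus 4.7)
The plan is to prove the following structural result: there exists an $\fontL$\&$C$-packing---an arrangement of $O_\eps(1)$ axis-aligned rectangular containers together with one $\fontL$-shaped region at the boundary of the knapsack---whose total profit is at least $\frac{9}{17}(1-O(\eps))\cdot opt$. Given such a result, the algorithm guesses the container sizes and the parameters of the $\fontL$-region in $n^{O_\eps(1)}$ time, fills the containers using the container-based form of the resource-augmentation PTAS of \cite{jansen2007new} (see Section~\ref{sec:resource-augmentation}), and fills the $\fontL$-region via the PTAS for the $\fontL$-packing problem (Theorem~\ref{thm:main:Lpacking}). Returning the best solution across all guesses yields a $(\frac{17}{9}+O(\eps))$-approximation.

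The first step of the structural proof is to apply the corridor-partition of \cite{AW2013,adamaszek2015knapsack}: losing only an $\eps$-fraction of $opt$, we may assume that every item of $OPT$ lies inside one of $O_\eps(1)$ corridors, each being the concatenation of $O_\eps(1)$ subcorridors. From there I would split the items into two classes: the \emph{thin} items, which are narrow relative to their subcorridor and will be repacked into a narrow ring of width $\eps_{ring}\cdot N$ running along the knapsack boundary, and the remaining items, which will be repacked into $O_\eps(1)$ containers in the complementary interior region. Unlike \cite{adamaszek2015knapsack}, which allows a poly-logarithmic number of containers, our partition must use only $O_\eps(1)$ containers; this becomes feasible precisely because the thin items are \emph{not} packed into containers but instead are pushed out into the ring, which will later be collapsed to an $\fontL$.

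The main technical obstacle is the case analysis justifying that such a decomposition exists. Because corridors live inside an $N\times N$ square and each long subcorridor is essentially a straight rectangular strip, a maximal sequence of long consecutive subcorridors can only be arranged as a \emph{ring} along the boundary of the knapsack or as a \emph{spiral} nesting towards its center. In the ring case, the argument of Section~\ref{sec:tdk_car:simple} already applies: the thin items can be shifted into four stacks at the sides of the knapsack, one of which is discarded at cost at most $\frac{1}{4}$ of their profit, and the surviving three stacks are then repacked as an $\fontL$-packing to which Theorem~\ref{thm:main:Lpacking} applies. The spiral case is more delicate, since all four ``sides'' of each nested turn contribute profit; one must show that, at the cost of a bounded fraction of the spiral's profit, the remaining thin items can be projected onto an $\fontL$-region at the boundary while the non-thin ones collapse into $O_\eps(1)$ containers in the interior. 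This spiral case is where the constant $\frac{17}{9}$ is ultimately pinned down.

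Finally, the approximation ratio is obtained by running, and returning the better of, two algorithms: (a) a purely container-based procedure---random shifting to remove an $\eps$-strip of items followed by the resource-augmentation PTAS---which is competitive when thin items carry little profit; and (b) the $\fontL$\&$C$-based algorithm above, which is competitive when thin items carry most of the profit. A balancing calculation that accounts for the $\frac{3}{4}$ factor lost when deleting one side of the ring/spiral, the $(1\pm\eps)$ loss from the corridor partition, and the loss from shifting in the container-only algorithm yields exactly $\frac{17}{9}+\eps<1.89$. The hardest part, which I expect to dominate the technical work, is the spiral sub-case of the structural theorem: unlike the ring, a spiral can nest up to a constant number of times, and one has to track the profit carefully as successive turns are unwound into the $\fontL$, ruling out configurations in which no single side can be cheaply discarded.
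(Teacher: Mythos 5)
Your overall architecture matches the paper's: a structural lemma asserting an L\&C-packing of profit $(\frac{9}{17}-O(\eps))\cdot opt$, proved via the corridor partition of \cite{adamaszek2015knapsack}, with thin items pushed into a narrow boundary ring that is collapsed to an $\fontL$ (losing a $\frac14$ fraction of one stack) and the rest packed into $O_{\eps}(1)$ containers; the algorithm then guesses the containers and combines the container-packing PTAS with the $\fontL$-packing PTAS. However, there are two genuine gaps. First, the case analysis that pins down $\frac{17}{9}$ is not a ring-versus-spiral dichotomy on long subcorridors. In the paper, the spiral/ring structure (every corridor has a boundary subcorridor or the center of a $U$-bend, and the two sides of a $Z$-bend cannot both be long) is used only to certify that certain \emph{processing orders} of the subcorridors are feasible. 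The constant comes from considering seven different such orders (deleting short vertical vs.\ short horizontal subcorridors, even vs.\ odd short subcorridors, etc.), which induce different thin/fat partitions; one then derives four lower bounds on the L\&C profit in terms of the quantities $p(OPT_{LF})$, $p(OPT_{LT})$, $p(OPT_{SF})$, $p(OPT_{ST})$ (fat/thin items in long/short subcorridors) and solves the resulting LP, whose worst case ($p(OPT_{LT})=p(OPT_{SF})=p(OPT_{ST})$, $p(OPT_{LF})=\frac54 p(OPT_{LT})$) gives $\frac{9}{17}$. Your picture of ``unwinding successive turns of a spiral into the $\fontL$'' does not correspond to any step of the argument: only thin items, whose total width/height is at most $\eps_{ring}N$ by construction, ever enter the $\fontL$, and they do so via the four-stack shifting argument of Lemma~\ref{lem:LoftheRing} independently of whether the corridor was a ring or a spiral. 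Running only your two algorithms (a) and (b) would reproduce the cardinality-style $\frac{16}{9}$ trade-off at best, not $\frac{17}{9}$.

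Second, and more fundamentally for the \emph{weighted} case, you assert that ``losing only an $\eps$-fraction of $opt$, we may assume that every item of $OPT$ lies inside one of $O_\eps(1)$ corridors.'' This is exactly what fails in the weighted setting: the corridor lemma leaves $O_{\eps}(1)$ items crossing corridor boundaries, the box construction kills another $O_{\eps}(1)$ items, and large items cannot be discarded at all--and each of these constantly many items may individually carry an arbitrarily large fraction of the profit. The paper spends an entire section on an iterative shifting argument (building disjoint sets $K(0),K(1),\dots,K(k)$ with $k\le 1/\eps$, refining a non-uniform grid through the guessed high-profit items at each round, and re-running the corridor/box/container decomposition relative to that grid) precisely to remove this assumption. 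Without some substitute for this machinery your structural claim only holds for instances where the killed, crossing, and large items have negligible profit, which is the unweighted regime. (You also omit the treatment of small items, which must be inserted by NFDH into the area left free by the containers, but that is a comparatively routine omission.)
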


\paragraph{Rotation setting.}

In the variant of \tdk \emph{with rotations} (\tdkr),
we are allowed to rotate any rectangle $i$ by $90$ degrees. This
means that $i$ can also be placed in the knapsack as a rectangle
of the form $(\leftc(i),\leftc(i)+\height(i))\times(\bottomc(i),\bottomc(i)+\width(i))$.
The best known polynomial time approximation factor for \tdkr (even
for the cardinality case) is again $2+\eps$ due to~\cite{Jansen2004}
and the mentioned QPTAS in \cite{adamaszek2015knapsack} works also
for this case.

By using the techniques described above and exploiting a
few more ideas, we are also able to improve the approximation factor
for \tdkr~(see Sections \ref{sec:cardRot} and \ref{sec:weightedRot}
for the cardinality and general case, resp.). The basic idea
is that any thin item can now be packed inside a narrow vertical strip
(say at the right edge of the knapsack) by possibly rotating it. This
way we do not lose one quarter of the profit due to the mapping to
an \fontL-packing and instead place all items from the ring into
the mentioned strip (while we ensure that their total width is small).
The remaining short items are packed by means of a novel \emph{resource
contraction} lemma: unless there is one \emph{huge item} that occupies almost
the whole knapsack (a case that we consider separately), we can pack
almost one half of the profit of non-thin items in a \emph{reduced}
knapsack where one of the two sides is shortened by a factor $1-\eps$
(hence leaving enough space for the vertical strip). We remark that
here we heavily exploit the possibility to rotate items. Thus, roughly
speaking, we obtain either all profit of non-thin items, or all profit
of thin items plus one half of the profit of non-thin items:
this gives a $3/2+\eps$ approximation. A further refinement of this
approach yields a $4/3+\eps$ approximation in the cardinality case.
We remark that, while resource augmentation is a well-established
notion in approximation algorithms, resource contraction seems to
be a rather novel direction to explore.

\begin{thm}\label{thm:mainNoRotation} For any constant $\eps>0$,
there exists a polynomial-time $\frac{3}{2}+\eps$ approximation algorithm
for \tdkr. In the cardinality case the approximation factor can be
improved to $\frac{4}{3}+\eps$. \end{thm}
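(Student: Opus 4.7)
The plan is to build on the L\&C-packing machinery from Theorem~\ref{trm:tdk_weight} and substitute two steps by rotation-aware alternatives: the L-shaped boundary region is replaced by a cleaner vertical strip, and the container packing is complemented by a new \emph{resource contraction} lemma. Starting from the corridor decomposition used in the proof of Theorem~\ref{trm:tdk_weight}, I would isolate a near-optimal subsolution that splits into \emph{thin} items $\optth$ (living in an $\eps N$-wide ring along the boundary) and non-thin items (packable into $O_\eps(1)$ containers in the complementary region). The key new observation is that every thin item has one side of length at most $O(\eps)\,N$, so with rotations allowed each can be reoriented to have its shorter side horizontal; by a one-dimensional reduction (FPTAS for 1D-knapsack), the entire $\optth$ then fits in a single vertical strip of width $\eps N$ along the right edge of the knapsack, recovering $(1-\eps)\,p(\optth)$ of profit---strictly better than the $3/4$-factor paid in the no-rotation setting via the L-packing.

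The main new ingredient is the resource-contraction lemma: unless there exists a single \emph{huge} item of size at least $(1-O(\eps))N$ on both sides, the non-thin items admit a packing in the contracted knapsack $N \times (1-\eps)N$ of profit at least $\tfrac{1}{2}(1-O(\eps))$ times the total non-thin profit. I would prove it by starting from an optimal packing of the non-thin items in the full knapsack, locating a horizontal $\eps N$-tall strip that carries at most half the non-thin profit, deleting the items fully inside the strip, and using the absence of a huge item to rotate or shift the boundary-crossing items so that the whole solution slides into the contracted knapsack. The algorithm then enumerates the potential huge item (handled in a separate branch, since its complement is essentially a thin L-shaped strip for which the problem reduces to one-dimensional knapsack) and, otherwise, returns the better of two solutions: (a) a near-optimal rotation-aware container packing of all non-thin items in the full knapsack, of profit at least $(1-O(\eps))$ times the non-thin profit, obtained via the rotation-friendly version of the resource-augmentation PTAS alluded to in a footnote of the introduction; and (b) the strip packing of $\optth$ combined with the resource-contracted packing of non-thin items. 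Writing $T = p(\optth)/p(\opt)$, the two solutions achieve ratios $(1-T)$ and $(1+T)/2$ of $p(\opt)$ respectively; the maximum of the two is minimized at $T=1/3$, yielding the claimed $3/2+\eps$ bound.

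For the cardinality case I would sharpen the contraction lemma to recover a $(2/3-O(\eps))$ fraction of the non-thin cardinality, by partitioning the boundary of the knapsack into three disjoint $\eps N$-wide strips on three distinct sides and discarding the one with the smallest number of items; in the cardinality setting this three-way split is safe, whereas in the weighted case a single expensive item could prevent it. Balancing $(1-T)$ against $T+2(1-T)/3 = (2+T)/3$ yields the worst case at $T=1/4$ and ratio $4/3+\eps$. The main obstacle throughout is the resource-contraction lemma itself: it requires handling items whose long side is horizontal and those whose long side is vertical simultaneously, choosing rotations consistently so that the shifted packing remains feasible, and cleanly isolating the huge-item exception, since a huge item fills almost the entire knapsack and leaves no slack for the contraction shift.
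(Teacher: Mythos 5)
Your high-level architecture is exactly the paper's: rotate the thin items of the corridor decomposition into a single narrow boundary strip (recovering essentially all of $p(\optth)$ rather than the $3/4$ paid by the L-packing), prove a resource contraction lemma for the non-thin items, treat the massive/huge item as a separate branch, and balance $(1-T)$ against $(1+T)/2$ (resp.\ $(2+T)/3$) to get $2/3$ (resp.\ $3/4$) of the optimum. The gap is in the one step you yourself flag as the main obstacle: your proof of the resource contraction lemma does not work. You propose to locate a horizontal strip of height $\eps N$ carrying at most half the non-thin profit, delete its items, and slide the rest down. But non-thin items can be tall and wide simultaneously: take three items of size $0.3N\times 0.9N$ packed side by side. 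Every horizontal strip of height $\eps N$ in the middle of the knapsack is crossed by all three, so \emph{every} strip carries all of the profit and no cheap strip exists; moreover the crossing items cannot be "rotated or shifted" into the freed space in any generic way. The paper's proof is structurally different: it \emph{partitions} the non-thin items into two sets $M_1$ and $M\setminus M_1$ and shows that \emph{each} set separately repacks into the contracted bin, via a case analysis on whether some item crosses both the top and bottom $\Theta(\eps)N$-strips, whether that item also touches the left/right strips, etc., rotating only the items \emph{fully contained} in thin boundary strips (whose total width after rotation is provably small) and restacking them. The factor $1/2$ then comes from taking the more profitable of the two parts, not from a strip-averaging argument.

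The cardinality refinement has the same defect. Discarding the cheapest of three $\eps N$-wide boundary strips does not give a $2/3$ fraction: a single item can intersect all three strips (e.g.\ a full-height item at the left edge meets the left, top and bottom strips), so the averaging over three strips does not bound the loss by one third of the items; and deleting only items \emph{intersecting} a boundary strip is what one needs for contraction, which makes the double counting worse, not better. The paper instead first prunes items by height classes (Lemma~\ref{lem:medWeight}), uses an area argument (Lemma~\ref{lem:stripHalf}) to find a pair of opposite strips whose crossing items have total area at most roughly $N^2/2$, and then combines Steinberg's theorem with a case analysis on the total width of items crossing both the top and bottom strips to reach the $2/3(1-O(\eps))$ bound. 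Finally, your massive-item branch is under-specified: reducing the complement of the huge item to one-dimensional knapsack does not by itself yield the $2/3-O(\eps)$ guarantee that the overall balancing requires; the paper obtains it by a three-way set cover argument (each surviving item lies in at least two of three candidate packings). So the statement is true and your skeleton is the right one, but the two contraction arguments as proposed would fail and need to be replaced by the partition-and-case-analysis proofs.
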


\subsection{Other related work}

The mentioned $(2+\eps)$-approximation for two-dimensional knapsack~\cite{Jansen2004}
works in the weighted case of the problem. However, in the unweighted
case a simpler $(2+\eps)$-approximation is known~\cite{jansen2004maximizing}.
If one can increase the size of the knapsack by a factor $1+\eps$
in both dimensions then one can compute a solution of optimal weight,
rather than an approximation, in time $f(1/\eps)\cdot n^{O(1)}$
where the exponent of $n$ does not depend on $\eps$~\cite{HeydrichWiese2017}
(for some suitable function $f$). Similarly, for the case of squares
there is a $(1+\eps)$-approximation algorithm known with such
a running time, i.e., an EPTAS~\cite{HeydrichWiese2017}. This improves
previous results such as a $(5/4+\eps)$-approximation~\cite{Harren06}
and the mentioned PTAS~\cite{jansen2008ipco}. Two-dimensional knapsack
is the separation problem when we want to solve the configuration-LP
for two-dimensional bin-packing. Even though we do not have a PTAS
for the former problem, Bansal et al.~\cite{bansal2009structural}
show how to solve the latter LP to an $(1+\eps)$-accuracy using
their PTAS for two-dimensional knapsack for the special case where
the profit of each item equals its area. The best known (asymptotic) result
for two-dimensional bin packing is due to Bansal and Khan and it is
based on this configuration-LP, achieving an approximation ratio of
$1.405$~\cite{bansal2014binpacking} which improves a series of
previous results \cite{jansen2007new,bansal2009new,caprara2002packing,kenyon2000near,chung1982packing}. See also the recent survey in \cite{CKPT17} and \cite{Khan16}.



\section{A PTAS for \fontL-packings}
\label{sec:ptasL}

In this section we present a PTAS for the problem of finding an optimal \fontL-packing. In this problem we are given a set of \emph{horizontal} items $I_{hor}$ with width larger than $N/2$, and a set of \emph{vertical} items $I_{ver}$ with height larger than $N/2$. Furthermore, we are given an \fontL-shaped region $\fontL=([0,N]\times [0,\height_\fontL]) \cup ([0,\width_\fontL]\times [0,N])$. Our goal is to pack a  subset $OPT\subseteq I:=I_{hor}\cup I_{ver}$ of maximum total profit $opt=\profit(OPT):=\sum_{i\in OPT}\profit(i)$, such that $OPT_{hor}:=OPT\cap I_{hor}$ is packed inside the \emph{horizontal box} $[0,N]\times [0,\height_\fontL]$ and $OPT_{ver}:=OPT\cap I_{ver}$ is packed inside the \emph{vertical box} $[0,\width_\fontL]\times [0,N]$. We remark that packing horizontal and vertical items independently is not possible due to the possible overlaps in the intersection of the two boxes: this is what makes this problem non-trivial, in particular harder than standard (one-dimensional) knapsack.

Observe that in an optimal packing we can assume w.l.o.g. that items in $OPT_{hor}$ are pushed as far to the right/bottom as possible. Furthermore, the items in $OPT_{hor}$ are packed from bottom to top in non-increasing order of width. Indeed, it is possible to \emph{permute} any two items violating this property while keeping the packing feasible. A symmetric claim holds for $OPT_{ver}$. See Fig. \ref{fig:packing+ring}.(d) for an illustration.

Given the above structure, it is relatively easy to define a dynamic program (DP) that computes an optimal L-packing in pseudo-polynomial time $(Nn)^{O(1)}$. The basic idea is to scan items of $I_{hor}$ (resp. $I_{ver}$) in decreasing order of width (resp., height), and each time \emph{guess} if they are part of the optimal solution $OPT$. At each step either both the considered horizontal item $i$ and vertical item $j$ are not part of the optimal solution, or there exist a \emph{guillotine cut}\footnote{A guillotine cut is an \san{infinite, axis-parallel} line $\ell$ that partitions the items in a given packing in two subsets without intersecting any item.} separating $i$ or $j$ from the rest of $OPT$. Depending on the cases, one can define a smaller L-packing sub-instance (among $N^2$ choices) for which the DP table already contains a solution.

In order to achieve a $(1+\eps)$-approximation in polynomial time $n^{O_\eps(1)}$, we show that it is possible (with a small loss in the profit) to restrict the possible top coordinates of $OPT_{hor}$ and right coordinates of $OPT_{ver}$ to proper polynomial-size subsets $\cT$ and $\cR$, resp. We call such an L-packing \emph{$(\cT,\cR)$-restricted}. By adapting the above DP one obtains:   


\begin{lem}\label{lem:DPrestricted}
An optimal $(\cT,\cR)$-restricted \fontL-packing can be computed in time polynomial in $m:=n+|\cT|+|\cR|$ using dynamic programming.
\end{lem}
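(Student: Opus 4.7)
The plan is to adapt the pseudo-polynomial $(Nn)^{O(1)}$ DP sketched just before the lemma so that all coordinates appearing in its state are drawn from the polynomial-size sets $\cT$ and $\cR$. First I would sort $I_{hor}$ in non-increasing order of width as $h_1,\ldots,h_p$ and $I_{ver}$ in non-increasing order of height as $v_1,\ldots,v_q$; the permutation argument recalled just above the lemma guarantees that an optimal L-packing respects these two orders, so both stacks can be grown one item at a time in index order.

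The DP state would be a tuple $(a,b,\tau,\rho)$ with $a\in\{1,\ldots,p+1\}$, $b\in\{1,\ldots,q+1\}$, $\tau\in\cT\cup\{0\}$, and $\rho\in\cR\cup\{0\}$. The entry $f(a,b,\tau,\rho)$ stores the maximum profit of a $(\cT,\cR)$-restricted L-packing of $\{h_a,\ldots,h_p\}\cup\{v_b,\ldots,v_q\}$ into the sub-L obtained from $\fontL$ by chopping off the bottom $\tau$-high strip and the left $\rho$-wide strip, which represent the regions already occupied by items placed earlier in the recursion. Transitions at $(a,b,\tau,\rho)$ mirror the three cases of the pseudo-polynomial DP: (i) discard $h_a$ and recurse on $(a+1,b,\tau,\rho)$; (ii) discard $v_b$ and recurse on $(a,b+1,\tau,\rho)$; (iii) place $h_a$ at the bottom of the current horizontal strip, pick the top coordinate $\tau'\in\cT$ at which the horizontal guillotine cut separating $h_a$ from the rest occurs, gain $\profit(h_a)$ and recurse on $(a+1,b,\tau',\rho)$ provided the $O(1)$-checkable no-overlap condition with the vertical stack holds; (iv) symmetrically, place $v_b$ on the left of the vertical strip with right coordinate $\rho'\in\cR$ and recurse on $(a,b+1,\tau,\rho')$. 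Correctness is inherited from the pseudo-polynomial argument: in any $(\cT,\cR)$-restricted optimum, the cut witnessing each step of the recursion lies at a coordinate of $\cT$ or $\cR$ by the very definition of the restriction, so restricting the choice of $\tau'$ and $\rho'$ to these sets loses nothing.

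There are $O(n^2\cdot|\cT|\cdot|\cR|)=O(m^4)$ states, and each transition enumerates at most $O(m)$ choices for the next cut coordinate, giving total running time $O(m^5)$, which is polynomial in $m$. The main subtlety, and the step I would be most careful about, is verifying that the clean three-case dichotomy (skip both / separate $h_a$ by a horizontal cut / separate $v_b$ by a vertical cut) from the pseudo-polynomial DP still exhausts the optimum once the coordinates are forced into $\cT\cup\cR$; this is precisely where the definition of a $(\cT,\cR)$-restricted packing is used, and it reduces the argument to quoting the guillotine-cut decomposition of the unrestricted pseudo-polynomial DP.
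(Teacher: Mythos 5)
Your proposal is correct and follows essentially the same route as the paper: the same item orderings, the same four-index DP state $(a,b,\tau,\rho)$ over $\{1,\dots,n\}^2\times\cT\times\cR$ describing the residual L-region, the same four transitions, and the same guillotine-cut dichotomy (if neither item is skipped, either the line through the top of $h_a$ misses the vertical stack or the line through the right of $v_b$ misses the horizontal stack, since the two items cannot overlap) to justify exhaustiveness. The only cosmetic difference is that you enumerate the next cut coordinate $\tau'$ (giving $O(m^5)$) where the paper takes the minimum feasible one (giving $O(m^4)$); both are polynomial in $m$.
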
 

\begin{proof}
For notational convenience we assume $0\in \cT$ and $0\in \cR$.\fabr{Removed $\height_L\in \cT$ and $\width_L\in \cR$ not needed. Right?} 
\fabr{Switch to a and b, so you can use h for height}
Let $h_1,\ldots,h_{n(h)}$ be the items in $I_{hor}$ in decreasing order of width and $v_1,\ldots,v_{n(v)}$ be the items in $I_{ver}$ in decreasing order of height (breaking ties arbitrarily). For $\width\in [0,\width_\fontL]$ and $\height\in [0,\height_\fontL]$, let $L(\width,\height)=([0,\width]\times [0,N])\cup ([0,N]\times [0,\height])\subseteq \fontL$. Let also $\Delta \fontL(\width,\height)=([\width,\width_\fontL]\fab{\times} [\height,N])\fab{\cup} ([\width,N]\times [\height,\height_\fontL])\subseteq \fontL$. Note that $\fontL=\fontL(\width,\height)\cup \Delta \fontL(\width,\height)$.

We define a dynamic program table \fab{$DP$} indexed by $i\in [1,n(h)]$ and $j\in [1,n(v)]$, by a top coordinate $t \in \cT$, and a right coordinate $r \in \cR$. The value of $DP(i,t,j,r)$ is the maximum profit of a $(\cT,\cR)$-restricted packing of a subset of $\{h_i,\ldots,h_{n(h)}\}\cup \{v_j,\ldots,v_{n(v)}\}$ inside $\Delta \fontL(\fab{r,t})$. The value of $DP(1,0,1,0)$ is the value of the optimum solution we are searching for. 
Note that the number of table entries is upper bounded by $m^4$.

We fill in $DP$ according to the partial order induced by vectors $(i,t,j,r)$, processing larger vectors first. The base cases are given by 
$(i,j)=(n(h)+1,n(v)+1)$ and $(\fab{r,t})=(\width_\fontL,\height_\fontL)$, in which case the table entry has value $0$.\fabr{Check base cases}

In order to compute any other table entry $DP(i,t,j,r)$, with optimal solution $OPT'$, we take the maximum of the following few values:
\begin{itemize}\itemsep0pt
\item[$\bullet$] If $i\leq n(h)$, the value $DP(i+1,t,j,r)$. This covers the case that $h_i\notin OPT'$;
\item[$\bullet$] If $j\leq n(v)$, the value $DP(i,t,j+1,r)$. This covers the case that $v_j\notin OPT'$;
\item[$\bullet$] Assume that there exists $t'\in \cT$ such that $t'-\height(h_i)\geq t$ and that $\width(h_i)\leq N-r$. Then for the minimum such $t'$ we consider the value $\profit(h_i)+DP(i+1,t',j,r)$. This covers the case that $h_i\in OPT'$, and there exists a (horizontal) guillotine cut separating $h_i$ from $OPT'\setminus \{h_i\}$.
\item[$\bullet$] Assume that there exists $r'\in \cR$ such that $r'-\width(v_j)\geq r$ and that $\height(v_j)\leq N-t$. Then for the minimum such $r'$ we consider the value $\profit(v_j)+DP(i,t,j+1,r')$. This covers the case that $v_j\in OPT'$, and there exists a (vertical) guillotine cut separating $v_j$ from $OPT'\setminus \{v_j\}$.
\end{itemize}
We observe that the above cases (which can be explored in polynomial time) cover all the possible configurations in $OPT'$. Indeed, if the first two cases do not apply, we have that $h_i,v_j\in OPT'$. Then either the line containing the right side of $v_j$ does not intersect $h_i$ (hence any other item in $OPT'$) or the line containing the top side of $h_i$ does not intersect $v_j$ (hence any other item in $OPT'$). Indeed, the only remaining case is that $v_j$ and $h_i$ overlap, which is impossible since they both belong to $OPT'$.  
\end{proof}

We will show that there exists a $(\cT,\cR)$-restricted \fontL-packing with the desired properties. 
\begin{lem}\label{lem:Lpacking:structural}
There exists a $(\cT,\cR)$-restricted \fontL-packing solution of profit at least $(1-2\eps)opt$, where the sets $\cT$ and $\cR$ have cardinality at most $n^{O(1/\eps^{1/\eps})}$ and can be computed in polynomial time based on the input (without knowing $OPT$).
\end{lem}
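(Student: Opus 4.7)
The plan is to construct the sets $\cT$ and $\cR$ explicitly from the input (independently of $OPT$), and to show via a recursive shifting-and-discarding argument that any optimal \fontL-packing can be modified into a $(\cT, \cR)$-restricted one of profit at least $(1-2\eps)\,opt$, with the argument applied separately to horizontal and vertical items (each losing $\eps\cdot opt$).

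For the construction, let $L := \lceil 1/\eps \rceil^{\lceil 1/\eps \rceil}$ and define
\[
\cT := \{0,\,\height_\fontL\} \;\cup\; \Bigl\{\textstyle\sum_{j=1}^{s} \height(h^{(j)}) \,:\, 1 \leq s \leq L,\ h^{(1)},\ldots,h^{(s)} \in I_{hor},\ \sum_{j=1}^{s} \height(h^{(j)}) \leq \height_\fontL\Bigr\},
\]
and $\cR$ symmetrically via widths of items in $I_{ver}$. Then $|\cT|,|\cR| \leq 2 + n^L = n^{O((1/\eps)^{1/\eps})}$, and both sets can be enumerated in polynomial time for $\eps$ fixed.

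For the existence claim I process $OPT_{hor}$ and $OPT_{ver}$ independently. For $OPT_{hor}$ sorted by decreasing width $h_1,\ldots,h_m$, the goal is to select $OPT'_{hor}\subseteq OPT_{hor}$ of size at most $L$ with $\profit(OPT_{hor}\setminus OPT'_{hor}) \leq \eps\cdot opt$: repacking $OPT'_{hor}$ bottom-to-top in the decreasing-width order then yields top coordinates that are partial sums of at most $L$ heights of items from $I_{hor}$, hence in $\cT$. To produce $OPT'_{hor}$, I would run a $1/\eps$-deep recursive partition of $OPT_{hor}$: at each level the current contiguous sub-sequence is split into $1/\eps$ pieces of (approximately) equal profit by promoting $1/\eps-1$ \emph{landmark} items to $OPT'_{hor}$, and the recursion descends into each piece with depth reduced by one. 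Across $1/\eps$ levels this retains $O((1/\eps)^{1/\eps}) = L$ landmarks in total. An analogous procedure yields $OPT'_{ver}$.

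The main obstacle is bounding the profit loss by $\eps\cdot opt$: naively discarding every non-landmark item at the deepest recursion level can cost essentially all of $opt$ when item profits are spread uniformly. The fix is a more delicate base case that, in each atomic sub-sub-sequence, either promotes additional large-profit items to landmarks (when profit is concentrated in $O(1/\eps)$ items) or invokes a separate rounding-by-height argument (when profit is spread across many tiny-height items, which can be handled by a standard one-dimensional knapsack FPTAS-style argument inside a constant-height slab). Feasibility of the resulting $(\cT,\cR)$-restricted \fontL-packing is the easier part: removing items from $OPT_{hor}$ and restacking the survivors no higher than their original positions cannot create new conflicts with $OPT_{ver}$ in the corner region, and symmetrically for $OPT_{ver}$. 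This independence justifies combining the two $\eps\cdot opt$ losses into a total of $2\eps\cdot opt$ as claimed.
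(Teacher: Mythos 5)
Your construction of $\cT$ and the overall shape of the argument (process $OPT_{hor}$ and $OPT_{ver}$ independently, use a depth-$1/\eps$ recursion with $O((1/\eps)^{1/\eps})$ distinguished items, and observe that shifting survivors downward cannot create conflicts at the corner) are in the right spirit, and the feasibility part is correct. But there is a genuine gap at exactly the crux of the lemma, and you have flagged it yourself without closing it. Your plan keeps only $L=O_\eps(1)$ ``landmark'' items and needs the discarded profit to be at most $\eps\cdot opt$; as you note, this fails whenever profit is spread over many items. The proposed fix --- a ``delicate base case'' that either promotes more landmarks or does a ``rounding-by-height argument inside a constant-height slab'' --- is not an argument. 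First, your $\cT$ consists only of \emph{exact} sums of at most $L$ item heights, so it contains none of the rounded values that any such discretization would produce; the set would have to be redefined, and it is not clear how while keeping it polynomial. Second, the dichotomy ``profit concentrated in $O(1/\eps)$ items'' versus ``profit spread over tiny-height items'' is false: the items arrive in a \emph{forced} order (non-increasing width), which is not sorted by height, so a block carrying spread-out profit can consist of items of large and wildly varying heights. There is then no single granularity at which to round: rounding relative to the tallest item in the block is too coarse for the short ones, and rounding relative to the short ones yields super-polynomially many candidate sums once tall items are interleaved.

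The paper's proof resolves precisely this point with a structural device your proposal lacks: the \emph{growing subsequence} of record-height items. Between two consecutive records $g_j,g_{j+1}$ every item has height at most $\height(g_j)$, so the heights of all items in that block can be rounded up to multiples of $\height(g_j)/(2n)$ at a total cost of at most $\height(g_j)/2$, which is absorbed by snapping the block's baseline to a multiple of $\height(g_j)/2$ \emph{below} its original position --- this is what guarantees both a polynomial-size candidate set and that no item moves up. The slack needed for this is created by deleting the record items themselves; since one cannot afford to delete all of them, the paper deletes only every $(1/\eps)$-th record (cheapest residue class) and recurses on the blocks, choosing the recursion depth $r_{hor}\le 1/\eps$ at which the residual growing subsequences are cheap. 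Your recursion promotes landmarks to be \emph{kept}, whereas the paper's recursion selects record items to be \emph{deleted} in order to create rounding slack; without that mechanism (or an equivalent one) the profit bound and the polynomial bound on $|\cT|$ cannot both be achieved, so the proof as proposed does not go through.
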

Lemmas \ref{lem:DPrestricted} and \ref{lem:Lpacking:structural} together immediately imply a PTAS for L-packings (showing Theorem \ref{thm:main:Lpacking}). The rest of this section is devoted to the proof of Lemma \ref{lem:Lpacking:structural}.

\begin{figure*}
\begin{centering}
\includegraphics[height=4cm]{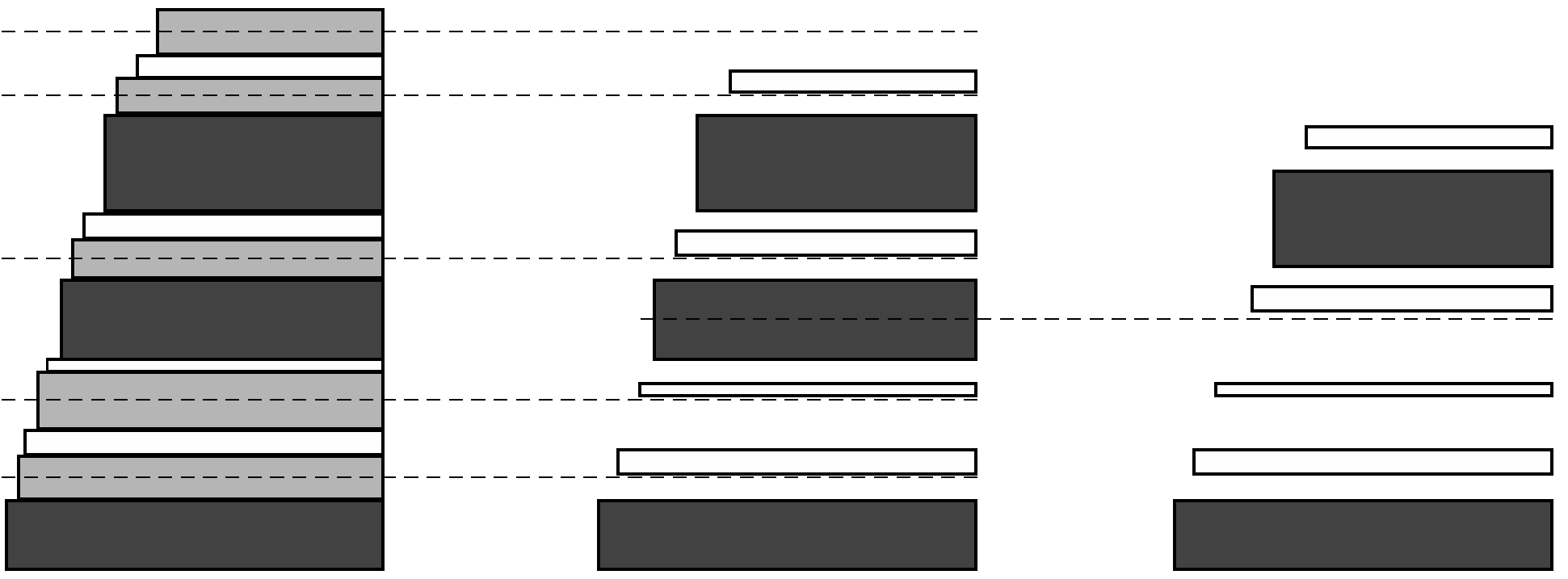}~~~~~~~~~~~~%
\par\end{centering}
\caption{Illustration of the {\tt delete\&shift} procedure with $r_{hor}=2$. The dashed lines indicate the value of the new baselines in the different stages of the algorithm. (Left) The starting packing. Dark and light grey items denote the growing sequences for the calls with $r=2$ and $r=1$, resp. (Middle) The shift of items at the end of the recursive calls with $r=1$. Note that light grey items are all deleted, and dark grey items are not shifted. (Right) The shift of items at the end of the process. Here we assume that the middle dark grey item is deleted.}\label{fig:Lpacking}
\end{figure*}

We will describe a way to delete a subset of items $D_{hor}\subseteq OPT_{hor}$ with $\profit(D_{hor})\leq 2\eps\profit(OPT_{hor})$, and \emph{shift down} the remaining items $OPT_{hor}\setminus D_{hor}$ so that their top coordinate belongs to a set $\cT$ with the desired properties. Symmetrically, we will delete a subset of items $D_{ver}\subseteq OPT_{ver}$ with $\profit(D_{ver})\leq 2\eps\profit(OPT_{ver})$, and \emph{shift to the left} the remaining items $OPT_{ver}\setminus D_{ver}$ so that their right coordinate belongs to a set $\cR$ with the desired properties. We remark that shifting down (resp. to the left) items of $OPT_{hor}$ (resp., $OPT_{ver}$) cannot create any overlap with items of $OPT_{ver}$ (resp., $OPT_{hor}$). This allows us to reason on each such set separately. 

We next focus on $OPT_{hor}$ only: the construction for $OPT_{ver}$ is symmetric. For notational convenience we let $1,\ldots,n_{hor}$ be the items of $OPT_{hor}$ in non-increasing order of width \emph{and} from bottom to top in the starting optimal packing. We remark that this sequence is not necessarily sorted (increasingly or decreasingly) in terms of item heights: this makes our construction much more complicated.


Let us first introduce some useful notation. Consider any subsequence $B=\{b_{start},\ldots,b_{end}\}$ of consecutive items (\emph{interval}). For any $i\in B$, we define $\topc_B(i):=\sum_{k\in B,k\leq i}\height(k)$ and $\bottomc_B(i)=\topc_B(i)-\height(i)$. The \emph{growing subsequence} $G=G(B)=\{g_1,\ldots,g_h\}$ of $B$ (with possibly non-contiguous items) is defined as follows. We initially set $g_1=b_{start}$. Given the item $g_i$, $g_{i+1}$ is the smallest-index (i.e., lowest) item in $\{g_i+1,\ldots,b_{end}\}$ such that $\height(g_{i+1})\geq \height(g_i)$. We halt the construction of $G$ when we cannot find a proper $g_{i+1}$. For notational convenience, define $g_{h+1}=b_{end}+1$. We let $B^G_i:=\{g_i+1,\ldots,g_{i+1}-1\}$ for $i=1,\ldots,h$. Observe that the sets $B^G_i$ partition $B\setminus G$. We will crucially exploit the following simple property.
\begin{lem}\label{lem:propertiesG}
For any $g_i\in G$ and any $j\in \{b_{start},\ldots,g_{i+1}-1\}$, $\height(j)\leq \height(g_i)$.
\end{lem}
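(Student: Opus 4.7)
The plan is to prove Lemma \ref{lem:propertiesG} by a straightforward induction on $i$, relying on two elementary properties that follow directly from the greedy rule defining $G$. First, the heights along $G$ are non-decreasing: by construction $\height(g_{i+1}) \geq \height(g_i)$ for every $i$, so in particular $\height(g_1) \leq \height(g_2) \leq \cdots \leq \height(g_h)$. Second, any item $k$ strictly between two consecutive members of $G$, i.e.\ with $g_i < k < g_{i+1}$, must satisfy $\height(k) < \height(g_i)$, because otherwise $k$ would have been a valid candidate of smaller index than $g_{i+1}$ and hence chosen in its place. These two observations essentially are the lemma; the work is only to combine them cleanly.

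For the base case $i=1$, I need to show $\height(j) \leq \height(g_1)$ for every $j \in \{b_{start},\ldots,g_2-1\}$. When $j = b_{start} = g_1$ the inequality is trivial, and for $g_1 < j < g_2$ it is exactly the second property above. For the inductive step, assuming the statement for $i$, I would split the range $\{b_{start},\ldots,g_{i+2}-1\}$ into three pieces: indices $j \leq g_{i+1}-1$ are handled by the inductive hypothesis, which yields $\height(j) \leq \height(g_i) \leq \height(g_{i+1})$ using the monotonicity of heights along $G$; the index $j = g_{i+1}$ is trivial; and for $g_{i+1} < j < g_{i+2}$ the second property gives $\height(j) < \height(g_{i+1})$.

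I do not expect any real obstacle here, since the lemma is essentially a restatement of the greedy selection rule. The only point that requires minor care is making sure the induction covers indices that lie in earlier blocks $B^G_k$ with $k<i$ (not merely in $B^G_i$), which is precisely why I phrase the inductive hypothesis over the whole prefix $\{b_{start},\ldots,g_{i+1}-1\}$ rather than only over $B^G_i$. Once this is set up correctly, the argument is a few lines long and the bound $\height(j) \leq \height(g_i)$ (rather than strict inequality) is exactly what one should claim, since on $G$ itself equality can occur.
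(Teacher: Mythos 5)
Your proof is correct and rests on exactly the same two facts the paper uses (monotonicity of heights along $G$, and the observation that any $j$ with $g_i<j<g_{i+1}$ satisfies $\height(j)<\height(g_i)$ by the greedy choice of $g_{i+1}$); the paper merely phrases the argument as a direct case split on whether $j\in B^G_i$, $j\in G$, or $j\in B^G_k$ for some $k<i$, whereas you package the same case analysis as an induction on $i$. This is an organizational difference only, so no further comparison is needed.
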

\begin{proof}
The items $j\in B^G_i=\{g_i+1,\ldots,g_{i+1}-1\}$ have $\height(j)<\height(g_i)$. Indeed, any such $j$ with $\height(j)\geq \height(g_i)$ would have been added to $G$, a contradiction. 

Consider next any $j\in \{b_{start},\ldots g_i-1\}$. If $j\in G$ the claim is trivially true by construction of $G$. Otherwise, one has $j \in B^G_k$ for some $g_k\in G$, $g_k<g_i$. Hence, by the previous argument  and by construction of $G$, $\height(j)<\height(g_k)\leq \height(g_i)$.
\end{proof}

The intuition behind our construction is as follows. Consider the growing sequence $G=G(OPT_{hor})$, and suppose that $\profit(G)\leq \eps \cdot \profit(OPT_{hor})$. Then we might simply delete $G$, and shift the remaining items $OPT_{hor}\setminus G=\cup_j B^G_j$ as follows. Let $\lceil x\rceil_y$ denote the smallest multiple of $y$ not smaller than $x$. We consider each set $B^G_j$ separately. For each such set, we define a baseline vertical coordinate $\base_j=\lceil \bottomc(g_j)\rceil_{\height(g_j)/2}$, where $\bottomc(g_j)$ is the bottom coordinate of $g_j$ in the original packing. We next round up the height of $i\in B^G_j$ to $\hat{\height}(i):=\lceil \height(i)\rceil_{\height(g_j)/(2n)}$, and pack the rounded items of $B^G_j$ as low as possible above the baseline. The reader might check that the possible top coordinates for rounded items fall in a polynomial size set (using Lemma \ref{lem:propertiesG}). It is also not hard to check that items are \emph{not} shifted up.  

We use recursion in order to handle the case $\profit(G)> \eps \cdot \profit(OPT_{hor})$. Rather than deleting $G$, we consider each $B^G_j$ and build a new growing subsequence for each such set. We repeat the process recursively for $r_{hor}$ many rounds. Let ${\cal G}^r$ be the union of all the growing subsequences in the recursive calls of level $r$. Since the sets ${\cal G}^r$ are disjoint by construction, there must exist a value $r_{hor}\leq \frac{1}{\eps}$ such that $\profit({\cal G}^{r_{hor}})\leq \eps\cdot \profit(OPT_{hor})$. Therefore we can apply the same shifting argument to all growing subsequences of level $r_{hor}$ (in particular we delete all of them). In the remaining growing subsequences we can afford to delete $1$ out of $1/\eps$ consecutive items (with a small loss of the profit), and then apply a similar shifting argument.

We next describe our approach in more detail. 
We exploit a recursive procedure {\tt delete\&shift}. This procedure takes as input two parameters: 
an interval $B=\{b_{start},\ldots,b_{end}\}$, and an integer \emph{round parameter} $r\geq 1$. 
Procedure {\tt delete\&shift} returns a set $D(B)\subseteq B$ of deleted items, and a 
shift function $\shift:B\setminus D(B)\rightarrow \mathbb{N}$. Intuitively, $\shift(i)$ is the value of the top coordinate of $i$ in the shifted packing w.r.t. a proper baseline value which is implicitly defined.
We initially call {\tt delete\&shift}$(OPT_{hor},r_{hor})$, for a proper $r_{hor}\in \{1,\ldots,\frac{1}{\eps}\}$ to be fixed later. Let $(D,\shift)$ be the output of this call. The desired set of deleted items is $D_{hor}=D$, and in the final packing $\topc(i)=\shift(i)$ for any $i\in OPT_{hor}\setminus D_{hor}$ (the right coordinate of any such $i$ is $N$).

The procedure behaves differently in the cases $r=1$ and $r>1$.
If $r=1$, we compute the growing sequence $G=G(B)=\{g_1=b_{start},\ldots,g_h\}$, and set $D(B)=G(B)$. Consider any set $B^G_j=\{g_{j}+1,\ldots,g_{j+1}-1\}$, $j=1,\ldots,h$. Let 
$\base_j:= \lceil \bottomc_{B}(g_j) \rceil_{\height(g_j)/2}$. We define for any $i\in B^G_j$,
$$
\shift(i)= \base_j+\sum_{k\in B^G_j,k\leq i}\lceil \height(k) \rceil_{\height(g_j)/(2n)}.
$$
Observe that $\shift$ is fully defined since $\cup_{j=1}^{h}B^G_j=B\setminus D(B)$.

If instead $r>1$, we compute the growing sequence $G=G(B)=\{g_1=b_{start},\ldots,g_h\}$. We next delete a subset of items $D'\subseteq G$. If $h<\frac{1}{\eps}$, we let $D'=D'(B)=\emptyset$. Otherwise, let $G_k=\{g_j\in G: j = k \pmod{1/\eps}\}\subseteq G$, for $k\in \{0,\ldots,1/\eps-1\}$. We set $D'=D'(B)=\{d_1,\ldots,d_p\}=G_{x}$ where $x=\arg\min_{k\in \{0,\ldots,1/\eps-1\}}\profit(G_k)$. 
\begin{pro}\label{pro:deleted}
One has $\profit(D')\leq \eps\cdot \profit(G)$. Furthermore, any subsequence $\{g_x,g_{x+1},\ldots,g_y\}$ of $G$ with at least $1/\eps$ items contains at least one item from $D'$.
\end{pro}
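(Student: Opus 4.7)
The plan is to prove the two claims separately; both are essentially pigeonhole-style arguments once the definitions are unwound, so I do not expect a real obstacle here.

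For the first claim, I would observe that the sets $G_0, G_1, \ldots, G_{1/\eps - 1}$ form a partition of $G$ (assuming, as the paper does implicitly, that $1/\eps$ is an integer), since each index $j \in \{1,\ldots,h\}$ belongs to exactly one residue class modulo $1/\eps$. Therefore
\[
\sum_{k=0}^{1/\eps - 1} \profit(G_k) \;=\; \profit(G),
\]
and an averaging argument gives $\min_k \profit(G_k) \leq \eps \cdot \profit(G)$. Since $D' = G_x$ was defined as the class attaining this minimum, we get $\profit(D') \leq \eps \cdot \profit(G)$.

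For the second claim, I would argue by the following elementary observation: any set of $1/\eps$ consecutive integers contains exactly one representative of each residue class modulo $1/\eps$. Formally, fix a subsequence $\{g_x, g_{x+1}, \ldots, g_y\}$ with $y - x + 1 \geq 1/\eps$. Then the indices $\{x, x+1, \ldots, y\}$ contain an interval of $1/\eps$ consecutive integers, and so some $j \in \{x, \ldots, y\}$ satisfies $j \equiv x_0 \pmod{1/\eps}$, where $x_0$ is the residue defining $D' = G_{x_0}$. Hence $g_j \in D'$ belongs to the subsequence.

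Both parts are short; the only subtlety worth flagging is the standing assumption (used throughout the paper) that $1/\eps$ is an integer, which is needed so that the partition into residue classes is well-defined. I would state this assumption explicitly at the beginning of the proof for clarity.
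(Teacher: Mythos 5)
Your proof is correct and matches the (unstated) intended argument: the paper presents this proposition without proof precisely because it reduces to the averaging and pigeonhole observations you give. The only thing worth adding is the degenerate case $h < 1/\eps$, where the paper sets $D'=\emptyset$: there both claims hold trivially (the profit bound because $\profit(\emptyset)=0$, and the subsequence claim vacuously since $G$ has fewer than $1/\eps$ items), and your residue-class argument only applies when $h\geq 1/\eps$ so you should dispatch that case separately.
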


Consider each set $B^G_j=\{g_{j}+1,\ldots,g_{j+1}-1\}$, $j=1,\ldots,h$: We run {\tt delete\&shift}$(B^G_j,r-1)$. Let $(D_j,\shift_j)$ be the output of the latter procedure, and $\shift^{max}_j$ be the maximum value of $\shift_j$. We set the output set of deleted items to $D(B)=D'\cup (\cup_{j=1}^{h}D_j)$. 

It remains to define the function $\shift$. Consider any set $B^G_j$, and let $d_q$ be the deleted item in $D'$ with largest index (hence in topmost position) in $\{b_{start},\ldots,g_{j}\}$: define $\base_q = \lceil \bottomc_B(d_q)\rceil_{\height(d_q)/2}$. If there is no such $d_q$, we let $d_q=0$ and $\base_q=0$. For any $i\in B^G_j$ we set: 
$$\begin{array}{rl}
\shift(i) & = \base_q + \sum_{g_k\in G,d_q< g_k\leq g_j}\height(g_k)
\\+&\sum_{g_k\in G,d_q\leq g_k< g_j}\shift^{max}_k+\shift_j(i). \end{array}$$
Analogously, if $g_j\neq d_q$, we set
$$\begin{array}{ll}
\shift(g_j) & = \base_q + \sum_{g_k\in G,d_q< g_k\leq g_j}\height(g_k)\\ & +\sum_{g_k\in G,d_q\leq g_k< g_j}\shift^{max}_k. \end{array}
$$ 
This concludes the description of {\tt delete\&shift}. We next show that the final packing has the desired properties. Next lemma shows that the total profit of deleted items is small for a proper choice of the starting round parameter $r_{hor}$.
\begin{lem}\label{lem:Lpacking:costDeleted}
There is a choice of $r_{hor}\in \{1,\ldots,\frac{1}{\eps}\}$ such that the final set $D_{hor}$ of deleted items satisfies $\profit(D_{hor})\leq 2\eps\cdot \profit(OPT_{hor})$.
\end{lem}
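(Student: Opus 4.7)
The plan is a pigeonhole argument over recursion depth. For $\rho\ge 1$, let $\mathcal{G}^\rho$ denote the union, over all recursive invocations of {\tt delete\&shift} at depth $\rho$ (with depth $1$ corresponding to the initial top-level call), of the growing subsequence $G(B)$ computed in that call. First I would observe that the sets $\mathcal{G}^\rho$ are pairwise disjoint subsets of $OPT_{hor}$: each recursive call at depth $\rho$ on an interval $B$ passes only the sets $B^G_j \subseteq B\setminus G(B)$ to depth $\rho+1$, so no element of $G(B)$ can reappear in any growing subsequence at a deeper level. Moreover, the construction of $G(B)$ and of the recursion targets $B^G_j$ does not depend on the parameter $r$; hence $\mathcal{G}^1,\ldots,\mathcal{G}^{1/\eps}$ are well-defined independently of the eventual choice of $r_{hor}$, and by disjointness
\[
\sum_{\rho=1}^{1/\eps}\profit(\mathcal{G}^\rho)\;\le\;\profit(OPT_{hor}).
\]

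Next I would bound the total deleted profit once $r_{hor}$ is fixed. By construction, every recursive call with $r>1$ on an interval $B$ deletes a set $D'(B)\subseteq G(B)$ with $\profit(D'(B))\le \eps\cdot\profit(G(B))$ (Proposition~\ref{pro:deleted}), whereas every call with $r=1$ deletes the whole growing subsequence $G(B)$. Summing over all invocations, grouping by depth, and using that the growing subsequences within a single depth are disjoint, one obtains
\[
\profit(D_{hor})\;\le\;\eps\sum_{\rho=1}^{r_{hor}-1}\profit(\mathcal{G}^\rho)\;+\;\profit(\mathcal{G}^{r_{hor}})\;\le\;\eps\cdot\profit(OPT_{hor})\;+\;\profit(\mathcal{G}^{r_{hor}}).
\]

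To finish, I would apply the pigeonhole principle: since the $1/\eps$ disjoint sets $\mathcal{G}^1,\ldots,\mathcal{G}^{1/\eps}$ lie inside $OPT_{hor}$, there must exist some $r_{hor}\in\{1,\ldots,1/\eps\}$ with $\profit(\mathcal{G}^{r_{hor}})\le \eps\cdot\profit(OPT_{hor})$. Choosing this $r_{hor}$ and plugging into the displayed inequality yields the desired bound $\profit(D_{hor})\le 2\eps\cdot\profit(OPT_{hor})$. The only delicate point is to justify that $r_{hor}$ can be selected after the $\mathcal{G}^\rho$ are revealed, which is legitimate because, as noted above, the collection of growing subsequences at each depth is fully determined by $OPT_{hor}$ (their heights and left-to-right order) and is independent of the round parameter $r$ used to terminate the recursion.
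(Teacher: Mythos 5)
Your proposal is correct and follows essentially the same route as the paper: disjointness of the growing subsequences across recursion levels, the per-level bound $\profit(D'(B))\le\eps\,\profit(G(B))$ from Proposition~\ref{pro:deleted}, full deletion at the deepest level, and an averaging/pigeonhole argument over the $1/\eps$ possible depths to choose $r_{hor}$. Your explicit remark that the level sets $\mathcal{G}^\rho$ are determined independently of the round parameter is a point the paper leaves implicit, but it is the same argument.
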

\begin{proof}
Let ${\cal G}^r$ denote the union of the sets $G(B)$ computed by all the recursive calls with input round parameter $r$. Observe that by construction these sets are disjoint.
Let also ${\cal D}^r$ be the union of the sets $D'(B)$ on those calls (the union of sets $D(B)$ for $r=r_{hor}$). By Proposition \ref{pro:deleted} and the disjointness of sets ${\cal G}^r$ one has
$$\begin{array}{ll}
\profit(D_{hor}) & =\sum_{1\leq r\leq r_{hor}}\profit({\cal D}^r) \\ &\leq \eps\cdot \sum_{r< r_{hor}}\profit({\cal G}^r)+\profit({\cal D}^{r_{hor}}) \\ & \leq \eps\cdot \profit(OPT_{hor})+\profit({\cal D}^{r_{hor}}). \end{array}
$$ 
Again by the disjointness of sets ${\cal G}^r$ (hence ${\cal D}^r$), there must exists a value of $r_{hor}\in \{1,\ldots,\frac{1}{\eps}\}$ such that $\profit({\cal D}^{r_{hor}})\leq \eps\cdot \profit(OPT_{hor})$. The claim follows. 
\end{proof}
Next lemma shows that, intuitively, items are only shifted down w.r.t. the initial packing.\fabr{Added proof of Lem \ref{lem:Lpacking:shiftDown}}
\begin{lem}\label{lem:Lpacking:shiftDown}
Let $(D,\shift)$ be the output of some execution of {\tt delete\&shift}$(B,r)$. Then, for any $i\in B\setminus D$, $\shift(i)\leq \topc_B(i)$.
\end{lem}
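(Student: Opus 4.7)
The plan is to proceed by induction on the round parameter $r$.

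For the base case $r=1$, the entire growing sequence $G=\{g_1,\ldots,g_h\}$ is deleted, so $D=G$ and every $i\in B\setminus D$ belongs to some $B^G_j$. I would compare
$$\shift(i)=\base_j+\sum_{k\in B^G_j,k\leq i}\lceil \height(k)\rceil_{\height(g_j)/(2n)}$$
with $\topc_B(i)=\bottomc_B(g_j)+\height(g_j)+\sum_{k\in B^G_j,k\leq i}\height(k)$. Two elementary bounds drive the argument. First, the rounding of the baseline yields $\base_j\leq \bottomc_B(g_j)+\height(g_j)/2$. Second, since each rounded height exceeds the original by at most $\height(g_j)/(2n)$ and $|B^G_j|\leq n$, the total rounding error satisfies $\sum_{k\in B^G_j,k\leq i}(\lceil\height(k)\rceil_{\height(g_j)/(2n)}-\height(k))\leq \height(g_j)/2$. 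Adding the two slacks produces $\height(g_j)$, which is absorbed exactly by the fact that $g_j$ has been deleted and its height $\height(g_j)$ contributes to $\topc_B(i)$ but not to $\shift(i)$.

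For the inductive step $r>1$, I would apply the inductive hypothesis to every recursive call: for each $B^G_k$, the returned pair $(D_k,\shift_k)$ satisfies $\shift_k(i)\leq \topc_{B^G_k}(i)$ for $i\in B^G_k\setminus D_k$, and hence $\shift^{max}_k\leq \sum_{k'\in B^G_k}\height(k')$. Fix $i\in B^G_j\setminus D_j$ and let $d_q$ be the deleted item of largest index in $\{b_{start},\ldots,g_j\}$ (with $d_q=0$, $\base_q=0$ if none exists). The crucial step is to decompose $\topc_B(i)$ according to the position of $d_q$:
$$\topc_B(i)=\bottomc_B(d_q)+\height(d_q)+\!\!\!\sum_{g_k\in G,\,d_q<g_k\leq g_j}\!\!\!\height(g_k)+\!\!\!\sum_{g_k\in G,\,d_q\leq g_k<g_j}\sum_{k'\in B^G_k}\height(k')+\topc_{B^G_j}(i).$$
Comparing this term-by-term with the definition of $\shift(i)$, the first two sums match identically, the third is dominated using the inductive bound $\shift^{max}_k\leq \sum_{k'\in B^G_k}\height(k')$, and the last is dominated using $\shift_j(i)\leq \topc_{B^G_j}(i)$. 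Since $\base_q\leq \bottomc_B(d_q)+\height(d_q)/2$, the total slack from the deletion of $d_q$ is at least $\height(d_q)/2\geq 0$, so $\shift(i)\leq \topc_B(i)$ as required. The case $i=g_j\in G\setminus D'$ is handled analogously using the same decomposition truncated at $g_j$, and the case $d_q=0$ degenerates to the same computation with $\bottomc_B(d_q)=\height(d_q)=0$.

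The main obstacle I anticipate is purely notational: carefully matching the three kinds of contributions in $\shift(i)$ (the baseline, the heights of undeleted growing-sequence items strictly above $d_q$ up to $g_j$, and the recursive shift heights $\shift^{max}_k$) against the corresponding blocks of items in $B$. Once this accounting is made transparent, the inductive step reduces to two clean inequalities: the $\height(d_q)/2$ slack from rounding the baseline of $d_q$, and the pointwise inductive dominance of $\shift_k$ by $\topc_{B^G_k}$. No additional geometric properties of $G$ beyond Lemma~\ref{lem:propertiesG} (already used to ensure that $\height(g_j)$ dominates the heights inside $B^G_j$ in the base case) are needed.
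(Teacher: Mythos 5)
Your proof is correct and follows essentially the same route as the paper's: induction on $r$, with the base case absorbing the $\height(g_j)/2$ baseline-rounding slack plus the $n\cdot\height(g_j)/(2n)$ height-rounding slack into the deleted item $g_j$, and the inductive step using exactly the decomposition $\topc_B(i)=\topc_B(d_q)+\sum_{d_q<g_k\leq g_j}\height(g_k)+\sum_{d_q\leq g_k<g_j}\topc_{B^G_k}(g_{k+1}-1)+\topc_{B^G_j}(i)$ together with $\base_q\leq\topc_B(d_q)$ and the pointwise inductive bounds $\shift^{max}_k\leq\topc_{B^G_k}(g_{k+1}-1)$ and $\shift_j(i)\leq\topc_{B^G_j}(i)$. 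No gaps; the treatment of the cases $i=g_j$ and $d_q=0$ matches the paper's as well.
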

\begin{proof}
We prove the claim by induction on $r$. Consider first the case $r=1$. In this case, for any $i\in B^G_j$:
\begin{align*}
& \shift(i)  \\
= & \lceil \bottomc_{B}(g_j) \rceil_{\height(g_j)/2}+\sum_{k\in B^G_j,k\leq i}\lceil \height(k) \rceil_{\height(g_j)/(2n)} \\
            \leq & \topc_{B}(g_j) - \frac{1}{2}\height(g_j)+\sum_{k\in B^G_j,k\leq i}\height(k)+n\cdot \frac{\height(g_j)}{2n}\\
            = & \topc_B(i).
\end{align*}
Assume next that the claim holds up to round parameter $r-1\geq 1$, and consider round $r$. For any $i\in B^G_j$ with $\base_q = \lceil \bottomc_B(d_q)\rceil_{\height(d_q)/2}$, one has
\begin{align*}
& \shift(i) \\
= & \lceil \bottomc_B(d_q)\rceil_{\height(d_q)/2} + \sum_{g_k\in G,d_q< g_k\leq g_j}\height(g_k)\\
+ &\sum_{g_k\in G,d_q\leq g_k< g_j}\shift^{max}_k+\shift_j(i) \\
            \leq & \topc_B(d_q)+\sum_{g_k\in G,d_q< g_k\leq g_j}\height(g_k)\\
            + & \sum_{g_k\in G,d_q\leq g_k< g_j}\topc_{B^G_k}(g_{k+1}-1)+\topc_{B^G_j}(i)\\
            = & \topc_B(i).
\end{align*}
An analogous chain of inequalities shows that $\shift(g_j)\leq \topc_B(g_j)$ for any $g_j\in G\setminus D'$. A similar proof works for the special case $\base_q=0$.
\end{proof}

It remains to show that the final set of values of $\topc(i)=\shift(i)$ has the desired properties. This is the most delicate part of our analysis. We define a set $\cT^r$ of candidate top coordinates recursively in $r$. Set $\cT^1$ contains, for any item $j\in I_{hor}$, and any integer $1\leq a\leq 4n^2$, the value $a\cdot \frac{\height(j)}{2n}$. Set $\cT^r$, for $r>1$ is defined recursively w.r.t. to $\cT^{r-1}$. For any item $j$, any
integer $0\leq a\leq 2n-1$, any tuple of $b\leq 1/\eps-1$ items $j(1),\ldots,j(b)$, and any tuple of $c\leq 1/\eps$ values $s(1),\ldots,s(c)\in \cT^{r-1}$, $\cT^r$ contains the sum $a\cdot \frac{\height(j)}{2}+\sum_{k=1}^{b}\height(j(k))+\sum_{k=1}^{c}s(k)$. Note that sets $\cT^r$ can be computed based on the input only (without knowing $OPT$). It is easy to show that $\cT^r$ has polynomial size for $r=O_\eps(1)$.\fabr{Added this proof}
\begin{lem}\label{lem:sizeTr}
For any integer $r\geq 1$, $|\cT^r|\leq (2n)^{\frac{r+2+(r-1)\eps}{\eps^{r-1}}}$.
\end{lem}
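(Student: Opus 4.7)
The plan is a straightforward induction on $r$, where the bulk of the work is pure counting and one small algebraic check at the end.

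For the base case $r=1$, each element of $\cT^1$ is specified by a pair $(j,a)$ with $j \in I_{hor}$ and $1 \leq a \leq 4n^2$, so $|\cT^1| \leq n \cdot 4n^2 = 4n^3 \leq (2n)^3$, matching the claimed bound $(2n)^{(1+2+0\cdot\eps)/\eps^0}$.

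For the inductive step ($r \geq 2$), I would simply bound the number of tuples $(j,a,(j(1),\ldots,j(b)),(s(1),\ldots,s(c)))$ that can produce an element of $\cT^r$ via the recursive definition: $n$ choices for $j$, at most $2n$ for $a$, at most $\sum_{b=0}^{1/\eps-1} n^b \leq 2n^{1/\eps-1}$ for the tuple of items, and at most $\sum_{c=0}^{1/\eps} |\cT^{r-1}|^c \leq 2 |\cT^{r-1}|^{1/\eps}$ for the tuple of values from $\cT^{r-1}$. Multiplying gives
\[
|\cT^r| \;\leq\; 8 \, n^{1/\eps+1} \, |\cT^{r-1}|^{1/\eps} \;\leq\; (2n)^{1/\eps+1} \, |\cT^{r-1}|^{1/\eps},
\]
where in the last step the constant factor $8$ is absorbed into $(2n)^{1/\eps+1}$ using $\eps \leq 1/2$ (so $2^{1/\eps+1} \geq 8$). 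The care here is purely bookkeeping: the geometric-series constants must be folded into the base, not allowed to accumulate multiplicatively across levels.

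Plugging in the inductive hypothesis $|\cT^{r-1}| \leq (2n)^{(r+1+(r-2)\eps)/\eps^{r-2}}$, the exponent in the bound on $|\cT^r|$ becomes $1/\eps + 1 + \frac{r+1+(r-2)\eps}{\eps^{r-1}}$. To close the induction, it suffices to verify
\[
\tfrac{1}{\eps} + 1 + \tfrac{r+1+(r-2)\eps}{\eps^{r-1}} \;\leq\; \tfrac{r+2+(r-1)\eps}{\eps^{r-1}},
\]
which after multiplying by $\eps^{r-1}$ reduces to $\eps^{r-2} + \eps^{r-1} \leq 1 + \eps$. This holds for all $r \geq 2$ and $\eps \in (0,1]$ since $\eps^{r-2} \leq 1$ and $\eps^{r-1} \leq \eps$; note that equality is attained at $r=2$, confirming that the inductive bound is essentially tight and leaves no slack to waste. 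The only real obstacle is this tightness: a cruder counting (e.g.\ absorbing the tuple-length factor into the exponent as $+2$ instead of $+1$) would blow up the base-case step and break the induction, which is why the constants must be handled carefully above.
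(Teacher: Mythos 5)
Your proof is correct and follows essentially the same route as the paper's: induction on $r$, with the base case counted as $n\cdot 4n^2\leq (2n)^3$, the inductive step bounded by counting tuples as $(2n)^{1/\eps+1}\,|\cT^{r-1}|^{1/\eps}$, and the exponent inequality reduced to $\eps^{r-2}+\eps^{r-1}\leq 1+\eps$. If anything, you are slightly more careful than the paper about absorbing the constant factors from the geometric sums into the base $(2n)$.
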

\begin{proof}
We prove the claim by induction on $r$. The claim is trivially true for $r=1$ since there are $n$ choices for item $j$ and $4n^2$ choices for the integer $a$, hence altogether at most $n\cdot 4n^2<8n^3$ choices. For $r>1$, the number of possible values of $\cT^r$ is at most
\begin{align*}
& n\cdot 2n \cdot (\sum_{b=0}^{1/\eps-1}n^b)\cdot (\sum_{c=0}^{1/\eps}|\cT^{r-1}|^c)\leq 4n^2\cdot n^{\frac{1}{\eps}-1}\cdot |\cT^{r-1}|^{\frac{1}{\eps}} \\
& \leq (2n)^{\frac{1}{\eps}+1}((2n)^{\frac{r+1+(r-2)\eps}{\eps^{r-2}}})^{\frac{1}{\eps}}\leq (2n)^{\frac{r+2+(r-1)\eps}{\eps^{r-1}}}.  
\end{align*}
\end{proof}
Next lemma shows that the values of $\shift$ returned by {\tt delete\&shift} for round parameter $r$ belong to $\cT^r$, hence the final top coordinates belong to $\cT:=\cT^{r_{hor}}$.
\begin{lem}\label{lem:Lpacking:possibleHeights}
Let $(D,\shift)$ be the output of some execution of {\tt delete\&shift}$(B,r)$. Then, for any $i\in B\setminus D$, $\shift(i)\in \cT^r$.
\end{lem}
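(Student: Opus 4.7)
The plan is to prove the statement by induction on $r$, matching each summand in the formula for $\shift$ with a component of the defining expression of $\cT^r$.

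\emph{Base case} ($r=1$). Since $D=G$, any $i\in B\setminus D$ lies in some $B^G_j$, and
\[
\shift(i) \;=\; \lceil \bottomc_B(g_j)\rceil_{\height(g_j)/2} + \sum_{k\in B^G_j,\,k\leq i}\lceil \height(k)\rceil_{\height(g_j)/(2n)}.
\]
Using $\height(g_j)/2 = n\cdot \height(g_j)/(2n)$, every summand is a non-negative integer multiple of $\height(g_j)/(2n)$, so $\shift(i) = a\cdot\height(g_j)/(2n)$ for some integer $a\geq 1$ (the term $k=i$ contributes at least one such unit). Combining Lemma \ref{lem:Lpacking:shiftDown} ($\shift(i)\leq\topc_B(i)$) with Lemma \ref{lem:propertiesG} (every item in $B$ weakly below $g_{j+1}$ has height at most $\height(g_j)$) gives $\shift(i)\leq n\cdot\height(g_j)$, hence $a\leq 2n^2\leq 4n^2$. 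Since $g_j\in I_{hor}$, $\shift(i)\in\cT^1$.

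\emph{Inductive step} ($r\geq 2$). Assume the claim for $r-1$ and first consider $i\in B^G_j$ with $d_q\neq 0$. We fit the four summands of $\shift(i)$ into the template $a\cdot\height(j')/2 + \sum_{k=1}^{b}\height(j(k)) + \sum_{k=1}^{c} s(k)$ defining $\cT^r$. The base ceiling $\lceil \bottomc_B(d_q)\rceil_{\height(d_q)/2}$ is an integer multiple of $\height(d_q)/2$; since $d_q\in G$, Lemma \ref{lem:propertiesG} gives $\bottomc_B(d_q)\leq(n-1)\height(d_q)$, so the coefficient $a$ is at most $2n-2\leq 2n-1$ (with $j'=d_q$). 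Next, because $d_q$ is the largest deleted index below $g_j$, the set $\{g_k\in G:d_q<g_k\leq g_j\}$ is a non-deleted stretch of $G$; Proposition \ref{pro:deleted} bounds its size by $1/\eps-1$, so it supplies the $\height(j(k))$ terms with $b\leq 1/\eps-1$. The indices $\{g_k\in G:d_q\leq g_k<g_j\}$ are bounded in number by the same argument, and each $\shift^{max}_k$ is attained at some $i'\in B^G_k\setminus D_k$, so by the inductive hypothesis applied to the recursive call on $B^G_k$, $\shift^{max}_k=\shift_k(i')\in\cT^{r-1}$. Adding the term $\shift_j(i)\in\cT^{r-1}$ (inductive hypothesis on the call for $B^G_j$) yields at most $1/\eps$ values $s(k)$ from $\cT^{r-1}$. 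Hence $\shift(i)\in\cT^r$.

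The two remaining cases are routine variants. If $i=g_j\in G\setminus D'$, the summand $\shift_j(i)$ is absent and we use $c\leq 1/\eps-1$; the rest of the argument is unchanged. If $d_q=0$, the base term vanishes (so $a=0$); moreover the absence of any deleted item in $\{g_1,\dots,g_j\}$ forces $j\leq 1/\eps-1$ by Proposition \ref{pro:deleted}, so both sums again have at most $1/\eps$ summands and the expression matches the form of $\cT^r$.

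The \emph{main obstacle} is the careful bookkeeping needed to guarantee the tight integer bounds $a\leq 2n-1$, $b\leq 1/\eps-1$, $c\leq 1/\eps$ demanded by $\cT^r$ at every recursion level. This requires invoking two distinct ingredients in tandem: Lemma \ref{lem:propertiesG} to control the base ceiling through $\bottomc_B(d_q)\leq(n-1)\height(d_q)$, and Proposition \ref{pro:deleted} to cap the length of non-deleted stretches of $G$ between consecutive elements of $D'$.
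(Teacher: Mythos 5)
Your proof is correct and follows essentially the same route as the paper's: induction on $r$, matching each summand of $\shift(i)$ to the template defining $\cT^r$, with Lemma \ref{lem:propertiesG} bounding the coefficient of the base ceiling and Proposition \ref{pro:deleted} capping the number of height terms and of $\cT^{r-1}$-values. The only (harmless) variation is in the base case, where you derive the bound $a\le 2n^2$ via Lemma \ref{lem:Lpacking:shiftDown} instead of bounding the two summands directly as the paper does; both yield $a\le 4n^2$ as required.
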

\begin{proof}
We prove the claim by induction on $r$. For the case $r=1$, recall that for any $i\in B^G_j$ one has 
\begin{align*}
\shift(i) & =  \lceil \bottomc_{B}(g_j) \rceil_{\height(g_j)/2}\\ 
&+\sum_{k\in B^G_j,k\leq i}\lceil \height(k) \rceil_{\height(g_j)/(2n)}.
\end{align*}
By Lemma \ref{lem:propertiesG}, $\bottomc_B(g_j)=\sum_{k\in B,k<g_j}\height(k)\leq (n-1)\cdot \height(g_j)$. By the same lemma, $\sum_{k\in B^G_j,k\leq i} \height(k)\leq (n-1)\cdot \height(g_j)$. It follows that 
\begin{align*}
\shift(i) &\leq 2(n-1)\cdot \height(g_j)+\frac{\height(g_j)}{2} +(n-1)\cdot \frac{\height(g_j)}{2n} \\ &\leq 4n^2\cdot \frac{\height(g_j)}{2n}.
\end{align*}
Hence $\shift(i)=a\cdot \frac{\height(g_j)}{2n}$ for some integer $1\leq a\leq 4n^2$, and $\shift(i)\in \cT^1$ for $j=g_j$ and for a proper choice of $a$.

Assume next that the claim is true up to $r-1\geq 1$, and consider the case $r$. Consider any $i\in B^G_j$, and assume $0<\base_q = \lceil \bottomc_B(d_q)\rceil_{\height(d_q)/2}$. One has:
\begin{align*}
\shift(i) & = \lceil \bottomc_B(d_q)\rceil_{\height(d_q)/2}  + \sum_{g_k\in G,d_q< g_k\leq g_j}\height(g_k)
\\& + \sum_{g_k\in G,d_q\leq g_k< g_j}\shift^{max}_k+\shift_j(i) .
\end{align*}
By Lemma \ref{lem:propertiesG}, $\bottomc_B(d_q)\leq (n-1)\height(d_q)$, therefore $\lceil \bottomc_B(d_q)\rceil_{\height(d_q)/2}=a\cdot \frac{\height(d_q)}{2}$ for some integer $1\leq a\leq 2(n-1)+1$. By Proposition \ref{pro:deleted}, $|\{g_k\in G,d_q< g_k\leq g_j\}|\leq 1/\eps-1$. Hence 
$\sum_{g_k\in G,d_q< g_k\leq g_j}\height(g_k)$ is a value contained in the set of sums of $b\leq 1/\eps-1$ item heights.  By inductive hypothesis $\shift^{max}_k,\shift_j(i)\in \cT^{r-1}$. Hence by a similar argument the value of $\sum_{g_k\in G,d_q\leq g_k< g_j}\shift^{max}_k+\shift_j(i)$ is contained in the set of sums of $c\leq 1/\eps-1+1$ values taken from $\cT^{r-1}$. 
Altogether, $\shift(i)\in \cT^r$. A similar argument, without the term $\shift_j(i)$,  shows that $\shift(g_j)\in \cT^r$ for any $g_j\in G\setminus D'$. The proof works similarly in the case $\base_q=0$ by setting $a=0$. The claim follows.
\end{proof}

\begin{proof}[Proof of Lemma \ref{lem:Lpacking:structural}]
We apply the procedure {\tt delete\&shift} to $OPT_{hor}$ as described before, and a symmetric procedure to $OPT_{ver}$. In particular the latter procedure computes a set $D_{ver}\subseteq OPT_{ver}$ of deleted items, and the remaining items are shifted to the left so that their right coordinate belongs to a set $\cR:=\cR^{r_{ver}}$, defined analogously to the case of $\cT:=\cT^{r_{hor}}$, for some integer $r_{ver}\in \{1,\ldots,1/\eps\}$ (possibly different from $r_{hor}$, though by averaging this is not critical).

It is easy to see that the profit of non-deleted items satisfies the claim by Lemma \ref{lem:Lpacking:costDeleted} and its symmetric version. Similarly, the sets 
$\cT$ and $\cR$ satisfy the claim by Lemmas \ref{lem:sizeTr} and \ref{lem:Lpacking:possibleHeights}, and their symmetric versions. Finally, w.r.t. the original packing non-deleted items in $OPT_{hor}$ and $OPT_{ver}$ can be only shifted to the bottom and to the left, resp., by Lemma \ref{lem:Lpacking:shiftDown} and its symmetric version. This implies that the overall packing is feasible.
\end{proof}

\section{A Simple Improved Approximation for Cardinality \tdk}
\label{sec:tdk_car:simple}

In this section we present a simple improved approximation
for the cardinality case of \tdk. We can assume that the optimal
solution $OPT\subseteq I$ satisfies that $|OPT|\geq1/\eps^{3}$
since otherwise we can solve the problem optimally by brute force
in time $n^{O(1/\eps^{3})}$. Therefore, we can discard from
the input all \emph{large} items with both sides larger than $\eps\cdot N$:
any feasible solution can contain at most $1/\eps^{2}$ such items,
and discarding them decreases the cardinality of $OPT$ at most by
a factor $1+\eps$. Let $OPT$ denote this slightly sub-optimal
solution obtained by removing large items.

We will need the following technical lemma, that holds also in the weighted case (see also Fig.\ref{fig:packing+ring}.(b)-(d)).

\begin{lem}\label{lem:LoftheRing} Let $H$ and $V$ be given subsets
of items from some feasible solution with width and height strictly
larger than $N/2$, resp. Let $\height_{H}$ and $\width_{V}$
be the total height and width of items of $H$ and $V$, resp.
Then there exists an $\fontL$-packing of a set $APX\subseteq H\cup V$
with $\profit(APX)\geq\frac{3}{4}(\profit(H)+\profit(V))$ into the
area $\fontL=([0,N]\times[0,\height_{H}])\cup([0,\width_{V}]\times[0,N])$.
\end{lem}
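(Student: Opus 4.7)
I follow the ``four stacks, discard one'' strategy illustrated in Fig.\ref{fig:packing+ring}.(b)--(c) and sketched in Section~1. The argument rests on two structural facts about the given feasible packing of $H\cup V$. (i) Items of $H$ have pairwise disjoint vertical projections, since two widths $>N/2$ cannot coexist at the same $y$; symmetrically, items of $V$ have pairwise disjoint horizontal projections. In particular $H$ can be linearly ordered bottom-to-top and $V$ left-to-right. (ii) No $V$-item can have $H$-items on both its left and its right: an $H$-item entirely to the left of a point $x\in[0,N]$ forces $x>N/2$ (its width exceeds $N/2$), while an $H$-item entirely to the right forces $x<N/2$; applied to the two $x$-sides of a $V$-item this is a contradiction. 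Symmetrically, no $H$-item has $V$-items both above and below it.

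Using (ii), every $V$-item has at least one horizontal side free of $H$-items, so it can be shifted toward that edge of $K$ without colliding with $H$; the shifted $V$-items can then be consolidated side-by-side along that edge thanks to the horizontal-disjointness from (i). An analogous shift pushes every $H$-item against either the top or the bottom edge of $K$. Altogether these shifts partition $H\cup V$ into four disjoint stacks $H_{bot},H_{top},V_{left},V_{right}$ arranged as a ring along the boundary of $K$, as in Fig.\ref{fig:packing+ring}.(c). Since the four stacks partition all of $H\cup V$, the minimum-profit stack carries at most $\tfrac14(\profit(H)+\profit(V))$; discarding it gives a candidate set $APX$ with $\profit(APX)\ge \tfrac34(\profit(H)+\profit(V))$.

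It remains to repack the three surviving stacks inside $\fontL$. By symmetry I may assume $H_{top}$ has been discarded. I place the items of $H_{bot}$ in the horizontal bar $[0,N]\times[0,\height_{H}]$ pushed against $x=N$, stacked bottom-to-top in non-increasing order of width; and I place the items of $V_{left}\cup V_{right}$ in the vertical bar $[0,\width_{V}]\times[0,N]$ pushed against $y=N$, arranged left-to-right in non-increasing order of height. The total height of $H_{bot}$ is at most $\height_{H}$ and the total width of $V_{left}\cup V_{right}$ equals $\width_{V}$, so each bar accommodates its contents. The main obstacle is to rule out overlaps inside the corner $[0,\width_{V}]\times[0,\height_{H}]$, where an $H_{bot}$-item wide enough to extend past $N-\width_{V}$ may meet a $V$-item tall enough to extend past $N-\height_{H}$. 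Here the non-increasing width/height orderings produce two monotone staircase profiles, and the structural property that every $i\in H_{bot}$ had its interacting $V$-items above it in the original packing bounds how far $i$ reaches into the corner; combining these with the symmetric bound for $V$-items gives the required non-overlap and hence feasibility.
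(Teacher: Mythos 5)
Your overall strategy coincides with the paper's: shift the items of $H\cup V$ into four boundary stacks, discard the least profitable stack (losing at most one quarter of the profit), and repack the three surviving stacks into the boundary $\fontL$. Your structural observations (i) and (ii), the ring formation, and the $\tfrac34$ counting argument are all sound (modulo the small point that "by symmetry I may assume $H_{top}$ is discarded" hides a reflection/permutation step, since the target $\fontL$ sits specifically in the bottom-left). The genuine gap is in the final repacking step, which is exactly where the paper does its real work. You declare the target configuration (bottom stack right-aligned and sorted bottom-to-top by non-increasing width, the two vertical stacks merged into one left bar, top-aligned and sorted left-to-right by non-increasing height) and then assert non-overlap in the corner from "monotone staircase profiles" plus an unspecified structural property of the original packing. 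What actually has to be verified is that for every $i$ in the bottom stack with top coordinate $t_i$ in the sorted order, and every $j$ in the merged vertical bar with right coordinate $r_j$, either $\width(i)+r_j\le N$ or $\height(j)+t_i\le N$. Here $t_i$ and $r_j$ are sums of heights (resp.\ widths) over whole families of items, and the pairwise disjointness facts you derived about the original packing do not obviously control these sums; in particular, merging $V_{left}$ and $V_{right}$ into a single bar of width $\width_V$ is the step with no direct counterpart in the original packing.

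The paper closes this gap with an inductive argument on the ring packing: it repeatedly finds a guillotine cut that isolates a single \emph{lonely} item among the leftmost, rightmost and bottommost items, peels it off, and — crucially, when the lonely item lies in the right stack — performs a cyclic shift (move all other items right by $\width(j)$ and reinsert $j$ on the left) to merge the right stack into the left one without ever colliding with the bottom stack. The induction is on the triple $(\width_l,\width_r,\height_b)$. Your final configuration is in fact feasible — it is the canonical sorted form of the packing the induction produces, by the permutation argument stated at the start of Section~\ref{sec:ptasL} — but proving its feasibility is the content of the lemma, and the sentence "combining these with the symmetric bound for $V$-items gives the required non-overlap" is where a proof is required rather than supplied. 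To complete your write-up you would either need to reproduce an induction of the paper's kind, or give a direct (and currently missing) argument for the displayed inequality above.
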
 

\begin{proof} Let us consider the packing of $H\cup V$. Consider
each $i\in H$ that has no $j\in V$ to its top (resp., to its bottom)
and shift it up (resp. down) until it hits another $i'\in H$ or the
top (resp, bottom) side of the knapsack. Note that, since $\height(j)>N/2$
for any $j\in V$, one of the two cases above always applies. We iterate
this process as long as possible to move any such $i$. We perform
a symmetric process on $V$. At the end of the process all items in
$H\cup V$ are stacked on the $4$ sides of the knapsack\footnote{It is possible to permute items in the left stack so that items appear from left to right in non-increasing order of height, and symmetrically for the other stacks. This is not crucial for this proof, but we implemented this permutation in Fig.\ref{fig:packing+ring}.(c).}.

Next we remove the least profitable of the $4$ stacks: by a simple permutation
argument we can guarantee that this is the top or right stack. We
next discuss the case that it is the top one, the other case being
symmetric. We show how to repack the remaining items in a boundary
$\fontL$ of the desired size by permuting items in a proper order.
In more detail, suppose that the items packed on the left (resp.,
right and bottom) have a total width of $\width_{l}$ (resp., total
width of $\width_{r}$ and total height of $\height_{b}$). We next
show that there exists a packing into $\fontL'=([0,N]\times[0,\height_{b}])\cup([0,\width_{l}+\width_{r}]\times[0,N])$.
We prove the claim by induction. Suppose that we have proved it for
all packings into left, right and bottom stacks with parameters $\width'_{l}$,
$\width'_{r}$, and $\height'$ such that $\height'<\height_{b}$
or $\width'_{l}+\width'_{r}<\width_{l}+\width_{r}$ or $\width'_{l}+\width'_{r}=\width_{l}+\width_{r}$
and $\width'_{r}<\width_{r}$.

In the considered packing we can always find a guillotine cut $\ell$,
such that one side of the cut contains precisely one \emph{lonely}
item among the leftmost, rightmost and bottommost items. Let $\ell$
be such a cut. First assume that the lonely item $j$ is the bottommost
one. Then by induction the claim is true for the part above $\ell$
since the part of the packing above $\ell$ has parameters $\width_{l},\width_{r}$,
and $\height-\height(j)$. Thus, it is also true for the entire packing.
A similar argument applies if the lonely item $j$ is the leftmost
one.

It remains to consider the case that the lonely item $j$ is the rightmost
one. We remove $j$ temporarily and move \emph{all} other items by
$\width(j)$ to the right. Then we insert $j$ at the left (in the
space freed by the previous shifting). By induction, the claim is
true for the resulting packing since it has parameters $\width_{l}+\width(j)$,
$\width_{r}-\width(j)$, and $\height$, resp. \end{proof}

For our algorithm, we consider the following three packings.
The first uses an $L$ that occupies the full knapsack, i.e., $\width_{\fontL}=\height_{\fontL}=N$.
Let $OPT_{long}\subseteq OPT$ be the items in $OPT$ with height
or width strictly larger than $N/2$ and define $OPT_{short}=OPT\setminus OPT_{long}$.
We apply Lemma~\ref{lem:LoftheRing} to $OPT_{long}$ and hence obtain
a packing for this $L$ with a profit of at least $\frac{3}{4}\andy{p}(OPT_{long})$.
We run our PTAS for \fontL-packings from Theorem \ref{thm:main:Lpacking}
on this \fontL, the input consisting of all items in $I$ having
one side longer than $N/2$. Hence we obtain a solution with profit
at least \mbox{$(\frac{3}{4}-O(\eps))\andy{p}(OPT_{long})$}.

For the other two packings we employ the one-sided resource
augmentation PTAS from \cite{jansen2007new}. We apply this
algorithm to the slightly reduced knapsacks $[0,N]\times[0,N/(1+\eps)]$
and $[0,N/(1+\eps)]\times[0,N]$ such that in both cases it outputs
a solution that fits in the full knapsack $[0,N]\times[0,N]$ and
whose profit is by at most a factor $1+O(\eps)$ worse than the
optimal solution for the respective reduced knapsacks. We will prove
in Theorem~\ref{thm:16/9-apx} that one of these solutions yields
a profit of at least $(\frac{1}{2}-O(\eps))\profit(OPT)+(\frac{1}{4}-O(\eps))\profit(OPT_{short})$
and hence one of our packings yields a $(\frac{16}{9}+\eps)$-approximation.

\begin{thm}\label{thm:16/9-apx} There is a $\frac{16}{9}+\eps$
approximation for the cardinality case of \tdk. \end{thm}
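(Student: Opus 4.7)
The plan is to output the best of three packings and show that at least one of them achieves profit $\geq (9/16-O(\eps))\,p(OPT)$. Packing~1 is the $L$-packing solution described before the theorem statement: using Lemma~\ref{lem:LoftheRing} on $OPT_{long}$ (with $\height_{H}=\width_{V}=N$, so $\fontL=K$) we know an $L$-packing of profit $\geq \tfrac{3}{4}\,p(OPT_{long})$ exists in the whole knapsack, and Theorem~\ref{thm:main:Lpacking} converts this into an algorithmic guarantee of $(\tfrac{3}{4}-O(\eps))\,p(OPT_{long})$. Packings~2 and~3 are the outputs of the one-sided resource-augmentation PTAS of~\cite{jansen2007new} applied to the reduced knapsacks $[0,N]\times[0,N/(1+\eps)]$ and $[0,N/(1+\eps)]\times[0,N]$. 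By the guarantee of that PTAS, each produces in the full knapsack a profit within a $(1+O(\eps))$ factor of the best packing of the input inside the corresponding reduced knapsack.

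The heart of the argument is to lower bound the optima $OPT_h$ and $OPT_v$ of these two reduced knapsacks, via a random-strip shifting argument. For the horizontal case, pick $y$ uniformly at random in $[-\eps N,N]$, delete from $OPT$ every item whose projection on the vertical axis meets the strip $[y,y+\eps N]$, and slide the surviving items above the strip down by $\eps N$; what remains fits in $[0,N]\times[0,(1-\eps)N]\subseteq [0,N]\times[0,N/(1+\eps)]$, hence it is a feasible competitor to $OPT_h$. A direct computation shows that an item $i$ survives with probability $(N-\height(i))/(N(1+\eps))$. Recalling that every non-large item is either horizontal ($\height(i)\le\eps N$) or vertical ($\width(i)\le\eps N$), this yields a survival probability $\geq 1-2\eps$ for horizontal items, $\geq \tfrac{1}{2(1+\eps)}$ for short vertical items ($\height(i)\le N/2$), and $\geq 0$ for long vertical items. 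The symmetric bound holds for the vertical-strip experiment, swapping the roles of horizontal and vertical.

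Summing the two expectations over every $i\in OPT$ (each short item contributes from both experiments, each long item contributes essentially only from the ``friendly'' one), I get
\[
E_h+E_v \;\geq\; (1-2\eps)\,p(OPT)+\tfrac{1}{2(1+\eps)}\,p(OPT_{short}).
\]
Hence the better of the two strip choices achieves $\max(E_h,E_v)\geq (\tfrac12-O(\eps))\,p(OPT)+(\tfrac14-O(\eps))\,p(OPT_{short})$. By averaging, some deterministic strip position realizes at least this value, so $\max(p(OPT_h),p(OPT_v))$ is at least the same quantity, and combining with the $(1+O(\eps))$ loss of the resource-augmentation PTAS gives the same bound on $\max(p(APX_h),p(APX_v))$.

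Finally I combine the three algorithms. Writing $\beta=p(OPT_{long})/p(OPT)$ and $p(OPT_{short})=(1-\beta)\,p(OPT)$, the best of the three packings has profit at least
\[
\max\!\Bigl(\tfrac{3\beta}{4},\;\tfrac12+\tfrac{1-\beta}{4}\Bigr)\,p(OPT)-O(\eps)\,p(OPT)=\max\!\Bigl(\tfrac{3\beta}{4},\;\tfrac{3-\beta}{4}\Bigr)\,p(OPT)-O(\eps)\,p(OPT),
\]
whose worst case occurs at $\beta=3/4$ with value $\tfrac{9}{16}\,p(OPT)-O(\eps)\,p(OPT)$. This yields the $\tfrac{16}{9}+\eps$ approximation after also accounting for the $1+\eps$ loss from discarding large items (justified at the beginning of the section since $|OPT|\geq 1/\eps^3$, else brute force suffices). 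The only delicate point is to keep the constants in the survival-probability bookkeeping sharp enough that the two regimes balance exactly at $9/16$; everything else is a straightforward combination of Theorem~\ref{thm:main:Lpacking}, Lemma~\ref{lem:LoftheRing}, and the PTAS of~\cite{jansen2007new}.
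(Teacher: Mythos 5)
Your proposal follows the paper's proof essentially verbatim: the same three candidate packings (the full-knapsack L-packing via Lemma~\ref{lem:LoftheRing} and Theorem~\ref{thm:main:Lpacking}, plus the two resource-augmentation packings of \cite{jansen2007new}), the same random-strip deletion argument, and the same final balancing at $opt_{long}=3\,opt_{short}$ giving $9/16$. The one technical slip is your choice of strip range $y\in[-\eps N,N]$: for $y>(1-\eps)N$ the surviving items below the strip already occupy height up to $y>(1-\eps)N$, and for $y<0$ sliding the items above the strip down by $\eps N$ pushes them below the knapsack, so for these extreme positions the surviving set is \emph{not} feasible for the reduced knapsack and the averaging step ``some deterministic position realizes the expectation'' is not justified. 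This is easily repaired by restricting to fully contained strips, $y\in[0,(1-\eps)N]$ as in the paper; the deletion probabilities change only by $O(\eps)$ and the $9/16-O(\eps)$ bound is unaffected.
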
 \begin{proof}
Let $OPT$ be the considered optimal solution with $opt=\profit(OPT)$.
Recall that there are no large items. Let also $OPT_{vert}\subseteq OPT$
be the (\emph{vertical}) items with height more than $\eps\cdot N$
(hence with width at most $\eps\cdot N$), and $OPT_{hor}=OPT\setminus OPT_{ver}$
(\emph{horizontal} items). Note that with this definition both sides
of a horizontal item might have a length of at most $\eps\cdot N$.
We let $opt_{long}=\profit(OPT_{long})$ and $opt_{short}=\profit(OPT_{short})$.

As mentioned above, our $\fontL$-packing PTAS achieves a profit
of at least $(\frac{3}{4}-O(\eps))opt_{long}$ which can be seen by
applying Lemma \ref{lem:LoftheRing} with $H=OPT_{long}\cap OPT_{hor}$
and $V=OPT_{long}\cap OPT_{ver}$. In order to show that the other
two packings yield a good profit, consider a \emph{random horizontal
strip} $S=[0,N]\times[a,a+\eps\cdot N]$ (fully contained in the knapsack)
where $a\in[0,(1-\eps)N)$ is chosen unformly at random. We remove
all items of $OPT$ intersecting $S$. Each item in $OPT_{hor}$ and
$OPT_{short}\cap OPT_{ver}$ is deleted with probability at most $3\eps$
and $\frac{1}{2}+2\eps$, resp. Therefore the total profit of the
remaining items is in expectation at least $(1-3\eps)\profit(OPT_{hor})+(\frac{1}{2}-2\eps)\profit(OPT_{short}\cap OPT_{vert})$.
Observe that the resulting solution can be packed into a restricted
knapsack of size $[0,N]\times[0,N/(1+\eps)]$ by shifting down the
items above the horizontal strip. Therefore, when we apply the resource
augmentation algorithm in~\cite{jansen2007new} to the knapsack $[0,N]\times[0,N/(1+\eps)]$,
up to a factor $1-\eps$, we will find a solution of (deterministically!)
at least the same profit. In other terms, this profit is at least
$(1-4\eps)\profit(OPT_{hor})+(\frac{1}{2}-\frac{5}{2}\eps)\profit(OPT_{short}\cap OPT_{vert})$.

By a symmetric argument, we obtain a solution of profit at
least $(1-4\eps)\profit(OPT_{ver})+(\frac{1}{2}-\frac{5}{2}\eps)\profit(OPT_{short}\cap OPT_{hor})$
when we apply the algorithm in~\cite{jansen2007new} to the knapsack
$[0,N/(1+\eps)]\times[0,N]$. Thus the best of the latter two solutions
has profit at least $(\frac{1}{2}-2\eps)opt_{long}+(\frac{3}{4}-\frac{13}{4}\eps)opt_{short}=(\frac{1}{2}-2\eps)opt+(\frac{1}{4}-\frac{5}{4}\eps)opt_{short}$.
The best of our three solutions has therefore value at least $(\frac{9}{16}-O(\eps))opt$
where the worst case is achieved for roughly $opt_{long}=3\cdot opt_{short}$.
\end{proof}


In the above result we use either an L-packing or a container packing. The $\frac{558}{325}+\eps$ approximation claimed in Theorem \ref{trm:tdk_car:refined} is obtained by a careful combination of these two packings. In particular, we consider configurations where long items (or a subset of them) can be packed into a relatively small $L$, and pack part of the remaining short items in the complementary rectangular region (using container packings and Steinberg's algorithm \cite{steinberg1997strip}). See Section \ref{sec:tdk_car:refined} for details.

\section{Open Problems}

\fabr{Reformulated open problems in question form to be more positive}\andyr{If we need more space we can change the paper format to A4 (it did not say anything about this in the call) \fab{Nothing changes. We are really tight at the moment}}
The main problem that we left open is to find a PTAS, if any, for \tdk and \tdkr. This would be interesting even in the cardinality case. We believe that a better understanding of natural generalizations of \fontL-packings might be useful. For example, is there are PTAS for \emph{ring-packing} instances arising by shifting of long items? This would directly lead to an improved approximation factor for \tdk (though not \fab{to} a PTAS). Is there a PTAS for \fontL-packings \emph{with rotations}? Our improved approximation algorithms for \tdkr\ are indeed based on a different approach. Is there a PTAS for $O(1)$ instances of \fontL-packing? This would also lead to an improved approximation factor for \tdk, and might be an important step towards a PTAS. 
\newpage



\bibliographystyle{plain}
\bibliography{bibliography}

\appendix


\section{Weighted Case Without Rotations\label{sec:weighted}}

In this section we show how to extend the reasoning of the unweighted
case to the weighted case. This requires much more complicated technical
machinery than the algorithm presented in Section~\ref{sec:tdk_car:simple}.

Our strategy is to start with a partition of the knapsack into thin
corridors as defined in~\cite{adamaszek2015knapsack}. Then, we partition
these corridors into a set of rectangular boxes and an L-packing.
We first present a simplified version of our argumentation in which
we assume that we are allowed to drop $O_{\eps}(1)$ many items
at no cost, i.e., we pretend that we have the right to remove $O_{\eps}(1)$
items from $\opt$ and compare the profit of our computed solution
with the remaining set. Building on this, we give an argumentation
for the general case which will involve some additional shifting arguments.

\subsection{Item classification}

We start with a classification of the input items according to their
heights and widths. For two given constants $1\geq\epsl>\epss>0$,
we classify an item $i$ as:\fabr{We should choose between item
and rectangle} 

\itemsep0pt 
\begin{itemize}
\item[$\bullet$] \emph{small} if $h_{i},w_{i}\leq\epss N$; 
\item[$\bullet$] \emph{large} if $h_{i},w_{i}>\epsl N$; 
\item[$\bullet$] \emph{horizontal} if $w_{i}>\epsl N$ and $h_{i}\leq\epss N$; 
\item[$\bullet$] \emph{vertical} if $h_{i}>\epsl N$ and $w_{i}\leq\epss N$; 
\item[$\bullet$] \emph{intermediate} otherwise (i.e., at least one side has length
in $(\epss N,\epsl N]$). 
\end{itemize}
We also call \emph{skewed} items that are horizontal or vertical.
We let $\Rsm$, $\Rla$, $\Rho$, $\Rve$, $\Rsk$, and $\Rin$ be
the items which are small, large, horizontal, vertical, skewed, and
intermediate, respectively. The corresponding intersection with $\opt$
defines the sets $\optsm$, $\optla$, $\optho$, $\optve$, $\optsk$,
$\optin$, respectively. 

Observe that $|\optla|=O(1/\epsl^{2})$ and since we are allowed to
drop $O_{\eps}(1)$ items from now on we ignore $\optla$. The
next lemma shows that we can neglect also $\optin$. 
\begin{lem}
\label{lem:item-classification}For any constant $\eps>0$ and \fab{positive}
increasing function $f(\cdot)$, \fab{$f(x)>x$,} there exist constant
values $\epsl,\epss$, with $\eps\geq\epsl\geq\fab{f(\epss)}\ge\Omega_{\eps}(1)$
and $\epss\in\Omega_{\eps}(1)$ such that the total profit of intermediate
rectangles is bounded by $\eps p(OPT)$. \andy{The pair $(\epsl,\epss)$
is one pair from a set of $O_\eps(1)$ pairs and this set can be computed in polynomial time.}
\end{lem}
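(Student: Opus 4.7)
The plan is a standard averaging argument over a geometrically decreasing sequence of candidate threshold pairs. Set $K := \lceil 2/\eps \rceil$ and define inductively $\epsl_1 := \eps$ and, for $j = 1, \ldots, K$, pick $\epss_j > 0$ small enough that $f(\epss_j) \leq \epsl_j$ (such a choice exists since $f$ is positive and increasing); then set $\epsl_{j+1} := \epss_j$. By construction the intervals $(\epss_j N, \epsl_j N]$ are pairwise disjoint, and each pair satisfies $\eps \geq \epsl_j \geq f(\epss_j)$. Since $K$ depends only on $\eps$ and each iteration of the above construction depends only on $f$, the smallest threshold $\epss_K$ is a positive constant depending only on $\eps$ and $f$, hence $\Omega_\eps(1)$; consequently every $\epss_j, \epsl_j$ lies in $[\epss_K,\eps]$, which is $\Omega_\eps(1)$.

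For each $j = 1, \ldots, K$, let $\optin^{(j)}$ denote the intermediate items of $\opt$ with respect to $(\epsl_j, \epss_j)$, that is, items having at least one side of length in $(\epss_j N, \epsl_j N]$. Since the $K$ intervals are pairwise disjoint and each rectangle has only two sides, every item of $\opt$ belongs to at most two of the sets $\optin^{(j)}$. Summing,
\[
\sum_{j=1}^{K} p(\optin^{(j)}) \;\leq\; 2\,p(\opt),
\]
so by averaging there is an index $j^*$ with $p(\optin^{(j^*)}) \leq 2\,p(\opt)/K \leq \eps\cdot p(\opt)$. Taking $(\epsl,\epss) := (\epsl_{j^*},\epss_{j^*})$ yields the required inequality on intermediate items together with the constant lower bounds on both thresholds.

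For the computability claim, the $K = O_\eps(1)$ pairs $\{(\epsl_j,\epss_j)\}_{j=1}^{K}$ depend only on $\eps$ and $f$ and can be enumerated in time independent of the input. Since we do not know $\opt$ and therefore cannot identify $j^*$ directly, the enclosing algorithm simply runs the remainder of the \tdk{} algorithm for each of the $K$ candidate pairs and returns the best solution; the averaging argument guarantees that at least one of these runs uses a pair satisfying the lemma, which is all the approximation analysis needs.

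The main obstacle is essentially notational rather than mathematical: this is a variant of the well-known classification lemma used in the QPTAS of~\cite{adamaszek2015knapsack} and similar geometric packing results, so the only subtlety is calibrating $K$ to absorb the loss of factor $2$ (from the two sides of each rectangle) while keeping the iterated application of $f$ bounded in terms of $\eps$ alone. Both are immediate from our choice $K = \lceil 2/\eps \rceil$ and the fact that $f$ is a fixed function independent of the input.
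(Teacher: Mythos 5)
Your proof is correct and follows essentially the same approach as the paper's own proof: both construct a geometrically decreasing chain of threshold pairs via iterated application of $f$, observe that the resulting length intervals are pairwise disjoint so each rectangle (having two sides) can be intermediate for at most two pairs, and conclude by averaging over $\Theta(1/\eps)$ pairs. The only cosmetic difference is that the paper defines the chain via the exact relation $\eps_i = f(\eps_{i+1})$ (implicitly using an inverse of $f$), whereas you pick $\epss_j$ small enough that $f(\epss_j) \leq \epsl_j$, which is slightly more flexible but otherwise equivalent.
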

\begin{proof}
Define $k+1=2/\eps+1$ constants $\eps_{1},\ldots,\eps_{k+1}$, \san{such that}
\andy{$\eps=f(\eps_{1})$} and $\eps_{i}=f(\eps_{i+1})$ \andy{for
each~$i$}. Consider the $k$ ranges of widths and heights of type
$(\eps_{i+1}N,\eps_{i}N]$. By an averaging argument there exists
one index $j$ such that the total profit of items in $\opt$ with
at least one side length in the range $(\eps_{j+1}N,\eps_{j}N]$ is
at most $2\frac{\eps}{2}p(\opt)$. It is then sufficient to set $\epsl=\eps_{j}$
and $\epss=\eps_{j+1}$. 
\end{proof}
We transform now the packing of the optimal solution $\opt$. To this
end, we temporarily remove the small items $\optsm$. We will add
them back later. Thus, the reader may now assume that we need to pack
only the skewed items from $\optsk$.

\subsection{Corridors, Spirals and Rings}

We build on a partition of the knapsack into corridors as used in
\cite{adamaszek2015knapsack}. We define an \emph{open corridor} to
be a face on the 2D-plane bounded by a simple rectilinear polygon
with $2k$ edges $e_{0},\ldots,e_{2k-1}$ for some integer $k\geq2$,
such that for each pair of horizontal (resp., vertical) edges $e_{i},e_{2k-i}$,
$i\in\{1,...,k-1\}$ there exists a vertical (resp., horizontal) line
segment $\ell_{i}$ such that both $e_{i}$ and $e_{2k-i}$ intersect
$\ell_{i}$ and $\ell_{i}$ does not intersect any other edge. Note
that $e_{0}$ and $e_{k}$ are not required to satisfy this property:
we call them the \emph{boundary edges} of the corridor. Similarly
a \emph{closed corridor} (or \emph{cycle}) is a face on the 2D-plane
bounded by two simple rectilinear polygons defined by edges $e_{0},\ldots,e_{k-1}$
and $e'_{0},\ldots,e'_{k-1}$ such that the second polygon is contained
inside the first one, and for each pair of horizontal (resp., vertical)
edges $e_{i},e'_{i}$, $i\in\{0,...,k-1\}$, there exists a vertical
(resp., horizontal) line segment $\ell_{i}$ such that both $e_{i}$
and $e'_{i}$ intersect $\ell_{i}$ and $\ell_{i}$ does not intersect
any other edge. See Figures \san{\ref{fig:packing}} and \andy{\ref{fig:corridors}}
for examples. Let us focus on \fab{minimum} length such $\ell_{i}$'s:
then the \emph{width} $\alpha$ of the corridor is the \fab{maximum}
length of any such $\ell_{i}$. We say that an open (resp., closed)
corridor of the above kind has $k-2$ (resp., $k$) \emph{bends}.
A corridor partition is a partition of the knapsack into corridors.

\begin{figure}
\centering
\includegraphics[width=.25\textwidth]{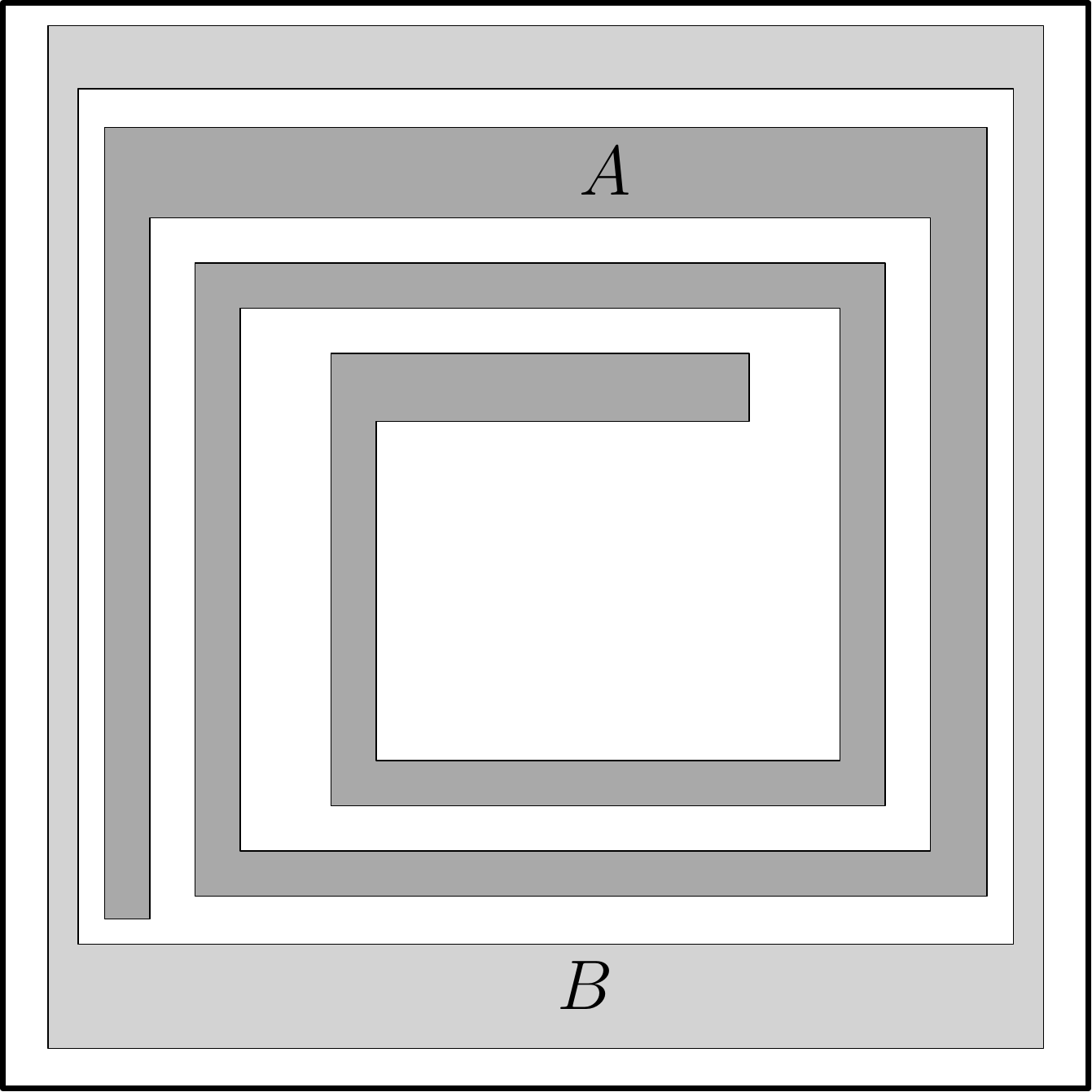}
\caption{Illustration of two specific types of corridors: spirals (A) and rings (B).\label{fig:packing}.}
\end{figure}
 
\begin{lem}[Corridor Packing Lemma \cite{adamaszek2015knapsack}]
\label{lem:corridorPack-weighted}There exists a corridor partition
and \andy{a set} $\optco\subseteq\optsk$ such that: 
\end{lem}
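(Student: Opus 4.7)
The plan is to adapt the hierarchical shifting argument of Adamaszek and Wiese that produces a corridor partition for arbitrary rectangle packings. First, I would start from the geometric packing of the items in $\optsk$ inside the knapsack. For each horizontal item $i \in \optho \cap \opt$, extend the top and bottom edges of $i$ as horizontal line segments until each hits either another skewed item or the boundary of the knapsack; do the symmetric operation for each vertical item with its left and right edges. This produces a rectilinear planar subdivision of the knapsack. Because every skewed item has one side of length larger than $\epsl N$, the elementary cells of this subdivision are locally "long and thin" along a dominant direction, which lets me greedily merge chains of cells that share this dominant direction into open or closed corridors matching the formal definition.

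Second, I would attack the complexity of this raw partition, which a priori may contain a number of corridors polynomial in $n$ and corridors with arbitrarily many bends. The reduction goes via a standard shifting step applied to a hierarchy: build a rooted laminar family $\mathcal{F}$ over the corridors where children of a node are the corridors obtained by splitting it along an internal bend, and classify the items in $\optsk$ by the depth at which their corridor appears. Pick a random offset $s \in \{0,\ldots,1/\eps - 1\}$ and delete every skewed item whose corridor sits at depth $\equiv s \pmod{1/\eps}$ in $\mathcal{F}$, adding these to $\optco$. A direct averaging argument over the $1/\eps$ choices of $s$ shows that one offset yields $\profit(\optco) \le \eps \cdot \profit(\optsk) \le \eps \cdot \profit(\opt)$, as required.

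Third, I would argue that after removing $\optco$, adjacent slabs of the partition that used to be separated by a deleted item can be merged, and that this merging process terminates with only $O_\eps(1)$ corridors, each with $O_\eps(1)$ bends. The key observation is that each surviving corridor corresponds to an interval of at most $1/\eps$ consecutive depth levels of $\mathcal{F}$, and at each depth the branching is bounded in terms of $\epsl$ and $\epss$; combining these two facts recursively caps both the number of corridors and the number of bends inside a single corridor by a function of $\eps$ only. Closed corridors (rings), which cannot be eliminated by a single vertical/horizontal separator, need a dedicated treatment: I would handle them by picking a representative chord inside each ring and accounting for the items straddling the chord separately (again via the shifting offset, so their cost is already absorbed in $\optco$).

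The main obstacle is coordinating the two budgets simultaneously: one shifting pass suffices to control \emph{either} the number of corridors \emph{or} the number of bends per corridor, but not both at once, so the argument has to be iterated for $O(1/\eps)$ hierarchy levels while keeping the total profit removed at each level a geometric series that sums to $O(\eps) \profit(\opt)$. Getting this telescoping right, and in particular guaranteeing that the shifts at different levels are compatible so that after all deletions the remaining items of $\optsk \setminus \optco$ still lie entirely inside the final merged corridors, is the delicate part; the remaining bookkeeping (verifying the corridor definition formally, and the fact that each bend is crossed by the short side of the skewed items rather than the long one) follows routinely from the construction.
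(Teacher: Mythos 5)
First, a point of bookkeeping: this lemma is not proved in the paper at all --- it is imported verbatim from Adamaszek and Wiese \cite{adamaszek2015knapsack}, so the only meaningful comparison is between your sketch and the known proof in that work. Your overall plan (start from the fixed optimal packing, build a rectilinear subdivision driven by the skewed items, and use a hierarchical shifting argument to prune it down to $O_{\eps,\epsl}(1)$ corridors with $O(1/\eps)$ bends each) is the right family of ideas, but as written the proposal has a concrete error and several unresolved gaps.

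The concrete error is that you have inverted the role of $\optco$. In the statement, $\optco$ is the \emph{retained} set: condition~2 requires $\profit(\optco)\geq(1-O(\eps))\profit(\optsk)$, and condition~1 requires that all but $O_{\eps}(1)$ of its items be fully contained in corridors. Your shifting step instead defines $\optco$ as the \emph{deleted} items and proves $\profit(\optco)\le\eps\cdot\profit(\optsk)$, which is the opposite inequality. Even granting that this is a naming slip, the renaming does not repair the proof, because you never verify condition~1 for the survivors (that each retained item, apart from a constant-size exceptional set $\optco^{cross}$, lies entirely inside a single corridor of the \emph{final, merged} partition), and you never establish the width bound $\le\epsl N$ in condition~3 --- which in the Adamaszek--Wiese argument comes from the fact that corridors are grown around chains of skewed items whose thin dimension is at most $\epss N\ll\epsl N$, and is not automatic from your edge-extension subdivision. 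Finally, you explicitly flag the coordination of the two shifting budgets (number of corridors versus bends per corridor) as ``the delicate part'' and leave it unresolved; since that telescoping is precisely where the quantitative bounds $O_{\eps,\epsl}(1)$ and $1/\eps$ come from, the proposal as it stands is a plausible outline of the cited proof rather than a proof.
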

\begin{enumerate}
\item there is a subset \san{$\optco^{cross}\subseteq\optco$} with \san{$|\optco^{cross}|\le O_{\eps}(1)$}
such that each item \san{$i\in\optco\setminus\optco^{cross}$} is fully contained
in some corridor, 
\item $p(\optco)\geq(1-O(\eps))p(\andy{\optsk})$, 
\item the number of corridors is $O_{\eps,\epsl}(1)$ and each corridor
has width at most $\epsl N$ and has at most $1/\eps$ bends. 
\end{enumerate}
Since we are allowed to drop $O_{\eps}(1)$ items from now on
we ignore \san{$\optco^{cross}$}. We next identify some structural properties
of the corridors that are later exploited in our analysis. 
Observe that an open (resp., closed) corridor of the above type is
the union of $k-\san{1}$ (resp., $k$) boxes, that we next call \emph{subcorridors}
(see also Figure \ref{fig:corridors}). Each such box is a maximally
large rectangle that is contained in the corridor. The subcorridor
$S_{i}$ of an open (resp., closed) corridor of the above kind is
the one containing edges $e_{i},e_{2k-i}$ (resp., $e_{i},e_{i'}$)
on its boundary. The length of $S_{i}$ is the \emph{length} of the
shortest such edge. We say that a subcorridor is \emph{long} if its
length is more than $N/2$, and \emph{short} otherwise. The partition
of subcorridors into short and long will be crucial in our analysis.

\fabr{Should add illustration of corridor width, horizontal/vertical,
(counter)clockwise, U/Z-bent in the figure (possible with same space
usage)}\andyr{Added description of horizontal/vertical, (counter)clockwise,
U/Z-bent, and corridor width in the caption.} We call a subcorridor
\emph{horizontal} (resp., \emph{vertical}) if the corresponding edges
are so. Note that each rectangle in $\optco$ is \fab{univocally
associated} with the only subcorridor that fully contains it: indeed,
the longer side of a skewed rectangle is longer than the width of
any corridor. Consider the sequence of consecutive subcorridors $S_{1},\ldots,S_{k'}$
of an open or closed corridor. Consider two consecutive corridors
$S_{i}$ and $S_{i'}$, with $i'=i+1$ in the case of an open corridor
and $i'=(i+1)\pmod{k'}$ otherwise. \san{ First assume that $S_{i'}$ is horizontal. }We say that $S_{i'}$ is to the
right (resp., left) of $S_{i}$ \san{if the right-most (left-most) boundary of $S_{i'}$ is to the right (left) of the right-most (left-most) boundary of $S_i$. If instead $S_{i'}$ is vertical, then $S_i$ must be horizontal and we say that $S_{i'}$ is to the right (left) of $S_i$ if $S_i$ is to the left (right) of $S_{i'}$. Similarly, if $S_{i'}$ is vertical, we say that $S_{i'}$ is above (below) $S_i$ if the top (bottom) boundary of $S_{i'}$ is above (below) the top (bottom) boundary of $S_i$. If $S_{i'}$ is horizontal, we say that it is above (below) $S_i$ if $S_i$ (which is vertical) is below (above) $S_{i'}$.} 
We say that the pair $(S_{i},S_{i'})$ forms a clockwise
bend \fab{if $S_{i}$ is horizontal and $S_{i'}$ is to its bottom-right
or top-left, and the complementary cases if $S_{i}$ is vertical.}
In all the other cases the pairs \ari{form} a counter-clockwise bend. Consider
a triple $(S_{i},S_{i'},S_{i''})$ of consecutive subcorridors in
the above sense. It forms a $U$-bend if $(S_{i},S_{i'})$ and $(S_{i'},S_{i''})$
are both clockwise or counterclockwise bends. Otherwise it forms a
\fab{$Z$}-bend. In both cases $S_{i'}$ is the \emph{center} of
the bend, and $S_{i},S_{i''}$ its \emph{sides}. An open corridor
whose bends are all clockwise (resp., counter-clockwise) is a \emph{spiral}.
A closed corridor with $k=4$ is a \emph{ring}. Note that in a ring
all bends are clockwise or counter-clockwise, hence in some sense
it is the closed analogue of a spiral. \fab{We remark that a corridor
whose subcorridors are all long is a spiral or a ring}\footnote{We leave the simple proof for the ring case to the reader since we
do not explicitly need this claim.}. As we will see, spirals and rings play a crucial role in our analysis.
In particular, we will exploit the following simple fact.\fabr{Changed
into lemma plus proof} 

\begin{lem}\label{lem:spiral} The following properties hold: 

\itemsep0pt 
\begin{enumerate}
\item \label{claim:Zbend} The two sides of a $Z$-bend cannot be long.
In particular, an open corridor whose subcorridors are all long is
a spiral. 
\item \label{claim:Ubends} A closed corridor contains at least $4$ distinct
(possibly overlapping) $U$-bends. 
\end{enumerate}
\end{lem}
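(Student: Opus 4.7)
The two assertions are handled separately, both driven by the geometry forced by the definitions of subcorridors and bend directions.

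For Claim~\ref{claim:Zbend}, the plan is a short case analysis on the orientation of the center $S_{i'}$ of the Z-bend $(S_i,S_{i'},S_{i''})$. Since subcorridors alternate orientation along any corridor, $S_i$ and $S_{i''}$ are parallel to one another, perpendicular to $S_{i'}$, and attached to the two opposite ends of $S_{i'}$. Assume for concreteness that $S_{i'}$ is horizontal; the vertical case is symmetric. Unpacking the paper's definition of a clockwise bend (``$S_{i'}$ to the bottom-right or top-left of $S_i$'' when $S_i$ is horizontal, and the complementary condition when $S_i$ is vertical), I will verify the key geometric fact: a Z-bend---one clockwise bend followed by one counter-clockwise bend, or vice versa---forces $S_i$ and $S_{i''}$ to extend from $S_{i'}$ in \emph{opposite} vertical directions, one upward from one end of $S_{i'}$ and the other downward from the opposite end. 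Hence the total vertical extent of these three subcorridors is at least $\mathrm{length}(S_i)+\mathrm{length}(S_{i''})$, and since everything must fit inside the $N\times N$ knapsack we obtain $\mathrm{length}(S_i)+\mathrm{length}(S_{i''})\le N$, which is incompatible with both sides being long ($>N/2$).

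The second assertion of Claim~\ref{claim:Zbend} follows by contrapositive. If an open corridor is not a spiral then it has at least two bends, not all of the same rotational direction, so two consecutive bends of opposite direction exist somewhere in the linear bend sequence. The three subcorridors they involve form a Z-bend, and the first part of the claim forces one of its sides (a subcorridor of the original corridor) to have length $\le N/2$, contradicting the hypothesis that all subcorridors are long.

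For Claim~\ref{claim:Ubends}, the plan is a winding-number computation combined with a short combinatorial count. Traversing a closed corridor once around its hole is a closed rectilinear walk whose total turning equals $\pm 2\pi$. Each bend contributes $\pm\pi/2$ with sign determined by its rotational type, so if $a$ and $b$ denote the numbers of clockwise and counter-clockwise bends then $|a-b|=4$. A U-bend corresponds to a consecutive pair of same-direction bends in the cyclic sequence of bends. Writing this sequence as $r$ maximal runs of CW alternating cyclically with $r$ maximal runs of CCW, the number of direction changes around the cycle equals $c=2r$; since each run contains at least one element, $r\le\min(a,b)$. Hence the number of U-bends equals $(a+b)-c\ge (a+b)-2\min(a,b)=|a-b|\ge 4$. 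The degenerate case $\min(a,b)=0$ is immediate since then every consecutive pair among the $k\ge 4$ bends is a U-bend.

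The main obstacle is the geometric case analysis underpinning Claim~\ref{claim:Zbend}: carefully checking from the definition of clockwise/counter-clockwise bends that a Z-bend's two sides extend on opposite sides of $S_{i'}$ along the axis perpendicular to it. Once this picture is established, the length inequality is immediate, the spiral statement becomes a one-line contrapositive, and Claim~\ref{claim:Ubends} reduces to an essentially standard winding-plus-counting argument.
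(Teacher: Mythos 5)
Your argument for part~(1) is essentially the paper's: the paper disposes of it in one line by observing that the three subcorridors of a $Z$-bend would otherwise have total width or height larger than $N$, which is exactly the ``sides extend in opposite directions from the center, so their lengths add up'' picture you describe; your contrapositive derivation of the spiral statement is likewise the intended one, and you correctly flag that the only real work is the definitional case analysis (which the paper also leaves implicit). For part~(2), however, you take a genuinely different route. The paper's proof is extremal and local: it points to the left-most and right-most vertical subcorridors and the top-most and bottom-most horizontal subcorridors, notes that these four exist and are distinct in any closed corridor, and that each is the center of a $U$-bend (both neighbors of an extremal subcorridor must bend back toward the interior, hence in the same rotational direction). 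Your proof is global: total turning $\pm2\pi$ of the closed rectilinear cycle gives $|a-b|=4$ for the counts of clockwise and counter-clockwise bends, and the run-counting inequality (number of same-direction adjacent pairs $= (a+b)-2r \ge |a-b|$) converts this into at least $4$ $U$-bends. Both are correct. The paper's version is shorter and identifies concrete $U$-bends (which is what the box-partitioning routine actually needs -- a subcorridor to process first), while yours is more robust to the exact phrasing of the bend definitions and in fact proves the slightly stronger statement that the number of $U$-bends is at least the discrepancy $|a-b|=4$ plus twice the ``excess'' $\min(a,b)-r$; the only point you should make explicit is that the paper's clockwise/counter-clockwise classification really does coincide with the turn direction of the traversal, so that the winding-number count applies.
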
 \begin{proof} \eqref{claim:Zbend} By definition of long
subcorridors and $Z$-bend, the $3$ subcorridors of the $Z$-bend
would otherwise have total width or height larger than $N$. \eqref{claim:Ubends}
Consider the left-most and right-most vertical subcorridords, and
the top-most and bottom-most horizontal subcorridors. These $4$ subcorridors
exist, are distinct, and are centers of a $U$-bend. \end{proof}
\fabr{About Fig. \ref{fig:corridors}. Left: if you add $S_{i-1}$
and $S_{i+1}$ (maybe using dashed areas or similar tricks), we can
define $U/Z$-bends. Right: hard to see the split. Maybe just $3/4$
cutting lines would be sufficient} \andyr{Left: Added description
of U/Z-bends and names of other subcorridors. Right: now only three
boxes, I think it is easier to see now what happens.}

\subsection{Partitioning Corridors into Rectangular Boxes\label{sec:structural:boxes}}

We next describe a routine to partition the corridors into rectangular
boxes such that each item is contained in one such box. We remark
that to achieve this partitioning we sometimes have to sacrifice a
large fraction of $\optco$, hence we do not achieve a $1+\eps$ approximation
as in \cite{AW2013}. On the positive side, we generate only a constant
(rather than polylogarithmic) number of boxes. This is crucial to
obtain a polynomial time algorithm in the later steps.

Recall that each $i\in\optco$ is univocally associated with the only
subcorridor that fully contains it. 
\ari{We will say that we \emph{delete} a sub-corridor, when we delete all rectangles univocally associated with the subcorridor. Note that in deletion of a sub-corridor we do not delete rectangles that are partially contained in that subcorridor but  completely contained in a neighbor sub-corridor.}
\arir{{\bf One of the reviewers wanted more explanation on deletion.}}
Given a corridor, we sometimes
\emph{delete} some of its subcorridors, and consider the \emph{residual}
corridors (possibly more than one) given by the union of the remaining
subcorridors. 
Note that removing any subcorridor from a closed corridor
turns it into an open corridor. We implicitly assume that items associated
with a deleted subcorridor are also removed (and consequently the
corresponding area can be used to pack other items).

\begin{figure}
\begin{centering}
\includegraphics[height=2cm]{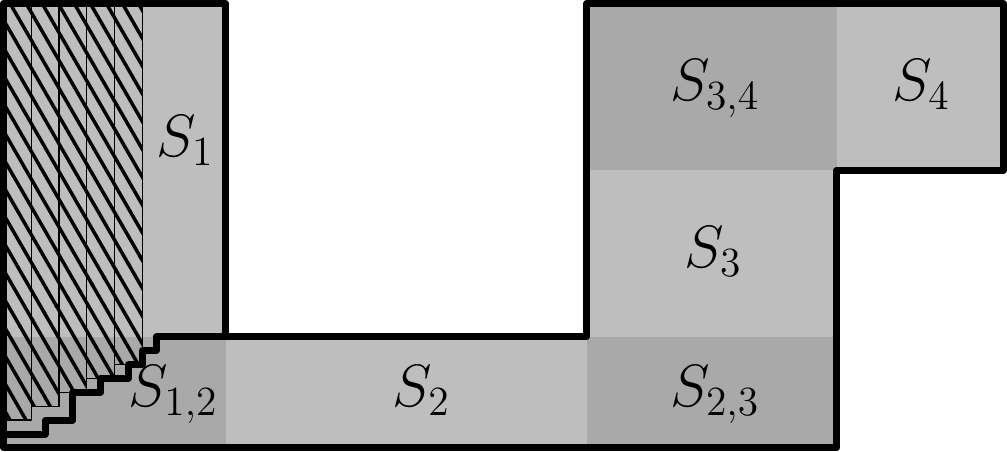}~~~~\includegraphics[height=2cm]{} 
\par\end{centering}
\caption{\label{fig:corridors}Left: \andy{The subcorridors $S_{1}$ and
$S_{3}$ are vertical, $S_{2}$ and $S_{4}$ are horizontal.} The
subcorridor $S_{3}$ is on the top-right of $S_{2}$. \andy{ The
curve on the bottom left shows the boundary curve between $S_{1}$
and $S_{2}$. The pair $(S_{3},S_{4})$ forms a clockwise bend and
the pair $(S_{2},S_{3})$ forms a counter-clockwise bend. The triple
$(S_{1},S_{2},S_{3})$ forms a $U$-bend and the triple $(S_{2},S_{3},S_{4})$
forms a $Z$-bend.} Right: Our operation that divides a corridor
into $O_{\eps}(1)$ boxes and $O_{\eps}(1)$ shorter corridors. The
dark gray items show thin items that are removed in this operation.
\andy{The light gray items are fat items that are shifted to the
box below their respective original box.} The value $\alpha$ denotes
the width of the depicted corridor. }
\end{figure}

Given two consecutive subcorridors $S_{i}$ and $S_{i'}$, we define
the \emph{boundary curve} among them as follows (see also Figure \ref{fig:corridors}).
Suppose that $S_{i'}$ is to the top-right of $S_{i}$, the other
cases being symmetric. Let $S_{i,i'}=S_{i}\cap S_{i'}$ be the rectangular
region shared by the two subcorridors. Then the boundary curve among
them is any simple rectilinear polygon inside $S_{i,i'}$ that decreases
monotonically from its top-left corner to its bottom-right one and
that does not cut any rectangle in these subcorridors. For a boundary
horizontal (resp., vertical) subcorridor of an open corridor (i.e.,
a subcorridor containing $e_{0}$ or $e_{2k-1}$) we define a dummy
boundary curve given by the vertical (resp., horizontal) side of the
subcorridor that coincides with a (boundary) edge of the corridor. 

\begin{remark}\label{rem:boundaryCurves} Each subcorridor has two
boundary curves (including possibly dummy ones). Furthermore, all
its items are fully contained in the region delimited by such curves
plus the two edges of the corridor associated with the subcorridor
\emph{(private region)}. \end{remark} 

Given a corridor, we partition its area into a constant number of
boxes as follows (see also Figure \ref{fig:corridors}, and \cite{AW2013}
for a more detailed description of an analogous construction). Let
$S$ be one of its boundary subcorridors (\san{if} any), or the central
subcorridor of a $U$-bend. Note that one such $S$ must exist (trivially
for an open corridor, otherwise by Lemma \ref{lem:spiral}.\ref{claim:Ubends}).
In the corridor partition, there might \ari{be} several subcorridors fulfilling
the latter condition. We will explain later in which order to process
the subcorridors, here we explain only how to apply our routine to
\emph{one} subcorridor, which we call \emph{processing} of subcorridor. 

Suppose that $S$ is horizontal with height $b$, with the shorter
horizontal associated edge being the top one. The other cases are
symmetric. Let $\epst>0$ be a sufficiently small constant to be defined
later. If $S$ is the only subcorridor in the considered corridor,
\fab{$S$ forms a box and all its items are marked as \emph{fat}}.
Otherwise, we draw $1/\epst$ horizontal lines that partition the
private region of $S$ into subregions of \ari{height} $\epst b$. We mark
as \emph{thin} the items of the bottom-most (i.e., the widest) such
subregion, and as \emph{killed} the items of the subcorridor cut by
these horizontal lines. 
All the remaining items of the subcorridor are marked as \emph{fat}.

For each such subregion, we define an associated (horizontal) box
as the largest axis-aligned box that is contained in the subregion.
Given these boxes, we partition the rest of the corridor into $1/\epst$
corridors as follows. Let $S'$ be a corridor next to $S$, say to
its top-right. Let $P$ be the set of corners of the boxes contained
in the boundary curve between $S$ and $S'$. We project $P$ vertically
on the boundary curve of $S'$ not shared with $S$, hence getting
a set $P'$ of $1/\fab{\epst}$ points. We iterate the process on
the pair $(S',P')$. At the end of the process, we obtain a set of
$1/\epst$ boxes from the starting subcorridor $S$, plus a collection
of $1/\epst$ new (open) corridors each one having one less bend with
respect to the original corridor. \san{Later, we will also apply this process on the latter
corridors.} \andy{Each newly created corridor will have one bend less than the original corridor 
and thus this process eventually terminates.}
Note that, since initially there are $O_{\eps,\epsl}(1)$ corridors
each one with $O(1/\eps)$ bends, the final number of boxes is $O_{\eps,\epsl,\epst}(1)$.
See Figure \ref{fig:corridors} for an illustration.

\begin{remark} Assume that we execute the above procedure on the
subcorridors until there is no subcorridor left on which we can apply
it. Then we obtain a partition of $\optco$ into disjoint sets $\optth$,
$\optfa$, and $\optki$ of thin, fat, and killed items, respectively.
Note that each order to process the subcorridors leads to different
such partition. We will define this order carefully in our analysis.
\end{remark} 
\begin{remark}\label{rem:fatPack} By a simple shifting
argument, there exists a packing of $\optfa$ into the boxes. Intuitively,
in the above construction each subregion \andy{is} fully contained
in the box associated with the subregion immediately below (when no
lower subregion exists, the corresponding items are thin). \end{remark}

\san{We will from now on assume that the shifting of items as described in 
Remark \ref{rem:fatPack} has been done.}\sanr{A reviewer was complaining that
it is not explicitly clear whether we do this or not.}

The following lemma summarises some of the properties of the boxes
and of the associated partition of $\optco$ (independently from the
way ties are broken). \andy{Let $\Rho$ and $\Rve$ denote the set
of horizontal and vertical input items, respectively.} \begin{lem}\label{lem:boxProperties}
The following properties hold: 

\itemsep0pt 
\begin{enumerate}
\item \label{lem:boxProperties:profitKill} \fab{$|\optki|=O_{\eps,\epsl,\epst}(1)$};
\item \label{lem:boxProperties:thin} For any given constant $\epsr>0$
\andy{there is} a sufficiently small $\epst>0$ \andy{such that}
the total height (resp., width) of items in $\optth\cap\Rho$ (resp.,
$\optth\cap\Rve$) is at most $\epsr N$. 
\end{enumerate}
\end{lem}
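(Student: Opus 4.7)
The plan is to handle the two parts by independent arguments: a cutting-count for part~(1) and an area accounting for part~(2).

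For part~(1), I would first bound the items killed in a single invocation of the processing routine on a subcorridor $S$. If $S$ is horizontal (the vertical case being symmetric), then every item univocally associated with $S$ is a horizontal item and therefore has width at least $\epsl N$. Each of the $1/\epst$ horizontal cut lines inside $S$ has length at most $N$, and since at any height at most $N/(\epsl N)=1/\epsl$ pairwise-disjoint items of width $\geq\epsl N$ can coexist, every single line kills at most $1/\epsl$ items. Thus one processing kills $O(1/(\epst\epsl))$ items. Next I would bound the number of processing invocations: each original corridor has at most $1/\eps$ bends and hence at most $O(1/\eps)$ subcorridors, and each invocation produces $1/\epst$ child corridors with one fewer bend, so the resulting recursion tree has at most $(1/\epst)^{O(1/\eps)}$ nodes. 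Multiplying by the $O_{\eps,\epsl}(1)$ initial corridors yields $|\optki|=O_{\eps,\epsl,\epst}(1)$.

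For part~(2) a pure counting of this form does not suffice, since the count of processings itself grows as $\epst$ shrinks. My plan instead is to use an area argument. When a horizontal subcorridor of width $b\leq\epsl N$ is processed, its thin subregion is a rectangle of height $\epst b$ spanning the length of the subcorridor, so its area is at most an $\epst$-fraction of the subcorridor's area. The crucial geometric observation is that all processed subcorridors throughout the recursion are pairwise disjoint: whenever the routine is applied to a subcorridor $S$, its full area (including any overlap with an adjacent subcorridor at a bend) is carved out and removed from further consideration, and the $1/\epst$ child corridors partition only what remains of the parent. Consequently the total area of all processed subcorridors is bounded by the total area of the initial corridors, i.e.\ at most $N^2$, and the total area of all thin subregions is at most $\epst N^2$.

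To convert this into the required bound on $\sum_{i\in\optth\cap\Rho} h_i$, I would use that every such item lies in some horizontal thin subregion and has $w_i\geq\epsl N$, so that
\[
\sum_{i\in\optth\cap\Rho} h_i \;=\; \sum_{i\in\optth\cap\Rho}\frac{h_i w_i}{w_i} \;\leq\; \frac{1}{\epsl N}\sum_{i\in\optth\cap\Rho} h_i w_i \;\leq\; \frac{\epst N^2}{\epsl N} \;=\; \frac{\epst}{\epsl}\,N,
\]
where $\sum_i h_i w_i$ is bounded by the total area of horizontal thin subregions. Choosing any $\epst\leq\epsr\epsl$ then yields the desired $\epsr N$ bound, and the claim for $\optth\cap\Rve$ follows by symmetry. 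The main technical care needed is in verifying the disjointness claim for processed subcorridors, namely that the boundary curves used to create the $1/\epst$ child corridors genuinely partition the complement of $S$ in its parent corridor, so that deeper recursive invocations never revisit an already-processed region. Given the construction of the child corridors via projection of the box corners along the boundary curves between consecutive subcorridors, this disjointness is built into the construction, and the rest is routine arithmetic.
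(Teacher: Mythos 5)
Part~(1) of your proposal is correct and follows the paper's own argument: each of the $O_{\eps,\epsl,\epst}(1)$ cut lines is parallel to the longer side of the items associated with the subcorridor it cuts, so it kills at most $1/\epsl$ of them.

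For part~(2) you take a genuinely different route. The paper bounds the total \emph{height} of $\optth\cap\Rho$ directly: each processing of a (piece of a) horizontal subcorridor contributes a thin strip of height $\epst$ times the current corridor width, and since the corridor width shrinks by a factor $\epst$ at each recursion level while the number of descendant corridors grows by a factor $1/\epst$, each original subcorridor contributes $O(\epst N)$ in total over all levels; summing over the $O_{\eps,\epsl}(1)$ original subcorridors and choosing $\epst$ small gives the claim. Your area-based accounting can be made to work, but the disjointness claim on which it rests is not correct as stated: when a subcorridor $S$ is processed, only its \emph{private region} (the part between its two boundary curves) is carved out and excluded from the child corridors; the overlap rectangles $S\cap S'$ at the bends survive on the neighbours' side, the rectangles of adjacent subcorridors of the same corridor overlap, and pieces of not-yet-processed subcorridors reappear inside every descendant corridor. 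So "the total area of all processed subcorridors is at most $N^2$" is false with multiplicity one. The claim does hold with multiplicity $O(1/\eps)$ (a point lies in at most one corridor per recursion level, in at most two subcorridor rectangles there, and there are $O(1/\eps)$ levels), or you can bypass disjointness entirely by invoking the same geometric decay as above ($(1/\epst)^{j}$ pieces of width $\epst^{j}\cdot O(\epsl N)$ at depth $j$). Either fix only inflates your bound $\frac{\epst}{\epsl}N$ by an $O_{\eps}(1)$ factor, which is harmless since $\epst$ is chosen last; but as written the key step of your part~(2) is not justified. Note also that replacing disjointness by "each thin subregion is an $\epst$-fraction of the private region it lies in" does not work either, since the bottom-most (widest) strip of a private region need not be an $\epst$-fraction of that region's area when the subcorridor is short.
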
 \begin{proof} \eqref{lem:boxProperties:profitKill} Each
horizontal (resp., vertical) line in the construction can kill at
most $1/\epsl$ items, since those items must be horizontal (resp.,
vertical). Hence we kill $O_{\eps,\epsl,\epst}(1)$ items in total.

\eqref{lem:boxProperties:thin} The mentioned total height/width is
at most $\epst N$ times the number of subcorridors, which is $O_{\eps,\epsl}(1)$.
The claim follows for $\epst$ small enough. \end{proof} 

\subsection{Containers\label{sec:structural:containers}}

Assume that we applied the routine described in Section~\ref{sec:structural:boxes}
above until each corridor is partitioned into boxes. We explain how
to partition each box into $O_{\eps}(1)$ subboxes, to which we
refer to as \emph{containers} in the sequel. Hence, we apply the routine
described below to each box.

Consider a box of size $a\times b$ coming from the above construction,
and on the associated set $\andy{\optbo}$ of items from $\optfa$.
We will show how to pack a set $\andy{\optbo'}\subseteq\andy{\optbo}$
with $\andy{p(\optbo')}\geq(1-\eps)\andy{p(\optbo)}$ into $O_{\eps}(1)$
containers packed inside the box, such that both the containers and
the packing of $\andy{\optbo'}$ inside them satisfy some extra properties
that are useful in the design of an efficient algorithm. This part
is similar in spirit to prior work, though here we present a refined
analysis that simplifies the algorithm (in particular, we can avoid
LP rounding).

A \emph{container} is a box labelled as \emph{horizontal}, \emph{vertical},
or \emph{area}. A \emph{container packing} of a set of items $I'$
into a collection of non-overlapping containers has to satisfy the
following properties: 

\itemsep0pt 
\begin{itemize}
\item[$\bullet$] Items in a horizontal (resp., vertical) container are stacked one
on top of the other (resp., one next to the other). 
\item[$\bullet$] Each $i\in I'$ packed in an area container of size $a\times b$
must have $w_{i}\leq\eps a$ and $h_{i}\leq\eps b$. 
\end{itemize}
Our main building block is the following resource augmentation packing
lemma essentially taken from \cite{Jansen2009310}\footnote{In Appendix \ref{sec:resource-augmentation} we reprove this lemma
in a \emph{container-based} form, rather than using LP-based arguments,
since this is more convenient for our final algorithm. Our version
of the lemma might also be a handy tool for future work.}. 

\begin{lem}[Resource Augmentation Packing Lemma]
\label{lem:augmentPack} Let $I'$ be a collection of items that
can be packed into a box of size $a\times b$, and $\epsau>0$ be
a given constant. Then there exists a container packing of $I''\subseteq I'$
inside a box of size $a\times(1+\epsau)b$ (resp., $(1+\epsau)a\times b$)
such that: 
\end{lem}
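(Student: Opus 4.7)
The plan is to follow the classical approach for this kind of resource-augmentation result (à la Kenyon--Rémila and Jansen--Solis-Oba). W.l.o.g., augment the shorter side, so consider the augmented box of size $a \times (1+\epsau)b$. Pick a constant $\eps' \leq \epsau/C$ for a suitable constant $C$, and classify the items of $I'$ relative to the box $a\times b$ into four classes: \emph{large} (both sides $\geq \eps'\min\{a,b\}$), \emph{horizontal} (width $\geq \eps' a$, height $< \eps' b$), \emph{vertical} (height $\geq \eps' b$, width $< \eps' a$), and \emph{small} (both sides $< \eps'\min\{a,b\}$). The number of large items in any feasible packing into $a\times b$ is $O(1/\eps'^2)$, so we can keep each as its own (trivial) horizontal container.

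Next I would handle horizontal and vertical items by linear grouping. Focusing on the horizontal items: sort them by non-increasing width, partition them into $k = \Theta(1/\epsau)$ consecutive groups of roughly equal \emph{total height}, discard the group whose total profit-density-times-height is smallest (losing $O(\epsau) p(I')$ profit), and round up every width in each remaining group to the maximum width in that group. Each rounded group then fits inside a single \emph{horizontal container} whose width is the maximum width in the group and whose height equals the total (original) height of the group. Symmetrically for the vertical items, producing $O(1/\epsau)$ vertical containers. The rounding is profit-preserving on the items we keep, and because widths only increase to match the group maximum, the rounded items are still packable \emph{fractionally} in the original box.

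For the small items I would use a Next-Fit-Decreasing-Height style argument: collect all small items into a region of total area at most $ab$, and show by standard NFDH analysis that they fit into $O(1/\eps')$ \emph{area containers} (each of side-lengths that are multiples of some grid of granularity $\eps' \min\{a,b\}$), with a waste proportional to $\eps' \cdot ab$, which is absorbed by the augmentation. The condition that each small item in an area container has sides $\leq \eps$ times the container's sides follows by choosing $\eps'$ sufficiently smaller than $\eps$.

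The central step, and the main obstacle, is proving that all these $O_{\epsau}(1)$ containers together fit in the augmented box $a \times (1+\epsau) b$. I would argue this by first showing the rounded item set admits a fractional assignment into a structured packing of the original $a\times b$ box (horizontal items stacked by width-groups, vertical items arranged by height-groups, small items filling residual area). Then an integral repacking of this fractional structure introduces an additive vertical overhead bounded by one container width per "column", which is at most $O(\eps' b)$ in total and hence $\leq \epsau b$ by choice of $\eps'$; this is exactly what the extra $\epsau b$ of augmentation absorbs. Combined with the $O(\epsau)p(I')$ profit loss from the discarded group, we obtain the desired $I'' \subseteq I'$ packable in the augmented box with the stated container structure. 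I would finally observe that the container sizes can be chosen from a polynomially-bounded set (widths among the $O(1/\epsau)$ rounded widths plus the large-item sizes, heights from the corresponding group totals), which is what enables the guessing step in the subsequent algorithmic use of the lemma.
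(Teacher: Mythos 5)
Your outline correctly isolates the hard part, but the argument you give for it does not close the gap. The heart of this lemma is showing that the horizontal and vertical items, which in the original packing of the $a\times b$ box may be interleaved in an arbitrary way, can be rearranged into $O_{\epsau}(1)$ containers. You assert that ``the rounded item set admits a fractional assignment into a structured packing of the original $a\times b$ box (horizontal items stacked by width-groups, vertical items arranged by height-groups, small items filling residual area)'', but this is precisely the statement that needs proof: linear grouping only bounds the number of distinct rounded widths, it says nothing about whether items of equal rounded width can be brought together into a common container without colliding with the vertical items that separate them. (Note also that rounding widths \emph{up} to the group maximum makes the items harder to pack, not easier; the standard domination argument only lets rounded group $i$ replace original group $i-1$, and the leftover first group must be absorbed by the augmentation, not discarded by profit.) The paper's proof devotes most of its length to exactly this step: it first discretizes the $x$-coordinates of wide items to multiples of $\delta^2$ (consuming part of the augmentation), packs the short-high items into the resulting $\delta^2$-wide vertical gaps via the Kenyon--R\'emila strip-packing lemma, and then runs a configuration-counting argument over horizontal stripes --- there are at most $2^{M-1}$ configurations of wide items per stripe, with $M=O(1/\delta^2)$ --- which permits slicing the horizontal items and regrouping equal configurations into $O_{\epsau,\delta}(1)$ horizontal containers. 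Without some substitute for this combinatorial step, your proof has no mechanism for bounding the number of containers.

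Two further points. First, your fractional-to-integral conversion bounds a \emph{geometric} overhead (``one container width per column''), but the actual loss in that step is one \emph{discarded item per container}, whose total profit is not automatically small; the paper needs an iterative shifting argument (the sets $\mathcal{K}_0,\mathcal{K}_1,\dots$, with at most $1/\epsau'$ rounds) to find a stage where the discarded items are cheap, at the cost of the container count growing to $(pq)^{1/\epsau'}$. Second, a single threshold $\eps'$ is not enough for the classification: the paper invokes a gap argument (Lemma on intermediate items) to obtain two scales $\mu\ll\delta$ and deletes all items with a side in $(\mu,\delta]$, which is what makes the short items fit inside the $\delta^2$-wide strips with negligible boundary loss and makes the NFDH waste absorbable. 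With your single-threshold classification a ``short'' item can have width arbitrarily close to $\eps' a$, and both the strip argument and the area-container condition break down.
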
\sanr{Why don't we use our enhanced version of this lemma here right away? Is there any reason for keeping the simpler version?}
\itemsep0pt 
\begin{enumerate}
\item $p(I'')\geq(1-\sal{O(\epsau)})p(I')$; 
\item The number of containers is \fab{$O_{\epsau}(1)$} and their sizes
belong to a set of cardinality \fab{$n^{O_{\epsau}(1)}$} that
can be computed in polynomial time. 
\end{enumerate}
Applying Lemma~\ref{lem:augmentPack} to each box yields the following
lemma.
\begin{lem}[Container Packing Lemma]
\label{lem:containerPack} For a given constant \fab{$\epsau>0$},
there exists a set $\san{\optfa^{cont}}\subseteq\optfa$ such that there is a
container packing for all apart from $O_{\eps}(1)$ items in \san{$\optfa^{cont}$}
such that: 
\end{lem}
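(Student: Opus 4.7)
The plan is to invoke the Resource Augmentation Packing Lemma (Lemma~\ref{lem:augmentPack}) independently on each of the $O_{\eps,\epsl,\epst}(1)$ boxes produced by the routine of Section~\ref{sec:structural:boxes}, after the shift of fat items described in Remark~\ref{rem:fatPack}. For a box $B$ of dimensions $a\times b$ containing a set $\optbo\subseteq\optfa$ of associated fat items, the lemma yields a container packing of a subset $\optbo'\subseteq\optbo$ with $p(\optbo')\geq(1-O(\epsau))p(\optbo)$ into an \emph{enlarged} box of dimensions $a\times(1+\epsau)b$ (or symmetrically $(1+\epsau)a\times b$), using $O_{\epsau}(1)$ containers whose sizes come from a polynomial-size collection computable in polynomial time.

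The main obstacle is that these containers live inside a box slightly larger than $B$, while the boxes produced in Section~\ref{sec:structural:boxes} are pairwise space-disjoint inside the knapsack, so one cannot naively place the augmented containers back. I would reconcile this by a standard shifting argument within $B$: slice the augmented box into $1/\epsau$ horizontal slabs of height $\epsau b$ each, pick the slab of minimum profit (its profit is at most $O(\epsau)\,p(\optbo')$), \emph{drop} all items of $\optbo'$ fully contained in it, and additionally drop the $O_{\epsau}(1)$ items that are cut by the two horizontal lines bounding that slab (these are horizontal items associated with the box, hence at most $O(1/\epsau)$ of them can be cut per line). Removing the chosen slab and compacting vertically brings all surviving containers back into a box of height exactly $b$, i.e., inside $B$; their labels and sizes are preserved.

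Choosing $\epsau:=\eps/c$ for a sufficiently large constant $c$, and summing the profit loss over the $O_{\eps,\epsl,\epst}(1)=O_\eps(1)$ boxes, yields a set $\optfa^{cont}\subseteq\optfa$ with $p(\optfa^{cont})\geq (1-\eps)p(\optfa)$; the items that cannot be placed in containers are the $O_{\epsau}(1)$ items cut in each box during the compaction, which totals $O_\eps(1)$ items as promised. The total number of containers is $O_\eps(1)\cdot O_{\epsau}(1)=O_\eps(1)$, and the admissible container sizes form the union, over the boxes, of the polynomial-size sets provided by Lemma~\ref{lem:augmentPack}; this union is itself of polynomial size and can be enumerated in polynomial time (first by guessing the candidate box dimensions used in Section~\ref{sec:structural:boxes}, then by invoking the enumeration from Lemma~\ref{lem:augmentPack} inside each box).
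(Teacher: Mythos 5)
Your overall strategy---apply Lemma~\ref{lem:augmentPack} box by box and absorb the $(1+\epsau)$ blow-up via a shifting argument---is the right idea, but you have inverted the order of the two steps relative to the paper, and this inversion creates a genuine gap. You first invoke the Resource Augmentation Packing Lemma on the box $a\times b$, obtaining containers inside an enlarged box $a\times(1+\epsau)b$, and only \emph{then} remove a cheap horizontal slab and ``compact vertically'' so that the ``surviving containers'' fit back into height $b$. At that point, however, the objects you are cutting are \emph{containers}, not items in the original packing. A container produced by Lemma~\ref{lem:augmentPack} can be a vertical container spanning (almost) the full height of the augmented box, in which case \emph{every} horizontal slab intersects it; such a container is neither ``surviving'' nor droppable at small cost (it may carry a constant fraction of the profit), and compacting it would mean shrinking it and discarding an uncontrolled subset of its items. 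Your bound ``at most $O(1/\epsau)$ items cut per line'' also does not apply here: that counting argument relies on the geometry of the \emph{original} packing of skewed items (each horizontal item has width $>\epsl N$, so a horizontal line of length $a\le N$ meets at most $1/\epsl$ of them), and this geometry is destroyed once the items have been rearranged into containers.

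The paper avoids this by performing the two steps in the opposite order. For a (say, horizontal) box $a\times b$ with associated items $\optbo$, it first removes a horizontal strip of height $3\eps b$ chosen by averaging so that the items \emph{fully contained} in it have profit $O(\eps)p(\optbo)$, and drops the at most $O(1/\epsl)$ items that partially overlap the strip (these are the $O_\eps(1)$ exempted items; the case $|\optbo|=O_\eps(1)$ is handled separately by one container per item). The remaining items now pack into a box of size $a\times(1-3\eps)b$, and \emph{only then} is Lemma~\ref{lem:augmentPack} applied, producing containers inside a box of size $a\times(1-3\eps)(1+\epsau)b\le a\times(1-2\eps)b$, which already fits inside the original box---no compaction of containers is ever needed. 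To repair your proof you should adopt this ordering: create the slack in the original item packing (where skewedness controls the number of cut items), and let the resource augmentation consume that slack.
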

\begin{enumerate}
\item \fab{$p(\san{\optfa^{cont}})\geq(1-O(\eps))p(\optfa)$}; 
\item The number of containers is $O_{\eps,\epsl,\epst,\epsau}(1)$ and
their sizes belong to a set of cardinality \fab{$n^{O_{\eps,\epsl,\epst,\epsau}(1)}$}
that can be computed in polynomial time.\andyr{Shouldn't the number
of sizes depend on $\log\max{h_{i},w_{i}}$ somehow? \fab{Not sure}}
\end{enumerate}
\begin{proof} Let us focus on a specific box of size $a\times b$
from the previous construction in Section~\ref{sec:structural:boxes}, and on the items $\andy{\optbo}\subseteq\optfa$
inside it. \andy{If $|\andy{\optbo}|=O_{\eps}(1)$ then we can simply
create one container for each item and we are done.} 
\andyr{It used to say ``we can assume $|\andy{\optbo}|\gg1$ otherwise
we can discard $\andy{\optbo}$ with small loss in the profit. ''.
I believe that it is cleaner to say that we generate one container
for each item if the box contains only few items.}\andy{Otherwise,
assume} w.l.o.g. that this box (hence its items) is horizontal. We
obtain a set $\andy{\overline{\optbo}}$ by removing from $\andy{\optbo}$
all items intersecting a proper horizontal strip of height $\fab{3}\eps b$.
Clearly these items can be repacked in a box of size $a\times(1-3\eps)b$.
By a simple averaging argument, it is possible to choose the strip
so that the items fully contained in it have total profit at most
$O(\eps)p(\andy{\optbo})$. Furthermore, there can be at most $O(1/\epsl)$
items that partially overlap with the strip (since items are skewed).
We drop these items and do not pack them. %

At this point we can use the Resource Augmentation Lemma \ref{lem:augmentPack}
to pack a large profit subset $\andy{\optbo'}\subseteq\andy{\overline{\optbo}}$
into $O_{\epsau}(1)$ containers that can be packed in a box of size
$a\times(1-3\eps)(1+\epsau)b\leq a\times(1-2\eps)b$. 
\andy{We perform the above operation on each box of the previous
construction 
and define \san{$\optfa^{cont}$} to be the union} of the \andy{respective}
sets $\andy{\optbo'}$. The claim follows. \end{proof}

\subsection{\san{A} Profitable Structured Packing\label{sec:structural:lemma}}

We next prove our main structural lemma which yields that there exists
a structured packing which is partitioned into $O_{\eps}(1)$
containers and an L. We will refer to such a packing as an L\&C packing
(formally defined below). Note that in the previous section we did
not specify in which order we partition the subcorridors into boxes.
In this section, we give several such orders which will then result
in different packings. The last such packing is special since we will
modify it a bit to gain some space and then reinsert the thin items
that were removed in the process of partitioning the corridors into
containers. Afterwards, we will show that one of the resulting packings
will yield an approximation ratio of $17/9+\eps$.

A \emph{boundary ring} of width $N'$ is a ring having as external
boundary the \andy{edges} of the knapsack and as internal boundary
the boundary of a square box of size $(N-N')\times(N-N')$ in the
middle of the knapsack. A \emph{boundary $L$} of width $N'$ is the
region covered by two boxes of size $N'\times N$ and $N\times\fab{N'}$
that are placed on the left and bottom boundaries of the knapsack.

An \emph{L\&C} packing is defined as follows. We are given two integer
parameters $N'\in[0,N/2]$ and $\ell\in\andy{(}N/2,N]$. We define
a boundary $L$ of width $N'$, and a collection of non-overlapping
containers contained in the space \andy{not} occupied by the boundary
$L$. The number of containers and their sizes are as in Lemma \ref{lem:containerPack}.
We let $\san{\ilong}\subseteq I$ be the items whose longer side has length
longer than $\ell$ (hence longer than $N/2$), and $\san{\ishort}=I\setminus \san{\ilong}$
be the remaining items. We can pack only items from $\san{\ilong}$ in the
boundary $L$, and only items from $\san{\ishort}$ in the containers (satisfying
the usual container packing constraints). See also Figure \ref{fig:packing+ring}. 
\begin{remark}\label{rem:degenerateRing}
In the analysis sometimes we will not need the boundary $L$. This
case is captured by setting $N'=0$ and $\ell=N$ (\emph{degenerate
$L$} case). \end{remark} 

\begin{lem}
\label{lem:apxNoRotation} Let $\optrc$ \ari{be} the most profitable solution
that is packed by an L\&C packing. Then $p(\optrc)\geq(\frac{9}{17}-O(\eps))p(\opt)$. 
\end{lem}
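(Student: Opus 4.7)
The plan is to construct a constant-size family of candidate L\&C packings derived from $\opt$ and show that the most profitable among them retains at least $(9/17 - O(\eps))p(\opt)$. First I would apply Lemma \ref{lem:item-classification} to drop intermediate items at $O(\eps)$ loss, then invoke Lemma \ref{lem:corridorPack-weighted} to obtain a partition of the knapsack into $O_{\eps,\epsl}(1)$ corridors containing a subset $\optco \subseteq \optsk$ with $p(\optco) \geq (1-O(\eps))p(\optsk)$; the $O_\eps(1)$ crossing items and the $O_\eps(1)$ items in $\optla$ are absorbed into singleton containers, affecting the bound only in lower order, and the small items $\optsm$ are set aside for reinsertion in the final step. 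I would then classify each subcorridor as \emph{long} (length $>N/2$) or \emph{short}, splitting $\optco$ into $\optln \cup \optsh$. By Lemma \ref{lem:spiral}.(1) the long subcorridors of a single corridor cannot appear on both sides of a $Z$-bend, so they always align into a spiral- or ring-shaped outer frame amenable to the rearrangement of Lemma \ref{lem:LoftheRing}.

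Mimicking the cardinality-case argument of Theorem \ref{thm:16/9-apx}, I would consider three candidate packings and take the best. Candidate \textbf{(P1)} is an L-packing: apply Lemma \ref{lem:LoftheRing} to repack $\optln$ into a boundary L of width $N'$ chosen from a guessable set, earning at least $(3/4 - O(\eps))p(\optln)$; run the processing routine of Section \ref{sec:structural:boxes} on the short subcorridors, so that the resulting fat items are packed into $O_\eps(1)$ containers via Lemma \ref{lem:containerPack} placed in the rectangular complement of the L, while the thin items (of total width at most $\epsr N$ by Lemma \ref{lem:boxProperties}.(2)) are absorbed into the boundary L itself at arbitrarily small additional width cost. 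Candidates \textbf{(P2)} and \textbf{(P3)} are degenerate-L packings ($N'=0$, see Remark \ref{rem:degenerateRing}) obtained by deleting a random horizontal (resp., vertical) strip and invoking the resource-augmentation algorithm of Lemma \ref{lem:augmentPack}, exactly as in the proof of Theorem \ref{thm:16/9-apx}. The averaging over (P2) and (P3) guarantees profit at least $(1/2 - O(\eps))p(\optco) + (1/4 - O(\eps))p(\optsh)$, while (P1) yields $\geq (3/4 - O(\eps))p(\optln) + (1 - O(\eps))p(\optsh)$. Balancing these against each other at the worst-case ratio $p(\optln){:}p(\optsh)$ reproduces the cardinality-case $9/16$ guarantee on $\optco$; the small items $\optsm$ are then reinserted into the leftover space using a standard area / Steinberg-type argument at $O(\eps)$ further loss.

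The main obstacle is that in the weighted setting we cannot discard $\optla$: a single large item may carry a constant fraction of $p(\opt)$, so the (P1) candidate must remain consistent with the guessed placement of $\optla$. I would handle this by treating each large item as a fixed singleton container and carving the boundary L around those containers; since there are only $O_\eps(1)$ such containers this is geometrically feasible, but shrinking the L to avoid them (and the concomitant loss of items of $\optco$ that previously passed through those regions) is what degrades the cardinality-case ratio from $9/16$ to $9/17$. The technical core of the argument is the resulting case analysis, which splits on whether each corridor consists purely of long subcorridors (a spiral or ring, where Lemma \ref{lem:spiral}.(2) supplies the four $U$-bends needed to reshape $\optln$ as an L) or also contains short subcorridors (where Section \ref{sec:structural:boxes} together with Lemma \ref{lem:boxProperties} ensures a clean partition into containers at small thin-item cost). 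Carrying out this case analysis, and verifying simultaneous feasibility of the boundary L and the surrounding containers for every configuration of $\opt$ produced by Lemma \ref{lem:corridorPack-weighted}, is what ultimately yields the claimed bound.
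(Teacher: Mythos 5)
There is a genuine gap — in fact several — and the proposal does not follow the paper's route. First, your candidates (P2)/(P3) are exactly the random-strip packings of Theorem \ref{thm:16/9-apx}, but that argument relies on every surviving item being skewed (one side at most $\eps N$), which is only true in the cardinality case after discarding $\optla$. In the weighted case a single large item — which may carry essentially all of $p(\opt)$ — is hit by a random vertical \emph{and} by a random horizontal strip with probability close to $1$, so the expected-profit bound $(1/2-O(\eps))p(\optco)+(1/4-O(\eps))p(\optsh)$ simply fails. This is precisely the obstacle the paper identifies as the reason the cardinality argument does not transfer, and declaring that large and crossing items are ``absorbed into singleton containers, affecting the bound only in lower order'' does not resolve it: making the corridor/box decomposition coexist with these unremovable items is what forces the paper's iterative shifting argument with the sets $K(t)$ and the nonuniform grid. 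Second, your candidate (P1) misapplies Lemma \ref{lem:LoftheRing}: that lemma repacks items whose \emph{own} longer side exceeds $N/2$, and in the paper's ring case only the \emph{thin} long items $\T\cap\il$ (whose total height/width is at most $\epsr N$ by Lemma \ref{lem:boxProperties}) are moved into the boundary $\fontL$. The set of all items residing in long subcorridors can have total area close to $N^2$; it cannot be rearranged into a boundary $\fontL$ that still leaves room for the short-subcorridor containers, and indeed the paper's corresponding bound (Lemma \ref{lem:ringCase}) sacrifices all of $\LF$ in that candidate.

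More fundamentally, the constant $9/17$ does not arise from ``shrinking the L to avoid large-item containers.'' In the paper it comes from a partition of $\optco$ into $\LF,\LT,\SF,\ST$ induced by \emph{seven} different deletion/processing orders of the subcorridors (cases 1a, 1b, 2a, 2b, 3a, 3b, 4), which yield the four inequalities of Lemmas \ref{lem:onlyFat}, \ref{lem:noShort}, \ref{lem:evenOdd} and \ref{lem:ringCase}; an averaging/LP argument over these four bounds has its worst case at $p(\LT)=p(\SF)=p(\ST)$, $p(\LF)=\tfrac54 p(\LT)$, giving exactly $9/17$. Your proposal produces at most two usable inequalities and offers no mechanism from which $9/17$ could be derived, so even setting aside the issues above, the claimed constant is unsupported.
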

In the remainder of this section we prove Lemma~\ref{lem:apxNoRotation},
assuming that we can drop $O_{\eps}(1)$ items at no cost. Hence,
formally we will prove that there is an L\&C packing $I'$ and a set
of $O_{\eps}(1)$ items $I_{\mathrm{drop}}$ such that $p(I')+p(I_{\mathrm{drop}})\ge(\frac{9}{17}-O(\eps))p(\opt)$.
Subsequently, we will prove Lemma~\ref{lem:apxNoRotation} in full
generality (without dropping any items).

 The proof of Lemma~\ref{lem:apxNoRotation} involves some case analysis.
Recall that we classify subcorridors into short and long, and horizontal
and vertical. We further partition short subcorridors as follows:
let $S_{1},\ldots,S_{k'}$ be the subcorridors of a given corridor,
and let $S_{1}^{s},\ldots,S_{k''}^{s}$ be the subsequence of short
subcorridors (if any). Mark $S_{i}^{s}$ as \emph{even} if $i$ is
so, and \emph{odd} otherwise. Note that \san{corridors} are subdivided
into several other \san{corridors} during the box construction process \san{(see the right side of Figure \ref{fig:corridors})}, 
\andy{and these new corridors might have fewer subcorridors than the initial corridor.}
\san{However, the marking of the subcorridors (short, long, even, odd, horizontal, vertical) is inherited from} the marking of the original subcorridor. 

We will describe now 7 different ways to partition the subcorridors into
boxes, for some of them we delete some of the subcorridors. \san{Each of these different processing orders will give different sets $\optth, \optki$ and $\optfa^{cont}$, and based on these, we will partition the items into three sets. We will then prove three different lower bounds on $p(\optrc)$ w.r.t. the sizes of these three sets using averaging arguments about the seven cases.}

\begin{figure*}[t!]
    \centering
	\includegraphics[width=\textwidth]{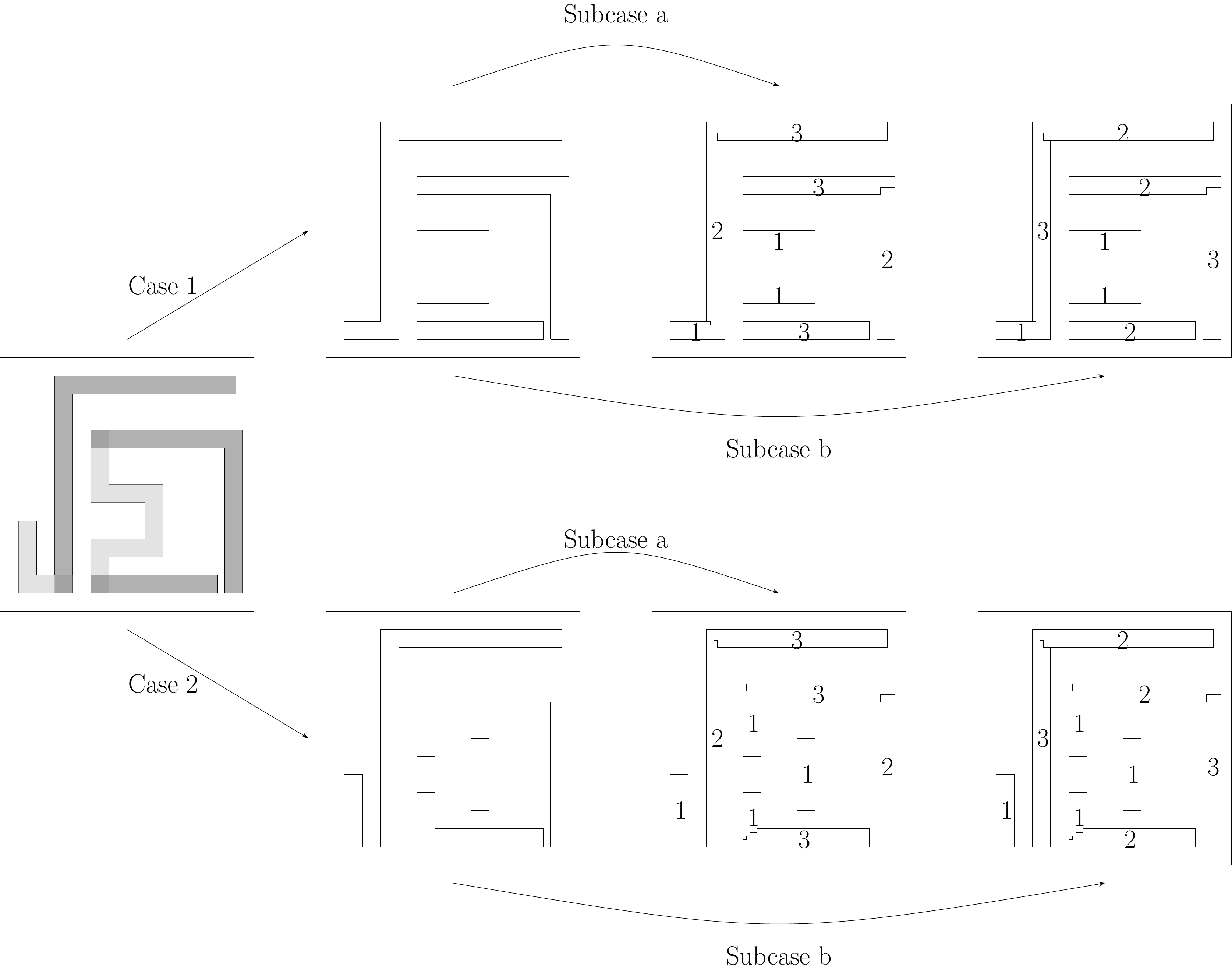}
    
    ~
	\caption{Figure for Case 1 and 2. The knapsack on the left contains two corridors, where short subcorridors are marked light grey and long subcorridors are marked dark grey. In case 1, we delete vertical short subcorridors and then consider two processing orders in subcases a and b. In case 2, we delete horizontal short subcorridors and again consider two processing orders in subcases a and b.}
      \label{fig:case1a}
\end{figure*}

\paragraph{\san{Cases 1a, 1b, 2a, 2b:} Short horizontal/short vertical subcorridors.}

We delete \san{either} all vertical short (case 1) or all horizontal short subcorridors (case 2).
We first process all short subcorridors, then \san{either all vertical (subcases a) or
horizontal long ones (subcases b), and finally the remaining (horizontal or vertical, resp.)} long
ones. We can start by processing all short corridors. Indeed, any
such corridor cannot be the center of a $Z$-bend by Lemma \ref{lem:spiral}.\ref{claim:Zbend}
since its two sides would be long, hence it must be boundary or the
center of a $U$-bend. After processing short subcorridors, by the
same argument the residual (long) subcorridors are the boundary or the
center of a $U$-bend. So we can process the long subcorridors in
any order. \san{This gives in total four cases.}
\ari{See Fig.~\ref{fig:case1a} for deletion/processing of subcorridors for these cases.}

\paragraph{\san{Cases 3a, 3b:} Even/odd short subcorridors.}

We delete the odd (or even) short subcorridors and then process
even (resp., odd) short subcorridors last. We exploit the fact that
each residual corridor contains at most one short subcorridor. Then,
if there is another (long) subcorridor, there is also one which is
boundary (trivially for an open corridor) or the center of a $U$-bend
(by Lemma \ref{lem:spiral}, \ari{Property} \ref{claim:Ubends}). \arir{Addressing one reviewers comment} Hence we can always
process some long subcorridor leaving the unique short subcorridor
as last one. \san{This gives two cases.}


\paragraph{\san{Case 4:} Fat only.}

Do not delete any short subcorridor. Process subcorridors in \fab{any
feasible} order.

In each of the cases, we apply the procedure described in Section~\ref{sec:structural:containers}
to partition each box into $O_{\eps}(1)$ containers. We next
label items as follows. Consider the classification of items into
\san{$\optfa^{cont}$}, $\optth$, and 
\andy{$\optki$} in each one of the \san{$7$} cases above. 
Then: \andyr{I understood that $\optki'$ equals the items in $\optki$
plus the items killed in Lemma \ref{lem:containerPack}. Since in
that lemma now nobody is killed anymore, I replaced $\optki'$ by
$\optki$.} 

\itemsep0pt 

\begin{itemize}
\item \san{$\T$} is the set of items which are in $\optth$ \emph{in at least
one case}; 
\item \san{$\OK$} is the set of items which are in \andy{$\optki$} \emph{in
at least one case}; 
\item \san{$\F$} is the set of items which are in \san{$\optfa^{cont}$} \emph{in all the
cases}. 
\end{itemize}\sanr{As far as I see we never use the definition of $\OK$? Or is this just stated for completeness?}

\begin{remark}\label{rem:processedLast} Consider the subcorridor
of a given corridor that is processed last \andy{in one of the above cases}. None of its items are
assigned to $\optth$ 
\andy{in that case}
and thus essentially all its items are packed
in one of the constructed containers. 
\san{In particular, for an item in set $\T$, in some of the above cases it might be in such a subcorridor and thus marked fat and packed into a container.}
\end{remark}\sanr{Moved this remark after the definition of $T,K,F$ in order to make its implications more explicit.}

\andyr{commented out a sentence here}

\begin{lem}\label{lem:FTprofit} One has $p(\san{\F}\cup \san{\T})+p(\OK)+p(\optla)+\san{p(\optco^{cross})}\geq(1-O(\eps))p(\opt)$.
\end{lem}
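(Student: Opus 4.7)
The plan is to show that $\F \cup \T \cup \OK$ captures essentially all of $\optco \setminus \optco^{cross}$, losing only an $O(\eps)$-fraction of $p(\opt)$, and then combine with the earlier classification and corridor-packing bounds. Since small items were temporarily removed at this stage, I treat $\opt$ here as $\optla \cup \optin \cup \optsk$.

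First I would set up the easy pieces. Lemma~\ref{lem:item-classification} yields $p(\optin) \le \eps\, p(\opt)$, and Lemma~\ref{lem:corridorPack-weighted} gives $p(\optsk \setminus \optco) \le O(\eps)\, p(\opt)$ together with the trivial split $\optco = \optco^{cross} \sqcup (\optco \setminus \optco^{cross})$. After these reductions, the only remaining task is to prove that $p(\optco \setminus \optco^{cross}) \le p(\F \cup \T) + p(\OK) + O(\eps)\, p(\opt)$.

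The heart of the proof is a covering argument over the seven processing orders introduced in the cases above. For each such order, indexed by $c$, let $\optth^c, \optki^c, \optfa^c$ denote the corresponding thin/killed/fat partition of $\optco \setminus \optco^{cross}$, and let $\optfa^{cont,c} \subseteq \optfa^c$ be the container-packed subset from Lemma~\ref{lem:containerPack}, so that $p(\optfa^c \setminus \optfa^{cont,c}) \le O(\eps)\, p(\optfa^c) \le O(\eps)\, p(\opt)$. Now fix any $i \in (\optco \setminus \optco^{cross}) \setminus (\F \cup \T \cup \OK)$. Since $i \notin \T$, $i$ is never thin; since $i \notin \OK$, $i$ is never killed; hence $i \in \optfa^c$ for every $c$. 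Since $i \notin \F = \bigcap_c \optfa^{cont,c}$, there exists some order $c^\star$ with $i \in \optfa^{c^\star} \setminus \optfa^{cont,c^\star}$. Therefore
\[
(\optco \setminus \optco^{cross}) \setminus (\F \cup \T \cup \OK) \;\subseteq\; \bigcup_{c}\, \bigl(\optfa^c \setminus \optfa^{cont,c}\bigr),
\]
and summing the seven estimates the right-hand side has total profit $O(\eps)\, p(\opt)$.

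The main subtlety lies in the aggressive definition of $\F$ as an intersection across all seven orders: a priori an item could be container-packed in six orders yet dropped in the seventh and thus fall out of $\F$. The resolution is exactly the observation above: any such escaping item lies in the $O(\eps)\, p(\opt)$ exception budget of Lemma~\ref{lem:containerPack} for at least one order, and since there are only $7 = O(1)$ orders the aggregated loss is still $O(\eps)\, p(\opt)$. Combining all the estimates and using $p(\F \cup \T \cup \OK) \le p(\F \cup \T) + p(\OK)$ then gives the claimed lower bound on $p(\F \cup \T) + p(\OK) + p(\optla) + p(\optco^{cross})$.
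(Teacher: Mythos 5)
Your proof is correct and follows the same route as the paper: reduce to the skewed items via Lemma~\ref{lem:item-classification} and Lemma~\ref{lem:corridorPack-weighted}, then argue that the per-case partitions into thin/killed/fat leave only an $O(\eps)$-profit residue outside $\F\cup\T\cup\OK$. The paper's own proof is a three-line accounting argument: initialize $(\F,\T,\OK)$ from a single case, observe the bound holds there, and note that considering further cases only \emph{moves} items from $\F$ into $\T$ or $\OK$. What you add — and what the paper's proof silently elides — is the treatment of items that are fat in every case but fall into the $O(\eps)$-profit exception set $\optfa\setminus\optfa^{cont}$ of Lemma~\ref{lem:containerPack} in some case $c^\star$; such items leave $\F$ without entering $\T$ or $\OK$. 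Your union bound over the seven (constantly many) processing orders disposes of them cleanly, so your write-up is in fact more complete than the paper's on this point.

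One small inaccuracy to fix: the step ``since $i$ is never thin and never killed, $i\in\optfa^c$ for \emph{every} $c$'' overlooks that in cases 1--3 entire short subcorridors are \emph{deleted}, and the items of a deleted subcorridor are neither thin, killed, nor fat in that case. Taken literally with the paper's definition of $\F$ (``in $\optfa^{cont}$ in \emph{all} cases''), an item of a short subcorridor could then escape all three sets. This is an ambiguity inherited from the paper (its Lemmas~\ref{lem:noShort} and \ref{lem:evenOdd} clearly treat deleted-but-otherwise-fat items as members of $\SF\subseteq\F$), and the intended reading is that $\F$ collects items that are in $\optfa^{cont}$ in every case in which they are not deleted. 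Under that reading your argument goes through verbatim after restricting the quantifier ``for every $c$'' to the cases where $i$ survives deletion (case 4 always qualifies, so this set is nonempty and $c^\star$ can still be found). It is worth stating this explicitly.
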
 \begin{proof} Let us initialize $\san{\F}=\san{\optfa^{cont}}$, $\san{\T}=\optth$,
and $\san{\OK}=\andy{\optki}$ by considering one of the above cases. Next
we consider the \ari{aforementioned} cases, hence moving some items in $\san{\F}$ to
either $\san{\T}$ or $\san{\OK}$. Note that initially $p(\san{\F}\cup \san{\T})+p(\optki)+p(\optla)+\san{p(\optco^{cross})}\geq(1-O(\eps))p(\opt)$
by Lemma~\ref{lem:corridorPack-weighted} and hence we keep this
property.\end{proof}

Let $\san{I_{\mathrm{lc}}}$ and $\san{I_{\mathrm{sc}}}$ denote the items in long
and short corridors, respectively. We also let $\san{\LF}=\san{I_{\mathrm lc}}\cap \san{\F}$,
and define analogously $\san{\SF}$, $\san{\LT}$, and $\san{\LF}$. The next \andy{three}
lemmas provide a lower bound on the case of a degenerate L.


\begin{lem}\label{lem:onlyFat} $p(\optrc)\geq p(\san{\LF})+p(\san{\SF}).$ \end{lem}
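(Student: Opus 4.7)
The plan is to exhibit a single L\&C packing whose profit is at least $p(\san{\LF}) + p(\san{\SF}) = p(\san{\F})$; the claim then follows since $\optrc$ is defined as the \emph{most profitable} such packing.

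First, I would consider the partitioning produced by Case 4 (the ``fat only'' case, where no subcorridor is deleted and the subcorridors are processed in any feasible order). By the definition of $\san{\F}$, every item $i \in \san{\F}$ belongs to $\san{\optfa^{cont}}$ in Case 4 (since $i$ is fat in \emph{all} cases, in particular in this one). Now I would invoke Lemma \ref{lem:containerPack} applied to Case 4: it produces a container packing of all items in $\san{\optfa^{cont}}$, except possibly for $O_\eps(1)$ items, using only $O_{\eps,\epsl,\epst,\epsau}(1)$ containers whose sizes come from a polynomial-size family. In particular, this produces a container packing of all of $\san{\F}$ (up to the $O_\eps(1)$ items we are allowed to drop at no cost in the present relaxation).

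Next, I would interpret this container packing as an L\&C packing with a \emph{degenerate} L, i.e. by setting $N' = 0$ and $\ell = N$ as in Remark \ref{rem:degenerateRing}. Under this choice the entire knapsack $[0,N]\times[0,N]$ is available for containers and there is no L-region; moreover, $\san{\ishort} = I$ and $\san{\ilong} = \emptyset$, so the constraint that only short items may be packed into containers is trivially satisfied. Hence the container packing of Case 4 gives a valid L\&C packing whose profit is at least $p(\san{\F}) = p(\san{\LF}) + p(\san{\SF})$, and therefore $p(\optrc) \geq p(\san{\LF}) + p(\san{\SF})$.

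I do not expect any significant obstacle here: the statement is essentially a direct consequence of the definitions of $\san{\F}$ and of an L\&C packing, combined with Lemma \ref{lem:containerPack}. The only mild subtlety is noticing that we can sidestep the L component entirely via the degenerate-L convention, which is precisely what Remark \ref{rem:degenerateRing} is set up to allow.
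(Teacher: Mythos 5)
Your proposal is correct and matches the paper's argument, which simply observes that Case 4 packs a superset of $\san{\F}$ into containers (a degenerate-L L\&C packing). You have merely spelled out the details — the definition of $\san{\F}$, the invocation of Lemma \ref{lem:containerPack}, and the degenerate-L convention of Remark \ref{rem:degenerateRing} — that the paper leaves implicit.
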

\begin{proof} \andy{Follows immediately} since we pack a superset
of $\san{\F}$ in \san{case 4}. \end{proof} 
\begin{lem}\label{lem:noShort}
$p(\optrc)\geq p(\san{\LF})+p(\san{\LT})/2+p(\san{\SF})/2.$ \end{lem}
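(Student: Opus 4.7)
The plan is to apply the 7-case analysis, restricted to cases 1a and 1b, and to take the better of the two container packings each produces (using a degenerate L, i.e., $N'=0$ and $\ell=N$, so that the entire knapsack is given to containers; see Remark~\ref{rem:degenerateRing}). In both cases we delete the vertical short subcorridors; they differ only in whether horizontal long or vertical long subcorridors are processed last. The Container Packing Lemma~\ref{lem:containerPack} applied in each case yields a container packing of (almost all of) the corresponding $\optfa^{cont}$. Since $\LF\cup\SF\subseteq F\subseteq\optfa^{cont}$ in every case, each of the two packings covers $\LF\cup\SF$ and thus contributes a base profit of at least $p(\LF)+p(\SF)$ (up to the $O_\eps(1)$ items we are permitted to drop).

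The extra $p(\LT)/2$ term comes from the structural claim that in case 1a every item lying in a horizontal long subcorridor of a spiral or ring is fat, and symmetrically in case 1b every item lying in a vertical long subcorridor of a spiral or ring is fat. Since $\LT\subseteq I_{lc}$, every item of $\LT$ lies in a spiral or ring (Lemma~\ref{lem:spiral}), whose subcorridors are all long. Partitioning $\LT=\LT_H\cup\LT_V$ according to subcorridor orientation, the claim implies that in case 1a the set $\LT_H$ is packed as fat items in the containers, and in case 1b the set $\LT_V$ is so packed, in addition to $F$ in both cases. Taking the maximum of the two packings yields
\[
p(\optrc)\;\geq\;p(F)+\max\bigl\{p(\LT_H),\,p(\LT_V)\bigr\}\;\geq\;p(\LF)+p(\SF)+\tfrac{1}{2}p(\LT)\;\geq\;p(\LF)+\tfrac{1}{2}p(\SF)+\tfrac{1}{2}p(\LT),
\]
as required.

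The main obstacle is to prove the structural claim rigorously. The intuition is that, since a spiral or ring has subcorridors of strictly alternating orientation, processing all vertical long subcorridors first (as in case 1a) leaves only isolated horizontal long subcorridors in each residual corridor generated along the way: two consecutive horizontal subcorridors cannot appear in a corridor, so after eliminating the vertical ones each residual has only a single (horizontal) subcorridor. Such a residual is therefore a single box, and Remark~\ref{rem:processedLast} applies: all its items are marked fat (rather than thin), landing them in the container packing. Care is needed to track the $1/\epst$-fold recursive subdivision of corridors that each processing step creates in the neighboring subcorridors and to verify that this does not destroy the alternation invariant, as well as to ensure that the killed items arising throughout the procedure remain an $O_\eps(1)$-size set that can be absorbed into the budget of droppable items.
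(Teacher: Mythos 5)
There is a genuine gap in the accounting of $\SF$. You restrict attention to cases 1a and 1b, both of which \emph{delete all vertical short subcorridors}, and then assert that ``each of the two packings covers $\LF\cup\SF$'' because $\LF\cup\SF\subseteq\F\subseteq\optfa^{cont}$. But membership in $\F$ means an item is \emph{classified} as fat in every case, not that it is packed in every case: an item whose subcorridor is deleted in a given case is simply absent from that case's packing, fat or not. Hence the items of $\SF$ lying in vertical short subcorridors are packed in neither 1a nor 1b, and the best you can extract from these two cases is $p(\LF)+p(\SF_{hor})+\tfrac12 p(\LT)$, where $\SF_{hor}$ is the horizontal-short part of $\SF$; this degenerates to $p(\LF)+\tfrac12 p(\LT)$ when all of $\SF$ is vertical-short, which does not imply the lemma. (The fact that your intermediate bound $p(\F)+\max\{p(\LT_H),p(\LT_V)\}\ge p(\LF)+p(\SF)+\tfrac12 p(\LT)$ would be \emph{strictly stronger} than the lemma is itself a warning sign.) The paper's proof avoids this by averaging over all four subcases 1a, 1b, 2a, 2b: case 2 deletes the horizontal short subcorridors instead, so each $i\in\SF$ survives (as a fat, packed item) in exactly two of the four cases, each $i\in\LT$ is in the last-processed subcorridor in at least two of the four (1a, 2a if horizontal; 1b, 2b if vertical), and each $i\in\LF$ in all four; the averaging argument then yields exactly $p(\LF)+\tfrac12 p(\LT)+\tfrac12 p(\SF)$.

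A secondary imprecision: you justify the ``processed last'' claim for $\LT$ by arguing that every item of $\LT$ lies in a spiral or ring whose subcorridors are all long, via Lemma~\ref{lem:spiral}. That lemma only applies to corridors \emph{all} of whose subcorridors are long, whereas an item of $\LT$ merely sits in a long subcorridor of a corridor that may also contain short subcorridors. The correct (and simpler) argument, as in the paper, is local: the neighbors of a horizontal long subcorridor are vertical, and in cases 1a/2a every such neighbor is either deleted (vertical short, case 1), processed in the initial short phase (case 2), or processed in the vertical-long phase --- so the horizontal long subcorridor is last in its residual corridor and Remark~\ref{rem:processedLast} applies. Your concern about the recursive $1/\epst$-fold subdivision is handled by the same observation and does not require the alternation/spiral structure.
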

\begin{proof}
Consider the sum of the profit of the packed items corresponding to the in total four subcases of \san{cases 1 and 2}. Each $i\in \san{\LF}$ appears $4$ times in the sum \san{(as items in $\F$ are fat in all cases and all long subcorridors get processed)}, and
each $i\in \san{\LT}$ at least twice by Remark \ref{rem:processedLast}: \san{If a long subcorridor $\mathfrak{L}$ neighbors a short subcorridor, the short subcorridor is either deleted or processed first. Further, all neighboring long subcorridors are processed first in case 1a and 2a (if $\mathfrak{L}$ is horizontal, then its neighbors are vertical) or 1b and 2b (if $\mathfrak{L}$ is vertical and its neighbors are horizontal). Thus, $\mathfrak{L}$ is the last processed subcorridor in at least two cases. Additionally, each item $i\in \SF$ also appears twice in the sum, as it gets deleted either in case 1 (if it is vertical) or in case 2 (if it is horizontal) and is fat otherwise.}

The claim follows by an averaging argument. \end{proof} \begin{lem}\label{lem:evenOdd}
$p(\optrc)\geq p(\san{\LF})+p(\san{\SF})/2+p(\san{\ST})/2.$ \end{lem}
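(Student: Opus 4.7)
The plan is to mirror the averaging argument used in Lemma~\ref{lem:noShort}, but applied to the two variants of case~3 (i.e.\ subcases 3a and 3b) instead of the four variants of cases~1 and~2. First I would note that each of 3a and 3b produces a container packing that fits into the knapsack with no boundary L, so it is an L\&C packing in the degenerate sense of Remark~\ref{rem:degenerateRing}. Hence the profit it packs is a lower bound on $p(\optrc)$, and it suffices to show that the sum of the packed profits across 3a and 3b is at least $2\,p(\LF)+p(\SF)+p(\ST)$, at which point averaging gives the claim.

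Next, I would track how $\LF$, $\SF$, and $\ST$ are treated in the two subcases. Items in $\LF$ lie in long subcorridors, which are never deleted in case~3 (only short subcorridors are deleted), and by the definition of $\F$ they are fat in every case; therefore they are packed into the containers in both 3a and 3b, contributing to the sum twice. Items in $\SF\cup\ST$ lie in short subcorridors, and the partition of short subcorridors into even and odd means that each such subcorridor is deleted in exactly one of 3a, 3b. In the complementary case it is the unique short subcorridor in its residual corridor and, by construction of case~3, is processed last, so by Remark~\ref{rem:processedLast} none of its items are labeled thin; hence they all end up in containers. Thus every item of $\SF\cup\ST$ is packed in exactly one of the two subcases.

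Summing profits over subcases 3a and 3b therefore gives at least $2\,p(\LF)+p(\SF)+p(\ST)$, and by averaging the better of the two subcases packs profit at least $p(\LF)+p(\SF)/2+p(\ST)/2$. Since this is an L\&C packing (with degenerate L), the bound $p(\optrc)\geq p(\LF)+p(\SF)/2+p(\ST)/2$ follows.

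The only point that deserves a line of justification is that the processing order required by case~3 is always feasible, so that Remark~\ref{rem:processedLast} can indeed be invoked for the last short subcorridor; this is exactly the observation made just before the statement of the lemma, namely that after deleting the odd (resp.\ even) short subcorridors, each residual corridor still contains either a boundary subcorridor or the center of a $U$-bend among its long subcorridors (by Lemma~\ref{lem:spiral}.\ref{claim:Ubends}), so one can always process a long subcorridor first and leave the unique remaining short subcorridor for the very end. I do not anticipate a serious obstacle: the whole argument is essentially a bookkeeping exercise parallel to that of Lemma~\ref{lem:noShort}.
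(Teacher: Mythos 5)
Your proposal is correct and follows essentially the same argument as the paper: sum the profits packed in subcases 3a and 3b, observe that items of $\LF$ are counted twice and items of $\SF\cup\ST$ at least once (using Remark~\ref{rem:processedLast} for the unique surviving short subcorridor, which is processed last), and average. The feasibility of the processing order via Lemma~\ref{lem:spiral} is exactly the justification the paper gives as well.
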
 
\begin{proof}
Consider the sum of the number of packed items corresponding to \san{cases 3a and 3b.} Each $i\in \san{\LF}$
appears twice in the sum \san{as it is fat and all long subcorridors get processed}. Each $i\in \san{\SF\cup \ST}$ appears at least
once in the sum by Remark \ref{rem:processedLast}\san{: An item $i\in \SF$ is deleted in one of the two cases (depending on whether it is in an even or odd subcorridor) and otherwise fat. An item $i \in \ST$ is also deleted in one of the two cases and otherwise its subcorridor is processed last}. The claim follows
by an averaging argument. \end{proof}

There is one last (and slightly more involving) case to be considered,
corresponding to a non-degenerate $L$. 

\begin{lem}\label{lem:ringCase} $p(\optrc)\geq\frac{3}{4}p(\san{\LT})+p(\san{\ST})+\frac{1-O(\eps)}{2}p(\san{\SF}).$
\end{lem}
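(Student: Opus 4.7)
The plan is to construct an L\&C packing whose L holds a $3/4$-fraction of the items of $\LT$ with longer side exceeding $N/2$, whose two thin containers adjacent to the L hold all remaining thin items ($\ST$ plus the rest of $\LT$) without any loss, and whose further containers pack around half of $\SF$ via strip deletion.

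First, I would split $\LT = \LT^{\mathrm{long}} \cup \LT^{\mathrm{short}}$ according to whether the longer side of an item exceeds $N/2$ or not, and invoke Lemma~\ref{lem:LoftheRing} on $\LT^{\mathrm{long}}$ with $H$, $V$ its horizontal and vertical parts. This produces an L-packing of some $\LT^\prime \subseteq \LT^{\mathrm{long}}$ with $p(\LT^\prime) \ge \tfrac{3}{4}\,p(\LT^{\mathrm{long}})$, inside an L of dimensions $\height_L = \sum_{i \in H} h_i$ and $\width_L = \sum_{i \in V} w_i$. Since every item of $\LT$ is thin in some case, summing Lemma~\ref{lem:boxProperties}.\ref{lem:boxProperties:thin} over the $O_\eps(1)$ cases gives $\height_L,\width_L = O(\epsr N)$, so by setting $N^\prime = \max(\height_L,\width_L)$ and $\ell$ just above $N/2$, the corresponding boundary L of the L\&C packing contains this L-packing of $\LT^\prime$.

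Second, since $\ST \cup \LT^{\mathrm{short}}$ consists entirely of thin items with longer side at most $N/2$, their total horizontal-item height $H^\ast$ and vertical-item width $W^\ast$ are again bounded by $O(\epsr N)$ via Lemma~\ref{lem:boxProperties}.\ref{lem:boxProperties:thin}. I would attach two thin containers to the L, disjoint from it: a horizontal one of size $(N-N^\prime) \times H^\ast$ just above the horizontal part of the L (stacking its horizontal items vertically), and a vertical one of size $W^\ast \times (N - N^\prime - H^\ast)$ just to the right of (and above) the vertical part of the L (stacking its vertical items horizontally). These easily accommodate every item in $\ST \cup \LT^{\mathrm{short}}$ and contribute full profit $p(\ST) + p(\LT^{\mathrm{short}})$.

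Third, the remaining region $R$ is a rectangle of size at least $(1 - O(\epsr))N \times (1 - O(\epsr))N$. I would take the original packing of $\SF$ in $[0,N]^2$ and apply random horizontal and vertical strip deletions of width $O(\epsr)N$ each. A horizontal $\SF$-item (width $\le N/2$, height $\le \epss N$) is hit by either strip with probability at most $(\epss + O(\epsr)) + (1/2 + O(\epsr)) = 1/2 + O(\eps)$, and symmetrically for vertical items. De-randomizing yields $\SF^\prime \subseteq \SF$ with $p(\SF^\prime) \ge (1/2 - O(\eps))\,p(\SF)$ that, after compressing, fits in a rectangle contained in $R$. Applying Lemma~\ref{lem:augmentPack} to $\SF^\prime$ produces an $O_\eps(1)$-container packing inside $R$ of profit at least $(1-O(\eps))\,p(\SF^\prime) \ge \tfrac{1-O(\eps)}{2}\,p(\SF)$. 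Summing the three contributions gives at least $\tfrac{3}{4}\,p(\LT^{\mathrm{long}}) + p(\LT^{\mathrm{short}}) + p(\ST) + \tfrac{1-O(\eps)}{2}\,p(\SF) \ge \tfrac{3}{4}\,p(\LT) + p(\ST) + \tfrac{1-O(\eps)}{2}\,p(\SF)$, as required. The main obstacle will be coordinating the constants $\epsr$, $\epss$, $\epst$, $\epsau$ so that all regions retain enough room simultaneously, together with a discretization of the thin-container sizes so they satisfy the polynomial-size requirement of the L\&C packing definition.
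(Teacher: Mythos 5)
Your proposal mirrors the paper's proof in its essential structure: apply Lemma~\ref{lem:LoftheRing} to the long thin items to obtain a $\frac{3}{4}$-fraction in an L-shaped region, pack the remaining thin items into two thin boundary containers, and handle $\SF$ via random strip deletion plus Lemma~\ref{lem:augmentPack}. However, there is a gap in the claim that ``$\ST \cup \LT^{\mathrm{short}}$ consists entirely of thin items with longer side at most $N/2$,'' which you couple with the choice ``$\ell$ just above $N/2$.'' A subcorridor is \emph{short} when its \emph{length} (the shorter of its two long edges) is at most $N/2$, but the bounding box of the subcorridor---and hence items fully contained in it---can be longer than this by up to $2\epsl N$, since the adjacent subcorridors each contribute up to one corridor-width of overlap. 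Consequently an item in $\ST$ can have longer side up to $(\tfrac{1}{2}+2\epsl)N$, and with your $\ell$ that item would belong to $\il$ and thus could \emph{not} be legally placed in a container by the definition of an L\&C packing, breaking the full $p(\ST)$ term in the bound.

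This is exactly why the paper sets $\ell=(\tfrac{1}{2}+2\epsl)N$ in this proof: that choice guarantees every item in $\il$ sits in a long subcorridor, i.e.\ $\ST\subseteq\is$ and $\LT\cap\il$ is precisely the set fed to Lemma~\ref{lem:LoftheRing}. The fix to your argument is mechanical---replace your $N/2$ threshold by $(\tfrac{1}{2}+2\epsl)N$ when splitting $\LT$, and set $\ell$ accordingly; the two thin containers must then have their long side of length $\ell$ rather than $N/2$, which still fits since $\ell < (1-O(\epsr))N$. With that fix the remaining steps (the $O(\epsr N)$ bound via Lemma~\ref{lem:boxProperties}, the $(1-O(\eps))N$ free region, the $\tfrac{1}{2}-O(\eps)$ survival probability under random strip deletion, and Lemma~\ref{lem:augmentPack}) all go through as in the paper.
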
 \begin{proof} Recall that $\epsl N$ is the maximum width
of a corridor. We consider an execution of the algorithm with boundary
$L$ width $N'=\epsr N$, and threshold length $\ell=(\frac{1}{2}+2\epsl)N$.
We remark that this length guarantees that items in $\san{\ilong}$ are not
contained in short subcorridors.

By Lemma \ref{lem:LoftheRing}, we can pack a subset of $\san{\T\cap \ilong}$
of profit at least $\frac{3}{4}p(\san{\T\cap \ilong})$ in a boundary $L$
of width $\epsr N$. 
By Lemma \ref{lem:boxProperties} the remaining items in $\san{\T}$ can
be packed in two containers of size $\san{\ell}\times\epsr N$ and $\epsr N\times \san{\ell}$
that we place on the two sides of the knapsack not occupied by the
boundary $L$.

In the free area we can identify a square region $K''$ with side
length $(1-\fab{\eps})N$. We next show that there exists a feasible
solution \san{$\SF'\subseteq \SF$} with $p(\san{\SF'})\geq(1-O(\eps))p(\san{\SF})/2$ that
can be packed in a square of side length $(1-\fab{3}\eps)N$. We can
then apply the Resource Augmentation Lemma \ref{lem:augmentPack}
to pack \san{$\SF''\subseteq \SF'$} of cardinality $p(\san{\SF''})\geq(1-\sal{O(\eps)})p(\san{\SF'})$
inside \fab{a central square region} of side length \fab{$(1-3\eps)(1+\epsau)N\leq(1-2\eps)N$}
using containers according to Lemma \ref{lem:containerPack}.\fabr{Small
updates to have enough free area also in this case. Hopefully this
does't confuse the reader (since we save more space that what seems
to be needed)}


Consider the packing of $\san{\SF}$ as in the optimum solution. Choose a
random vertical (resp., horizontal) strip in the knapsack of width
(resp., height) $\fab{3}\eps N$. Delete from $\san{\SF}$ all the items
intersecting the vertical and horizontal strips: clearly the remaining
items $\san{\SF}'$ can be packed into a square of side length $(1-\fab{3}\eps)N$.
Consider any $i\in \san{\SF}$, and assume $i$ is horizontal (the vertical
case being symmetric). Recall that it has height at most $\epss N\leq\eps N$
and width at most $\ell\leq1/2+2\eps$. Therefore $i$ intersects
the horizontal strip with probability at most $5\eps$ and the vertical
strip with probability at most $1/2+\fab{8}\eps$. Thus by the union
bound $i\in \san{\SF'}$ with probability at least $1/2-\fab{13}\eps$. The
claim follows by linearity of expectation. 
\end{proof}


Combining the above Lemmas \ref{lem:FTprofit},\fabr{Let's remind
to cite this lemma in the weighted case as well} \andyr{I referred
to it in the weighted section}\ref{lem:onlyFat}, \ref{lem:noShort},
\ref{lem:evenOdd}, and \ref{lem:ringCase} we achieve the desired
approximation factor, assuming that the (dropped) $O_{\eps}(1)$
items in $\optki\cup\optla\cup\san{\optco^{cross}}$ have zero profit. The worst
case \sal{is} obtained, up to $1-O(\eps)$ factors, for \san{$p(\LT)=p(\SF)=p(\ST)$}
and $p(\san{\LF})=5p(\san{\LT})/4$. This gives $p(\san{\LT})=4/17\cdot p(\san{\T\cup \F})$ and
a total profit of $9/17\cdot p(\san{\T\cup \F})$. 

\subsection{Adding small items}

Note that up to now we ignored the small items $\optsm$. In this
section, we explain how to pack a large fraction of these items. 

We described above how to pack a large enough fraction of $\optsk$
into containers.\fabr{Moved here the container area discussion:
cjeck} We next refine the mentioned analysis to bound the total area
of such containers. It turns out that the residual area is sufficient
to pack almost all the items of $\optsm$ into a constant number of
area containers (not overlapping with the previous containers) for
$\epss$ small enough.


To this aim we use a refined version of the Resource Augmentation
Lemma \ref{lem:augmentPack}, \fab{i.e.,} Lemma \ref{lem:structural_lemma_augm}.\fabr{We
need to add this} Essentially, besides the other properties, we can
guarantee that the total area of the containers is at most $a(I')+\epsau\,a\cdot b$,
where $a\times b$ and $I'$ are the size of the box and the initial
set of items in the claim of Lemma \ref{lem:augmentPack}, respectively.
\begin{lem}\label{lem:areaContainer} In the packings due to Lemmas~\ref{lem:onlyFat},
\ref{lem:noShort}, \ref{lem:evenOdd}, and \ref{lem:ringCase} the
total area occupied by containers is at most $\min\{(1-2\eps)N^{2},a(\optco)+\epsau N^{2}\}$.\fabr{Updated
to consider the non-degenerate L case as well (where containers are
created without passing through boxes)} \end{lem}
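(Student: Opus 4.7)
The plan is to verify the two bounds separately, treating the degenerate-$L$ packings (Lemmas~\ref{lem:onlyFat}, \ref{lem:noShort}, \ref{lem:evenOdd}) and the ring case (Lemma~\ref{lem:ringCase}) on their own, since their container structures differ.

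For the first bound $(1-2\eps)N^2$ in the degenerate cases, I would trace through the proof of Lemma~\ref{lem:containerPack}: inside each box of size $a\times b$ we discard items inside a strip of height $3\eps b$, then invoke the Resource Augmentation Lemma~\ref{lem:augmentPack} to obtain containers fitting in a subregion of size $a\times (1-3\eps)(1+\epsau)b\le a\times (1-2\eps)b$. Thus the containers inside any single box occupy area at most $(1-2\eps)ab$. Because the boxes arising from the corridor-to-box construction are pairwise disjoint subsets of the knapsack, summing over all boxes gives total container area at most $(1-2\eps)\sum_{\mathrm{boxes}} ab \le (1-2\eps)N^2$.

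For the ring case, the containers are of two types: the two thin-item containers of sizes $\ell\times\epsr N$ and $\epsr N\times\ell$ (total area at most $2\epsr N^2$), and the containers produced by applying Lemma~\ref{lem:augmentPack} inside the central square of side $(1-2\eps)N$ (total area at most $(1-2\eps)^2 N^2$). Hence the sum is at most $(1-2\eps)^2N^2+2\epsr N^2$, which is $\le (1-2\eps)N^2$ provided $\epsr$ is chosen sufficiently small compared to $\eps$ (namely $\epsr\le \eps(1-2\eps)$); this is compatible with our earlier choice of $\epsr$ since both constants are only upper-bounded.

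For the second bound $a(\optco)+\epsau N^2$, I would invoke the refined Resource Augmentation Lemma~\ref{lem:structural_lemma_augm}, which strengthens Lemma~\ref{lem:augmentPack} by guaranteeing, for a box of size $a\times b$ and item set $I'$, that the produced containers have total area at most $a(I')+\epsau\, ab$. Applied inside each box of the degenerate-$L$ construction and summed over the disjoint family of boxes, this yields total container area $\le a(\optfa)+\epsau\sum_{\mathrm{boxes}}ab \le a(\optco)+\epsau N^2$, since $\optfa\subseteq\optco$ and the boxes fit disjointly into the knapsack. In the ring case the central containers contribute $a(\optfa\cap\isopt)+\epsau(1-2\eps)^2N^2$ by the same argument, while the two thin-item containers add at most $2\epsr N^2$; choosing $\epsr\le \epsau/4$ absorbs this additive term into $\epsau N^2$. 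Taking the minimum of the two bounds finishes the proof. The only delicate step is the ring case, where the thin-item containers are not produced via Lemma~\ref{lem:augmentPack} directly and must be handled by a size-based estimate and a suitable choice of the hierarchy of constants $\epsr\ll\epsau\ll\eps$.
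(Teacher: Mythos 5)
Your proposal is correct and follows essentially the same route as the paper: the first bound comes from the $3\eps b$ strip removal plus $a\times(1-3\eps)(1+\epsau)b\le a\times(1-2\eps)b$ per box (summed over disjoint boxes), and the second from the area guarantee of the refined Resource Augmentation Lemma~\ref{lem:structural_lemma_augm}; for the ring case the paper bounds the leftover free area while you bound the occupied area directly, which is equivalent. One small quantitative slip: in the ring case of the second bound, $\epsr\le\epsau/4$ alone does not suffice, since $\epsau(1-2\eps)^2N^2+2\epsr N^2\le\epsau N^2$ actually requires $\epsr$ on the order of $\eps\cdot\epsau$; this is harmless given the hierarchy $\epsr\ll\epsau\ll\eps$ you already invoke (and the paper's choice $\epsr\le\eps^2$ with $\epsau\le\eps$), but the stated constant should be corrected.
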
 \begin{proof}
Consider the first upper bound on the area. \fab{We have to distinguish
between the containers considered in Lemma \ref{lem:ringCase} and
the remaining cases. In the first case, there is a region not occupied
by the boundary $L$ nor by the containers of area at least $4\eps N^{2}-4\eps^{2}N^{2}-4\epsr N^{2}\geq2\eps N^{2}$
for $\epsr$ small enough, e.g., $\epsr\le\epsilon^{2}$ suffices.
The claim follows.} For the remaining cases, recall that in each
horizontal box of size $a\times b$ we remove a horizontal strip of
height $3\eps b$, and then use the Resource Augmentation Packing
Lemma to pack the residual items in a box of size $a\times b(1-3\eps)(1+\epsau)\leq a\times b(1-2\eps)$
for $\epsau\leq\eps$. Thus the total area of the containers is at
most a fraction $1-2\eps$ of the area of the original box. A symmetric
argument applies to vertical boxes. Thus the total area of the containers
is at most a fraction $1-2\eps$ of the total area of the boxes, which
in turn is at most $N^{2}$. This gives the first upper bound in the
claim.

For the second upper bound, we just apply the area bound in Lemma
\ref{lem:structural_lemma_augm} to get that the total area of the
containers is at most $a(\optco)$ plus $\epsau\,a\cdot b$ for each
box of size $a\times b$. Summing the latter terms over the boxes
one obtains at most $\epsau N^{2}$. \end{proof}

We are now ready to state a lemma that provides the desired packing
of small items. By slightly adapting the analysis we can guarantee
that the boundary $L$ that we use to prove the claimed approximation
ratio has width at most $\eps^{2}N$. 
\begin{lem}[Small Items Packing Lemma]\label{lem:smallPack} Suppose
we are given a packing of non small items of the above type into $\andy{k}$
containers of total area $A$ and, possibly, a boundary $L$ of width
at most $\eps^{2}N$. 
Then for $\epss$ small enough it is possible to define $O_{\epss}(1)$
area containers of size $\frac{\epss}{\eps}N\times\frac{\epss}{\eps}N$
\andy{neither} overlapping with the containers nor with the boundary
$L$ (if any) such that it is possible to pack $\optsm'\subseteq\optsm$
of profit $p(\optsm')\geq(1-O(\eps))p(\optsm)$ inside these new area
containers. \end{lem}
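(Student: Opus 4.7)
My plan is to superimpose on $[0,N]^{2}$ a regular grid of square cells of side $s:=(\epss/\eps)N$ and to designate as area containers precisely those cells that are disjoint from every existing container and from the boundary $L$ (if any). Since each small item has both sides at most $\epss N=\eps\cdot s$, every small item automatically satisfies the area-container requirement for any cell. The first substantive step is to bound the total area $U$ of good cells. The boundary $L$ has area at most $2\eps^{2}N^{2}$, so the cells meeting $L$ have total area at most $2\eps^{2}N^{2}+O(sN)$; each of the $k=O_{\eps}(1)$ given containers contributes at most its own area plus an $O(sN)$ fringe around its perimeter (which is $\le 4N$). For $\epss$ chosen small enough so that $k\cdot(s/N)=O(\eps^{2})$, and with $\epsau\le\eps^{2}$ in Lemma~\ref{lem:areaContainer}, the total bad-cell area is at most $A+O(\eps^{2}N^{2})\le a(\optco)+O(\eps^{2}N^{2})$. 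Feasibility of $\opt$ gives $a(\optco)+a(\optsm)\le N^{2}$, hence $U\ge a(\optsm)-O(\eps^{2}N^{2})$; the first bound of Lemma~\ref{lem:areaContainer} simultaneously guarantees $U\ge\eps N^{2}$.

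To pack I would process the items of $\optsm$ in non-increasing order of profit density $p(i)/a(i)$ and insert each one into some good cell via NFDH (maintaining shelf structures per cell). Because every small item has width and height at most $\eps\cdot s$, the standard NFDH analysis yields per-cell area efficiency $(1-O(\eps))$, so the accepted set $\optsm'$ has total area at least $(1-O(\eps))U$. The density-greedy order ensures that the profit of $\optsm'$ is at least the value of the fractional LP $\max\{\sum_{i\in\optsm}x_{i}p_{i}:\sum_{i\in\optsm}x_{i}a_{i}\le (1-O(\eps))U,\;x_{i}\in[0,1]\}$; evaluating this LP at the uniform scaling $x_{i}=\min\!\bigl(1,(1-O(\eps))U/a(\optsm)\bigr)$ yields $p(\optsm')\ge\min\!\bigl(1,(1-O(\eps))U/a(\optsm)\bigr)\cdot p(\optsm)$. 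If $U\ge a(\optsm)$ this is immediately $(1-O(\eps))p(\optsm)$; otherwise $U<a(\optsm)$ forces $a(\optsm)>U\ge\eps N^{2}$, and then the absolute slack satisfies $O(\eps^{2}N^{2})/a(\optsm)\le O(\eps)$, so $U/a(\optsm)\ge 1-O(\eps)$ and again $p(\optsm')\ge(1-O(\eps))p(\optsm)$.

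The main obstacle, in my view, is the nested choice of constants $\epss\ll\epsau\ll\eps^{2}$ so that the $O(\eps^{2}N^{2})$ absolute slack from the fringe around containers and boundary $L$, together with the $O(\eps)U$ relative slack from NFDH wastage, combine into a loss of only $O(\eps)p(\optsm)$ in both regimes: when $a(\optsm)$ is large the relative NFDH loss dominates and must scale with $\eps$, while when $a(\optsm)$ is small (i.e., $a(\optsm)\le\eps N^{2}$) the absolute slack must stay below $U-a(\optsm)$ so that NFDH swallows all of $\optsm$ outright. Once the constants are correctly ordered and Lemma~\ref{lem:areaContainer} is applied with both of its two bounds, the two estimates meet cleanly at the threshold $a(\optsm)\approx\eps N^{2}$ and deliver the claimed packing of profit at least $(1-O(\eps))p(\optsm)$.
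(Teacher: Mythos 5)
Your proposal is correct and follows essentially the same route as the paper: the same grid of $\frac{\epss}{\eps}N$-side cells, the same deletion of cells touching containers or the boundary $L$ with the area accounting via Lemma~\ref{lem:areaContainer}, the same two-regime case distinction at $a(\optsm)\approx\eps N^{2}$, and the same NFDH packing exploiting that each free cell exceeds every small item by a factor $1/\eps$ in both dimensions. The only cosmetic difference is that you make the profit-density greedy selection and its LP justification explicit, which the paper leaves implicit.
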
 \begin{proof} Let us build a grid of width
$\eps'N=\frac{\epss}{\eps}\cdot N$ inside the knapsack. We delete
any cell of the grid that overlaps with a container or with the boundary
$L$, and call the remaining cells \andy{\emph{free}}. The new
area containers are the free cells.

The total area of the deleted grid cells is, for $\epss$ and $\epsau$
small enough, at most 
\[
(\eps^{2}N^{2}+A)+(\andy{2}N+4N\andy{k})\frac{1}{\eps'N}\cdot(\eps'N)^{2}
\leq A+2\eps^{2}N^{2}\overset{Lem.\ref{lem:areaContainer}}{\leq}\min\{(1-\eps)N^{2},a(\optco)+3\eps^{2}N^{2}\}
\]
\andyr{Shouldn't we refer to Lemma \ref{lem:areaContainer} in the
inequality?\ari{done.}} For the sake of simplicity, suppose that
any empty space in the optimal packing of $\optco\cup\optsm$ is filled
in with dummy small items of profit $0$, so that $a(\optco\cup\optsm)=N^{2}$.
\san{We observe that the area of the free cells is at least $(1-O(\eps))a(\optsm)$: 
Either, $a(\optsm)\geq\eps N^{2}$ and then the area of the free cells is at least 
$a(\optsm)-3\eps^{2}N^{2}\geq(1-3\eps)a(\optsm)$;
otherwise, we have that the area of the free cells is at least $\eps N^2 > a(\optsm)$.}
Therefore we can select a subset of small items $\optsm'\subseteq\optsm$,
with $p(\optsm')\geq(1-O(\eps))p(\optsm)$ and area $a(\optsm)\leq(1-O(\eps))a(\optsm)$
that can be fully packed into free cells using classical Next Fit
Decreasing Height algorithm (NFDH) according to Lemma \ref{lem:nfdhPack}
described later. \andy{The key argument for this is that each free
cell is by a factor $1/\eps$ larger in each dimension than each small
item.}

\end{proof}

Thus, we have proven now that if the items in $\optki\cup\optla\cup\san{\optco^{cross}}$
had zero profit, then there is an L\&C-packing for the skewed and
small items with a profit of at least $9/17\cdot p(\optco)+(1-O(\epsilon))(\optsm)\ge(9/17-O(\epsilon))p(\opt)$.

\subsection{Shifting argumentation}

We remove now the assumption that we can drop $O_{\eps}(1)$ items
from $\opt$. We will add a couple of shifting steps to the argumentation
above to prove Lemma~\ref{lem:apxNoRotation} without that assumption.

It is no longer true that we can neglect the large rectangles $\optla$
since they might contribute a large amount towards the objective,
even though their total number is guaranteed to be small. Also, in
the process of constructing the boxes, we killed up to $O_{\eps}(1)$
rectangles (the rectangles in $\optki$). Similarly, we can no longer
drop the constantly many items in \san{$\optco^{cross}$}. Therefore, we apply
some careful shifting arguments in order to ensure that we can still
\andy{use} a similar construction as above, while losing only a
factor $1+O(\eps)$ due to some items that we will discard.

\san{The general idea is as follows: For $t=0,\ldots, k$ (we will later argue that $k<1/\eps$), 
we define disjoint sets $K(t)$ recursively, each containing at most $O_\eps(1)$ items. 
Each set $\K(t)=\bigcup_{j=0}^t K(j)$ is used to define a grid $G(t)$ in the knapsack. 
Based on an item classification that depends on this grid, we identify a set of skewed items 
and create a corridor partition w.r.t. these skewed items as described in Lemma \ref{lem:corridorPack-weighted}. 
We then create a partition of the knapsack into corridors and a constant (depending on $\eps$) 
number of containers (see Section \ref{sec:weighted:shifting:grid}). 
Next, we decompose the corridors into boxes (Section \ref{sec:weighted:shifting:boxes}) 
and these boxes into containers (section \ref{sec:weighted:shifting:containers}) 
similarly as we did in Sections \ref{sec:structural:boxes} and \ref{sec:structural:containers} 
(but with some notable changes as we did not delete small items from the knapsack and thus need 
to handle those as well). 
In the last step, we add small items to the packing (Section \ref{sec:weighted:shifting:small}). 
During this whole process, we define the set $K(i+1)$ of items that are somehow ``in our way'' 
during the decomposition process (e.g., items that are not fully contained in some corridor of 
the corridor partition), but which we cannot delete directly as they might have large profit. 
\andy{These items are similar to the killed items in the previous argumentation.}
However, using a shifting argument we can simply show that after at most $k<1/\eps$ steps of this 
process, we encounter a set $K(k)$ of low total profit, that we can remove, at which point we 
can apply almost the same argumentation as in Lemmas~\ref{lem:onlyFat},
\ref{lem:noShort}, and \ref{lem:evenOdd} to obtain lower bounds on the profit of an optimal 
L\&C packing (Section \ref{sec:weighted:shifting:packing}).}

\san{We initiate this iterative process as follows: Denote by $K(0)$ a set containing all items that are killed in at
least one of the cases arising in Section~\ref{sec:structural:lemma}
(the set $\OK$ in that Section) and additionally the large items $\optla$
and the $O_{\eps}(1)$ items in \san{$\optco^{cross}$} (in fact $\optla\subseteq\san{\optco^{cross}}$).
Note that $|K(0)|\le O_{\eps}(1)$. If $p(K(0))\le\eps\cdot p(OPT)$
then we can simply remove these rectangles (losing only a factor of $1+\eps$)
and then apply the remaining argumentation exactly as above and we are done.
Therefore, from now on suppose that $p(K(0))>\eps\cdot p(OPT)$.}

\subsubsection{Definition of grid and corridor partition}\label{sec:weighted:shifting:grid}

\san{Assume we are in round $t$ of this process, i.e., we defined $K(t)$ in the previous step (unless $t=0$, then $K(t)$ is defined as specified above) and assume that $p(K(t))>\eps \opt$ (otherwise, see Section \ref{sec:weighted:shifting:packing}).}
We \san{are now going to define the non-uniform grid $G(t)$ and the induced} partition \san{of} the knapsack
into a collection of cells $\C_{t}$. The $x$-coordinates ($y$-coordinates) of the grid cells are the
$x$-coordinates ($y$-coordinates, respectively) of the items in
\san{$\K(t)$}. This yields a partition of the knapsack into $O_{\eps}(1)$
rectangular cells, such that each item of \san{$\K(t)$} covers one or multiple
cells. 
Note that an item might intersect many cells.

Similarly as above, we define constants $1\ge\epsl\ge\epss\ge\Omega_{\eps}(1)$
and apply a shifting step such that we can assume that for each item
$i\in OPT$ touching some cell $\cell$ we have that $w(i\cap\cell)\in(0,\epss w(\cell)]\cup(\epsl w(\cell),w(\cell)]$
and $h(i\cap\cell)\in(0,\epss h(\cell)]\cup(\epsl h(\cell),h(\cell)]$,
where $h(\cell)$ and $w(\cell)$ denote the height and the width
of the cell $\cell$ and $w(i\cap\cell)$ and $h(i\cap\cell)$ denote
the height and the width of the intersection of the rectangle $i$
with the cell $\cell$, respectively. For a cell $\cell$ denote by
$\opt(\cell)$ the set of rectangles that intersect $\cell$ in $\opt$.
We obtain a partition of $\opt(\cell)$ into $\optsm(\cell)$, $\optla(\cell)$,
$\optho(\cell)$, and $\optve(\cell)$: 
\begin{itemize}
\item $\optsm(\cell)$ contains all items $i\in\opt(\cell)$ with $h(i\cap\cell)\le\epss h(\cell)$
and $w(i\cap\cell)\le\epss w(\cell)$, 
\item $\optla(\cell)$ contains all items $i\in\opt(\cell)$ with $h(i\cap\cell)>\epsl h(\cell)$
and $w(i\cap\cell)>\epsl w(\cell)$, 
\item $\optho(\cell)$ contains all items $i\in\opt(\cell)$ with $h(i\cap\cell)\le\epss h(\cell)$
and $w(i\cap\cell)>\epsl w(\cell)$, and 
\item $\optve(\cell)$ contains all items $i\in\opt(\cell)$ with $h(i\cap\cell)>\epsl h(\cell)$
and $w(i\cap\cell)\le\epss w(\cell)$. 
\end{itemize}
We call a rectangle $i$ \emph{intermediate }if there is a cell $\cell$
such that $w(i\cap C)\in(\epss w(C),\epsl w(C)]$ or $h(i\cap C)\in(\epss w(C),\epsl w(C)]$.
Note that a rectangle $i$ is intermediate if and only if the last
condition is satisfied for one of the at most four cells that contain
a corner of $i$. 

\begin{lem} For any constant $\eps>0$ and \fab{positive} increasing
function $f(\cdot)$, \fab{$f(x)>x$,} there exist constant values
$\epsl,\epss$, with $\eps\geq\epsl\geq\fab{f(\epss)}\ge\Omega_{\eps}(1)$
and $\epss\in\Omega_{\eps}(1)$ such that the total profit of intermediate
rectangles is bounded by $\eps p(OPT)$. \end{lem}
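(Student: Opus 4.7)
The plan is to adapt the averaging argument used in Lemma \ref{lem:item-classification} to the per-cell setting, paying a constant-factor overhead because each rectangle can interact with up to four cells. First, I would choose an integer $k = \lceil 8/\eps \rceil$ and define a decreasing sequence of constants $\eps \geq \eps_1 > \eps_2 > \cdots > \eps_{k+1} > 0$ recursively by $\eps \geq f(\eps_1)$ and $\eps_i \geq f(\eps_{i+1})$ for $i = 1, \ldots, k$. Since $f$ is a fixed positive increasing function and $k$ is a constant depending only on $\eps$, each $\eps_j$ is itself a constant in $\Omega_\eps(1)$, and the entire sequence can be computed a priori based only on $\eps$ and $f$ (independently of the instance).

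Next, fix any rectangle $i \in OPT$. By the remark in the statement, the set of cells $C$ for which $i\cap C$ is nontrivial in a way relevant to the intermediate classification is determined by the (at most four) cells containing the corners of $i$: call these cells $C_1(i), \ldots, C_{m(i)}(i)$ with $m(i)\leq 4$. For each such cell $C_\ell(i)$, consider the two ratios $w(i\cap C_\ell(i))/w(C_\ell(i))$ and $h(i\cap C_\ell(i))/h(C_\ell(i))$. In total, $i$ yields at most $2m(i)\leq 8$ ratios, each lying in $[0,1]$. For a band index $j\in\{1,\ldots,k\}$, say $i$ is \emph{$j$-bad} if at least one of its ratios falls in $(\eps_{j+1},\eps_j]$; note that $i$ being $j$-bad for some $j$ with $(\epsl,\epss)=(\eps_j,\eps_{j+1})$ is precisely the condition for $i$ being intermediate w.r.t.\ that pair.

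Because each ratio of $i$ lies in at most one band, $i$ is $j$-bad for at most $2m(i) \leq 8$ values of $j$. Summing over $i\in OPT$, we obtain
\[
\sum_{j=1}^{k} \; p\bigl(\{\,i\in OPT : i \text{ is } j\text{-bad}\,\}\bigr) \;\leq\; 8\, p(OPT).
\]
By an averaging argument, there exists some index $j^*\in\{1,\ldots,k\}$ for which the corresponding sum term is at most $\tfrac{8}{k}\, p(OPT) \leq \eps\, p(OPT)$. Setting $\epsl := \eps_{j^*}$ and $\epss := \eps_{j^*+1}$ then yields the claimed bound on the profit of intermediate items, and by construction $\eps \geq \epsl \geq f(\epss)$ with both constants in $\Omega_\eps(1)$.

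The only mild subtlety — and the one point where this differs from the earlier Lemma \ref{lem:item-classification} — is the constant $8$ coming from the fact that one item interacts with up to four cells, each contributing two ratios (width and height); this is why I take $k = \Omega(1/\eps)$ rather than $k = \Omega(1/\eps)$ with a smaller hidden constant. Everything else is a direct pigeonhole over a constant-length chain of thresholds, so there is no genuine algorithmic obstacle: the lemma is purely existential and the candidate pair $(\epsl,\epss)$ is one of at most $k = O(1/\eps)$ precomputed pairs, matching the parenthetical claim in the lemma's statement.
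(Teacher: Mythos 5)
Your proof is correct and takes essentially the same route as the paper: it is the averaging argument from the proof of Lemma~\ref{lem:item-classification} (a chain of $O(1/\eps)$ nested threshold pairs $(\eps_j,\eps_{j+1})$ linked by $f$, plus pigeonhole over the bands), adapted to the per-cell setting by replacing the constant $2$ (two side lengths per item) with $8$ (at most four corner cells, each contributing a width and a height ratio) and enlarging $k$ accordingly. The reliance on the remark that only the at most four corner cells matter is exactly what the paper intends, so there is no gap.
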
 

For each cell $C$ that is not entirely covered by some item in \san{$K(t)$}
we \san{add all rectangles in $\optla(C)$ that are not contained in \san{$\mathcal{K}(t)$} to $K(t+1)$}. 
\san{Note that here, in contrast to before, we do \emph{not} remove small items from the packing but keep them.}

Based on the items \san{$\optsk(\C_{t}):=\cup_{\cell\in\C_{t}}\optho(\cell)\cup\optve(\cell)$}\sanr{We do not take care of the items in $\K(t)$, i.e. the ``large'' items that we need to keep? Then it can happen that the corridors cross some of those?}
we create a corridor decomposition and consequently a box decomposition
of the knapsack. To make this decomposition clearer, we assume that
we first stretch the non-uniform grid into a uniform $[0,1]\times[0,1]$
grid. After this operation, for each cell $C$ and for each element
of $\optho(C)\cup\optve(C)$ we know that its height or width is at
least $\epsl\cdot\frac{1}{1+2|\san{\K(t)}|}$. We apply Lemma~\ref{lem:corridorPack-weighted}
on the set $\optsk(\san{\C_{t}})$ which yields a decomposition of the $[0,1]\times[0,1]$
square into at most $O_{\eps,\epsl,\san{\K(t)}}(1)=O_{\eps,\epsl}(1)$ corridors.
The decomposition for the stretched $[0,1]\times[0,1]$ square corresponds
to the decomposition for the original knapsack, with the same items
being intersected. Since we can assume that all items of $OPT$ are
placed within the knapsack so that they have integer coordinates,
we can assume that the corridors of the decomposition also have integer
coordinates. We can do that, because shifting the edges of the decomposition
to the closest integral coordinate will not make the decomposition
worse, i.e., no new items of $OPT$ will be intersected.

We \san{add} all rectangles in $\andy{\optsk(\san{\C_{t}})}$ that are not contained
in a corridor (at most $O_{\eps}(1)$ many) and that are not contained
in \san{$\K(t)$}\san{ to $K(t+1)$}. 
The corridor partition
has the following useful property: we started with a fixed (optimal)
solution $\opt$ for the overall problem with a \emph{fixed placement
}of the items in this solution. Then we considered the items in $\optsk(\san{\C_{t}})$
and obtained the sets $\optco\subseteq\optsk(\san{\C_{t}})$ and $\san{\optco^{cross}}\subseteq\optco$.
With the mentioned fixed placement, apart from the $O_{\eps}(1)$
items in \san{$\optco^{cross}$}, each item in $\optco$ is contained in one corridor.
In particular, the items in $\optco$ do not overlap the items in
$\san{\K(t)}$. We construct now a partition of the knapsack into $O_{\eps}(1)$
corridors and $O_{\eps}(1)$ containers where each container contains
exactly one item from \san{$\K(t)$}. The main obstacle is that there can
be an item $i\in \san{\K(t)}$ that overlaps a \ari{corridor} (see Figure~\ref{fig:weighted-circumvent}).
We solve this problem by applying the following lemma on each such
corridor. 
\begin{lem} \label{lem:divide-open-corridors} Let $S$
be an open corridor with $b(S)$ bends. Let $I'\subseteq OPT$ be
a collection of items which intersect the boundary of $S$ with $I'\cap\optsk(\san{\C_{t}})=\emptyset$.
Then there is a collection of $|I'|\cdot b(S)$ line segments $\mathcal{L}$
within $S$ which partition $S$ into corridors with at most $b(S)$
bends each such that no item from $I'$ is intersected by $\mathcal{L}$
and there are at most $O_{\eps}(|I'|\cdot b(S))$ items of $\optsk(\san{\C_{t}})$
intersected by line segments in $\L$. \end{lem}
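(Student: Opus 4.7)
The plan is to carve out the overlap of each $i \in I'$ with $S$ using axis-parallel cuts drawn along the sides of $i$. Recall that an open corridor with $b(S)$ bends decomposes into $b(S)+1$ axis-aligned rectangular subcorridors $S_1,\ldots,S_{b(S)+1}$, each of width at most $\epsl N$ in its short direction. Fix $i \in I'$ and let $\mathcal{S}_i := \{S_j : i \cap S_j \neq \emptyset\}$; since $i$ is a single axis-aligned rectangle and the $S_j$'s form a zig-zag along $S$, one has $|\mathcal{S}_i| = O(b(S))$. Inside each $S_j \in \mathcal{S}_i$ I would take the (at most four) sides of $i$ that lie in $S_j$, prolong each one along its own direction until it meets $\partial S_j$, and add the resulting axis-parallel segments to $\mathcal{L}$. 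Ranging over all $i \in I'$ this yields $|\mathcal{L}| = O(|I'|\cdot b(S))$, matching the required count.

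I would then check the structural properties. Each segment of $\mathcal{L}$ coincides with a side of some $i \in I'$, and since the rectangles in $I' \subseteq \opt$ are pairwise disjoint, every other $i' \in I'$ lies strictly on one side of that segment and is hence not cut. The connected components of $S \setminus \bigcup \mathcal{L}$ come in two flavors: rectangular ``slots'' containing each overlap $i \cap S_j$ (trivial open corridors with $0$ bends), and residual pieces of the subcorridors. Because every $i \in I'$ intersects $\partial S$, each slot is anchored to $\partial S_j$, so removing it from $S_j$ only trims a rectangular notch from a boundary portion of $S_j$ and hence introduces no new bends; every residual region is therefore still a corridor with at most $b(S)$ bends.

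The main obstacle will be bounding the number of items of $\optsk(\C_t)$ crossed by $\mathcal{L}$. Every segment of $\mathcal{L}$ lies inside a single subcorridor $S_j$, which has width at most $\epsl N$ in its short direction and length at most $N$ along its long axis. By the stretching argument used to produce the corridor decomposition, every skewed item of $\optsk(\C_t) \cap S_j$ has its long side aligned with the long axis of $S_j$ and of length $\Omega_\eps(N/(1+|\K(t)|)) = \Omega_\eps(N)$, while its short side has length at most $\epss N$. A segment parallel to the short axis of $S_j$ is therefore crossed only by skewed items stacked along the short direction, of which at most $\epsl N / \epss N = O_\eps(1)$ can be pairwise disjoint; a segment parallel to the long axis is crossed only by items arranged side-by-side along the cut, each of length $\Omega_\eps(N)$, again giving $O_\eps(1)$ crossings. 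Summing over the $O(|I'|\cdot b(S))$ segments then yields the promised $O_\eps(|I'|\cdot b(S))$ bound, completing the proof.
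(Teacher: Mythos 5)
Your overall strategy (cutting along extensions of the sides of each $i\in I'$) is in the spirit of the paper's construction, but two steps fail as written.

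First, the claim that no item of $I'$ is cut does not follow from disjointness. Your segments are \emph{prolongations} of the sides of $i$ beyond $i$ itself, and a different item $i'\in I'$, while disjoint from $i$, can perfectly well straddle the line through a side of $i$ and hence be crossed by the prolonged segment (e.g.\ $i'$ sits higher up in the subcorridor but its $x$-range contains the $x$-coordinate of a vertical side of $i$). The paper avoids this by defining each segment to be maximally long \emph{subject to not intersecting the interior of any item of $I'$}, so a segment terminates when it would hit such an item, the boundary of $S$, or a boundary curve, and its ``loose ends'' are then continued perpendicularly into the adjacent subcorridor.

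Second, and more seriously, your bound on the number of crossed skewed items is incorrect for the segments obtained by prolonging the \emph{short} sides of $i$, i.e., segments running across a subcorridor perpendicular to its long axis. You bound the number of pairwise disjoint skewed items stacked across the short direction (of extent at most $\epsl N$) by $\epsl N/\epss N$, but $\epss N$ is an \emph{upper} bound on the short side of a skewed item, not a lower bound; nothing prevents $\Omega(n)$ pairwise disjoint horizontal items of tiny height from being stacked inside a vertical cut crossing a horizontal subcorridor. This is exactly why the paper's path construction only ever uses segments parallel to the long axis of the subcorridor they currently traverse: it starts from the top/bottom edge of $i$ inside its (say horizontal) subcorridor and, upon reaching the boundary curve, turns by $90$ degrees so that in the adjacent (vertical) subcorridor the segment is again parallel to the long axis; only then does ``each segment crosses $O_{\eps}(1)$ items, since each crossed item extends $\Omega_{\eps}(N)$ along the segment's direction'' go through, at the cost of up to $b(S)$ segments per item. (A smaller point: since consecutive subcorridors overlap and your cuts stop at $\partial S_j$ rather than at $\partial S$, at a boundary curve, or at an item of $I'$, it is also not established that the residual regions are corridors with at most $b(S)$ bends.)
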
 \begin{proof} Let
$i\in I'$ and assume w.l.o.g.~that $i$ lies within a horizontal
subcorridor $S_{i}$ of the corridor $S$. 
If the top or bottom edge $e$ of $i$ lies within $S_{i}$, we define
a horizontal line segment $\ell$ which contains the edge $e$ and
which is maximally long so that it does not intersect the interior
of any item in $I'$, and such that it does not cross the boundary
curve between $S_{i}$ and an adjacent subcorridor, or an edge of
the boundary of $S$ (we can assume w.l.o.g. that $e$ does not intersect
the boundary curve between $S_{i}$ and some adjacent subcorridor).
We say that $\ell$ \emph{crosses} a boundary curve $c$ (or an edge
$e$ of the boundary of $S$) if $c\setminus\ell$ (or $e\setminus\ell$)
has two connected components.

\begin{figure}
\begin{centering}
\includegraphics[scale=0.3]{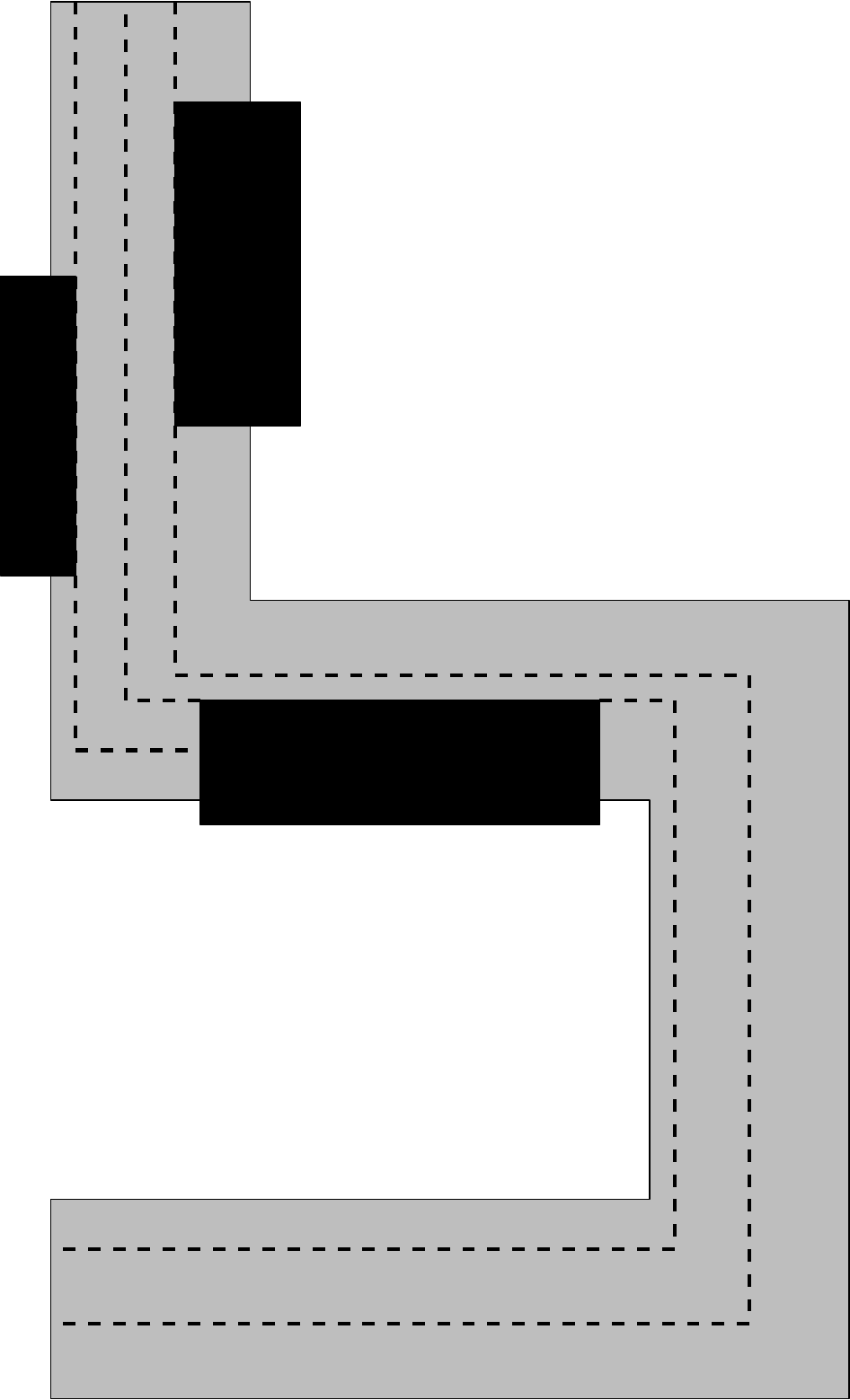} ~~~~~~~~~\includegraphics{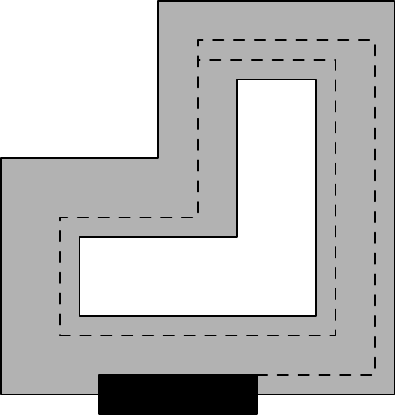}
\par\end{centering}
\caption{\label{fig:weighted-circumvent}Circumventing the items in $I'$,
shown in black. The connected components between the dashed lines
show the resulting new corridors. }
\end{figure}

We now ``extend'' each end-point of $\ell$ which does not lie at
the boundary of some other item of $I'$ or at the boundary of $S$,
we call such an end point a \emph{loose end}. For each loose end $x$
of $\ell$ \san{lying on the boundary} curve $c_{ij}$ partitioning the
subcorridors $S_{i}$ and $S_{j}$, we introduce a new line $\ell'$
perpendicular to $\ell$, starting at $x$ and crossing the subcorridor
$S_{j}$ such that the end point of $\ell'$ is maximally far away
from $x$ subject to the \ari{constraint} that $\ell'$ does not
cross an item in $I'$, another boundary curve inside $S$, or the
boundary of $S$. We continue iteratively. Since the corridor has
$b(S)$ bends, after at most $b(S)$ iterations this operation will
finish. We repeat the above operation for every item $i\in I'$, and
we denote by $\L$ the resulting set of line segments, see Figure~\ref{fig:weighted-circumvent}
for a sketch. Notice that $|\L|=b(S)\cdot|I'|$. By construction,
if an item $i\in\optsk(\san{\C_{t}})$ is intersected by a line in $\L$
then it is intersected parallel to its longer edge. Thus, each line
segment in $\L$ can intersect at most $O_{\eps}(1)$ items of $\optsk(\san{\C_{t}})$.
Thus, in total there are at most $O_{\eps,\epsl}(|I'|\cdot b(S))$
items of $\optsk(\san{\C_{t}})$ intersected by line segments in $\L$.
\end{proof} 
We apply Lemma~\ref{lem:divide-open-corridors} to each
\ari{open} corridor that intersects an item in $\san{\K(t)}$. We \san{add}
all \san{items} of $\Rsk(\san{\C_{t}})$ that are intersected by line segments
in $\L$ \san{to $K(t+1)$}. This adds $O_{\eps}(1)$ items in total to \san{$K(t+1)$} since $|\san{\K(t)}|\in O_{\eps}(1)$
and $b(S)\le1/\eps$ for each corridor $S$. For closed corridors
we prove the following lemma. 
\begin{lem} \label{lem:divide-closed-corridors}
Let $S$ be a closed corridor with $b(S)$ bends. Let $\optsk(S)$
denote the items in $\optsk(\san{\C_{t}})$ that are contained in $S$.
Let $I'\subseteq OPT$ be a collection of items which intersect the
boundary of $S$ with $I'\cap\optsk(\san{\C_{t}})=\emptyset$. Then there
is a collection of $O_{\eps}(|I'|^{2}/\eps)$ line segments $\mathcal{L}$
within $S$ which partition $S$ into a collection of closed corridors
with at most $1/\eps$ bends each and possibly an open corridor with
$b(S)$ bends such that no item from $I'$ is intersected by $\mathcal{L}$
and there is a set of items $\optsk'(S)\subseteq\optsk(S)$ with $|\optsk'(S)|\le O_{\eps}(|I'|^{2})$
such that the items in $\optsk(S)\setminus\optsk'(S)$ intersected
by line segments in $\L$ have a total profit of at most $O(\eps)\cdot p(\optsk(\san{\C_{t}}))$.
\end{lem}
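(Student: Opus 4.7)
The plan is to extend the construction from Lemma \ref{lem:divide-open-corridors} to the cyclic setting by combining the ``circumventing'' extensions around each $i\in I'$ with one additional ``cycle-breaking'' extension that opens the annulus, and then to use a shifting argument to control the profit of cut items of $\optsk(\C_t)$.

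First, I will apply the circumvention procedure of Lemma \ref{lem:divide-open-corridors} independently to each $i\in I'$ lying inside $S$: starting from a horizontal or vertical edge of $i$, extend a line segment as far as possible within $S$, turning perpendicular at each boundary curve between two subcorridors, and stopping whenever it hits another item of $I'$, the boundary of $S$, or an already-built line segment. Each such extension consists of at most $b(S)\le 1/\eps$ pieces, and every piece is parallel to the long edge of any item of $\optsk(\C_t)$ it meets, so (exactly as in Lemma \ref{lem:divide-open-corridors}) it crosses $O(1/\epsl)$ such items. If after this the remaining region of $S$ is still topologically an annulus, I pick an arbitrary $i^*\in I'$ and build one extra cycle-breaking extension starting from $i^*$, traversing $S$ all the way back to $i^*$, and detouring around every item of $I'\setminus\{i^*\}$ and every previous line segment along the way. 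This opens the annulus into an open corridor and contributes $O(|I'|/\eps)$ further segments.

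After these operations, $S$ is partitioned into at most one open corridor with at most $b(S)$ bends plus a collection of closed sub-corridors each enclosed by a ring of circumvention segments; every such sub-corridor inherits at most $b(S)\le 1/\eps$ bends from the original layout of $S$. The segment count is $O(|I'|/\eps)$ from the circumventions, plus $O(|I'|/\eps)$ for the cycle-breaker, plus an $O(1/\eps)$ detour per pair of items of $I'$, yielding $|\L|=O_\eps(|I'|^2/\eps)$ as required. Each segment of $\L$ intersects $O_\eps(1)$ items of $\optsk(\C_t)$ parallel to their long edge, and all such items are placed into $\optsk'(S)$, so $|\optsk'(S)|\le O_\eps(|I'|^2)$.

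To enforce the profit bound on items in $\optsk(S)\setminus\optsk'(S)$, I apply a shifting argument to the few segments (namely the cycle-breaker and its detours) whose initial position might fall inside many short edges of items of $\optsk(\C_t)$: each such segment can be slid among $\Theta(1/\eps)$ candidate parallel positions at spacing $\epss N$ within its subcorridor, and, by averaging, at least one position intersects items of total profit at most $O(\eps)\cdot p(\optsk(\C_t))$. Choosing these positions yields the desired profit bound. The main obstacle is verifying that the single cycle-breaking extension, forced to detour around every other item of $I'$ and every previously built segment, can always be completed inside $S$ without re-creating a cycle and while leaving every resulting sub-corridor with at most $1/\eps$ bends; this boils down to a careful case analysis on the relative placement of the items of $I'$ across the subcorridors of $S$, closely paralleling the proof of Lemma \ref{lem:divide-open-corridors}.
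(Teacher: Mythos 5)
Your high-level plan (circumvent each item of $I'$ as in the open-corridor case, then break the remaining annulus with one extra cut whose cost is controlled by averaging) is in the spirit of the paper's proof, but the step that controls the profit of the cut items is where your argument genuinely breaks down. Opening a closed corridor unavoidably requires a cut that crosses some subcorridor \emph{transversally}, i.e., crosses skewed items parallel to their \emph{shorter} edge; the number of such items cannot be bounded, only their profit. You propose to choose this cut by sliding it among $\Theta(1/\eps)$ parallel positions spaced $\epss N$ apart and averaging. This does not work: a skewed item crossed through its short edge extends a length $>\epsl N$ (after stretching, $\Omega_\eps(1)$ times the cell size) in the direction perpendicular to the cut, so it is crossed by essentially \emph{every} candidate position in a window of width $\epss N/\eps \ll \epsl N$. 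The candidate cuts therefore intersect almost the same set of items, the averaging yields no cheap candidate, and the claimed bound $O(\eps)\cdot p(\optsk(\C_t))$ does not follow. The paper avoids this by a different mechanism: the one path that fails to terminate within $b(S)$ bends is allowed to keep winding for $O(|I'|/\eps)$ bends, and at each bend one considers the hypothetical transversal shortcut there; the crucial observation is that the item sets crossed by shortcuts at \emph{distinct bends} are pairwise disjoint (they live in different subcorridor intersections/windings), so after $O(|I'|/\eps)$ bends one of them has profit at most $\frac{\eps}{|I'|}p(\optsk(S))$, and summing over the $|I'|$ items gives the $\eps\cdot p(\optsk(S))$ bound. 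Disjointness across bends, not proximity of parallel shifts, is the missing idea.

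A secondary issue is the cycle-breaking path itself. A path that starts at $i^*$ and ``traverses $S$ all the way back to $i^*$'' while staying parallel to the subcorridors is a loop around the annulus; it separates $S$ into two closed corridors rather than producing an open one (to open the annulus you must connect its two boundary components, which is exactly the transversal cut above). You flag the feasibility of your construction as an unresolved ``main obstacle,'' and indeed the paper's resolution is not a case analysis on the placement of $I'$ but the spiral-plus-self-intersection construction just described, followed by splitting the resulting long open corridor into $O(|I'|)$ pieces of at most $1/\eps$ bends each via a further shifting step.
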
 \begin{proof} Similarly as for the case of open corridors,
we take each item $i\in I'$ whose edge $e$ is contained in $S$,
and we create a path containing $e$ that partitions $S$. Here the
situation is a bit more complicated, as our newly created paths could
extend over more than $\frac{1}{\eps}$ bends inside $S$. In this
case we will have to do some shortcutting, i.e., some items contained
in $S$ will be crossed parallel to their shorter edge and we cannot
guarantee that their total number will be small. However, we will
ensure that the total weight of such items is small. We proceed as
follows (see Figure~\ref{fig:weighted-circumvent} for a sketch).

Consider any item $i\in I'$ and assume w.l.o.g. that $i$ intersects
a horizontal subcorridor $S_{i}$ of the closed corridor $S$. Let
$e$ be the edge of $i$ within $S_{i}$. For each endpoint of $e$
we create a path $p$ as for the case of closed corridors. If after
at most $b(S)\le1/\eps$ bends the path hits an item of $I'$ (possibly
the same item $i$), the boundary of $S$ or another path created
earlier, we stop the construction of the path. Otherwise, if $p$
is the first path inside of $S$ that did not finish after at most
$b(S)$ bends, we proceed with the construction of the path, only
now at each bend we check the total weight of the items of $\optsk(S)$
that would be crossed parallel to their shorter edge, if, instead
of bending, the path would continue at the bend to hit itself. From
the construction of the boundary curves in the intersection of two
subcorridors we know that for two bends of the constructed path, the
sets of items that would be crossed at these bends of the path are
pairwise disjoint. Thus, after $O(|I'|/\eps)$ bends we encounter
a collection of items $\optsk''(S)\subseteq\optsk(S)$ such that $p(\optsk''(S))\le\frac{\eps}{|I'|}p(\optsk(S))$,
and we end the path $p$ by crossing the items of $\optsk''(S)$.
\andy{This operation creates an open corridor with up to $O(|I'|/\eps)$
bends. We divide it into up to $O(|I'|)$ corridors with up to $1/\eps$
bends each. Via a shifting argument we can argue that this loses at
most a factor of $1+\eps$ in the profit due to these items.} When
we perform this operation for each item $i\in I'$ the total weight
of items intersected parallel to their shorter edge (i.e., due to
the above shortcutting) is bounded by $|I'|\cdot\frac{\eps}{|I'|}p(\optsk(S))=\eps\cdot p(\optsk(S))$.
This way, we introduce at most $O(|I'|^{2}/\eps)$ line segments.
Denote them by $\L$. They intersect at most $O_{\eps}(|I'|^{2})$
items parallel to their respective longer edge, denote them by $\optsk'(S)$.
Thus, the set $\L$ satisfies the claim of the lemma. \end{proof}
Similarly as for Lemma~\ref{lem:divide-open-corridors} we apply
Lemma~\ref{lem:divide-closed-corridors} to each closed corridor.
We \san{add} all items in the respective set $\optsk'(S)$ \san{to the set $\san{K(t+1)}$} which yields
$O_{\eps}(1)$ many items. 
\san{The items in $\optsk(S)\backslash\optsk'(S)$ are removed from the instance, as their total profit is small.}

\subsubsection{Partitioning corridors into boxes}\label{sec:weighted:shifting:boxes}

Then we partition the resulting corridors into boxes according to
the different cases described in Section~\ref{sec:structural:lemma}.
There is one difference to the argumentation above: we define that
the set $\optfa$ contains not only skewed items contained in the
respective subregions of a subcorridor, but \emph{all }items contained
in such a subregion. In particular, this includes items that might
have been considered as small items above. Thus, when we move items
from one subregion to the box associated to the subregion below (see
Remark~\ref{rem:fatPack}) then we move \emph{every }item that is
contained in that subregion. If an item is killed in one of the orderings
of the subcorridors to apply the procedure from Section~\ref{sec:structural:boxes}
then we add it to $\san{K(t+1)}$. Note that $|\san{K(t+1)}|\in O_{\eps,\epsl,\epst}(1)$
and $\san{\K(t)\cap K(t+1)}=\emptyset$.
\san{Also note here that we ignore for the moment small items that cross the boundary curves of the subcorridors; they will be taken care of in Section \ref{sec:weighted:shifting:small}.}

\subsubsection{Partitioning boxes into containers}\label{sec:weighted:shifting:containers}

Then we subdivide the boxes into containers. We apply Lemma~\ref{lem:containerPack}
to each box with a slight modification. Assume that we apply it to
a box of size $a\times b$ containing a set of items $I_{box}$. Like
above we first remove the items in a thin strip of width $3\eps b$
such that via a shifting argument the items (fully!) contained in
this strip have a small profit of $O(\eps)p(I_{box})$. However, in
contrast to the setting above the set $I_{box}$ contains not only
skewed items but also small items. We call an item $i$ \emph{small
}if there is no cell $\cell$ such that $i\in\optla(\cell)\cup\optho(\cell)\cup\optve(\cell)$
and denote by $\optsm(\san{\C_{t}})$ the set of small items. When we choose
the strip to be removed we ensure that the profit of the removed skewed
\emph{and} small items is small. There are $O_{\eps}(1)$ skewed items
that partially (but not completely) overlap the strip whose items
we remove. We \san{add} those $O_{\eps}(1)$ items 
to $\san{K(t+1)}$. \san{Small items that partially overlap the strip are taken care of later in Section \ref{sec:weighted:shifting:small}, we ignore them for the moment.}
Then we apply Lemma~\ref{lem:augmentPack}. In contrast to the setting
above, we do not only apply it to the skewed items but apply it also
to small items that are contained in the box. Denote by $\optsm'(\san{\C_{t}})$
the set of small items that are contained in some box of the box partition.

Thus, we obtain an L\&C packing for the items in $\san{\K(t)}$, for a set
of items $\optsk'(\san{\C_{t}})\subseteq\optsk(\san{\C_{t}})$, and for a set
of items $\optsm''(\san{\C_{t}})\subseteq\optsm'(\san{\C_{t}})$ such that $p(\optsk'(\san{\C_{t}}))+p(\optsm''(\san{\C_{t}}))+p(\san{K(t+1)})\ge(1-O(\eps))p(\optsk(\san{\C_{t}})\cup\optsm'(\san{\C_{t}}))$.

\subsubsection{Handling small items}\label{sec:weighted:shifting:small}

So far we ignored the small items in $\optsm''(\san{\C_{t}}):=\optsm(\san{\C_{t}})\setminus\optsm'(\san{\san{\C_{t}}})$.
This set consists of small items that in the original packing intersect
a line segment of the corridor partition, the boundary of a box, or
a boundary curve within a corridor. We describe now how to add them
into the empty space of the so far computed packing. First, we assign
each item in $\optsm''(\san{\C_{t}})$ to a grid cell. We assign each small
item $i\in\optsm''(\san{\C_{t}})$ to the cell $C$ such that in the original
packing $i$ intersects with $C$ and the area of $i\cap C$ is not
smaller than $i\cap C'$ for any cell $C'$ ($i\cap C'$ denotes the
part of $i$ intersecting $C'$ in the original packing for any grid
cell $C'$).

\fabr{I think we can save an $\eps$ fraction of the area of rescaled
grid, excluding area of \char`\"{}guessed\char`\"{} items that we
cannot touch. The current procedure would for example kill \char`\"{}automatically\char`\"{}
the relatively large items in a given cell. They are a constant number
and we counted them already in some sense. The small packing lemma
should work also with the residual area (not sure about the $L$ case)}
Consider a grid cell $C$ and let $\optsm''(C)$ denote the small
items in $\optsm''(\san{\C_{t}})$ assigned to $C$. Intuitively, we want
to pack them into the empty space in the cell $C$ that is not used
by any of the containers, similarly as above. We first prove an analog
of Lemma~\ref{lem:areaContainer} of the setting above.

\begin{lem}\label{lem:areaContainerWeighted} Let $\cell$ be a cell.
The total area of $C$ occupied by containers is at most $(1-2\eps)a(\cell)$.
\end{lem}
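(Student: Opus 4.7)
The plan is to lift the area accounting from Lemma~\ref{lem:areaContainer} from the whole knapsack down to the single cell~$\cell$. Recall that every container output in Section~\ref{sec:weighted:shifting:containers} originates from a box $B$ of size $a'\times b'$ where, before invoking the Resource Augmentation Lemma~\ref{lem:augmentPack} with $\epsau\leq\eps$, we discarded a strip of relative width $3\eps$ running parallel to the longer side. Consequently, all containers derived from $B$ sit inside a sub-box of area $(1-3\eps)(1+\epsau)a'b'\leq(1-2\eps)\,a(B)$, exactly as in the original argument.

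The first step is to refine the box partition so that every box lies in a single cell of $G(t)$. Cell boundaries are extensions of sides of items in $\K(t)$, and since the corridor partition already avoids these items (by Lemmas~\ref{lem:divide-open-corridors} and~\ref{lem:divide-closed-corridors}), any box $B$ that straddles a cell boundary can be cut along that boundary into two smaller rectangles of the same orientation. Re-running the strip-plus-RA procedure inside each piece gives containers of total area at most $(1-2\eps)$ times that piece's area, at the expense of at most $O_{\eps}(1)$ additional items being charged to $K(t+1)$ over all cuts (as was already done for items intersected in the repair lemmas), which is harmless for the area bound.

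Summing the per-piece bound over all (refined) boxes $B\subseteq\cell$ then yields
\[
\text{area}(\text{containers}\cap\cell)\;=\;\sum_{B\subseteq\cell}\text{area}(\text{containers in }B)\;\leq\;(1-2\eps)\sum_{B\subseteq\cell}a(B)\;\leq\;(1-2\eps)\,a(\cell),
\]
which is the claim. The only delicate point is the refinement step: one must check that re-executing the strip-plus-RA recipe inside each sub-box does not interfere with the shifting arguments that built $K(t+1)$. I expect this to go through because the strip removal and the invocation of Lemma~\ref{lem:augmentPack} are entirely local to each box, so each sub-box inherits the same structural guarantees as its parent. If the constant-many extra lost items were an issue, one could alternatively keep the boxes intact and, for each box $B$ intersecting $\cell$, choose the discarded strip to straddle the cell boundary, so that the strip still removes a $3\eps$ fraction of $|B\cap\cell|$ and the same $(1-2\eps)$ bound applies cell by cell.
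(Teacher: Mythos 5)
Your proposal deviates from the paper's argument in a way that introduces a genuine gap. The paper does \emph{not} cut boxes along cell boundaries. Your refinement step requires slicing a horizontal box along a \emph{vertical} cell boundary (and a vertical box along a horizontal one), i.e.\ perpendicular to the orientation of the skewed items inside. Such a cut crosses the items perpendicular to their longer edge, and the number of items it hits is not $O_{\eps}(1)$ --- a vertical segment of length $b$ through a horizontal box can cross up to $n$ horizontal items, since their heights are unbounded from below. This is precisely why Lemmas~\ref{lem:divide-open-corridors} and~\ref{lem:divide-closed-corridors} only ever cut parallel to the longer edges (where each segment kills at most $O_{\eps}(1)$ items) and need a separate weight-based shifting argument otherwise. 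Charging these unboundedly many items to $K(t+1)$ destroys the invariant $|K(t+1)|=O_{\eps}(1)$ on which the entire iterative grid construction rests (the grid $G(t+1)$ is built from the coordinates of $\K(t+1)$ and must have $O_{\eps}(1)$ cells). Your fallback --- keeping boxes intact but placing the discarded strip astride the cell boundary --- also does not work: a box may cross several cells, a single strip cannot straddle all boundaries, and the strip's position is already constrained by the averaging argument that guarantees it carries little profit.

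The paper's route keeps the boxes intact and exploits the geometry of the removed strip. For a horizontal box of size $a\times b$, the discarded strip is \emph{horizontal} and runs the full width of the box; hence for every cell $\cell$ that the box traverses without crossing the top or bottom edge of $\cell$, the containers occupy within $\cell$ only a region of size $a'\times(1-3\eps)(1+\epsau)b$, where $a'$ is the width of the box's intersection with $\cell$ --- the $(1-2\eps)$ saving localizes to $\cell$ for free, with no re-cutting. The boxes that \emph{do} cross the top or bottom edge of $\cell$ (where the free strip may lie entirely outside $\cell$) live in corridors of width at most $\epsl$, so their total area inside $\cell$ is $O(\epsl\, a(\cell))$ and is absorbed into the slack, together with the $O(\epsl\,a(\cell))$ of area taken up by items shifted into $\cell$ from neighbouring cells during the box construction (Remark~\ref{rem:fatPack}) --- a contribution your proposal does not account for at all. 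The final bound is then $(1-3\eps)(1+\epsau)a(\cell)+O(\epsl\,a(\cell))\le(1-2\eps)a(\cell)$.
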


\begin{proof}

In our construction of the boxes we moved some of the items (within
a corridor). In particular, it can happen that we moved some items
into $C$ that were originally in some other grid cell $C'$. This
reduces the empty space in $C$ for the items in $\optsm''(C)$. Assume
that there is a horizontal subcorridor $H$ intersecting $C$ such
that some items or parts of items within $H$ were moved into $C$
\san{that} were not in $C$ before. Then such items were moved vertically
and the corridor containing $H$ must intersect the upper or lower
boundary of $C$. The part of this subcorridor \ari{lying} within
$C$ has a height of at most $\epsl\cdot h(C)$. Thus, the total area
of $C$ lost in this way is bounded by $O(\epsl a(C))$ which includes
analogous vertical subcorridors.

Like in Lemma~\ref{lem:areaContainer} we argue that in each horizontal
box of size $a\times b$ we remove a horizontal strip of height $3\eps b$
and then the created containers lie in a box of height $(1-3\eps)(1+\epsau)b$.
In particular, if the box does not intersect the top or bottom edge
of $C$ then within $C$ its containers use only a box of dimension
$a'\times(1-3\eps)(1+\epsau)b$ where $a'$ denotes the width of the
box within $C$, i.e., the width of the intersection of the box with
$C$. If the box intersects the top or bottom edge of $C$ then we
cannot guarantee that the free space lies within $C$. However, the
total area of such boxes is bounded by $O(\epsl a(C))$. We can apply
a symmetric argument to vertical boxes. Then, the total area of $C$
used by containers is at most $(1-3\eps)(1+\epsau)a(C)+O(\epsl a(C))\le(1-2\eps)a(C)$.
This gives the claim of the lemma. \end{proof}

Next, we argue that the items in $\optsm''(C)$ have very small total
area. Recall that they are the items intersecting $C$ that are not
contained in a box. The total number of boxes and boundary curves
intersecting $C$ is $O_{\eps,\epsl}(1)$ and in particular, this
quantity does not depend on $\epss$. Hence, by choosing $\epss$
sufficiently small, we can ensure that $a(\optsm''(C))\le\eps a(C)$.
Then, similarly as in Lemma~\ref{lem:smallPack} we can argue that
if $\epss$ is small enough then we can pack the items in $\optsm''(C)$
using NFDH into the empty space within $C$.

\subsubsection{L\&C packings}\label{sec:weighted:shifting:packing}

We iterate the above construction, obtaining
pairwise disjoint sets $K(1),K(2),...$ until we find a set $K(\san{k})$\sanr{Changed $\ell$ to $k$ to avoid confusion with the other $\ell$.}
such that $p(K(\san{k}))\le\eps\cdot OPT$. Since the sets $K(0),K(1),...$
are pairwise disjoint there must be such a value \san{$k$} with $\san{k}\le1/\eps$.
Thus, $|\san{\K(k-1)}|\le O_{\eps}(1)$. We build the grid given by the
$x$- and $y$-coordinates of \san{$\K(k-1)$},
\san{giving a set of cells} $\C_{\san{k}}$. As described above we define the corridor
partition, the partition of the corridors into boxes (with the different
orders to process the subcorridors as described in Section~\ref{sec:structural:boxes})
and finally into containers. \andy{Denote by $\optsm(\C_{\san{k}})$
the resulting set of small items.}


We consider the candidate packings based on the grid $\C_{\san{k}}$.
For each of the six candidate packings with a degenerate $L$ we can
pack almost all small items of the original packing. We define \san{$I_{\mathrm{lc}}$
and $I_{\mathrm{sc}}$} the sets of items in long and short subcorridors in the initial
corridor partition, respectively. Exactly as in the cardinality case,
a subcorridor is long if it is longer than $N/2$ and short otherwise.
As before we divide the items into fat and thin items and define the
sets $\san{\SF}$, $\san{\LT}$, and $\san{\ST}$ accordingly. Moreover, we define the
set $\san{\LF}$ to contain all items in \san{$I_{\mathrm{lc}}$} that are fat in all candidate
packings \emph{plus} the items in \san{$\K(k-1)$}.

Thus, we obtain the respective claims of Lemmas~\ref{lem:onlyFat},
\ref{lem:noShort}, and \ref{lem:evenOdd} in the weighted setting.
For the following lemma let $\optsm:=\optsm(\C_{\san{k}})$.\fabr{Can
we avoid to restate the case analysis and reduce to that? Maybe considering
guessed items as fat and long?}\andyr{Considering guessed items
now as fat and long. We could remove Lemma~\ref{lem:weighted-apx}.
However, I think that it helps to keep track of what happened so far.}
\begin{lem} \label{lem:weighted-apx}Let $\optrc$ the most profitable
solution that is packed by an L\&C packing. 
\begin{enumerate}\renewcommand{\theenumi}{\alph{enumi}}\renewcommand\labelenumi{(\theenumi)}
\item $p(\optrc)\ge(1-O(\eps))(p(\san{\LF})+p(\san{\SF})+p(\optsm))$ 
\item $p(\optrc)\geq(1-O(\eps))(p(\san{\LF})+p(\san{\SF})/2+p(\san{\LT})/2+p(\optsm))$ 
\item $p(\optrc)\geq(1-O(\eps))(p(\san{\LF})+p(\san{\SF})/2+p(\san{\ST})/2+p(\optsm)).$ 
\end{enumerate}
\end{lem}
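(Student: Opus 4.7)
The plan is to mirror the proofs of Lemmas \ref{lem:onlyFat}, \ref{lem:noShort}, and \ref{lem:evenOdd} in the weighted setting, but with three key adaptations. First, the items in $\K(k-1)$ (large items from the original $\optla$, items cut by the shifting arguments of Sections \ref{sec:weighted:shifting:grid}--\ref{sec:weighted:shifting:boxes}, etc.) must be accommodated. Since by assumption $p(K(k))\le \eps\,p(\opt)$ and $|\K(k-1)|\le O_\eps(1)$, we simply dedicate one trivial container per item of $\K(k-1)$: their aggregated size is exactly their original placement, so these $O_\eps(1)$ containers do not overlap anything constructed around them. By the definition of $\san{\LF}$ (which includes $\K(k-1)$), these items appear in every bound. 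Second, in each candidate packing we must add back the small items $\optsm:=\optsm(\C_k)$. This is done by applying the adapted Small Items argument (Lemma \ref{lem:areaContainerWeighted} and the NFDH argument that follows): for $\epss$ sufficiently small, the free area in each cell $C$ that is not used by skewed/guessed containers is at least $\Omega(\eps)\,a(C)$, which together with $a(\optsm''(C))\le \eps\,a(C)$ suffices to pack all but a $(1-O(\eps))$ fraction of the profit of $\optsm$ using NFDH. Third, all three bounds use a degenerate $L$ (i.e., $N'=0$, $\ell=N$ as in Remark \ref{rem:degenerateRing}).

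For bound (a), I would use Case 4 of Section \ref{sec:structural:lemma}, where no subcorridor is deleted. Each box is partitioned into $O_\eps(1)$ containers by Lemma \ref{lem:containerPack}, which packs all of $\F=\san{\LF}\cup\san{\SF}$ (minus the already accounted for $\K(k-1)$ part) up to a $(1-O(\eps))$ factor. Adding the $O_\eps(1)$ containers for $\K(k-1)$ and the small-items packing gives bound (a).

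For bound (b), I would sum, over the four cases 1a, 1b, 2a, 2b, the profit of the items placed into containers. By Remark \ref{rem:processedLast} and the case analysis of Lemma \ref{lem:noShort}: each $i\in\san{\LF}$ is fat in all four cases (4 occurrences), each $i\in\san{\LT}$ lies in the last-processed (hence fat) long subcorridor in at least two of the cases (2 occurrences), and each $i\in\san{\SF}$ is deleted in two of the four cases but fat in the other two (2 occurrences). By averaging, at least one of the four cases yields, from its container packing, a profit of at least $p(\san{\LF})+p(\san{\SF})/2+p(\san{\LT})/2$, up to the $(1-O(\eps))$ loss from Lemma \ref{lem:containerPack}. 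Augmenting this case with the $\K(k-1)$-containers and with the small items via the NFDH step yields bound (b).

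For bound (c), I would apply the same averaging argument to cases 3a and 3b: each $i\in\san{\LF}$ is fat in both cases (2 occurrences); each $i\in\san{\SF}$ is deleted in one and fat in the other (1 occurrence); each $i\in\san{\ST}$ lies in the last-processed subcorridor in one case by Remark \ref{rem:processedLast} (1 occurrence). Dividing by two yields bound (c). The main obstacle I expect is not the averaging itself (which is essentially identical to the cardinality case) but the verification that after all the shifting, the grid construction $\C_k$, the corridor re-routing through Lemmas \ref{lem:divide-open-corridors}--\ref{lem:divide-closed-corridors}, and the box-to-container passage of Section \ref{sec:weighted:shifting:containers}, the small items $\optsm(\C_k)$ can actually be fit cell-by-cell. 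This is precisely what Lemma \ref{lem:areaContainerWeighted} plus the $O_{\eps,\epsl}(1)$-boundaries-per-cell bound provides, as long as $\epss$ is chosen small enough compared to $\eps,\epsl,\epst,\epsau$; a careful ordering of the constants in the chain $\eps\gg\epsl\gg\epst,\epsau\gg\epss$ ensures everything fits with only a $(1-O(\eps))$ loss in profit.
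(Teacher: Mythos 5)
Your proposal matches the paper's (largely implicit) argument: the paper derives Lemma \ref{lem:weighted-apx} exactly by transplanting the averaging arguments of Lemmas \ref{lem:onlyFat} (case 4), \ref{lem:noShort} (cases 1a, 1b, 2a, 2b) and \ref{lem:evenOdd} (cases 3a, 3b) to the grid $\C_k$, redefining $\LF$ to absorb the $O_\eps(1)$ items of $\K(k-1)$ (each placed in its own container, with the corridors rerouted around them), and adding the small items cell-by-cell via Lemma \ref{lem:areaContainerWeighted} and NFDH. No substantive difference from the paper's route.
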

For the candidate packing for the non-degenerate-$L$ case
(\san{Lemma \ref{lem:ringCase} in} Section \ref{sec:structural:lemma}) we first add the small items as described
above. Then we remove the items in $\san{\K(k-1)}$.
Then, like above, with a random shift we delete items touching a horizontal
and a vertical strip of width $3\eps N$. Like before, each item $i$
is still contained in the resulting solution with probability $1/2-15\eps$
(note that we cannot make such a claim for the items in $\san{\K(k-1)}$).
For each small item we can even argue that it still contained in the
resulting solution with probability $1-O(\eps)$ (since it is that
small in both dimensions).
We proceed with the construction of the
boundary $L$ and the assignment of the items into it like in the
unweighted case.
\begin{lem} \label{lem:weighted-apx2}For the solution
$\optrc$ we have that $p(\optrc)\ge(1-O(\eps))(\frac{3}{4}p(\san{\LT})+p(\san{\ST})+\frac{1-O(\eps)}{2}p(\san{\SF})+p(\optsm))$.
\end{lem}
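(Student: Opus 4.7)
The plan is to adapt the proof of Lemma~\ref{lem:ringCase} to the weighted setting, where we must also pack almost all of the small items in $\optsm$ and deal with the extra obstacle that the $O_{\eps}(1)$ guessed items of $\K(k-1)$ sit in the original packing.

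First I would instantiate an L\&C packing with boundary-L width $N' = \epsr N$ (for $\epsr\le\eps^{2}$) and threshold length $\ell = (\tfrac12+2\epsl)N$, so that the items in $\LT$ are precisely the candidates for the boundary L. By Lemma~\ref{lem:LoftheRing} applied to the horizontal and vertical subsets of $\LT$, one packs a subset of $\LT$ of profit at least $\tfrac34\, p(\LT)$ into this boundary L. By Lemma~\ref{lem:boxProperties}, the items in $\ST$ fit into two additional containers of sizes $\ell\times\epsr N$ and $\epsr N\times\ell$ placed along the two sides of the knapsack not occupied by the L.

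Next I would process $\SF \cup \optsm$ via a random shift. After deleting the items of $\K(k-1)$ from the original packing (which only frees area), pick independently a vertical strip of width $3\eps N$ and a horizontal strip of height $3\eps N$, uniformly placed, and remove every item intersecting either strip. A horizontal item in $\SF$ has height at most $\epss N$ and width at most $\ell \le \tfrac12 + 2\epsl$, so it survives both strips with probability at least $\tfrac12 - O(\eps)$; the vertical case is symmetric. A small item, with both sides at most $\epss N \le \eps N$, survives with probability at least $1 - O(\eps)$. By linearity of expectation, there exist $\SF'\subseteq\SF$ and $\optsm'\subseteq\optsm$ with $p(\SF')\ge(\tfrac12-O(\eps))p(\SF)$ and $p(\optsm')\ge(1-O(\eps))p(\optsm)$ that jointly fit in a square of side $(1-3\eps)N$. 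In the free region not covered by the boundary L or the two $\ST$-containers, a square $K''$ of side $(1-\eps)N$ is available (using $\epsr\le\eps^{2}$); I then apply the Resource Augmentation Lemma~\ref{lem:augmentPack} to $\SF'$, which packs a $(1-O(\eps))$-fraction of its profit into $O_{\eps}(1)$ containers fitting inside a box of side at most $(1-3\eps)(1+\epsau)N\le(1-2\eps)N$ (hence inside $K''$), and finally the Small Items Packing Lemma~\ref{lem:smallPack} to pack a $(1-O(\eps))$-fraction of the profit of $\optsm'$ into the remaining free area inside $K''$. Summing the four contributions yields the claimed bound.

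The step I expect to require the most care is the joint area accounting: I must verify, in the spirit of Lemma~\ref{lem:areaContainerWeighted}, that after placing the $\LT$-, $\ST$- and $\SF$-containers the residual free space inside $K''$ accommodates almost all of $\optsm'$ via NFDH. This combines the area bound from the Resource Augmentation Lemma (the container area exceeds the packed-item area by at most $\epsau N^{2}$) with the fact that the total area of $\optsm'$ is bounded by the complement of the skewed-item area, after choosing $\epsau$ and $\epss$ sufficiently small relative to $\eps$ and $\epsr$.
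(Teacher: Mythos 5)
Your proposal is correct and follows essentially the same route as the paper: the paper's (very terse) argument is precisely the adaptation of Lemma~\ref{lem:ringCase} that you describe — remove $\K(k-1)$, pack $\tfrac34 p(\LT)$ into the boundary L via Lemma~\ref{lem:LoftheRing} and $\ST$ into two thin containers via Lemma~\ref{lem:boxProperties}, delete items hitting random $3\eps N$-strips so that $\SF$ survives with probability $\tfrac12-O(\eps)$ and small items with probability $1-O(\eps)$, then repack via resource augmentation and the small-items/area argument. No gaps; your elaboration of the area accounting is exactly the point the paper delegates to Lemma~\ref{lem:areaContainerWeighted} and Lemma~\ref{lem:smallPack}.
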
 When we combine Lemmas~\ref{lem:weighted-apx} and \ref{lem:weighted-apx2}
we conclude that $p(\optrc)\ge(17/9+O(\eps))p(OPT)$. \andy{Similarly
as before, the worst case is obtained, up to $1-O(\eps)$ factors,
when we have that $p(\san{\LT})=p(\san{\SF})=p(\san{\ST})$, $p(\san{\LF})=5p(\san{\LT})/4$, and $p(\optsm)=0$.}\andyr{@Fab:
did you mean something like this in your comment at the proof of Lemma~\ref{lem:apxNoRotation}
in the cardinality case?} This completes the proof of Lemma~\ref{lem:apxNoRotation}.

\subsection{Main algorithm}

In this Section we present our main algorithm for the weighted case
of \tdk. It is in fact an approximation scheme for L\&C packings.
Its approximation factor therefore follows immediately from Lemma~\ref{lem:apxNoRotation}. 

Given $\epsilon>0$, we first guess the quantities $\epsl,\epss$,
the proof of Lemma~\ref{lem:item-classification} reveals that there
are only $2/\epsilon+1$ values we need to consider. We choose $\epsr:=\epsilon^{2}$
and subsequently define $\epsb$ according to Lemma~\ref{lem:boxProperties}.
Our algorithm combines two basic packing procedures. The first one
is the following standard PTAS to pack items into a constant number
of containers. The same basic approach works also with rotations.
The basic idea is to reduce the problem to an instance of the Maximum
Generalized Assignment Problem~\andy{(GAP)} with one bin per container,
and then use a PTAS for the latter problem plus Next Fit Decreasing
Height to repack items in area containers.

\begin{lem}\label{lem:containersPackPTAS} There is a PTAS for the
problem of computing a maximum profit packing of a subset of items
\andy{of} a given set $I'$ into a given set of containers of constant
cardinality, \fab{both with and without rotations}. \end{lem}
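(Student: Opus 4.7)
The plan is to reduce the problem to the Maximum Generalized Assignment Problem (GAP), for which a PTAS is known (e.g.~Fleischer--Goemans--Mirrokni--Sviridenko), and then convert the resulting assignment back into a geometric packing. For each container $c$ we introduce one GAP bin. If $c$ is horizontal of size $a\times b$, its capacity is $b$; an item $i$ with $w_i\le a$ is eligible for $c$ with size $h_i$ and profit $p_i$. Vertical containers are treated symmetrically. If $c$ is an area container of size $a\times b$, its capacity is set to $(1-\eps)ab$; an item $i$ is eligible only if $w_i\le\eps a$ and $h_i\le\eps b$, and its size in $c$ is $w_ih_i$. Any optimal geometric packing yields a feasible GAP solution of equal profit, so invoking the GAP PTAS, which runs in polynomial time since the number of bins is a constant, we obtain an assignment of profit at least $(1-\eps)\opt$.

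To materialize the packing, items assigned to a horizontal (resp.\ vertical) container are stacked top--to--bottom (resp.\ left--to--right); the one-dimensional capacity constraint guarantees feasibility. Items assigned to an area container are packed by Next Fit Decreasing Height (NFDH). A standard property of NFDH states that if every rectangle has width at most $\eps a$ and height at most $\eps b$ then NFDH packs any subset of total area at most $(1-O(\eps))ab$ into a box of size $a\times b$; the $(1-\eps)$ reduction of the area-container capacity above therefore suffices to guarantee that all items assigned to such a container are actually placed, at the price of an additional $(1+O(\eps))$ factor. Combining the two losses yields the claimed $(1+\eps)$-approximation.

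For the variant with rotations we keep the same reduction, but for each pair (item $i$, container $c$) we redefine the eligibility and the effective size by taking the best of the two orientations of $i$: $i$ is eligible for $c$ if some orientation satisfies the dimensional constraints of $c$, and its size in $c$ is the minimum among the eligible orientations (note that in area containers the area $w_ih_i$ is rotation-invariant, so the eligibility is the only thing that may change). After solving the GAP instance we commit each placed item to the orientation attaining this minimum, and then run the same stacking/NFDH procedures. The main conceptual obstacle is that one has to be careful that the orientation committed to at the GAP stage is consistent with the later geometric packing step; the formulation above resolves this by encoding the orientation inside the size function, so that the GAP output can be read off directly as a feasible rotated assignment.
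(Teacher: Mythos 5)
Your reduction is the same one the paper uses (one GAP bin per container, one-dimensional capacity for horizontal/vertical containers, area capacity for area containers, NFDH to realize the area-container assignment), but the central step rests on a claim that is false as stated: there is no PTAS for the Maximum Generalized Assignment Problem. GAP is APX-hard, and the Fleischer--Goemans--Mirrokni--Sviridenko result you cite gives only a $(1-1/e+\eps)$-approximation; plugging that in would not yield a PTAS for container packing. The property you actually need is a PTAS for GAP restricted to a \emph{constant} number of bins, and this is not something you can obtain by citation --- it has to be constructed. The paper does exactly this in its appendix (Lemmas~\ref{lem:GAPresaug} and~\ref{lem:GAP}): a dynamic program over the $k=O_\eps(1)$ bins whose capacities are rescaled so that the DP table has polynomial size, which solves the problem exactly under $(1+\eps)$-resource augmentation of the bin capacities, followed by a shifting argument (guess and remove $O(1/\eps^2)$ ``big'' items per bin, then delete a cheap $\eps$-slab) to convert the augmented solution back into one respecting the original capacities at a $(1-O(\eps))$ profit loss. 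Without this (or an equivalent construction), your proof has a hole precisely where the algorithmic content lies.

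Two smaller points. First, by Lemma~\ref{lem:nfdhPack} the guarantee for NFDH with $\eps$-small items is that total area at most $(1-2\eps)ab$ suffices, so the area-bin capacity should be $(1-2\eps)ab$, not $(1-\eps)ab$. Second, your assertion that ``any optimal geometric packing yields a feasible GAP solution of equal profit'' fails once you shrink the area capacity: an area container in the optimal packing may be filled to area exceeding $(1-2\eps)ab$. You need the standard greedy argument (discard items in increasing order of profit-to-area ratio until the area drops below the reduced capacity; since each item in an area container has area at most $\eps^2 ab$, the discarded profit is an $O(\eps)$ fraction) to show that a near-optimal solution survives the capacity reduction. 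Your handling of rotations (encoding the orientation into the size function and taking the best eligible orientation per item--bin pair) is fine and consistent with the paper.
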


The second packing procedure \san{is the} PTAS for the L-packing problem, see Theorem~\san{\ref{thm:main:Lpacking}}.

To use these packing procedures, we first guess whether the optimal
L\&C-packing due to Lemma~\ref{lem:apxNoRotation} uses a non-degenerate
boundary L. 
If yes, we guess a parameter $\ell$ which denotes the
minimum height of the vertical items in the boundary $L$ and the
minimum width of the horizontal items in the boundary $L$. For $\ell$
we allow all heights and widths of the input items that \san{are} larger than
$N/2$, i.e., at most $2n$ values. Let $\san{\ilong}$ be the items whose
longer side has length at least $\ell$ (hence longer than $N/2$).
We set the width of the boundary $L$ to be $\epsilon^{2}N$ and solve
the resulting instance $(L,\san{\ilong})$ optimally using the PTAS for L-packings
due to Theorem~\san{\ref{thm:main:Lpacking}}. Then we enumerate all the possible subsets \andy{of}
non-overlapping containers in the space not occupied by the boundary
$L$ (or in the full knapsack, in the case of a degenerate $L$),
where the number and sizes of the containers are defined as in Lemma
\ref{lem:containerPack}. In particular, there are at most $O_{\eps}(1)$
containers and there is a set of size $n^{O_{\eps}(1)}$ that
we can compute in polynomial time such that the height and the width
of each container is contained in this set. We compute an approximate
solution for the resulting container packing instance with items $\san{\ishort}=I\setminus \san{\ilong}$
using the PTAS from Lemma \ref{lem:containersPackPTAS}. Finally,
we output \san{the} most profitable solution that we computed.


\section{Cardinality case without rotations}
\label{sec:tdk_car:refined}
In this section, we present a refined algorithm with approximation factor of $\frac{558}{325}+\eps<1.717$ for the cardinality case when rotations are not allowed. 

\begin{thm}\label{thm:cardWorot}
There exists a polynomial-time $\frac{558}{325}+\eps<1.717$-approximation algorithm for cardinality 2DK.
\end{thm}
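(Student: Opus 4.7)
The plan is to augment the three candidate packings of Theorem~\ref{thm:16/9-apx} with a new parameterized family that packs (a subset of) the long items into a boundary $\fontL$ of width strictly less than $N$, and packs short items in the complementary rectangular region via a mixture of container packings and Steinberg's algorithm~\cite{steinberg1997strip}. First I would keep the three existing candidates: (i)~the $\fontL$-packing PTAS applied to the full knapsack, which packs at least $(\tfrac{3}{4}-O(\eps))\,p(OPT_{long})$ by Lemma~\ref{lem:LoftheRing} and Theorem~\ref{thm:main:Lpacking}; and (ii)--(iii)~the two resource-augmentation PTAS runs on the shrunken knapsacks $[0,N]\times[0,N/(1+\eps)]$ and $[0,N/(1+\eps)]\times[0,N]$, whose best yields in expectation at least $(\tfrac{1}{2}-O(\eps))p(OPT)+(\tfrac{1}{4}-O(\eps))p(OPT_{short})$.

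Next I would introduce the new candidate type. After guessing (in polynomial time, up to a factor $1+\eps$) the values of $p(OPT_{long,hor})$, $p(OPT_{long,ver})$, the total height $\height_H$ of long horizontal items in $OPT$ and the total width $\width_V$ of long vertical items in $OPT$, I would apply the shifting procedure of Lemma~\ref{lem:LoftheRing} to push the long items into four boundary stacks, drop the least profitable stack (losing at most a $1/4$ fraction of their cardinality), and rearrange the remainder into a boundary $\fontL$ whose vertical arm has width $O(\width_V)$ and whose horizontal arm has height $O(\height_H)$. I would then invoke Theorem~\ref{thm:main:Lpacking} inside this $\fontL$. In the complementary rectangle of size roughly $(N-\width_V)\times(N-\height_H)$ I would pack short items by taking the better of two strategies: a container-based packing obtained by running the resource-augmentation PTAS of~\cite{jansen2007new} inside the complement after a random vertical or horizontal shift (losing only a constant fraction of $p(OPT_{short})$), and a Steinberg packing, which is applicable since every short item has both sides at most $N/2$ and hence at most the shorter side of the complement whenever the latter is at least $N/2$, provided the total area of the selected short items is at most half the area of the complement.

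Finally, the analysis reduces to a case distinction on the split of $p(OPT)$ among $p(OPT_{long,hor})$, $p(OPT_{long,ver})$, $p(OPT_{short})$ and on the ratios $\height_H/N$ and $\width_V/N$; for each regime I would show that one of the candidate packings exceeds the target of $\tfrac{325}{558}\,p(OPT)$. I would formalize the worst case as a small linear program whose variables are the three profit masses and the two shape parameters, whose constraints encode the area bound of short items and the feasibility of each candidate, and whose objective is the minimum candidate guarantee; its optimum should equal $\tfrac{325}{558}-O(\eps)$. The hard part will be engineering the family of candidate $\fontL$'s so that, in every regime, either the complementary rectangle is large enough in both dimensions for Steinberg's condition to absorb a constant fraction of $p(OPT_{short})$, or one of the strip-based or full-knapsack candidates already dominates; tightly balancing these overlapping regimes against one another is exactly what forces the precise constant $\tfrac{558}{325}$.
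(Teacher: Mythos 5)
Your overall architecture --- keep the candidates of Theorem~\ref{thm:16/9-apx}, add new candidates that pack long items into a boundary $\fontL$ and fill the complement with short items via Steinberg, then take the worst case of a small LP --- is indeed the skeleton of the paper's argument. But as written the proposal has two concrete gaps that prevent it from reaching the constant $\frac{558}{325}$.

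First, you work only with the long/short partition, whereas the paper's LP is driven by the finer four-way split into $\ilthin$, $\ilfat$, $\isthin$, $\isfat$ coming from the corridor/box decomposition of Section~\ref{sec:weighted}. The corridor-based candidate packings supply inequalities such as $|\optrc| \geq (1-O(\eps))\bigl(|\ilfat| + \tfrac{1}{2}(|\isfat| + |\ilthin|)\bigr)$ (inequality~(5) in Section~\ref{sec:bound_apx}), and in every case of the paper's analysis --- including the binding case $2A(iii)b$ that fixes the constant --- this inequality carries dual weight $\tfrac{1}{3}$ or more. Moreover, the crucial distinction that \emph{thin} long items have total height/width at most $\eps N$ (Lemma~\ref{lem:boxProperties}) is what allows a candidate with a \emph{very narrow} boundary $\fontL$ holding $\tfrac{3}{4}|\ilthin|$ plus almost an entire knapsack's worth of room for $\isfat$ (Lemma~\ref{lem4cardgen}). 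With only the candidates you list, the LP optimum stays strictly below $\tfrac{325}{558}$; you would improve on $\tfrac{9}{16}$ but not reach the claimed bound.

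Second, the step ``pack short items in the complementary rectangle of size roughly $(N-\width_V)\times(N-\height_H)$'' breaks down precisely in the regimes that determine the final constant. When $\width_V > N/2$, the complement is narrower than $N/2$, so Steinberg cannot accept horizontal short items (which may have width up to $N/2$), and your fallback of running the resource-augmentation PTAS after a random strip deletion only recovers the $\tfrac{1}{2}\,p(OPT)+\tfrac{1}{4}\,p(OPT_{short})$ guarantee you already have. The paper handles this via two separate mechanisms you would need to reinvent: in Case 2B it sacrifices half of $\ilopt$ (keeping only the narrower vertical items) to widen the complement back past $N/2$; in Case 2A(iii)b it carves $O_\eps(1)$ horizontal boxes of width at least $(\tfrac12+2\epsl)N$ \emph{inside} the region partially occupied by long vertical items, using a monotone-chain construction and an area-accounting argument ($a(\ilopt)\le \tfrac{2}{5}N^2$) to show these boxes absorb a $\tfrac{5}{36}$ (resp.\ $\tfrac{7}{48}$ in Case 2A(iii)a) fraction of $|\isfat|$. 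Without these constructions the ``tight balancing'' you defer to is not achievable at the stated constant.
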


Along this section, since the profit of each item is equal to $1$, instead of $\profit(I)$ for a set of items $I$ we will just write $|I|$. We will use most of the notation defined in Section \ref{sec:weighted}. Recall that for two given constants $0<\eps_{small}<\eps_{large}\le 1$, we partition the instance into:
\begin{itemize}
\item $I_{small}$, the set of rectangles with $h_i, w_i\le \eps_{small}N$, and we denote them as \emph{small} rectangles;
\item $I_{large}$, the set of rectangles with $h_i, w_i>\eps_{large}N$, and we denote them as \emph{large} rectangles;
\item $I_{hor}$, the set of rectangles with $w_i>\eps_{large}N$ and $h_i\le \eps_{small}N$, and we denote them as \emph{horizontal} rectangles;
\item $I_{ver}$, the set of rectangles with $h_i>\eps_{large}N$ and $w_i\le \eps_{small}N$, and we denote them as \emph{vertical} rectangles;
\item $I_{int}$, the set of remaining rectangles, and we denote them as \emph{intermediate} rectangles.
\end{itemize}
The corresponding intersection with $OPT$ defines the sets $OPT_{small}$, $OPT_{large}$, $OPT_{hor}$, $OPT_{ver}$ and $OPT_{int}$, respectively. As discussed in Section~\ref{sec:tdk_car:simple}, since any feasible solution contains at most $\frac{1}{\eps_{large}^2}$ large rectangles, we can assume in this case that $OPT_{large}=\emptyset$. Furthermore, thanks to Lemma~\ref{lem:item-classification}, $\eps_{small}$ and $\eps_{large}$ can be chosen in such a way that $\eps_{small}\le \eps_{large}\le \eps$, $\eps_{small}$ differs from $\eps_{large}$ by a large factor and $|OPT_{int}|\le \eps |OPT|$.
Building upon the corridors decomposition from~\cite{AW2013}, we will again consider $OPT_{T}$ (thin rectangles), $OPT_{F}$ (fat rectangles) and $OPT_K$ (killed rectangles) as defined in Section~\ref{sec:structural:lemma}. Thanks to Lemma~\ref{lem:boxProperties}, $|OPT_{K}| = O_{\eps}(1)$ and all the involved parameters can be fixed in such a way that the total height (resp. width) of $OPT_T\cap I_{hor}$ (resp. $OPT_T\cap I_{ver}$) is at most $\eps N$.
Recall that a subcorridor is called \emph{long} if its shortest edge has length at least $\frac{N}{2}$ and short otherwise. In the analysis of the algorithm we will again use sets $OPT_{LF}$, $OPT_{LT}$, $OPT_{SF}$ and $OPT_{ST}$ as defined in Section~\ref{sec:structural:boxes}, corresponding to rectangles from $OPT_F$ inside long corridors, rectangles from $OPT_T$ inside long corridors, rectangles from $OPT_F$ inside short corridors and rectangles from $OPT_T$ inside short corridors respectively.
For a given $\ell\in (\frac{N}{2},N]$, we let $I_{long}\subseteq I$ be the rectangles whose longest side has length longer than $\ell$ and $I_{short}=I\setminus I_{long}$.
We will assume as in the proof of Lemma~\ref{lem:ringCase} that $\ell = \left(\frac{1}{2}+2\eps_{large}\right)N$. That way we make sure that no long rectangle belongs to a short subcorridor (however it is worth remarking that long corridors may contain short rectangles).

Let us define $\ilopt:=\il\cap OPT$ and $\isopt:=\is\cap OPT$. Let us define $\efl=\sqrt \eps$. Note that $\efl \ge \eps \ge \epsl \ge \epss$. For simplicity and readability of the section, sometimes we will slightly abuse the notation and for any small constant depending on $\eps, \epsl, \epss$, we will just use $O({\efl})$.
Now we give a brief informal overview of the refinement and the cases before we go to the details.

\noindent{\bf Overview of the refined packing.}
For the refined packing we will consider several $\fontL\&C$ packings.
Some of the packings are just extensions  of previous constructions (such as from Theorem~\ref{thm:16/9-apx} and Lemma~\ref{lem:weighted-apx}).
Then we consider several other new $\fontL\&C$ packings where an $\fontL$-region is packed with items from $\il$ and the remaining region is used for packing items from $\is$ using Steinberg's theorem (See Theorem \ref{thm:steinberg}).
\ari{Note that in the definition of $\fontL\&C$ packing in Section \ref{sec:weighted}, we assumed the height of  the horizontal part of $L$-region and width of the vertical part of $L$-region to be the same. However, for these new packings we will consider $L$-regions where the height of  the horizontal part and width of vertical part may differ.}
Now several cases arise depending on the structure and profit of the $\fontL$-region. To pack items in $\isopt$ we have three options:\\
1. We can pack items in $\is$ using Steinberg's theorem into one rectangular region. Then we need both sides of the region to be greater than $\frac12+2\epsl$. \\
2. We can pack items in $\is$ using Steinberg's theorem such that
vertical and horizontal items are packed separately into different vertical and horizontal rectangular regions inside the knapsack.\\
3. If $a(\isopt)$ is large, we might pack only a small region with items in $\ilopt$, and use the remaining larger space in the knapsack to pack a significant fraction of profit from $\isopt$.\\
Now depending on the structure of the $L$-packing and $a(\isopt)$, we arrive at several different cases.
If the $\fontL$-region has very small width and height, we have case (1).
Else if the $\fontL$-region has very large width (or height), we have case (2B), where we pack nearly 
$\frac12|\ilopt|$  in the $\fontL$-region and then pack items from $\is$ in one large rectangular region.
Otherwise, we have case (2A), where either we pack only items from $\ilthin$ (See Lemma \ref{lem4cardgen}, used in case: (2Ai)) or nearly $3/4|\ilopt|$ (See Lemma \ref{lem4cardgen2}, used in cases (2Aii), (2Aiiia)) or in another case, we pack the vertical and horizontal items in $\isopt$ in two different regions through a more complicated packing (See case (2Aiiib)). The details of these cases can be found in the proof of Theorem \ref{thm:cardWorot}.

Now first, we start with some extensions of previous packings.
Note that by using analogous arguments as in the proof of Theorem~\ref{thm:16/9-apx}, we can derive the following inequalities leading to a $\left(\frac{16}{9}+O(\efl)\right)$-approximation algorithm.

\begin{equation} \label{lem1card}
|\optrc|\geq \frac{3}{4}|\ilopt|
\end{equation}

\begin{equation} \label{lem2card}
|\optrc|\geq \left(\frac{1}{2} -O(\efl)\right)|\ilopt|+\left(\frac{3}{4}-O(\efl)\right)|\isopt|
\end{equation}
\arir{Check $\eps$ dependency from mainbody} 

Now from Lemma~\ref{lem:boxProperties}, items in $\isthin$ can be packed into two containers of size $\ell \times \eps N$ and $\eps N \times \ell$. We can adapt part of the results in Lemma~\ref{lem:weighted-apx} to obtain the following inequalities.
\begin{pro} The following inequalities hold:
\begin{equation} \label{lem3card} |\optrc| \geq \wal{(1-O(\eps_{\fontL}))}(|\ilfat|+|\isfat|).\end{equation}
\begin{equation} \label{lem4card} |\optrc| \geq \wal{(1-O(\eps_{\fontL}))}(|\ilfat| + \frac{1}{2}(|\isfat| + |\ilthin|)).\end{equation} \end{pro}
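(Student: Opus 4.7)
The plan is to adapt parts~(a) and~(b) of Lemma~\ref{lem:weighted-apx} to the cardinality setting, and then to convert the resulting bounds from the subcorridor-based partition $\{\LF,\SF,\LT,\ST\}$ to the item-based partition $\{\ilfat,\isfat,\ilthin,\isthin\}$ used in this section. Running the structural machinery of Section~\ref{sec:structural:lemma} with unit profits yields two degenerate-$L$ L\&C packings: (I) the Case~4 packing, which retains all of $\F$ and so gives $|\optrc|\geq (1-O(\efl))(|\LF|+|\SF|)$, and (II) the best of the four packings of Cases~1a, 1b, 2a, 2b, which by exactly the averaging argument of Lemma~\ref{lem:noShort} satisfies $|\optrc|\geq (1-O(\efl))(|\LF|+\tfrac{1}{2}|\LT|+\tfrac{1}{2}|\SF|)$. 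The $O(\efl)$ losses come from the Container Packing Lemma~\ref{lem:containerPack}; small items are packed on the side via Lemma~\ref{lem:smallPack} and contribute non-negatively, so they can be dropped from the target bounds.

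To relate the two partitions, I would invoke the property (already highlighted in the excerpt) that with the choice $\ell=(\tfrac{1}{2}+2\epsl)N>N/2$, a long item has longest side exceeding $\ell$ and therefore cannot fit inside any short subcorridor, whose length is at most $N/2$. Consequently $\ilopt\subseteq I_{lc}$, which gives the inclusions $\ilfat\subseteq\LF$ and $\ilthin\subseteq\LT$ together with the disjoint decompositions
\[
\LF=\ilfat\sqcup(\LF\cap\isfat),\qquad \isfat=\SF\sqcup(\LF\cap\isfat),\qquad \LT=\ilthin\sqcup(\LT\cap\isthin).
\]
In particular both $\{\LF,\SF\}$ and $\{\ilfat,\isfat\}$ are two partitions of the same set $\F$ of fat items, so $|\LF|+|\SF|=|\F|=|\ilfat|+|\isfat|$.

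Inequality~(\ref{lem3card}) then follows immediately from packing~(I). For~(\ref{lem4card}) I would substitute the decompositions above into packing~(II) to obtain
\[
|\LF|+\tfrac{1}{2}|\SF|+\tfrac{1}{2}|\LT|
=|\ilfat|+|\LF\cap\isfat|+\tfrac{1}{2}|\SF|+\tfrac{1}{2}|\ilthin|+\tfrac{1}{2}|\LT\cap\isthin|
\geq |\ilfat|+\tfrac{1}{2}|\isfat|+\tfrac{1}{2}|\ilthin|,
\]
using $|\LF\cap\isfat|\geq\tfrac{1}{2}|\LF\cap\isfat|$ together with $\isfat=\SF\sqcup(\LF\cap\isfat)$ to form the $\tfrac{1}{2}|\isfat|$ term, and discarding the nonnegative $\tfrac{1}{2}|\LT\cap\isthin|$.

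The only delicate point in this plan is the bookkeeping in the last paragraph: the ``long-corridor but short-item'' bucket $\LF\cap\isfat$ is counted with weight one in $|\LF|$ on the left-hand side of~(II) but with weight only one-half on the right-hand side of~(\ref{lem4card}), which is precisely what makes~(II) dominate the target bound. No new packing needs to be constructed; the proposition is just the re-expression of the existing Lemma~\ref{lem:weighted-apx}(a)--(b) in the long/short-item language needed to combine it with~(\ref{lem1card})--(\ref{lem2card}) in the subsequent case analysis.
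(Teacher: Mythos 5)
Your route is the same as the paper's: both inequalities are obtained by translating Lemma~\ref{lem:weighted-apx}(a)--(b) (equivalently, the Case-4 packing and the average over the four subcases of Cases 1--2) into the long/short-item partition, using that $\ell>N/2$ forces $\ilopt\subseteq I_{\mathrm{lc}}$ and hence $\ilfat\subseteq \LF$ and $\ilthin\subseteq \LT$.

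There is, however, one concrete slip in the bookkeeping of small items. The sets $\LF,\SF$ are subsets of $\optfa$, i.e.\ of \emph{skewed} items, whereas $\isfat=\isopt\setminus\T$ contains all of $\optsm$ (small items have longest side below $\ell$ and are never marked thin). So $\{\ilfat,\isfat\}$ is \emph{not} a partition of $\F$; the correct identities, and the ones the paper uses, are $\LF\cup\SF\cup\optsm=\ilfat\cup\isfat$ and $(\LF\cap\isopt)\cup\SF\cup\optsm=\isfat$ (up to the killed/intermediate/cross items absorbed into the $O(\eps_{\fontL})$ factor). As written, your final step for \eqref{lem4card} amounts to $|\LF\cap\isfat|+\tfrac{1}{2}|\SF|\geq\tfrac{1}{2}|\isfat|$, which is short by $\tfrac{1}{2}|\optsm|$ and fails when most of $\isfat$ is small; likewise $|\LF|+|\SF|=|\ilfat|+|\isfat|$ is off by $|\optsm|$ in \eqref{lem3card}. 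The repair is immediate and already available in the lemma you cite: Lemma~\ref{lem:weighted-apx}(a)--(b) carry the term $p(\optsm)$ with coefficient $1$ on the right-hand side (these items are repacked via Lemma~\ref{lem:smallPack}), and keeping that term closes both identities. In particular the small items cannot simply be ``dropped from the target bounds'' --- they sit inside $|\isfat|$ on the right-hand side and must actually be packed and counted.
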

\begin{proof} Inequality~\eqref{lem3card} follows directly from Lemma~\ref{lem:weighted-apx} since $\LF \cup \SF \wal{\cup OPT_{small}}= (\ilfat)\cup(\isfat)$ and both sets are disjoint. Inequality~\eqref{lem4card} follows from Lemma~\ref{lem:noShort}: if we consider the sum of the number of packed rectangles corresponding to the $4$ subcases associated with the case ``short horizontal/short vertical'', then every $i \in \ilfat\subseteq \LF$ appears four times, every $i\in \isopt \cap \LF$ appears four times, every $i\in \SF$ appears twice and every $i\in \ilthin$ appears twice. \wal{After including a $(1-O(\eps_{\fontL}))$ fraction of $OPT_{small}$}, and since $(\isopt\cap \LF) \cup \SF \wal{\cup OPT_{small}}= \isfat$, the inequality follows by taking average of the four packings. \end{proof}
\arir{Proposition might need to adapt the notations of main body}

The following theorem due to Steinberg \cite{steinberg1997strip} will be useful to pack items from $\isopt$ in order to obtain better packings.

\begin{thm}[A. Steinberg \cite{steinberg1997strip}] \label{thm:steinberg}
We are given a set of rectangles $I'$ and a box $Q$ of size $w \times h$.
Let $w_{max}\leq w$ and $h_{max}\leq h$ be the maximum width and maximum height among the items in $I'$ respectively.
Also we denote $x_+:=max(x,0)$.
If 
$$
2a(I') \le wh-(2w_{max}-w)_+(2h_{max}-h)_+
$$
then $I'$ can be packed into $Q$.
\end{thm}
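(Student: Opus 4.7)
The argument will go by strong induction on $n = |I'|$, producing an explicit recursive construction. The base case $n \le 1$ is immediate from $w_{max} \le w$ and $h_{max} \le h$. For the inductive step, I will assume without loss of generality that $w \ge h$ and that $w_{max} \ge h_{max}$ (swapping the two dimensions otherwise). The heart of the proof is a splitting step: partition $I'$ into two disjoint subsets $I_1, I_2$ and $Q$ into two axis-aligned sub-boxes $Q_1, Q_2$ (by a single horizontal or vertical cut) in such a way that Steinberg's area inequality continues to hold for each pair $(I_j, Q_j)$; then the inductive hypothesis packs each sub-instance and we concatenate the packings.

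First I would handle the ``dominant item'' case: if some $i^* \in I'$ has $w_{i^*} > w/2$ (so necessarily $w_{i^*} = w_{max}$), place $i^*$ in the bottom-left corner of $Q$. The remainder of $Q$ is an L-shape, which I cut into $Q_1 := [0,w] \times [h_{i^*}, h]$ (the strip above $i^*$) and $Q_2 := [w_{i^*}, w] \times [0, h_{i^*}]$ (the strip to its right). To distribute $I' \setminus \{i^*\}$ between $Q_1$ and $Q_2$, I sort the items by height in decreasing order and greedily push them into $Q_2$ as long as the area bound for $(I_2, Q_2)$ remains valid, assigning the rest to $Q_1$. The area inequality for each sub-instance then has to be checked by a short computation that uses the original hypothesis together with the geometric identity that the ``missing area'' of the L-shape with respect to a rectangle of size $w \times h$ equals $w_{i^*} h_{i^*}$.

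In the complementary case every item has $w_i \le w/2$, and I would split $Q$ by a vertical guillotine cut at $w/2$. Now $(2w_{max} - w)_+ = 0$ in the definition of the correction term of each sub-instance, so the right-hand side of the area inequality for each sub-box is simply $(w/2) h - (2h_{max}-h)_+ \cdot 0^+ = wh/2$ (minus the usual correction only in the remaining dimension). A standard balancing argument that processes items by decreasing area and assigns each to the currently less-loaded side then yields a partition of $I'$ into two groups each of area at most $wh/4$, which is exactly what the inductive hypothesis on each sub-box demands.

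The main obstacle is the dominant-item case, and specifically the algebraic verification that after placing $i^*$ the correction term is correctly absorbed in both sub-instances: one must distinguish the subcases $h_{i^*} \ge h/2$ and $h_{i^*} < h/2$, and in the former also track whether the second-largest height in $I_2$ still exceeds $h_{i^*}/2$ (which would reactivate a correction term inside $Q_2$). Once these subcases are written out, each reduces in a few lines to the original area inequality, and the induction closes. A symmetric analysis handles the case where the dominant dimension is the height rather than the width.
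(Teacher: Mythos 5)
This theorem is not proved in the paper at all: it is quoted verbatim as a black-box result from Steinberg's 1997 paper, so there is no in-paper argument to compare against. Judging your plan on its own merits as a reconstruction of Steinberg's proof, the overall shape (strong induction, place a dominant item or split the box, verify the area inequality on each sub-instance) is indeed the right one, but one of your two cases is broken as stated.

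The problem is the ``complementary'' case, where every item has $w_i\le w/2$ and you cut $Q$ vertically at $w/2$. You claim the correction term of each sub-instance vanishes because $(2w_{max}-w)_+=0$, but the sub-box has width $w/2$, so the relevant term is $(2w^{(j)}_{max}-w/2)_+(2h^{(j)}_{max}-h)_+$, and with items of width up to $w/2$ this can be as large as $(w/2)(2h^{(j)}_{max}-h)_+$. Concretely, take $w=h=1$, one item $A$ of width $0.5$ and height $0.6$, plus small items of total area $0.2$; then $a(I')=0.5$ and the hypothesis $2a(I')\le 1$ holds with equality, yet whichever half-box receives $A$ must satisfy $2a(I_j)\le 0.5-(0.5)(0.2)=0.4$, which already fails for $A$ alone. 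So a balanced-area split into the two half-boxes does not preserve the invariant, and no assignment of $A$ fixes this; the instance is packable, but not via your cut. (Your balancing argument also only bounds each side by $a(I')/2$ plus half the largest item area, which can exceed $wh/4$.) This is precisely why Steinberg's actual proof does not reduce to two cases: it runs through roughly seven mutually exclusive procedures (stacking \emph{all} wide items in a corner, stacking all tall items along the bottom, pairing two large items against opposite walls, a mixed procedure, etc.), each with its own applicability condition, together with a lemma showing that under the area hypothesis at least one procedure applies. Your dominant-item case is closer in spirit to one of those procedures, but even there Steinberg stacks every item of width exceeding $w/2$, not just one, and the verification that the invariant passes to the L-shaped residue is the technical core of the paper. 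As written, the proposal would not close the induction.
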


\begin{cor}\label{lem:smallStein}
Let $I'$ be a set of rectangles such that $\displaystyle\max_{i \in I'} \height(i) \le \left(\frac{1}{2}+2\eps_{large}\right)N$ and $\displaystyle\max_{i \in I'} \width(i) \le \left(\frac{1}{2}+2\eps_{large}\right)N$.
Then for any $\alpha, \beta \le \frac{1}{2}-2\eps_{large}$, all of $I'$ can be packed into a knapsack  of width $(1-\alpha)N$ and height  $(1-\beta)N$ if
$$a(I') \le \Big(\frac12 -(\alpha+\beta)\left(\frac12+2\epsl\right)-8\epsl^2 \Big)N^2.$$
\end{cor}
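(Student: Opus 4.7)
The plan is to apply Steinberg's theorem (Theorem \ref{thm:steinberg}) directly to the box $Q$ of size $w \times h$ with $w = (1-\alpha)N$ and $h = (1-\beta)N$, and verify that the area condition given in the hypothesis implies Steinberg's inequality $2a(I') \le wh - (2w_{\max}-w)_+(2h_{\max}-h)_+$.

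First, I would bound the correction terms. Since by assumption $w_{\max} \le (1/2+2\epsl)N$ and $w = (1-\alpha)N$, one has
\[
(2w_{\max}-w)_+ \le \bigl(2(1/2+2\epsl) - (1-\alpha)\bigr)N = (\alpha + 4\epsl)N,
\]
and analogously $(2h_{\max}-h)_+ \le (\beta + 4\epsl)N$. Note that both quantities are nonnegative under the assumption $\alpha,\beta \le 1/2 - 2\epsl$ only in the worst case; if either is $0$ the bound we derive below only becomes easier, so it suffices to work with these upper bounds.

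Next I would expand
\[
wh - (2w_{\max}-w)_+(2h_{\max}-h)_+ \;\ge\; (1-\alpha)(1-\beta)N^2 - (\alpha+4\epsl)(\beta+4\epsl)N^2.
\]
A direct computation gives $(1-\alpha)(1-\beta) - (\alpha+4\epsl)(\beta+4\epsl) = 1 - (\alpha+\beta)(1+4\epsl) - 16\epsl^2$, where the cross term $\alpha\beta$ cancels. Hence
\[
wh - (2w_{\max}-w)_+(2h_{\max}-h)_+ \;\ge\; \bigl(1 - (\alpha+\beta)(1+4\epsl) - 16\epsl^2\bigr)N^2.
\]

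Finally, multiplying the hypothesis $a(I') \le (1/2 - (\alpha+\beta)(1/2+2\epsl) - 8\epsl^2)N^2$ by $2$ gives exactly
\[
2a(I') \le \bigl(1 - (\alpha+\beta)(1+4\epsl) - 16\epsl^2\bigr)N^2,
\]
which together with the previous inequality verifies Steinberg's condition. Theorem \ref{thm:steinberg} then ensures that all items of $I'$ can be packed into $Q$, completing the proof. There is no real obstacle here; the argument is a routine algebraic verification, and the constants in the statement have been chosen precisely so that the cross term $\alpha\beta$ cancels cleanly.
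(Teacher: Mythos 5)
Your proof is correct and is exactly the intended derivation: the statement is presented as a corollary of Theorem~\ref{thm:steinberg}, the paper's own proof was omitted for space, and the constants were clearly chosen so that the algebra cancels as you computed ($2\bigl(\frac12-(\alpha+\beta)(\frac12+2\epsl)-8\epsl^2\bigr)=1-(\alpha+\beta)(1+4\epsl)-16\epsl^2$). The only step you leave implicit is Steinberg's precondition $w_{\max}\le w$ and $h_{\max}\le h$, which follows immediately from $\alpha,\beta\le\frac12-2\epsl$ since then $(1-\alpha)N\ge(\frac12+2\epsl)N\ge w_{\max}$ and likewise for the height.
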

\arir{removed the proof and other corollary}


Now we prove a more general version of Lemma~\ref{lem:ringCase} which holds for the cardinality case. 
\begin{lem}\label{lem4cardgen}
If $a(\isfat) \le \gamma N^2$ for any $\gamma \le 1$, then 
$$|\optrc| \geq \frac{3}{4}|\ilthin|+|\isthin|+\min\left\{1,\frac{1-O(\efl)}{2 \gamma}\right\}|\isfat|.$$
\end{lem}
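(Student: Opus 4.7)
The plan is to construct an explicit L\&C-packing matching the claimed cardinality. It consists of three parts sharing the knapsack: (i) a boundary $L$ of very small width $\epsr N$ on the left and bottom edges, holding items from $\ilthin$; (ii) a horizontal container of size $\ell\times \epsr N$ and a vertical container of size $\epsr N\times \ell$ placed along the top and right edges, holding items from $\isthin$; (iii) a large central rectangle of dimensions at least $(1-2\epsr)N\times(1-2\epsr)N$ in which we pack a subset of $\isfat$ via Steinberg's theorem.

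For part (i), split $\ilthin$ into its horizontal part $H:=\ilthin\cap \Rho$ and vertical part $V:=\ilthin\cap \Rve$ (every item of $\ilthin$ is long, hence has longer side $>\ell>N/2$). By Lemma~\ref{lem:boxProperties}.\ref{lem:boxProperties:thin}, the total height of $H$ and the total width of $V$ are at most $\epsr N$; hence Lemma~\ref{lem:LoftheRing} applied to $(H,V)$ packs a subset of $\ilthin$ of cardinality at least $\frac{3}{4}|\ilthin|$ inside a boundary $L$ of width $\epsr N$. For part (ii), the same Lemma~\ref{lem:boxProperties}.\ref{lem:boxProperties:thin} ensures $\isthin\cap \Rho$ has total height $\le\epsr N$ and each of its items has width $\le\ell$, so by stacking we pack all of $\isthin\cap \Rho$ in a container of size $\ell\times\epsr N$; symmetrically for $\isthin\cap \Rve$. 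These two containers fit on the top and right edges of the knapsack without overlapping the boundary $L$ from part (i), since $\ell\le(1/2+2\epsl)N\ll N-\epsr N$.

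For part (iii), after removing the boundary $L$ and the two thin containers, the interior $[\epsr N,(1-\epsr)N]^2$ is free and has dimensions $(1-2\epsr)N\times(1-2\epsr)N$. Every item in $\isfat$ has both sides at most $\ell=(1/2+2\epsl)N$, so Corollary~\ref{lem:smallStein} with $\alpha=\beta=2\epsr=O(\efl)$ lets us pack any subset $S\subseteq\isfat$ with $a(S)\le\bigl(\tfrac12-O(\efl)\bigr)N^2$. To pick $S$, set $p:=\min\{1,(1-O(\efl))/(2\gamma)\}$, sort $\isfat$ by increasing area, and let $S$ consist of the $\lfloor p|\isfat|\rfloor$ smallest items. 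Since the average of the $k$ smallest entries of a sorted list never exceeds the overall average,
\[
a(S)\ \le\ \lfloor p|\isfat|\rfloor\cdot\frac{a(\isfat)}{|\isfat|}\ \le\ p\cdot a(\isfat)\ \le\ p\gamma N^2\ \le\ \tfrac{1}{2}(1-O(\efl))N^2,
\]
so $S$ fits by Steinberg. Hence $|S|\ge\min\{1,(1-O(\efl))/(2\gamma)\}|\isfat|$, where the unit loss from $\lfloor\cdot\rfloor$ is absorbed into the $O(\efl)$ slack when $|\isfat|$ exceeds a sufficiently large constant and is otherwise negligible.

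Summing the cardinalities of parts (i), (ii), (iii) gives the claimed bound. The main delicate point is the geometric balancing: $\epsr$ must be chosen much smaller than $\eps$ (e.g.\ $\epsr=\eps^2$) so that after carving out the boundary $L$ and the two thin containers, the remaining central region is still wide enough in both dimensions for Corollary~\ref{lem:smallStein} to provide area budget $(1-O(\efl))N^2/2$; this is exactly what makes the factor $\frac{1}{2\gamma}$ appear in the final bound.
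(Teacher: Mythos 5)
Your construction mirrors the paper's proof step for step: the boundary $L$ of width $\epsr N$ packing $\frac34|\ilthin|$ via Lemma~\ref{lem:LoftheRing}, the two thin containers of size $\ell\times\epsr N$ and $\epsr N\times\ell$ absorbing $\isthin$ via Lemma~\ref{lem:boxProperties}, and Steinberg (Corollary~\ref{lem:smallStein}) in the remaining central square for a greedily-by-area-selected subset of $\isfat$. The minor difference — you choose the $\lfloor p|\isfat|\rfloor$ smallest-area items rather than sweeping until an area threshold is hit — is functionally equivalent, though the paper's threshold sweep avoids having to discuss the $\lfloor\cdot\rfloor$ unit loss.

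There is, however, one genuine gap. The quantity $|\optrc|$ is by definition the size of the best \emph{L\&C packing}, and the definition of L\&C packing requires that all of $\ishort$ reside in a collection of $O_\eps(1)$ containers with sizes drawn from a polynomial set (Lemma~\ref{lem:containerPack}); parts (i) and (ii) of your construction respect this, but part (iii) does not. Steinberg's theorem produces an arbitrary packing of $S\subseteq\isfat$ in the central square, not a container packing. The paper closes this gap explicitly at the end of its proof: it applies Steinberg inside a slightly shrunken central square and then invokes the Resource Augmentation Packing Lemma (Lemma~\ref{lem:augmentPack}, as in the proof of Lemma~\ref{lem:ringCase}) to convert that arbitrary packing into a container packing of almost the same profit that fits back inside the free area. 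You have the slack for this — the budget $(1-O(\efl))N^2/2$ from Corollary~\ref{lem:smallStein} is conservative relative to the actual free area when $\epsr=\eps^2$, so the $(1+\epsau)$-augmentation can be absorbed, and the $1-O(\eps)$ profit loss disappears into the $O(\efl)$ term — but without this final conversion step, what you have built is not an L\&C packing and therefore does not yet lower-bound $|\optrc|$.
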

\begin{proof}
As in Lemma~\ref{lem:ringCase}, we can pack $\frac34|\ilthin|+|\isthin|$ many rectangles in a boundary $\fontL$-region plus two boxes on the other two sides of the knapsack
and then a free square region with side length $(1- 3 \eps)N$ can be used to pack items from $\isfat$.
From Corollary~\ref{lem:smallStein}, any subset of rectangles of $\isfat$ with total area at most $(1-O(\efl))N^2/2$ can be packed into that square region of length $(1- 3 \eps)N$.
Thus we sort rectangles from $\isfat$ in the order of nondecreasing area and iteratively pick them until their total area reaches $(1- O(\efl) -\eps_{small})N^2/2$. Using Steinberg's theorem, there exists a packing of the selected rectangles. If $2\gamma \le 1- O(\efl) -\eps_{small}$ then the profit of this packing is $|\isfat|$, and otherwise the total profit is at least $\frac{1-O(\efl)}{2 \gamma}|\isfat|$. The packing coming from Steinberg's theorem may not be container-based, but we can then use resource augmentation as in Lemma~\ref{lem:ringCase} to obtain an $\fontL\&C$ packing.
\end{proof}

Now the following  lemma will be useful when $a(\isopt)$ is large. 

\begin{lem}\label{lem4cardgen2}
If $a(\isopt) > \gamma N^2$ for any $\gamma \ge \frac34+\eps+\eps_{large}$, then 
$$|\optrc|\geq \frac{3}{4}|\ilopt|+\frac{(3\gamma - 1 -O(\efl))}{4\gamma}|\isopt|.$$
\end{lem}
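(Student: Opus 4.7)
The plan is to construct an $\fontL\&C$-packing analogous to Lemma~\ref{lem4cardgen} but adapted to the regime where $\isopt$ carries a large total area. I first invoke Lemma~\ref{lem:LoftheRing} with $H$ and $V$ the horizontal and vertical items of $\ilopt$; this packs at least $\frac{3}{4}|\ilopt|$ items inside a boundary L whose horizontal strip has height $h_L$ and vertical strip has width $w_L$. In the rectangular complement $R=[w_L,N]\times[h_L,N]$, I pack items from $\isopt$ via Steinberg's theorem.

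The crux is to control $h_L+w_L$. Since every item in $\ilopt$ has longer side $>\ell=(1/2+2\eps_{large})N$, each horizontal (resp.\ vertical) long item contributes area $>\ell\cdot h_i$ (resp.\ $>\ell\cdot w_i$), so the total height $H_H$ of horizontal long items and the total width $W_V$ of vertical long items satisfy $\ell(H_H+W_V)<a(\ilopt)\le N^2-a(\isopt)<(1-\gamma)N^2$. Because the L is formed by dropping the smallest of the four stacks produced by the shifting in Lemma~\ref{lem:LoftheRing}, one gets $h_L+w_L\le H_H+W_V<2(1-\gamma)N$. The assumption $\gamma\ge 3/4+\eps+\eps_{large}$ also gives $h_L,w_L<(1/2-2\eps_{large})N$, which lets Corollary~\ref{lem:smallStein} apply on $R$.

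To pack $\isopt$: since every item in $\isopt$ has both sides $\le\ell$, Corollary~\ref{lem:smallStein} (with $\alpha=w_L/N$, $\beta=h_L/N$) shows that any subset of $\isopt$ of total area at most $(\gamma-1/2-O(\efl))N^2$ fits in $R$. I sort $\isopt$ by nondecreasing area and greedily select items until the budget is exhausted. Since the smallest $k$ items of a set have total area at most $k\cdot a(\isopt)/|\isopt|$, the selected count is at least the claimed $\frac{(3\gamma-1-O(\efl))}{4\gamma}|\isopt|$. The packing is then converted into a container packing via resource augmentation, exactly as in Lemma~\ref{lem:ringCase}, losing only a $1+O(\eps)$ factor.

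The main obstacle is matching the exact count fraction $\frac{3\gamma-1}{4\gamma}$: a straightforward area-to-count averaging yields only $\frac{\gamma-1/2}{\gamma}$, which is strictly smaller for $\gamma<1$. I expect the sharper bound to come from exploiting the single-side structure of $\isopt$: every item is horizontal-short, vertical-short, or small, and for such single-type subcollections the correction term $(2w_{\max}-w)_+(2h_{\max}-h)_+$ in Theorem~\ref{thm:steinberg} vanishes, which gives a substantially larger effective Steinberg budget. Combining the L-packing for $\ilopt$ with this finer case analysis on the three subtypes of $\isopt$ should yield the target coefficient, and the container-based conversion at the end then turns the result into a valid $\fontL\&C$-packing as required.
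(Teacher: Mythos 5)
Your setup (ring construction, dropping a stack to form a boundary L, the bound $\alpha+\beta\le 2(1-\gamma)$ from $a(\ilopt)<(1-\gamma)N^2$, and $a(\isopt)\le (1-\frac{\alpha+\beta}{2})N^2$) matches the paper, and you correctly diagnose that a single fixed L only yields the coefficient $\frac{\gamma-1/2}{\gamma}$. However, your proposed fix does not work. Exploiting that the correction term $(2w_{\max}-w)_+(2h_{\max}-h)_+$ vanishes for single-type subcollections does not enlarge the Steinberg budget beyond half the area of the free region --- Theorem~\ref{thm:steinberg} still requires $2a(I')\le wh$ even when the correction term is zero, and Corollary~\ref{lem:smallStein} already incorporates this. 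So with a fixed L of total thickness $w_L+h_L$ possibly as large as $(\alpha+\beta)N$ (the dropped stack may be negligible), you are stuck at $\frac{\gamma-1/2}{\gamma}$, which is strictly below the target $\frac{3\gamma-1}{4\gamma}$ for all $\gamma<1$ (e.g.\ $1/3$ versus $5/12$ at $\gamma=3/4$).

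The missing idea is an averaging argument over \emph{which} of the four subrings is deleted. The paper considers all four candidate packings, one per deleted stack. Each item of $\ilopt$ survives in three of the four, giving the $\frac34|\ilopt|$ term; and, crucially, the total thickness of the resulting L averages to $\frac34(\alpha+\beta)$ rather than $\alpha+\beta$, so the average Steinberg budget in the complement is $\bigl(\frac12-\frac38(\alpha+\beta)-O(\efl)\bigr)N^2$. Dividing by $a(\isopt)\le\bigl(1-\frac{\alpha+\beta}{2}\bigr)N^2$ and using that the resulting ratio is decreasing in $\alpha+\beta$, one substitutes $\alpha+\beta=2(1-\gamma)$ and obtains exactly $\frac{3\gamma-1}{4\gamma}$. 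Without this four-way averaging (or an equivalent device that simultaneously controls the expected L thickness and the expected loss from $\ilopt$), the claimed coefficient is not reached.
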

\begin{proof}
Similar to Lemma \ref{lem:LoftheRing} in Section~\ref{sec:tdk_car:simple}, we start from the optimal packing and move all rectangles in $\ilopt$ to the boundary such that all of them are contained in a boundary ring.  Note that unlike the case when we only pack $\ilthin$ in the boundary region, the boundary ring formed by $\ilopt$ may have width or height $\gg \eps N$.
\ari{Let us call the 4 stacks in the ring to be subrings.}
Let us assume that left and right subrings have width $\alpha_{left} N$ and $\alpha_{right} N$ respectively and 
bottom and top subrings have height $\beta_{bottom} N$ and $\beta_{top} N$ respectively. 
\wal{By possibly killing one of the long rectangles, subrings can be arranged such that $\alpha_{left}, \alpha_{right}, \beta_{bottom}, \beta_{top} \le 1/2$: If no vertical rectangle intersects the vertical line $x=\frac{N}{2}$ and no horizontal rectangle intersects the horizontal line $y=\frac{N}{2}$ this property holds directly. If one of the previous cases is not satisfied, by deleting such rectangle we can ensure the desired property at a negligible loss of profit, and notice that it is not possible that both cases happen at the same time since rectangles are long.}

As $a(\isopt) > \gamma N^2$, then $a(\ilopt) < (1-\gamma)N^2$.
Let us define $\alpha=\alpha_{left}+\alpha_{right}$ and $\beta=\beta_{bottom}+\beta_{top}$.
Then $(\alpha+\beta)N \cdot \frac{N}{2}  \le a(\ilopt)$, which implies that $\frac{\alpha+\beta}{2} < 1-\gamma$.
Hence, we get the following two inequalities:
\begin{equation} \label{steinLar1}
(\alpha+\beta) \le 2(1-\gamma);
\end{equation} 
\begin{equation} \label{steinLar2}
 a(\isopt) \le N^2-a(OPT_{long}) \le \left(1- \frac{(\alpha+\beta)}{2}\right)N^2.
 \end{equation} 
 Now consider the case when we remove the top horizontal subring and construct a boundary \fontL-region as in Lemma~\ref{lem:LoftheRing}. We will assume that rectangles in the $\fontL$-region are pushed to the left and bottom as much as possible. Then, the boundary $\fontL$-region has width $(\alpha_{left}+\alpha_{right})N$ and height $\beta_{bottom} N$. We will use Steinberg's theorem to show the existence of a packing of rectangles from $\isopt$ in a subregion of the remaining space with width $N-(\alpha_{left}+\alpha_{right}+\eps)N$ and height $N-(\beta_{bottom}+\eps)N$, and use the rest of the area for resource augmentation to get an $\fontL\&C$-based packing.
 Since $\gamma\ge \frac34+\eps+\eps_{large}$, we have from (\ref{steinLar1}) that $\alpha+\beta+2\eps \le 2(1-\gamma) + 2\eps \le 1/2-2\eps_{large}$.
 So $\alpha+\eps \le 1/2-2\eps_{large}$ and $\beta+\eps \le 1/2-2\eps_{large}$.
 Thus from Corollary \ref{lem:smallStein}, in the region with width $N-(\alpha_{left}+\alpha_{right}+\eps)N$ and height $N(1-\beta_{bottom}-\eps)$ we can pack rectangles from $\isopt$ of total area at most $\Big(\frac12 -\frac{(\alpha_{left}+\alpha_{right}+\beta_{bottom})}{2} - O(\efl) \Big)N^2$. 
 Hence, we can take the rectangles in $\isopt$ in the order of nondecreasing area until their total area reaches 
 $\Big(\frac12 -\frac{(\alpha_{left}+\alpha_{right}+\beta_{bottom})}{2} -O(\efl) - \eps_{small} \Big)N^2$ and pack at least $|\isopt| \cdot \frac{(\frac 12 - \frac{(\alpha_{left} +\alpha_{right}+\beta_{bottom})}{2}-O(\efl)-\eps_{small})}{(1-\frac{(\alpha+\beta)}{2})}$ using Steinberg's theorem. \arir{we need -$\epss$ as exactly 1/2 area might not have all integral items. feel free to add explanations.}
 If we now consider all the four different cases corresponding to removal of the four different subrings and take the average of profits obtained in each case, 
 we pack at least \begin{eqnarray*} & & \frac3{4}|\ilopt|+ |\isopt| \cdot \left(\frac{(\frac 12 - \frac38(\alpha_{left} +\alpha_{right}+\beta_{bottom}+\beta_{top})-O(\efl)}{(1-\frac{(\alpha+\beta)}{2})}\right) \\ & = & \frac3{4}|\ilopt|+ |\isopt| \cdot \left(\frac{(\frac 12 - \frac38(\alpha+\beta) -O(\efl))}{(1-\frac{(\alpha+\beta)}{2})}\right) \\ & \ge & \frac3{4}|\ilopt|+ |\isopt| \cdot \frac{3\gamma -1-O(\efl)}{4\gamma},
 \end{eqnarray*} where the last inequality follows from \eqref{steinLar1} and the fact that the expression is decreasing as a function of $(\alpha + \beta)$.\end{proof}
Now we start with the proof of Theorem \ref{thm:cardWorot}.
\begin{proof}[Proof of Theorem~\ref{thm:cardWorot}]
In the refined analysis, we will consider different solutions and show that the best of them always achieves the claimed approximation guarantee. We will pack some rectangles in a boundary $\fontL$-region (either a subset of only $\ilthin$ or a subset of $\ilopt$) using the PTAS for $\fontL$-packings described in Section~\ref{sec:ptasL}, and in the remaining area of the knapsack (outside of the boundary $\fontL$-region), we will pack a subset of rectangles from $\isopt$.
\arir{make consistent with main body}

\begin{figure}[t!]
            \centering
        \includegraphics[width=6in]{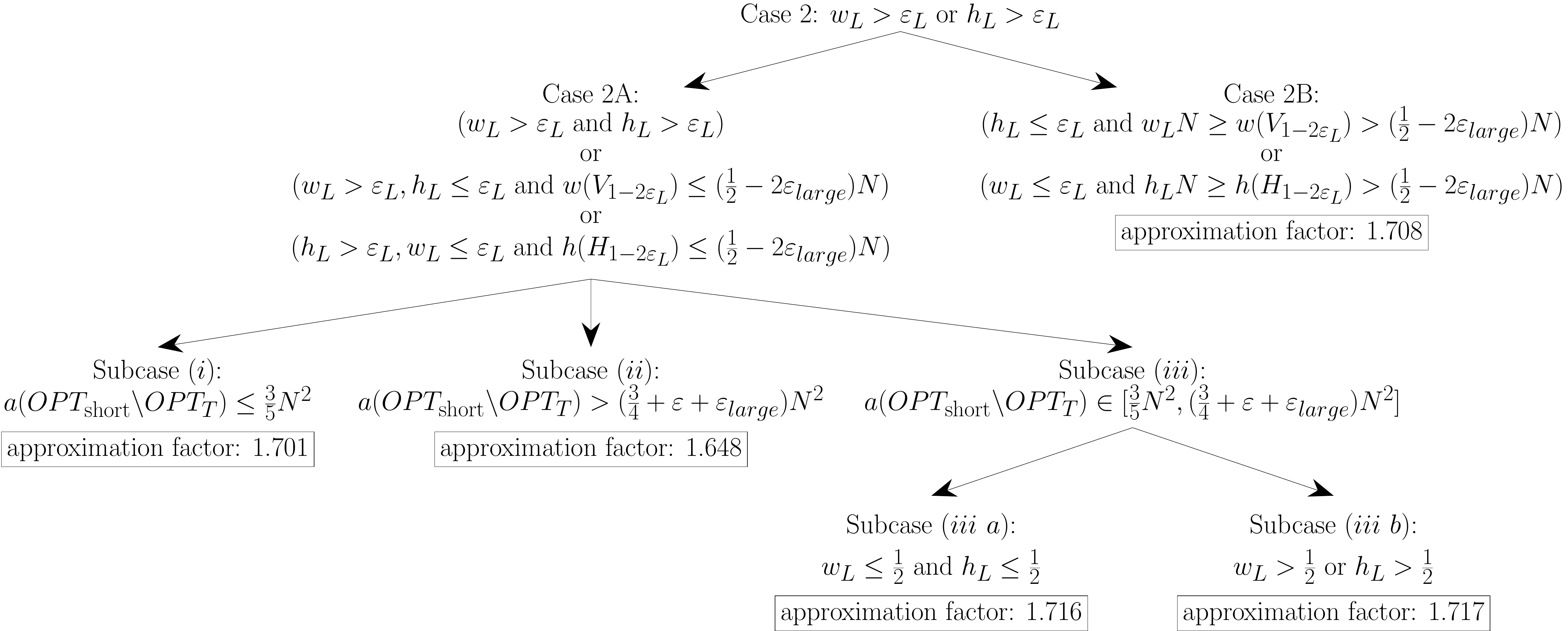}
        \caption{Summary of the cases.}
        \label{fig:cardworotcaseover}
    \end{figure}%

Consider the ring as constructed in the beginning of the proof of Lemma \ref{lem4cardgen2}.
Then we remove the least profitable subring and repack the remaining rectangles from $\ilopt$ in a boundary $\fontL$-region.  
W.l.o.g. assume that the horizontal top subring was the least profitable subring. The other cases are analogous.
We will use the same notation as in Lemma~\ref{lem4cardgen2}, and also define $w_{\fontL}=(\alpha_{left}+\alpha_{right}), h_{\fontL}= \beta_{bottom}$.
Now let us consider two cases \ari{(see Figure \ref{fig:cardworotcaseover} for an overview of the subcases of case 2).}\\

\noindent \textbf{$\bullet$ Case 1. $w_{\fontL} \le \eps_{\fontL} , h_{\fontL} \le \eps_{\fontL} $.}   \arir{$\eps$ coming from Lemma \ref{lem4card}.}\\
In this case, following the proof of Lemma \ref{lem4cardgen} (using $\gamma=1$), we can pack $\frac{3}{4}|\ilopt|+|\isthin|+\frac{1-O(\eps_{\fontL})}{2}|\isfat|$.
This along with inequalities \eqref{lem2card}, \eqref{lem3card} and \eqref{lem4card} will give us a solution with good enough approximation factor. Check Section~\ref{sec:bound_apx} and Table~\ref{tab:summary} for details. 

\noindent \textbf{$\bullet$ Case 2. $w_{\fontL} > \eps_{\fontL} $ or  $h_{\fontL} > \eps_{\fontL} $.}\\
Let $V_{1- 2 \eps_{\fontL}}$ be the set of vertical rectangles having height strictly larger than $(1-2 \eps_{\fontL})N$. Let us define $w(V_{1-2 \eps_{\fontL}})=\sum_{i \in V_{1-2 \eps_{\fontL}}} \width(i)$. 
Similarly, let $H_{1- 2 \eps_{\fontL}}$ be the set of horizontal rectangles of width \wal{strictly larger than} $(1-2 \eps_{\fontL})N$ and $h(H_{1-2 \eps_{\fontL}})=\sum_{i \in H_{1-2 \eps_{\fontL}}} \height(i)$. \\
\textit{$\Diamond$  Case 2A.  \Big($w_{\fontL} > \eps_{\fontL} $ and  $h_{\fontL}  > \eps_{\fontL} $\Big) or \Big($w_{\fontL} > \eps_{\fontL} , h_{\fontL} \le \eps_{\fontL} , $ and $w(V_{1-2\eps_{\fontL}})\le \left(\frac{1}{2}-2\eps_{large}\right) N$\Big) or \Big($h_{\fontL}  > \eps_{\fontL} , w_{\fontL} \le \eps_{\fontL}$, and $h(H_{1-2\eps_{\fontL}})\le \left(\frac{1}{2}-2\eps_{large}\right) N$\Big).}\\
We will show that if any of the above three conditions is met, then we can pack $\frac{3(1-O(\eps))}{4}|\ilopt|+|\isthin|$ in a boundary $\fontL$-region of width close to $ w_{\fontL} N$ and height close to $h_{\fontL} N$, and 
then in the remaining area we will pack some rectangles from $\isfat$ using Steinberg's theorem and resource augmentation. 

\begin{figure*}[t!]
    \centering
    \begin{subfigure}[b]{0.4\textwidth}
            \centering
        \includegraphics[width=2.5in]{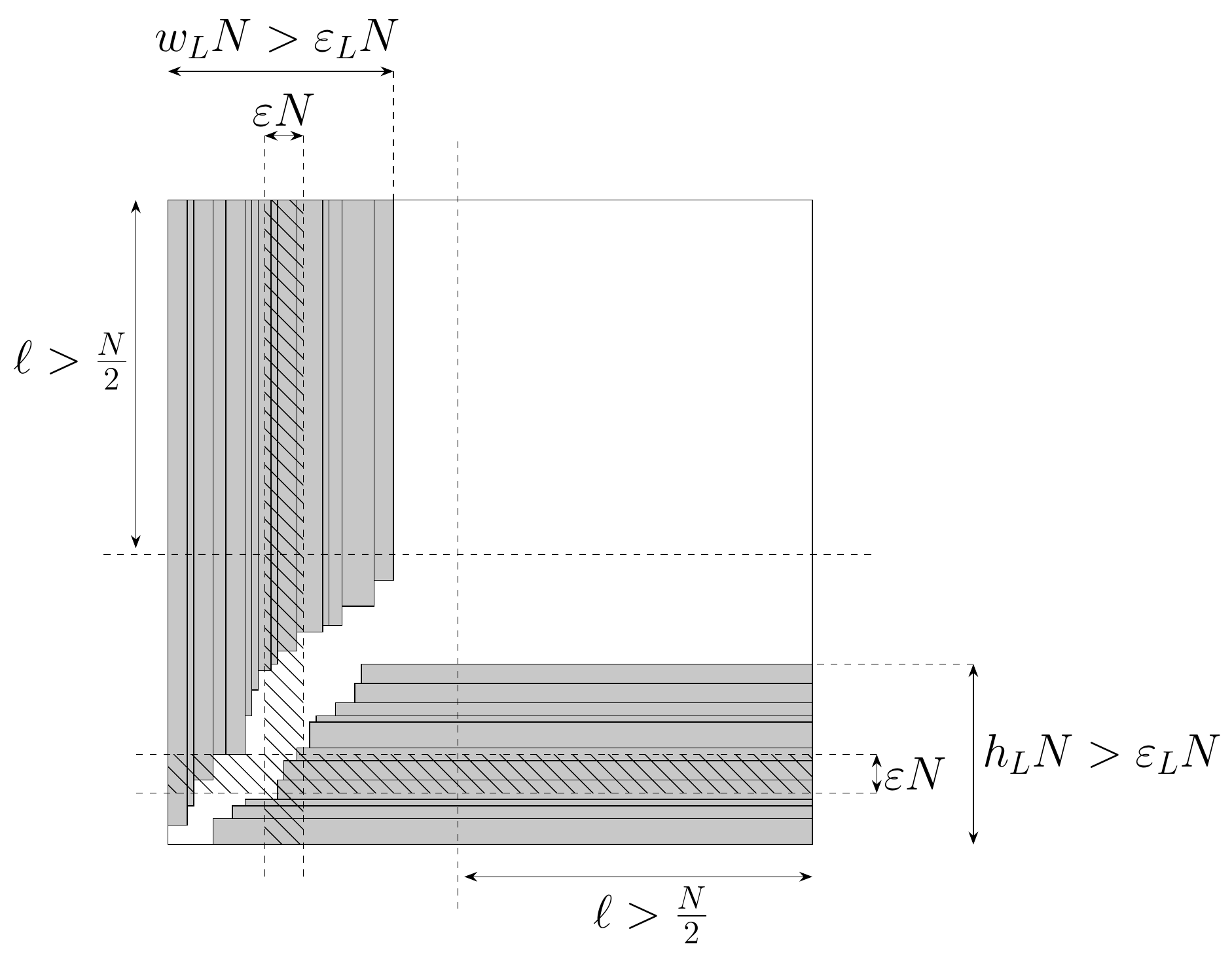}
        \caption{Packing of $\fontL$-region using rectangles from $\ilopt$. Striped strips are cheapest $\eps N$-width and cheapest $\eps N$-height.}
        \label{fig:cardworotcase2a1}
    \end{subfigure}%
    \hspace{40pt}
    \begin{subfigure}[b]{0.35\textwidth}
        \centering
        \includegraphics[width=2.5in]{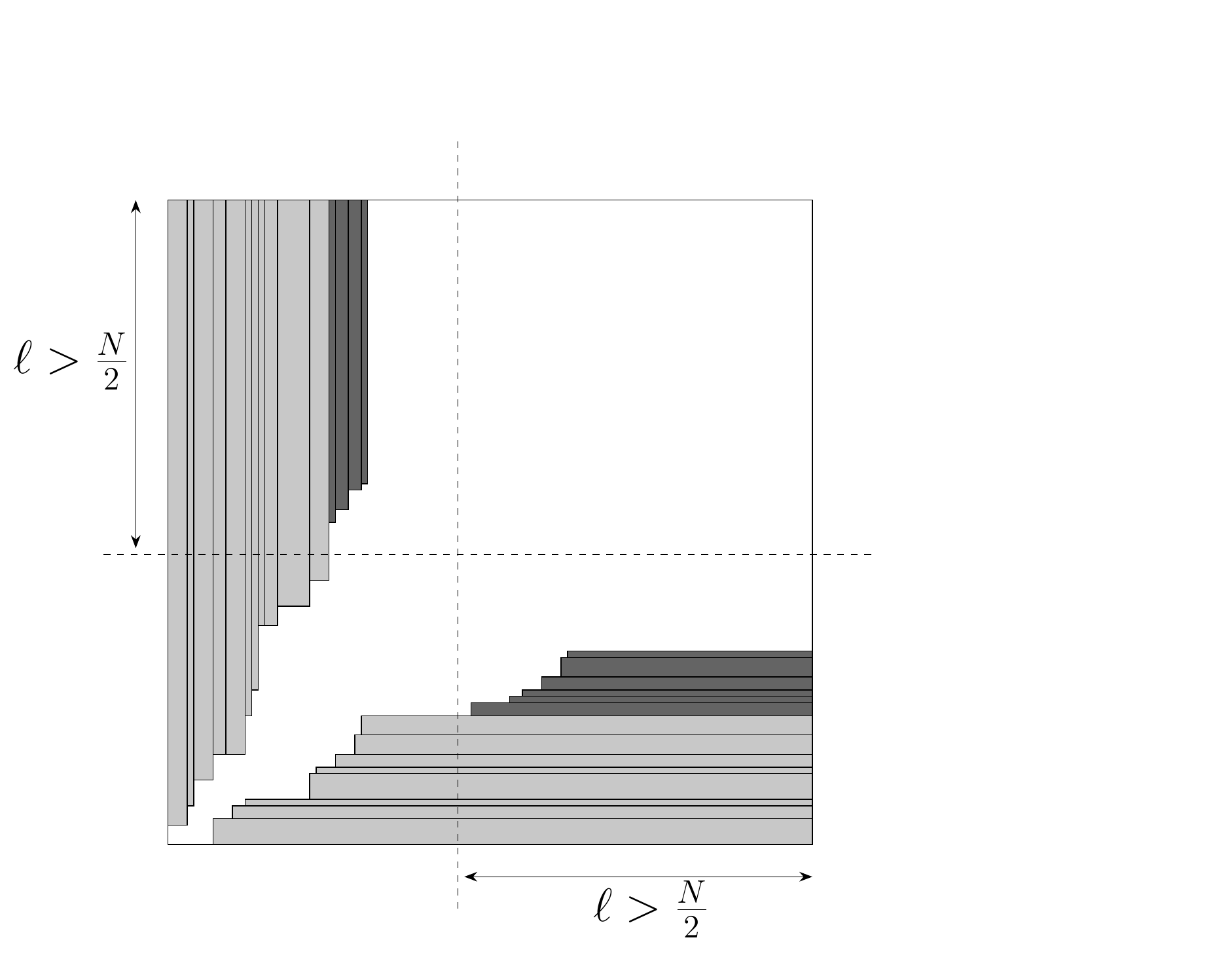}
        \caption{Packing of rectangles in $\ilopt \cup (\isthin)$. Dark gray rectangles are from $\isthin$. }
             \label{fig:cardworotcase2a2}
    \end{subfigure}
    ~
      \caption{The case for $w_{\fontL} > \eps_{\fontL} $ and  $h_{\fontL} > \eps_{\fontL}$.}
\end{figure*}

\walr{Maybe in Figure 6a we can emphasize that the space for horizontal $\isthin$ is free}
\begin{figure*}[t!]
    \centering
    \begin{subfigure}[b]{0.4\textwidth}    
        \centering
        \includegraphics[width=2.5in]{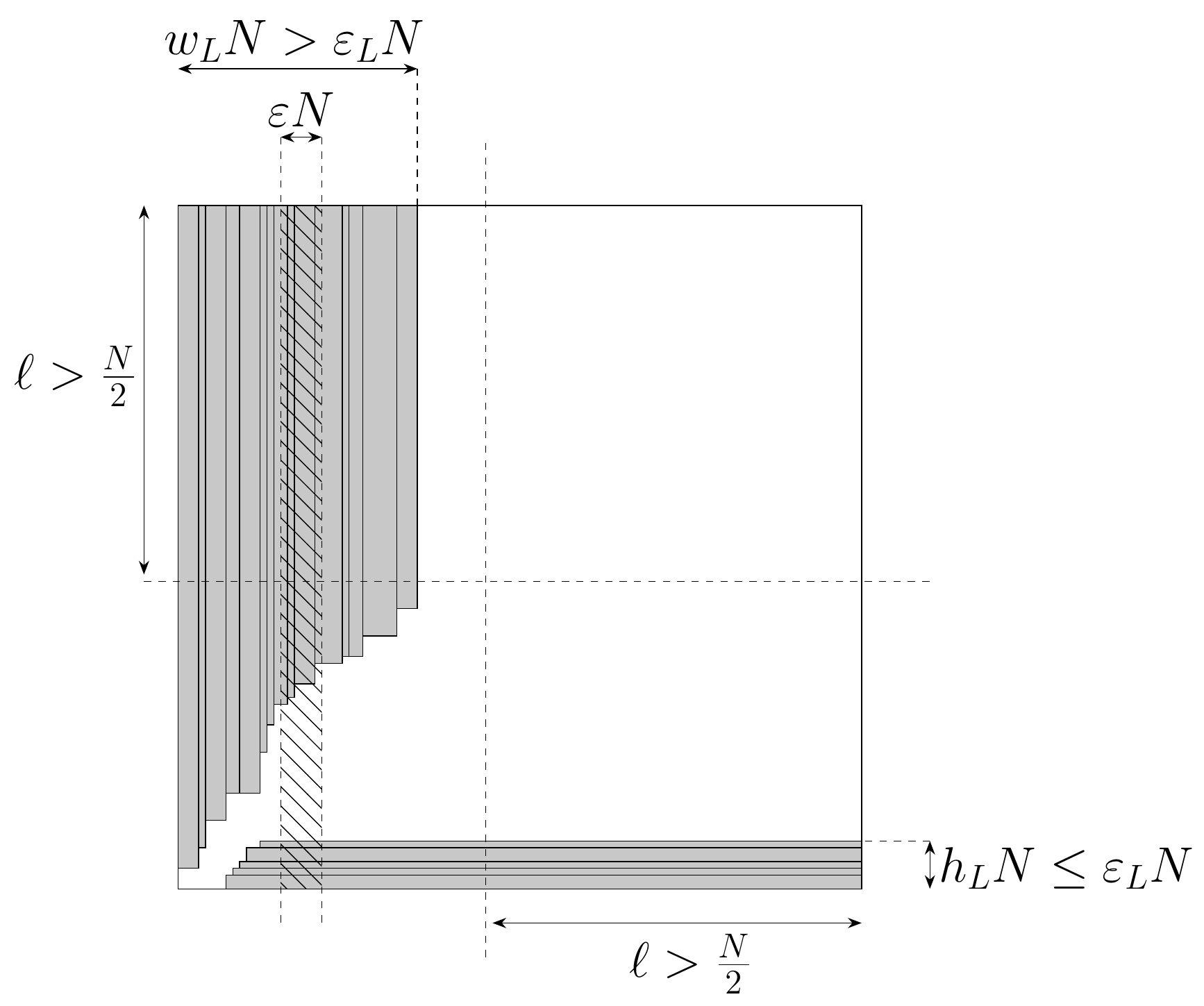}
        \caption{Packing of $\fontL$-region using rectangles from $\ilopt$. Striped strip is the cheapest $\eps N$-width strip.}
         \label{fig:cardworotcase2a3}
    \end{subfigure}%
    \hspace{40pt}
    \begin{subfigure}[b]{0.35\textwidth}   
        \centering
        \includegraphics[width=2.5in]{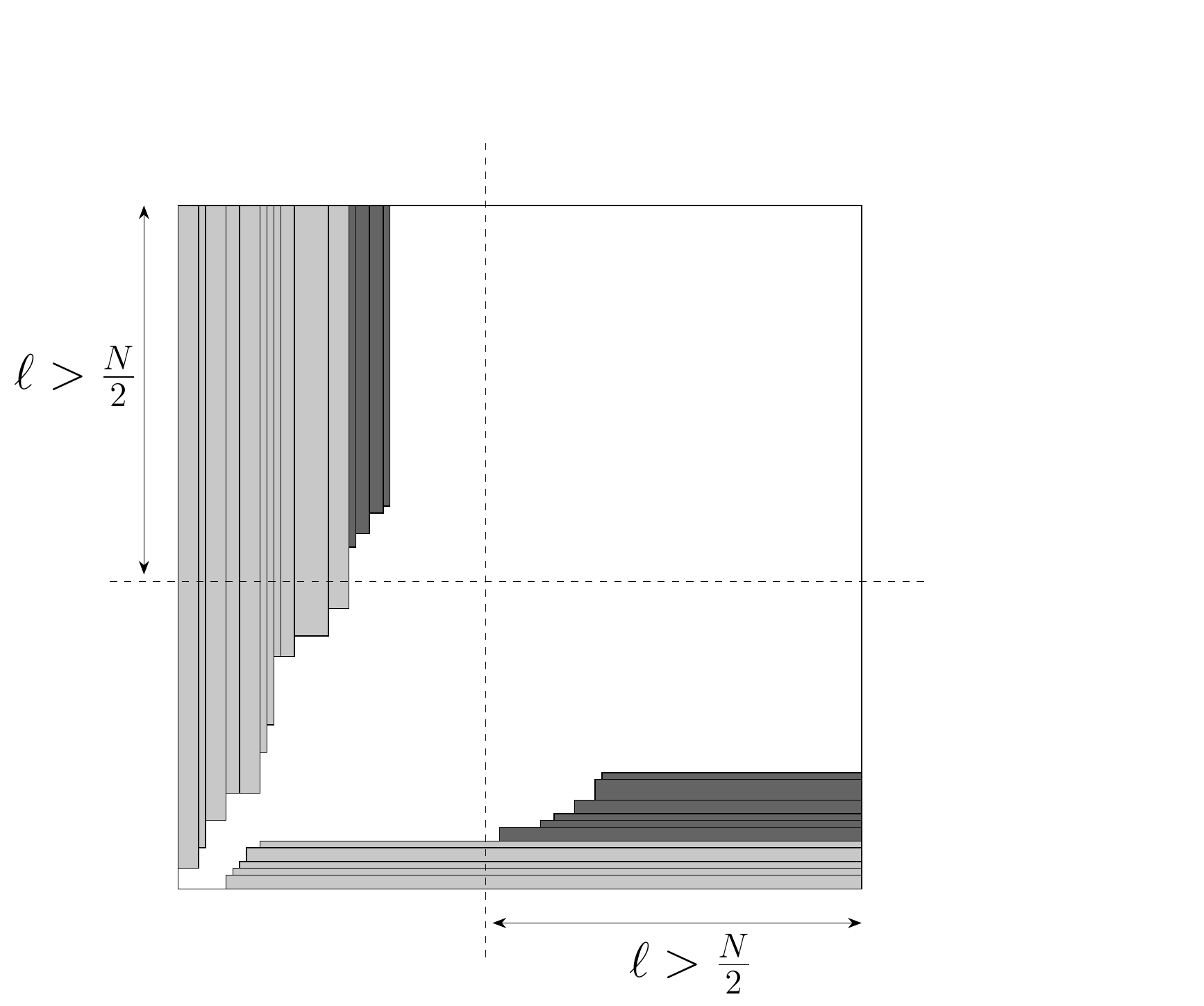}
        \caption{Packing of rectangles in $\ilopt \cup (\isthin)$. Dark gray rectangles are from $\isthin$.}
        \label{fig:cardworotcase2a4}
    \end{subfigure}
    ~
      \caption{The case for $w_{\fontL} > \eps_{\fontL} $ and  $h_{\fontL} \le \eps_{\fontL}$.}
\end{figure*}

\noindent {\bf Packing of subset of rectangles from $\ilopt \cup (\isthin)$ into $\fontL$-region.}\\
If ($w_{\fontL} > \eps_{\fontL} $ and  $h_{\fontL} > \eps_{\fontL} $), we partition the vertical part of the $\fontL$-region into consecutive strips of width $\eps N$.
Consider the strip that intersects the least number of vertical rectangles from $\ilopt$ among all strips, and we call it to be the {\em cheapest $\eps N$-width vertical strip} (See Figure \ref{fig:cardworotcase2a1}). 
Clearly the cheapest $\eps N$-width vertical strip intersects at most a $\frac{\eps+2\epss}{\eps_\fontL} \le 3\efl$ fraction of the rectangles in the vertical part of the $\fontL$-region, so we can remove all such vertical rectangles intersected by that strip \wal{at a small loss of profit}.
Similarly, we remove the horizontal rectangles intersected by the cheapest $\eps N$-height horizontal strip in the boundary $\fontL$-region.
We now pack the horizontal \wal{container for $\isthin$} in the free region left by the removed horizontal strip, and \wal{the vertical container for} $\isthin$ in the free region left by the removed vertical strip. Similarly to the proof of Lemma~\ref{lem:LoftheRing} we can sort rectangles in the vertical (resp. horizontal) pile of the $\fontL$-region according to their height (resp. width), obtaining a feasible $\fontL\&C$-packing (See Figure \ref{fig:cardworotcase2a2}). \\
In the other case ($w_{\fontL} > \eps_{\fontL} , h_{\fontL} \le \eps_{\fontL} $ and $w(V_{1-2\eps_{\fontL}})\le \left(\frac{1}{2}-2\eps_{large}\right)N$), we can again remove the  cheapest $\eps N$-width vertical strip  in the boundary $\fontL$-region and pack \wal{the vertical container for} $\isthin$ there (See Figure \ref{fig:cardworotcase2a3}). Now we show how to pack horizontal items from $\isthin$.
In the packing of the boundary $\fontL$-region, we can assume that the vertical rectangles are sorted non-increasingly by height from left to right and pushed upwards until they touch the top boundary.
Then, since $w(V_{1- 2 \eps_{\fontL}})\le \left(\frac{1}{2}-2\eps_{large}\right)N$ and ($h_{\fontL} \le \eps_{\fontL}$), the region $\left[\left(\frac{1}{2}-2\eps_{large}\right)N, N\right] \times \left[\eps_{L} N, 2\eps_{L} N\right]$ will be completely empty and thus we will have enough space to pack the \wal{horizontal container for} $\isthin$ on top of the horizontal part of the $\fontL$-region (See Figure \ref{fig:cardworotcase2a4}). This leads to a  packing in a boundary $\fontL$-region of width at most $w_L N$ and height at most $(h_L+\eps_{L})N$ with total profit at least $\frac{3(1-O(\eps))}{4}|\ilopt|+|\isthin|$. The last case, when $w_{\fontL} \le \eps_{\fontL}$, is analogous, leading to a packing into a boundary $\fontL$-region of width at most $(w_L+\eps_{L})N$ and height at most $h_L N$ with at least the same profit.
Thus, 
\begin{equation}\label{CardCase2a}
|\optrc| \ge \frac{3(1-O(\eps_{L}))}{4}|\ilopt|+|\isthin|
\end{equation} 

\noindent{\bf Packing of a subset of rectangles from $\isfat$ into  the remaining region.}\\
Note that after packing \wal{at least} $\frac{3(1-O(\eps))}{4}|\ilopt|+|\isthin|$ many rectangles in the boundary $\fontL$-region, the remaining rectangular region of width $(1-w_{\fontL}-\eps_{\fontL})N$ and height $(1-h_{\fontL}-\eps_{\fontL})N$ is completely empty.
Now we will show the existence of a packing of some rectangles from $\isfat$ \wal{in the remaining space of the packing (even using some space from the $\fontL$-boundary region)}.
Let $\isfhor := ((\isfat) \cap \Rho) \cup ((\isfat) \cap I_{small}) $ and $\isfver := (\isfat) \cap \Rve$. 
Let us assume w.l.o.g. that vertical rectangles are shifted as much as possible to the left and top of the knapsack and horizontal ones are pushed as much as possible to the right and bottom. We divide the analysis in three subcases depending on $a(\isfat)$. \\
{\em $-$  Subcase (i).} If $a(\isfat) \le \frac35 N^2$, from inequalities \eqref{lem2card}, \eqref{lem3card}, \eqref{lem4card}, \eqref{CardCase2a} and Lemma \ref{lem4cardgen}, we get a solution with good enough approximation factor. Check Section~\ref{sec:bound_apx} and Table~\ref{tab:summary} for details.

\noindent {\em $-$ Subcase (ii).} If $a(\isfat) > (\frac34+\eps+\eps_{large}) N^2$, from inequalities  \eqref{lem2card}, \eqref{lem3card}, \eqref{lem4card}, \eqref{CardCase2a} and Lemma \ref{lem4cardgen2}, we get a solution with good enough approximation factor. Check Section~\ref{sec:bound_apx} and Table~\ref{tab:summary} for details.

\noindent{\em $-$ Subcase (iii).} Finally, if $\frac35 N^2 \le a(\isfat) \le (\frac34+\eps+\eps_{large}) N^2$, from inequality \eqref{steinLar1} we get $\alpha + \beta \le 2(1-\frac35) = \frac45$. Now we consider two subcases.\\
{\em $\odot$ Subcase (iii a): $w_{\fontL}\le \frac{1}{2}$ and $h_{\fontL}\le \frac{1}{2}$.}
\arir{changed from $w_{\fontL}\le \frac{1}{2}-2\epsl$  to remove the previous gap between iiia and iiib.}
Note that in this case if $w_{\fontL}\ge \frac{1}{2}-2\eps_{large}-2\eps_{\fontL}$ (resp., $h_{\fontL}\ge \frac{1}{2}-2\eps_{large}-2\eps_{\fontL}$),
we can remove the cheapest $2(\eps_{\fontL}+\epsl) N$-width vertical (resp., horizontal) strip from the $\fontL$-region by removing $O(\eps_{\fontL})$ fraction of rectangles in $\ilopt$. 
Otherwise we have $w_{\fontL}<\frac{1}{2}-2\eps_{large}-2\eps_{\fontL}$ and $h_{\fontL}< \frac{1}{2}-2\eps_{large}-2\eps_{\fontL}$.
So there is a free rectangular region that has both side lengths at least $N(\frac12+2\eps_{large}+\eps_\fontL)$; we will keep $\eps_\fontL N$ width and $\eps_\fontL N$  height for resource augmentation and use the rest of the rectangular region (with both sides length at least $\left(\frac{1}{2}+2\eps_{large}\right)N$) for showing existence of a packing using Steinberg's theorem.

Note that this free rectangular region has area at least $N^2(1-w_{\fontL}-2\eps_{\fontL})(1-h_{\fontL}-2\eps_{\fontL})$. 
Now consider rectangles from $\isfhor$ (by sorting them non-decreasingly by area and picking them iteratively) until their total area becomes at least $min\{ a(\isfhor), \frac{N^2(1-w_{\fontL}-2\eps_{\fontL})(1-h_{\fontL}-2\eps_{\fontL})}{2} -\epss N^2\}$. 
Thus their total area is at most $\le  \frac{N^2(1-w_{\fontL}-2\eps_{\fontL})(1-h_{\fontL}-2\eps_{\fontL})}{2}$ as the area of any rectangle in $\isfhor$  is at most $\epss N^2$. Hence, from Steinberg's theorem, we can pack these rectangles in the free rectangular region.
Similarly, we can pack there rectangles from $\isfver$ with total area at least $min\{ a(\isfver), \frac{N^2(1-w_{\fontL}-2\eps_{\fontL}) (1-h_{\fontL}-2\eps_{\fontL})}{2}-\epss N^2\}$. 


Since rectangles are sorted non-decreasingly according to their areas, the total profit of the aforementioned packings is bounded below by $min\{1, \left(\frac{(1-w_{\fontL})(1-h_{\fontL})}{2a(\isfhor)} -O(\eps_{L})\right) N^2\}|\isfhor|$ and $min\{ 1, \left(\frac{(1-w_{\fontL}) (1-h_{\fontL})}{2 a(\isfver)}-O(\eps_{L})\right) N^2\}|\isfver|$ respectively.
We claim that if we keep the best of the two packings, we can always pack at least $\left(\frac{7}{48}-O(\eps_L)\right)|\isfat|$ many rectangles. To show this we will consider the four possible cases:

\begin{itemize}[leftmargin=10pt]
	\item If $min\{ 1, \left(\frac{(1-w_{\fontL})(1-h_{\fontL})}{2a(\isfhor)} -O(\eps_{L})\right) N^2\}=min\{ 1, \left(\frac{(1-w_{\fontL}) (1-h_{\fontL})}{2 a(\isfver)}-O(\eps_{L})\right) N^2\}=1$, then, by an averaging argument, the best among the two packings has profit at least $\frac{1}{2}(|\isfver|+|\isfhor|) = \frac{1}{2}|\isfat|$.
	\item If $\left(\frac{(1-w_{\fontL})(1-h_{\fontL})}{2a(\isfhor)} -O(\eps_{L})\right) N^2<1$ and $\left(\frac{(1-w_{\fontL}) (1-h_{\fontL})}{2 a(\isfver)}-O(\eps_{L})\right) N^2<1$, then by an averaging argument we pack at least \begin{align*} & \resizebox{0.96\hsize}{!}{$\frac{N^2}{2}\left( (\frac{(1-w_{\fontL})(1-h_{\fontL})}{2a(\isfhor)} -O(\eps_{L}))|\isfhor|+ (\frac{(1-w_{\fontL}) (1-h_{\fontL})}{2 a(\isfver)}-O(\eps_{L}))|\isfver|\right)$} \\ \ge & \resizebox{0.445\hsize}{!}{$\frac{N^2}{2}\left(  \frac{(1-w_{\fontL})(1-h_{\fontL})}{2a(\isfat)} -O(\eps_{L}) \right)|\isfat|$} \end{align*} where the inequality follows from the fact that $\frac{a}{b}+\frac{c}{d} \ge \frac{(a+c)}{(b+d)}$ for $a,b,c,d \ge 0$.
Since $a(\isfat)\le (N^2 -a(\ilopt)) \le (1-\frac{\alpha}{2}-\frac{\beta}{2})N^2 \le (1-\frac{w_{\fontL}}{2}-\frac{h_{\fontL}}{2})N^2
$
and $w_\fontL+ h_\fontL \le \alpha+\beta \le \frac45$, the amount of rectangles we are packing from $\isfat$ is bounded below by the minimum of $$f(h_L,w_L) = \left(  \frac{(1-w_{\fontL})(1-h_{\fontL})}{ (4-2w_{\fontL}-2h_{\fontL})} - O(\eps_L)\right)N^2|\isfat|$$ over the domain $\{w_{\fontL} + h_{\fontL}\le \frac{4}{5}, 0\le w_{\fontL}\le \frac{1}{2}, 0\le h_{\fontL} \le \frac{1}{2}\}$. 
Since $\frac{\partial f(h_L,w_L)}{\partial h_L} =\frac{-2(1-w_L)^2}{(4-2w_L -2h_L)^2}\le 0$ and $\frac{\partial f(h_L,w_L)}{\partial w_L} =\frac{-2(1-h_L)^2}{(4-2w_L -2h_L)^2} \le 0$, the function is decreasing with respect to both its arguments, implying that the minimum value must be attained when $h_L+w_L=\frac{4}{5}$. This in turn implies that the amount of rectangles from $\isfat$ we are packing is bounded below by the minimum of $f(h_L,\frac{4}{5}-h_L)$ over the interval $[\frac{3}{10}, \frac{1}{2}]$. Since $$f(h_L,\frac{4}{5}-h_L) = \left(\frac{5}{12}(1-h_L)(\frac{1}{5}-h_L)-O(\eps_L)\right)N^2 |\isfat|$$ describes a parabola centered at $h_L=\frac{2}{5}$, the minimum value on the aforementioned interval is attained at both limits $h_L=\frac{3}{10}$ and $h_L=\frac{1}{2}$ with a value of $\left(\frac{7}{48}-O(\eps_L)\right)|\isfat|$. 
\item If $min\{ 1, \left(\frac{(1-w_{\fontL})(1-h_{\fontL})}{2a(\isfhor)} -O(\eps_{L})\right) N^2\}=1$ and $\left(\frac{(1-w_{\fontL}) (1-h_{\fontL})}{2 a(\isfver)}-O(\eps_{L})\right) N^2<1$ (the remaining case being analogous), then we are packing at least \begin{align*} & \frac{1}{2}\left(|\isfhor|+ (\frac{(1-w_{\fontL}) (1-h_{\fontL})}{2 a(\isfver)}-O(\eps_{L}))N^2|\isfver|\right) \\ \ge & \frac{N^2}{2}\left( \frac{(1-w_{\fontL})(1-h_{\fontL})}{2a(\isfver)} -O(\eps_{L}) \right)(|\isfhor| + |\isfver|) \\ \ge & \frac{N^2}{2}\left(  \frac{(1-w_{\fontL})(1-h_{\fontL})}{2a(\isfat)} -O(\eps_{L}) \right)|\isfat| \\ \ge & \left(\frac{7}{48} - O(\eps_{L})\right)|\isfat|, \end{align*} where the last inequality comes from the analysis of the previous case.
\end{itemize}
From this we conclude that $$|OPT_{\fontL\& C}|\ge \frac{3(1-O(\eps_L))}{4}|\optln|+|\isthin|+\left(\frac{7}{48}-O(\eps_L)\right)|\isfat|.$$
This together with inequalities \eqref{lem2card}, \eqref{lem3card}, \eqref{lem4card} and Lemma \ref{lem4cardgen} gives us a solution with good enough approximation factor. Check Section~\ref{sec:bound_apx} and Table~\ref{tab:summary} for details.

\begin{figure}[t!]
            \centering
        \includegraphics[width=3.5in]{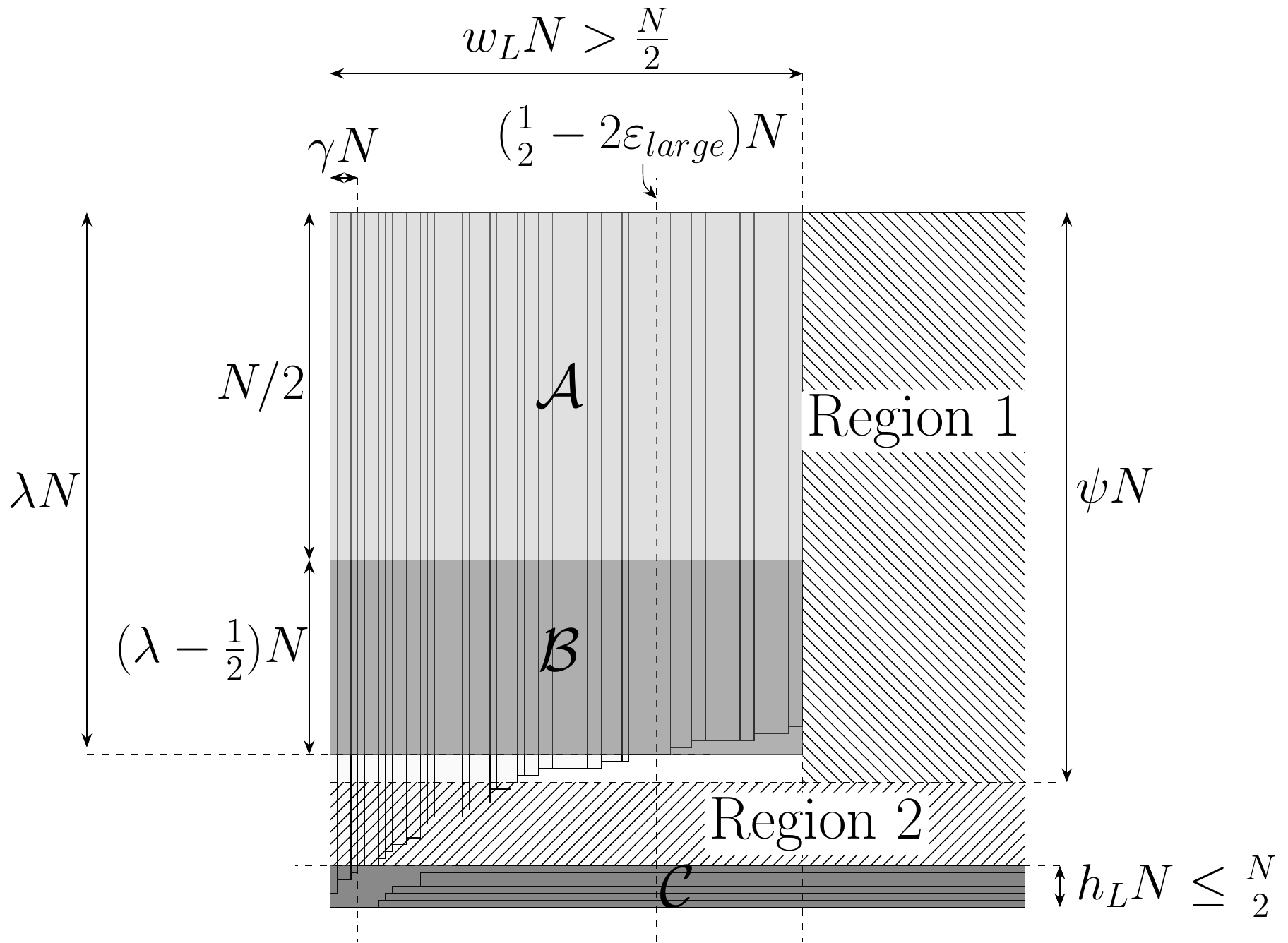}
        \caption{Case 2A(iii)b in the proof of Theorem~\ref{thm:cardWorot}}
        \label{fig:cardworotcase2aiii}
    \end{figure}%

\noindent {\em $\odot$ Subcase (iii b): } $w_{\fontL}>\frac{1}{2}$ (then from inequality \eqref{steinLar1}, $h_{\fontL} \le \frac{3}{10}$).
Note that $a(\ilopt) \le (1-\frac35)N^2=\frac25 N^2$.\\
Let us define some parameters from the current packing to simplify the calculations. Let $\lambda N$ be the height of the rectangle in the packing that intersects or  touches the vertical line $x=\left(\frac{1}{2}-2\eps_{large}\right)N$ (if two rectangles touch such line, we choose that tallest one) and $\gamma N$ be the total width of vertical rectangles having height greater than $(1-h_\fontL)N$. We define also the following three regions in the knapsack: $\mathcal{A}$, the rectangular region of width $w_{\fontL}N$ and height $\frac{1}{2}N$ in the top left corner of the knapsack; $\mathcal{B}$, the rectangular region of width $w_{\fontL} N$ and height $(\lambda-\frac{1}{2})N$ below $\mathcal{A}$ and left-aligned with the knapsack; and $\mathcal{C}$, the rectangular region of width $N$ and height $h_{\fontL}N$ touching the bottom boundary of the knapsack. Notice that $\mathcal{A}$ is fully occupied by vertical rectangles, $\mathcal{B}$ is almost fully occupied by vertical rectangles except for the right region of width $w_{\fontL} N- \left(\frac{1}{2}-2\eps_{large}\right)N$, and at least half of $\mathcal{C}$ is occupied by horizontal rectangles (some vertical rectangles may overlap with this region).
Our goal is to pack some rectangles from $\isfat$ in the \reflectbox{\ffmfamily{L}}-shaped region outside $\mathcal{A}\cup\mathcal{B}\cup\mathcal{C}$. Let $\psi \in [\lambda, 1-h_\fontL]$ be a parameter to be fixed later. We will use, when possible, the following regions for packing items from $\isfat$: Region 1 on the top right corner of the knapsack with width $N(1-w_{\fontL})$ and height $\psi N$ and Region 2 which is the rectangular region $[0,N]\times [h_{\fontL} N, (1-\psi )\cdot N]$ (see Figure~\ref{fig:cardworotcase2aiii}). Region $1$ is completely free but Region 2 may overlap with vertical rectangles. 

\arir{changed this para, check!}
We will now divide Region 2 into a constant number of boxes such that: they do not overlap with vertical rectangles, the total area inside Region 2 which is neither overlapping with vertical rectangles nor covered by boxes is at most $O(\eps_\fontL) N^2$ and each box has width at least $\left(\frac{1}{2}+2\eps_{large}\right)N$ and height at least $\eps N$. That way we will be able to pack rectangles from $\isfver$ into the box defined by Region $1$ and rectangles from $\isfhor$ into the boxes defined inside Region 2 \wal{using almost completely its free space}.
In order to create the boxes inside Region 2 we first create a monotone chain by doing the following: Let $(x_1,y_1)=(\gamma N, h_{\fontL})$. Starting from position $(x_1,y_1)$, we draw an horizontal line of length $\eps_\fontL N$ and then a vertical line from bottom to top until it touches a vertical rectangle, reaching position $(x_2, y_2)$. From $(x_2, y_2)$ we start again the same procedure and iterate until we reach the vertical line $x=\left(\frac{1}{2}-2\eps_{large}\right)N$ or the horizontal line $y=(1-\psi )N$. 
Notice that the area above the monotone chain and below $y=(1-\psi) N$ that is not occupied by vertical rectangles, is at most $\sum_i{\eps_{\fontL} N (y_{i+1}-y_i)} \le \eps_\fontL N^2$.
The number of points $(x_i, y_i)$ defined in the previous procedure is at most $1/\eps_\fontL$. By drawing an horizontal line starting from each $(x_i,y_i)$ up to $(N,y_i)$, together with the drawn lines from the monotone chain and the right limit of the knapsack, we define $k\le 1/\eps_\fontL$ boxes. We discard the boxes having height less than $\eps N$, whose total area is at most $\frac{\eps}{\eps_\fontL}N^2= \eps_\fontL N^2$, and have all the desired properties for the boxes. 

Note that the area that is occupied for sure by rectangles in $\ilopt$   in regions $\mathcal{A}, \mathcal{B}$ and $\mathcal{C}$ by rectangles \wal{in $\ilopt$} is at least $(\frac{1}{2}w_{\fontL}+ (\lambda-\frac{1}{2})(\frac{1}{2}-2\epsl) + \frac{1}{2}h_{\fontL})N^2$.
Since the total area of rectangles from $\ilopt$ is at most $\frac{2}{5}N^2$, the total area occupied by rectangles in $\ilopt$ in Region 2 is at most $N^2(\frac{2}{5} - \frac{1}{2}w_{\fontL} - (\lambda-\frac{1}{2})(\frac{1}{2}-2\epsl)  - \frac{1}{2}h_{\fontL})\le  N^2(\frac{13}{20}-\frac{w_{\fontL}}{2}-\frac{\lambda}{2}-\frac{h_{\fontL}}{2}+\epsl)$. This implies that the total area of the horizontal boxes is at least $N^2(1-\psi -h_{\fontL})-N^2(\frac{13}{20}-\frac{w_{\fontL}}{2}-\frac{\lambda}{2}-\frac{h_{\fontL}}{2})-O(\eps_\fontL) N^2$ and the area of the vertical box is $(1-w_{\fontL})\psi N^2$. 
Ignoring the $O(\epsl)$-term, these two areas become equal if we set $\psi =\frac{7+10(w_{\fontL}+\lambda -h_{\fontL})}{40-20w_{\fontL}}$. It is not difficult to verify that in this case $\psi \le 1-h_{\fontL}$.
If $\frac{7+10(w_{\fontL}+\lambda -h_{\fontL})}{40-20w_{\fontL}} \ge \lambda$, then we set 
$\psi= \frac{7+10(w_{\fontL}+\lambda -h_{\fontL})}{40-20w_{\fontL}}$. Otherwise we set $\psi=\lambda$.\walr{Erased $-O(\eps_{\fontL})$ above. I think they are not needed}

First, consider the case $\psi= \frac{7+10(w_{\fontL}+\lambda -h_{\fontL})}{40-20w_{\fontL}}$.
Since $\psi \ge \lambda$, boxes inside Region 2 have width at least $\left(\frac{1}{2}+2\eps_{large}\right)N$ and height at least $\eps N \gg \eps_{small}N$ (recall that $\eps_{small}$ differs by a large factor from $\eps_{large}\le\eps$), and the box in Region $1$ has height at least $\left(\frac{1}{2}+2\eps_{large}\right)N$ and width at least $\frac{1}{5}N\gg\eps_{small} N$. By using Steinberg's theorem, we can always pack in these boxes at least \begin{equation*} \resizebox{\hsize}{!}{$\left(\min\left\{1,\frac{\frac{1}{2}(N-w_{\fontL}N)\psi N}{a(\isfhor)}\wal{-\epss N^2} \right\} \right)|\isfhor| + \left(\min\left\{1,\frac{\frac{1}{2}(N-w_{\fontL}N)\psi N}{a(\isfver)}\wal{-\epss N^2}\right\}\right)|\isfver|.$}\end{equation*} 
Note that from each box $B'$ of height $h (\ge \eps N)$, we can remove the cheapest $\eps h$-horizontal strip and use resource augmentation to get a container based packing with nearly the same profit as $B'$.
Thus by performing a similar analysis to the one done in Subcase (iii a), and using the fact that $a(\isfat)\le N^2-(\frac{\alpha}{2} + (\lambda-\frac{1}{2})\frac{1}{2} + \frac{\beta}{2})N^2 \le N^2 - N^2(\frac{w_{\fontL}}{2} +(\lambda-\frac{1}{2})\frac{1}{2} + \frac{h_{\fontL}}{2})$, we can minimize the whole expression over the domain $\{\frac{w_{\fontL}}{2} + (\lambda-\frac{1}{2})\frac{1}{2}+\frac{h_{\fontL}}{2}\le \frac{2}{5}, \lambda\le \psi, \frac{1}{2}\le w_{\fontL}\le \frac{4}{5}, \frac{1}{2}\le \lambda \le 1, 0\le \wal{h_L} \le \frac{3}{10}\}$ and prove that this solution packs at least \begin{equation}\label{eq:caseiiib} \left(\frac{3-O(\eps_L)}{4}\right)|\ilopt| + |\isthin| + \left(\frac{5}{36} - O(\eps_\fontL)\right) |\isfat|.\end{equation} Thus, using the above inequality along with \eqref{lem2card}, \eqref{lem3card}, \eqref{lem4card} and Lemma \ref{lem4cardgen}, we get a solution with good enough approximation factor. Check Section~\ref{sec:bound_apx} and Table~\ref{tab:summary} for details.

Finally, if $\psi = \lambda > \frac{7+10(w_{\fontL}+\lambda -h_{\fontL})}{40-20w_{\fontL}}$, the bound for the area of horizontal boxes will not be equal to the area of the vertical box constructed to pack rectangles from $isfat$. 
In this case we \wal{change the width of the box inside Region $1$ to be $w_{\fontL}'<N(1-w_{\fontL})$ fixed in such a way that the area of this box is equal to the bound we have for the area of the boxes in Region $2$, i.e., $N^2(1-\lambda-h_{\fontL})-(\frac{13}{20}-\frac{w_{\fontL}}{2}-\frac{h_{\fontL}}{2}-\frac{\lambda}{2}+O(\eps_\fontL)) N^2$.} Performing the same analysis as before, it can be shown that in this case we pack at least \[\left(\frac{(1-\lambda -h_{\fontL})N^2-(\frac{13}{20}-\frac{w_{\fontL}}{2}-\frac{h_{\fontL}}{2}-\frac{\lambda}{2}) N^2}{2a(\isfat)} - O(\eps_{\fontL})N^2\right)|\isfat|,\] which is at least $(\frac{1}{6}-O(\eps_\fontL))|\isfat|$ over the domain $\{\frac{w_{\fontL}}{2} + (\lambda-\frac{1}{2})\frac{1}{2}+\frac{h_{\fontL}}{2}\le \frac{2}{5}, \frac{1}{2} \le w_{\fontL}\le \frac{4}{5}, \frac{7+10(w_L+\lambda-h_L)}{40-20w_L}< \lambda \le 1, 0\le h_L\le \frac{3}{10}\}$ (and this solution leads to a better bound than \eqref{eq:caseiiib}).\\
\begin{figure*}[t!]
    \centering
    \begin{subfigure}[b]{0.35\textwidth}
            \centering
        \includegraphics[width=2.5in]{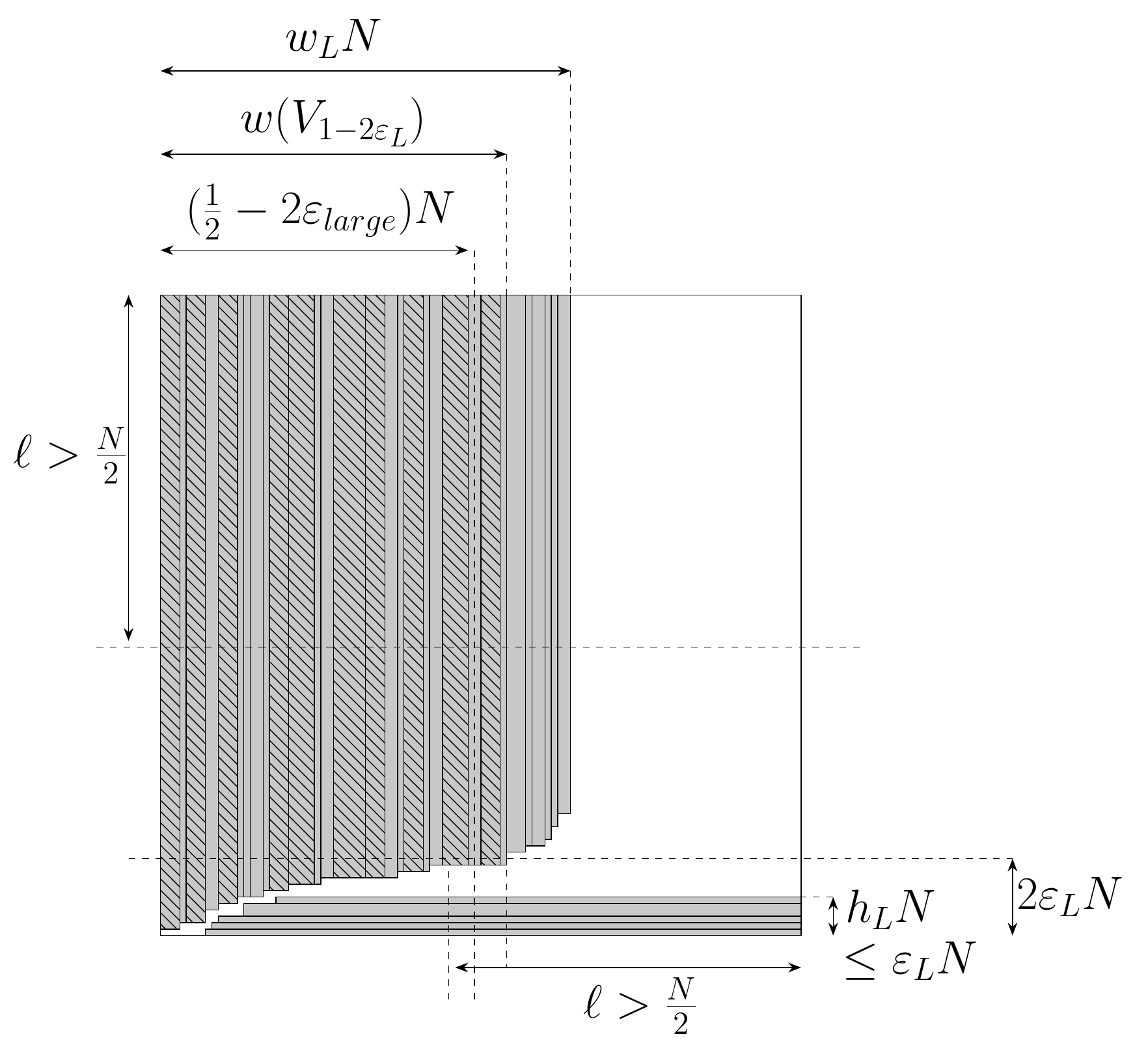}
        \caption{Packing of $\fontL$-region using rectangles from $\ilopt$. Striped rectangles are removed.}
        \label{fig:cardworotcase2b1}
    \end{subfigure}%
    \hspace{35pt}
    \begin{subfigure}[b]{0.4\textwidth}
        \centering
        \includegraphics[width=2.5in]{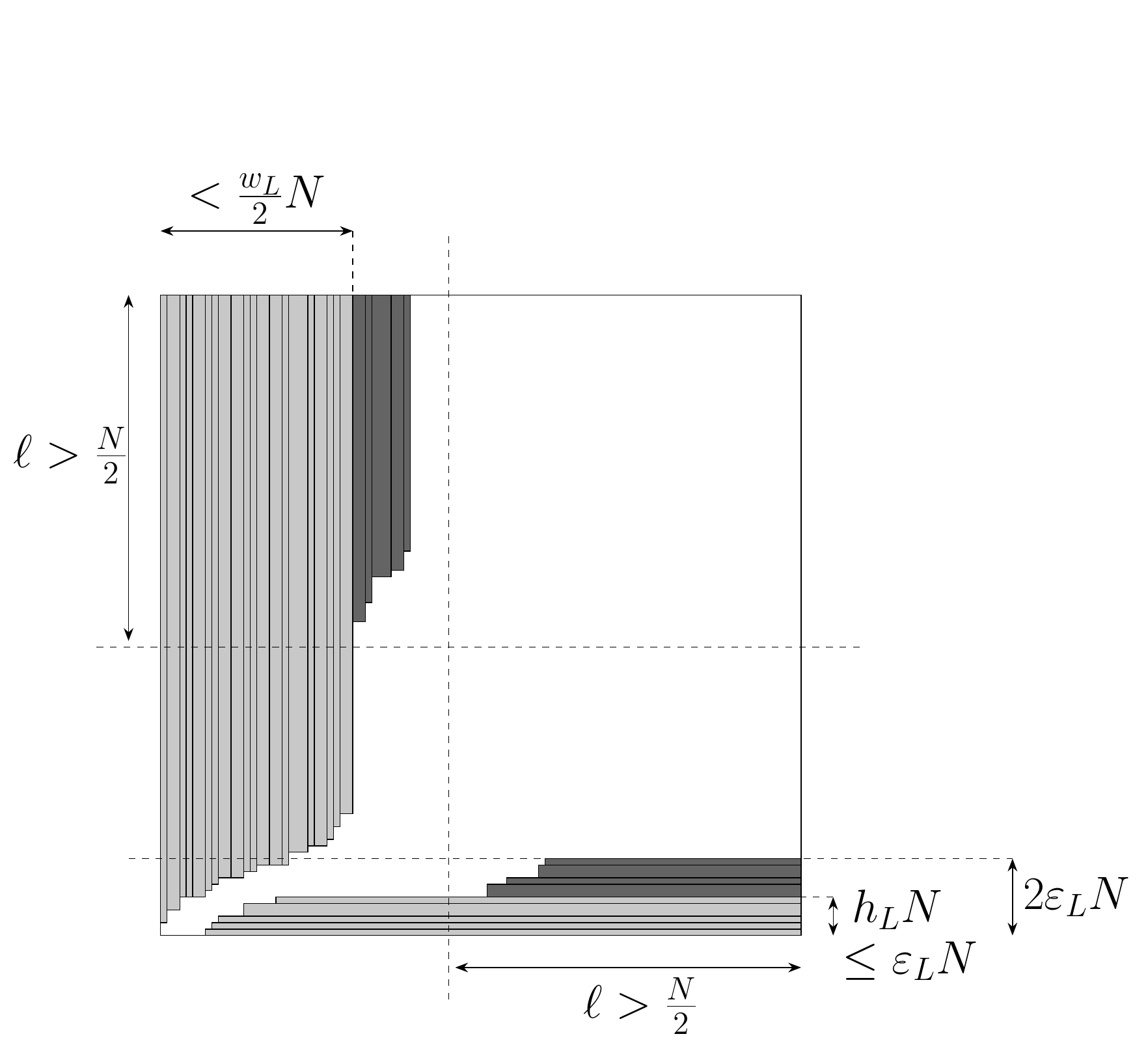}
        \caption{Packing of rectangles in $\ilopt \cup (\isthin)$. Dark gray rectangles are from $\isthin$. }
             \label{fig:cardworotcase2b2}
    \end{subfigure}
    ~
      \caption{The case 2B.}
\end{figure*}
\textit{$\Diamond$ Case 2B. \Big($h_{\fontL} \le  \eps_{\fontL} $ and $w_\fontL N \ge w(V_{1-2\eps_{\fontL}}) > \left(\frac{1}{2}-2\eps_{large}\right)N$\Big) or \Big($w_{\fontL} \le \eps_{\fontL} $ and $ h_\fontL N \ge h(H_{1-2 \eps_{\fontL}}) > \left(\frac{1}{2}-2\eps_{large}\right)N$\Big)}\\
In the first case, area of rectangles in $V_{1-2\eps_{\fontL}} > \left(\frac{1}{2}-2\eps_{large}\right)(1-2\eps_{\fontL})N^2$.
Remaining rectangles in $\ilopt$ have area at least $(w_\fontL -\frac12+2\eps_{large})N\cdot \frac{N}{2}$.
So, $a(\ilopt) > \left(\frac{1}{2}-2\eps_{large}\right)(1-2\eps_{\fontL})N^2 +(w_\fontL -\frac12)N\cdot \frac{N}{2} \ge (\frac14+\frac{w_\fontL}{2}-\eps_{\fontL}-2\eps_{large})N^2$.
Thus $a(\isfat)  \le a(\isopt) < (\frac34 -\frac{w_\fontL}{2} +\eps_{\fontL}+2\eps_{large})N^2$. \\
Now consider the vertical rectangles in the boundary $\fontL$-region sorted non-increasingly by width and pick them iteratively
until their total width crosses $(\frac{w_\fontL}{2}+3\eps_{\fontL}+2\eps_{large})N$.
Remove these rectangles and push the remaining vertical rectangles in the $\fontL$-region to the left as much as possible.
This modified $\fontL$-region will have profit at least $(\frac{1}{2} - O(\eps_{\fontL}))|\ilopt|$.
Now we can put an $\eps N$-strip for the vertical items from $\isthin$ next to the vertical part of $\fontL$-region. On the other hand, the horizontal items of $\isthin$ can be placed on top of the horizontal part of the $\fontL$-region.
The remaining space will be a free rectangular region of height at least $(1-2\eps_{\fontL}) N$ and width $(1-\frac{w_\fontL}{2}+2 \eps_\fontL+2\eps_{large})N$.\walr{total width of vertical in the L-region is $w_\fontL$. Changed from $\alpha_1+\alpha_2$.}
We will use a part of this rectangular region of height $(1-3\eps_{\fontL}) N$ and width $(1-\frac{w_\fontL}{2}+\eps_\fontL)N$ to pack rectangles from $\isfat$ and the rest of the region for resource augmentation.
Since $\frac{w_L}{2}-\eps_L \le \frac{1}{2}-\eps_{large}$, we can use Corollary~\ref{lem:smallStein}\arir{Lemma 3} to pack short rectangles in this region with profit at least $\Big( \frac{(1-\frac{w_\fontL}{2})/2}{\frac34 -\frac{w_\fontL}{2}}-O(\efl)\Big)|\isfat| \ge (\frac34 -O(\eps_{\fontL}))|\isfat|$ as the expression is increasing with respect to $w_L$ and $w_\fontL > \frac12-2\eps_{large}$. Thus, we get, 
\begin{equation}
\label{CardCase2b1}
|\optrc| \ge \left(\frac{1}{2}-O(\eps_{\fontL})\right)|\ilopt|+|\isthin|+\left(\frac34-O(\eps_{\fontL})\right)|\isfat|.
\end{equation}
On the other hand, as $a(\isfat) \le (\frac34  -\frac{w_\fontL}{2} +\eps_{\fontL}+2\eps_{large})N^2$ and $w_\fontL>\frac{1}{2}-2\eps_{large}$, we get $a(\isfat) \le (\frac12+3\eps_{large}+\eps_\fontL) N^2$ and thus from  Lemma \ref{lem4cardgen} we get
\begin{eqnarray}
\label{CardCase2b1a}
|\optrc| & \ge & \frac{3}{4}|\ilthin|+|\isthin| + (1-O(\eps_{\fontL}))|\isfat| \nonumber\\ & \ge & \frac{3}{4}|\ilthin|+(1-O(\eps_{\fontL}))|\isopt|.
\end{eqnarray}
From inequalities \eqref{lem1card}, \eqref{lem3card}, \eqref{lem4card},
\eqref{CardCase2b1} and \eqref{CardCase2b1a} we get a solution with good enough approximation factor. Check Section~\ref{sec:bound_apx} and Table~\ref{tab:summary} for details.

Now we consider the last case when $w_{\fontL} \le \eps_{\fontL} $ and $h_\fontL N \ge h(H_{1-2 \eps_{\fontL}}) > \left(\frac{1}{2}-2\eps_{large}\right)N$.  
\arir{Added back the description. Actually the argument is not totally analogous. We remove a different subring not the cheapest one.}
Note that as we assumed the cheapest subring was the top subring, after \wal{removing it} we might be left with only $|\ilopt \cap \Rho|/2$ profit in the horizontal part of $\fontL$-region. So, further removal of items from the horizontal part might not give us a good solution.
Thus we show an alternate good packing.  We restart with the ring packing and delete the cheapest vertical subring instead of the cheapest subring (i.e., the top subring) and create a new boundary $\fontL$-region.
Here, consider the horizontal rectangles in the boundary $\fontL$-region in non-increasing order of height and take them
until their total height crosses $(\frac{\beta_{bottom}+\beta_{top}}{2}+\eps_{small}+\eps)N$. Remove these rectangles and push the remaining horizontal rectangles to the bottom as much as possible.
Then, following similar arguments as before, we will obtain the same bounds for the constructed solution.

\subsection{Bounding the approximation factor}\label{sec:bound_apx} In each one of the cases listed before we are developing a set of different solutions in order to achieve a good approximation factor. Let $z=|\optrc|/|\opt|$, $x_1=|\ilthin|/|\opt|$, $x_2=|\ilfat|/|\opt|$, $x_3=|\isthin|/|\opt|$ and $x_4=|\isfat|/|\opt|$. The following list enumerates all the obtained inequalities in this section, and it is worth remarking that not all of them hold simultaneously.
\begin{enumerate} \item $z \ge \frac{3}{4}x_1+\frac{3}{4}x_2+x_3+\left(\frac{1}{2}-O(\eps_{L})\right)x_4$;
	\item $z\ge \frac{3}{4}x_1 + \frac{3}{4}x_2$;
	\item $z\ge \left(\frac{1}{2}-O(\eps_{L})\right)(x_1+x_2) + \left(\frac{3}{4}-O(\eps_{L})\right)(x_3+x_4)$;
	\item $z\ge (1-O(\eps_{L}))(x_2+x_4)$;
	\item $z\ge (1-O(\eps_{L}))\left(\frac{1}{2}x_1 + x_2 + \frac{1}{2}x_4\right)$;
	\item $z\ge \left(\frac{3}{4}-O(\eps_{L})\right)(x_1+x_2) + x_3$;
	\item $z\ge \frac{3}{4}x_1 + x_3 + \left(\frac{5}{6}-O(\eps_{L})\right)x_4$;
	\item $z\ge \frac{3}{4}(x_1+x_2) + \left(\frac{5}{12}-O(\eps_L)\right)(x_3+x_4)$;
	\item $z\ge\frac{3}{4}x_1 +x_3 + \left(\frac{2}{3}-O(\eps_{L})\right)x_4$;
	\item $z\ge\frac{3}{4}(x_1+x_2)+x_3+\left(\frac{7}{48}-O(\eps_L)\right)x_4$;
	\item $z\ge \left(\frac{3}{4}-O(\eps_{L})\right)(x_1+x_2) + x_3 + \left(\frac{5}{36}-O(\eps_{L})\right)x_4$;
	\item $z\ge\left(\frac{1}{2}-O(\eps_{L})\right)(x_1+x_2)+x_3+\left(\frac{3}{4}-O(\eps_{L})\right)x_4$;
	\item $z\ge\frac{3}{4}x_1 + (1-O(\eps_L))(x_3+x_4)$.
\end{enumerate}

For each case $i$, let $\mathcal{A}_i$ be the set of indexes of valid inequalities for case $i$. Then we can write the following linear program to compute the obtained approximation factor in that case: $$\begin{array}{cl} \min & z \\ s.t. & \text{Inequalities indexed by } \mathcal{A}_i \\ & \displaystyle\sum_{i =1}^{4}{x_i} = 1 \\ & z, x_i \ge 0 \qquad \text{for } i=1,2,3,4. \end{array}$$ Let $c_{j,k}$ be the coefficient accompanying $x_k$ in the constraint $j\in \mathcal{A}_i$, $k=1,2,3,4$. The dual of the program for case $i$ has the form $$\begin{array}{cl} \max & -w \\ s.t. & \displaystyle\sum_{j\in \mathcal{A}_i}{y_j} \le 1 \\ & \displaystyle\sum_{j\in\mathcal{A}_i}{c_{j,k}y_j} + w \ge 0 \qquad \text{for } k=1,2,3,4 \\ & y_j \ge 0 \qquad \text{for } j\in \mathcal{A}_i \\ & w\in \mathbb{R} \end{array}$$ Any feasible solution for the dual program of case $i$ is a lower bound on the fraction of $OPT$ packed in that case. Table~\ref{tab:summary} summarizes the analysis described along this section for all the cases, stating the valid inequalities and the approximation factor obtained in each one of them, together with a dual feasible solution. It is not difficult to see that the worst case is 2A(iii)b, implying that $|\optrc| \ge (\frac{325}{558}-O(\eps_{L}))|OPT|$. Applying Lemma~\ref{lem:containersPackPTAS} concludes the proof of Theorem~\ref{thm:cardWorot}. \end{proof}
\begin{table}[h]
	\begin{tabular}{ | c | c | c | c |}
		\hline
		&  &  &  \\[-11.5pt] Case & Valid inequalities & Dual feasible solution & \parbox[c]{2.6cm}{\centering Fraction of OPT\\packed $(w)$} \\ \hline
		 &  &  &  \\[-11.5pt] $1$ & $1, 3, 4, 5$ & $y_1=\frac{1}{2}, y_3=\frac{1}{2}, y_4=0, y_5=0$ & $\frac{5}{8}-O(\eps_{L})$ \\ & & & \\[-11.5pt] \hline
		&  &  &  \\[-11.5pt] $2A(i)$ & $3,4,5,6,7$ & $y_3=\frac{17}{54}, y_4=0, y_5=\frac{1}{3}, y_6=\frac{7}{54}, y_7=\frac{2}{9}$ & $\frac{127}{216}-O(\eps_{L})$ \\ & & & \\[-11.5pt] \hline
		&  &  &  \\[-11.5pt] $2A(ii)$ & $3,4,5,6,8$ & $y_3=\frac{4}{7}, y_4=0, y_5=0, y_6=0, y_8=\frac{3}{7}$ & $\frac{17}{28}-O(\eps_{L})$ \\ & & & \\[-11.5pt] \hline
		&  &  &  \\[-11.5pt] $2A(iii)a$ & $3,4,5,9,10$ & $y_3=\frac{124}{369}, y_4=0, y_5=\frac{1}{3}, y_9=\frac{2}{9}, y_{10}=\frac{40}{369}$ & $\frac{215}{369} - O(\eps_L)$ \\ & & & \\[-11.5pt] \hline
		&  &  &  \\[-11.5pt] $2A(iii)b$ & $3,4,5,9,11$ & $y_3=\frac{94}{279}, y_4=0, y_5=\frac{1}{3}, y_9=\frac{2}{9}, y_{11}=\frac{10}{93}$ & $\frac{325}{558} - O(\eps_{L})$ \\ & & & \\[-11.5pt] \hline
		&  &  &  \\[-11.5pt] $2B$ & $2,4,5,12,13$ & $y_2=\frac{8}{41}, y_4=0, y_5=\frac{9}{41}, y_{12}=\frac{18}{41}, y_{13}=\frac{6}{41}$ & $\frac{24}{41} - O(\eps_{L})$ \\[1.5pt] \hline
	\end{tabular}
	\caption{Summary of the case analysis in Theorem~\ref{thm:cardWorot}.}
	\label{tab:summary}
\end{table}

\newpage
\section{Cardinality Case With Rotations}
\label{sec:cardRot}
In this section we present a polynomial time $(4/3+\eps)$-approximation algorithm for \tdkr~for the cardinality case. We next assume w.l.o.g. that $\eps>0$ is a sufficiently small constant and also that $\height(i) \ge \width(i)$ for all items $i$ in the input.

Consider some optimal solution $OPT$ for \tdkr, with an associated packing in the knapsack. We crucially exploit the following resource contraction lemma, which is our main new idea in the rotation case.

\begin{lem}\label{uwrescontr1}
(Resource Contraction Lemma)
For given positive constants $\eps \le 1/13$ and $\epss < \eps^{\frac{1}{2\eps}+1}$, suppose that there exists a feasible packing of a set of items $M$, with $|M|\geq 1/\epss^3$. Then it is possible to pack a subset $M'\subseteq M$ of  cardinality at least $\frac23(1-O(\eps))|M|$ into $[0,\left(1-\eps^{\frac{1}{2\eps}+1}\right)N]\times [0,N]$ if rotations are allowed.
\end{lem}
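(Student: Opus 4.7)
The proof strategy is a probabilistic vertical-strip removal applied to the given packing of $M$, combined with rotations to handle items placed with large horizontal extent. Throughout, set $\delta := \eps^{\frac{1}{2\eps}+1}$, so $\epss < \delta$ and the contracted knapsack is $K' := [0,(1-\delta)N]\times[0,N]$. First I would discard at most $O(1)$ ``huge'' items (those with both sides $>(1-\delta)N$, hence area $>(1-\delta)^{2}N^{2}/2$), which is negligible since $|M|\geq 1/\epss^{3}$ is enormous. In what follows, every surviving item fits in $K'$ in at least one orientation.

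The core step picks $x\in[0,(1-\delta)N]$ uniformly at random and removes from $M$ all items intersecting the strip $S_x := [x,x+\delta N]\times [0,N]$ in the given packing; items with $\leftc(i)\ge x+\delta N$ are shifted leftward by $\delta N$, producing a valid packing of the surviving items (in their original orientations) inside $K'$. The expected number of removed items is at most $\sum_i (w'_i+\delta N)/((1-\delta)N)$, where $w'_i$ denotes the width of item $i$ in the given packing. I would partition $M$ into portrait items $M_P$ (with $w'_i\le h'_i$) and landscape items $M_L$ (with $w'_i>h'_i$). Since $\height(i)\ge \width(i)$ in the input, for $i\in M_P$ the packed short side equals $\width(i) = w'_i$, so $w'_i{}^2\le w'_i h'_i = \mathrm{area}(i)$; summing and using non-overlap yields $\sum_{i\in M_P}w'_i{}^2\le N^2$, and Cauchy--Schwarz gives $\sum_{i\in M_P}w'_i\le N\sqrt{|M_P|}\le N\sqrt{|M|}$. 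Thus the contribution of $M_P$ to the expected loss is $O(\sqrt{|M|}/(1-\delta) + \delta|M|) = O(\eps|M|)$, using $|M|\ge 1/\epss^3$ and $\delta\le\eps$.

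The main obstacle is that landscape items $M_L$ have packed width $w'_i = \height(i)$, which need not be small, so the above Cauchy--Schwarz bound fails for $M_L$. Here the rotation allowance is crucial: the plan is to consider a symmetric strategy in which we apply the analogous horizontal-strip argument and then rotate the resulting packing by $90^\circ$ (together with every item) to match the shape of $K'$; under this swap, former landscape items become portrait and their contribution to the loss becomes $O(\eps|M|)$ by the same Cauchy--Schwarz argument, while $M_P$ takes the role of the problematic class. A three-way case analysis---namely, $|M_P|\ge 2|M|/3$, $|M_L|\ge 2|M|/3$, or $|M_P|,|M_L|\in (|M|/3, 2|M|/3)$---combined with these two complementary strip strategies, together with an auxiliary repacking of rotated landscape items into the space freed by the strip (using Steinberg's theorem when needed to certify that the free $\delta N\times N$ slab can absorb them within its area budget), delivers the claimed subset $M'$ of cardinality $\tfrac{2}{3}(1-O(\eps))|M|$. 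The delicate part is the balanced case, where one must combine shifted portrait items with rotated landscape items in $K'$ and verify that the Steinberg area condition holds with $\delta$ being only $\eps^{\frac{1}{2\eps}+1}$ wide; this is precisely where the factor $\tfrac{2}{3}$ (rather than a stronger near-$1$ bound) is lost.
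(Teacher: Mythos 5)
Your two extreme cases ($|M_P|\ge \tfrac23|M|$ or $|M_L|\ge\tfrac23|M|$) are fine: the Cauchy--Schwarz bound on the total short-side length of the "aligned" class, together with $|M|\ge 1/\epss^3$, does give an $O(\eps|M|)$ expected loss for a random strip, and existence follows. The gap is the balanced case, which is where all the work lies and where your proposal does not go through as stated. The mechanism you invoke --- Steinberg-packing the rotated landscape items "into the space freed by the strip... within its area budget" --- fails for two reasons. First, the $\delta N\times N$ slab is exactly the region being surrendered to the contraction; it is not available inside $[0,(1-\delta)N]\times[0,N]$. Second, even granting some freed region of area $\delta N^2$ with $\delta=\eps^{1/(2\eps)+1}$, the landscape items hit by a vertical strip are not confined to it: a single item of packed width $N$ and height $\Theta(N)$ intersects every vertical strip and has area $\Theta(N^2)\gg \delta N^2$, so no area-based repacking into the slab is possible. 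In the balanced case the best you have justified is $\max\{|M_P|,|M_L|\}-O(\eps|M|)$, which can be as small as $\tfrac12|M|$.

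The paper's proof avoids this by a different decomposition. It first discards the $O(1/\epss^2)$ items that exceed $\epss N$ in \emph{both} dimensions (so every survivor is thin in one direction), then applies a shifting argument (Lemma \ref{lem:medWeight}) to create a gap in the height distribution, so that every item is either very tall ($>(1-\eps\eps_s)N$) or not too tall ($\le(1-2\eps_s)N$); and it proves an area bound (Lemma \ref{lem:stripHalf}) showing that one of the two pairs of boundary strips carries items of total area at most roughly $N^2/2$. The case split is then on the total width $w(X)$ of the items $X$ spanning both horizontal boundary strips: if $w(X)$ is large, deleting $X$ frees full-height vertical channels into which $2/3$ of $X$ (by width) is re-inserted; if $w(X)$ is small, one compares dropping the whole top strip (re-inserting the rotated $X$ there, which is where $\eps\le 1/13$ enters) against Steinberg-packing $X\cup Y$ into the contracted bin using the $N^2/2$ area bound, and a $\tfrac13/\tfrac23$ convex combination yields $\tfrac23(1-O(\eps_s))|M_3|$. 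To repair your argument you would need, at minimum, the structural observation that full-height portrait items and full-width landscape items cannot coexist (they would overlap), plus an area bound of the Lemma~\ref{lem:stripHalf} type to make Steinberg applicable to the \emph{contracted knapsack} rather than to the thin slab; as written, the balanced case is unproven.
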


We defer the proof of the lemma to the end of this section.

As in the case without rotations, we will first show the existence of a container packing that packs at least $(3/4-O(\eps))|OPT|$ many items.
Let $APX$ be the container packing with largest possible profit. 
As in Section \ref{sec:weighted}, we assume all items to be skewed. Note that the small items can be handled with the techniques used in Lemma \ref{lem:smallPack}.
We start with the corridor partition as in Section \ref{sec:weighted} and define {\em thin}, {\em fat} and {\em killed}
rectangles accordingly.
Let $T$ and $F$ be the set of thin and  fat rectangles respectively.

We will show that 
$|APX|\ge (3/4-O(\eps))|OPT|$.
\begin{lem}
\label{uwrot1}
$|APX| \ge (1-\eps)|F|$.
\end{lem}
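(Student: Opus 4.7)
The plan is to exhibit a specific container packing of cardinality at least $(1-\eps)|F|$; since $APX$ denotes the best container packing, the lemma will follow immediately.

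First, I would recall the set-up: the fat items $F$ are, by construction (cf.\ Remark~\ref{rem:fatPack} and the box-construction of Section~\ref{sec:structural:boxes} adapted to the rotation setting), packed inside a collection of $O_\eps(1)$ boxes obtained from the corridor decomposition. In particular, for each box $B$ of size $a\times b$, the sub-collection $F_B\subseteq F$ of fat items inside it admits a feasible packing inside $B$ (possibly after rotating some items, which we inherit from the original optimal packing of $OPT$).

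Next, I would apply the Container Packing Lemma (Lemma~\ref{lem:containerPack}) to each box $B$ independently. This produces, for each $B$, a container packing of a subset $F_B'\subseteq F_B$ with $|F_B'|\ge (1-O(\eps))|F_B|-O_\eps(1)$, using only $O_\eps(1)$ containers whose sizes belong to a polynomial-size set. Taking the union over all boxes yields a global container packing of a set $F'\subseteq F$ with $|F'|\ge (1-O(\eps))|F|-O_\eps(1)$. The containers used across different boxes do not overlap because the boxes themselves do not overlap.

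To absorb the additive $O_\eps(1)$ term into the multiplicative factor, I would invoke the standard assumption that $|OPT|\ge 1/\eps^{3}$ (otherwise the problem is solved optimally by brute force in time $n^{O(1/\eps^3)}$); this in turn implies $|F|\ge \Omega_\eps(1)$ whenever $|F|$ carries a nonnegligible fraction of $|OPT|$, in which case $O_\eps(1)\le \eps|F|$ for $|F|$ large enough. (In the degenerate case $|F|=O_\eps(1)$, the inequality $|APX|\ge (1-\eps)|F|$ is trivially obtained by enumerating all $O(n^{O_\eps(1)})$ placements of the few items.) Combining, we obtain a container packing of cardinality at least $(1-\eps)|F|$, so $|APX|\ge (1-\eps)|F|$ as desired.

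The only mildly delicate point is ensuring that the construction goes through with rotations: this is not an obstacle because Lemma~\ref{lem:containerPack} (and the underlying Resource Augmentation Lemma~\ref{lem:augmentPack}) are stated to work in both the rotational and non-rotational regimes, and the fat items inside each box are packed in a fixed orientation inherited from $OPT$ (after possible rotations), which is all that the lemma requires as input.
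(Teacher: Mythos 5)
Your proposal is correct and follows essentially the same route as the paper: the paper's proof is a one-line reference to Lemma~\ref{lem:onlyFat} (the ``fat only'' case of the corridor/box/container decomposition), and your argument simply unpacks that machinery — boxes from the corridor partition, Lemma~\ref{lem:containerPack} applied per box, and absorption of the $O_\eps(1)$ killed/discarded items via the assumption that the optimal solution has large cardinality. No gaps.
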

\begin{proof}
After removal of $T$, we can get a container packing for almost all items in $F$ as discussed in Lemma \ref{lem:onlyFat} in Section \ref{sec:weighted}.
\end{proof}

Now we show that using Lemma~\ref{uwrescontr1} we can prove the following :

\begin{lem}
\label{uwrot2}
$|APX| \ge (1-O(\eps)) (|T|+\frac23|F|)$.
\end{lem}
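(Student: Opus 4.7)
The plan is to construct a feasible (with rotations) packing of $T$ together with roughly two-thirds of $F$, and then convert it into a container packing at an additional $(1-O(\eps))$ factor loss.

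First I would pack all of $T$ in a narrow vertical strip at the right of the knapsack. Using rotations, turn every horizontal item of $T\cap \Rho$ on its side, so that after rotation every $i\in T$ satisfies $\width(i)\le \epss N$. By Lemma~\ref{lem:boxProperties}(2), choosing $\epsb$ sufficiently small we may assume $\sum_{i\in T\cap \Rho}\height(i)\le \epsr N$ and $\sum_{i\in T\cap \Rve}\width(i)\le \epsr N$, and hence $\sum_{i\in T}\width(i)\le 2\epsr N$ after rotation. All of $T$ then fits side-by-side in the vertical strip $R_T:=[(1-2\epsr)N,N]\times[0,N]$.

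Second, I would choose $\epsr$ small enough so that $2\epsr\le\eps^{\frac{1}{2\eps}+1}$. This guarantees that the rectangle $[0,(1-\eps^{\frac{1}{2\eps}+1})N]\times[0,N]$ is disjoint from $R_T$. Since $F\subseteq OPT$ admits a feasible packing in the full knapsack, applying the Resource Contraction Lemma (Lemma~\ref{uwrescontr1}) with $M=F$ yields a feasible packing (with rotations) of a subset $F'\subseteq F$ with $|F'|\ge \frac{2}{3}(1-O(\eps))|F|$ inside that rectangle. The lemma requires $|F|\ge 1/\epss^{3}$; in the complementary case $|F|=O_\eps(1)$ is a constant and the claim becomes trivial, since we may pack each fat item into its own container alongside $T$ in the strip.

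Third, the combined packing of $T\cup F'$ is feasible and has cardinality at least $|T|+\frac{2}{3}(1-O(\eps))|F|\ge (1-O(\eps))(|T|+\frac{2}{3}|F|)$. To turn it into a container packing, I would note that $R_T$ is already a single vertical container holding all of $T$, while the packing of $F'$ inside the reduced rectangle can be transformed into a container packing via the Container Packing Lemma (Lemma~\ref{lem:containerPack}), whose proof extends to the rotation setting because the underlying Resource Augmentation Packing Lemma~\ref{lem:augmentPack} does (as noted in Section~\ref{sec:structural:containers}); this last step costs an additional $(1-O(\eps))$ factor. Since $APX$ denotes the most profitable container packing, we conclude $|APX|\ge(1-O(\eps))(|T|+\frac{2}{3}|F|)$. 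The main obstacle here is the Resource Contraction Lemma itself, which is the essentially new ingredient in the rotation setting and is proved separately; given it, the remaining subtlety is just parameter bookkeeping to ensure $R_T$ and the reduced rectangle are disjoint (the inequality $2\epsr\le \eps^{1/(2\eps)+1}$) and that the rotation-friendly container-conversion tools apply.
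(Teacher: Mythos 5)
Your main argument is the same as the paper's: bound the total width of $T$ after rotation via Lemma~\ref{lem:boxProperties}, reserve a thin vertical strip for it, apply the Resource Contraction Lemma to $F$ to pack $\frac{2}{3}(1-O(\eps))|F|$ into the contracted knapsack, and then invoke resource augmentation to convert that packing into a container packing. The only substantive issue is your treatment of the degenerate case $|F|<1/\epss^{3}$: you claim the statement is then trivial because ``we may pack each fat item into its own container alongside $T$ in the strip,'' but a fat item need not fit in a strip of width $O(\epsr N)$ even after rotation (e.g.\ a horizontal item of width $0.9N$ and height $0.9\epss N$ is fat but does not fit). The correct handling, as in the paper, splits further on $|T|$: if $|T|$ is also $O_\eps(1)$ then \emph{all} of $T\cup F$ has constant cardinality and a trivial container packing (one container per item at its original position) suffices; otherwise $|T|\ge |F|/\epss$, so discarding $F$ entirely and packing only the vertical container for $T$ already gives $|T|\ge(1-O(\eps))(|T|+|F|)$. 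With that fix your argument goes through; the remaining discrepancy (you must leave the resource-augmentation slack of the contracted region out of the strip reserved for $T$, i.e.\ take the $T$-container of width $\eps^{\frac{1}{2\eps}+1}N/2$ rather than the full $\eps^{\frac{1}{2\eps}+1}N$) is indeed just the parameter bookkeeping you flag.
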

\begin{proof}
Thanks to \ref{lem:boxProperties} we can ensure that the total height of rectangles in $T$ is at most $\frac{\eps^{\frac{1}{2\eps}+1} N}{2}$. 
So we can pack them in a vertical container of width $\frac{\eps^{\frac{1}{2\eps}+1} N}{2}$.

Now if  $|F| \ge \frac{1}{\epss^3}$, then  by Lemma \ref{uwrescontr1} there exists $F' \subseteq F$ of cardinality at least $\frac23(1-O(\eps))|F|$ that can be packed inside $K':=[0,\left(1-\eps^{\frac{1}{2\eps}+1}\right)N]\times [0,N]$. 
 Then we can use the resource augmentation PTAS in \cite{jansen2007new} to get a container packing of  $(2/3-O(\eps)) |F|$ many rectangles in the area $K'':=[0,\left(1-\eps^{\frac{1}{2\eps}+1}/2\right)N]\times [0,N]$, and  pack the vertical container for items in $T$ to the right of $K''$  in the area $[\left(1-\eps^{\frac{1}{2\eps}+1}/2\right)N, 1]\times [0,N]$.
 
 Otherwise,  if  $|F| < \frac{1}{\epss^3}$, there are two cases.
 If $|T| < \frac{1}{\epss^4}$, then   $|F \cup T| < \frac{2}{\epss^4}$ and we can find the packing just by brute-force.
 Otherwise if, $|T| \ge \frac{1}{\epss^4} \ge |F|/\epss$, then $APX \ge |T| \ge (1-O(\eps))(|T|+|F|)$.
\end{proof}

Thus we get the following theorem:
\begin{thm}
$|APX| \ge (3/4-O(\eps))|OPT|$.
\end{thm}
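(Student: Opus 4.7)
The plan is to derive the theorem as a direct consequence of the two preceding lemmas by a short case analysis on the relative sizes of $|F|$ and $|T|$. The small items and the $O_\eps(1)$ killed items contribute only a lower-order term, so up to a factor $1-O(\eps)$ we may assume $|OPT| = |T| + |F|$ (the small items are handled separately via Lemma~\ref{lem:smallPack}, and $|OPT_K|=O_\eps(1)$ is negligible once we have assumed $|OPT|\ge 1/\epss^{\mathrm{const}}$, as one does by brute-forcing tiny instances).

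First I would normalize: write $y := |F|/|OPT|$ and $x := |T|/|OPT|$, so that $x + y \ge 1 - O(\eps)$. Lemma~\ref{uwrot1} then reads $|APX|/|OPT| \ge (1-\eps)\, y$, while Lemma~\ref{uwrot2} reads $|APX|/|OPT| \ge (1-O(\eps))(x + \tfrac{2}{3}y)$. The theorem reduces to the purely arithmetic statement
\[
\max\!\left\{\, y,\; x+\tfrac{2}{3}y\,\right\} \;\ge\; \tfrac{3}{4} \quad\text{whenever } x+y \ge 1.
\]

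Next I would split into two cases depending on which bound dominates. If $|F| \ge 3|T|$, i.e.\ $y \ge 3x$, then $y \ge 3(1-y)$ gives $y \ge 3/4$, so Lemma~\ref{uwrot1} already yields $|APX| \ge (1-\eps)\tfrac{3}{4}|OPT|$. Otherwise $|F| < 3|T|$, equivalently $|F| < \tfrac{3}{4}|OPT|$; substituting $|T| = |OPT| - |F|$ into the second bound gives
\[
|T| + \tfrac{2}{3}|F| \;=\; |OPT| - \tfrac{1}{3}|F| \;>\; |OPT| - \tfrac{1}{3}\cdot\tfrac{3}{4}|OPT| \;=\; \tfrac{3}{4}|OPT|,
\]
so Lemma~\ref{uwrot2} yields $|APX| \ge (1-O(\eps))\tfrac{3}{4}|OPT|$. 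In either case $|APX| \ge (3/4-O(\eps))|OPT|$, which is exactly the claim.

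There is essentially no obstacle here: the entire difficulty of the theorem is encapsulated in the Resource Contraction Lemma~\ref{uwrescontr1} (which powers Lemma~\ref{uwrot2}) and in the container-packing structural lemmas from Section~\ref{sec:weighted}. The balancing inequality above is tight at $|F|=\tfrac{3}{4}|OPT|$, $|T|=\tfrac{1}{4}|OPT|$, which is why the ratio $3/4$ (equivalently $4/3+\eps$ after inversion) is what one gets; any improvement would have to come from sharpening the constant $2/3$ in Lemma~\ref{uwrot2}, i.e.\ from a stronger resource contraction result, rather than from the combination step itself.
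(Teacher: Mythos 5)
Your proof is correct and follows essentially the same route as the paper: the paper's own argument is exactly the combination of Lemma~\ref{uwrot1} and Lemma~\ref{uwrot2} with the worst case identified at $|F|=3|T|$, which is the balancing point of your case analysis. Your write-up just makes the elementary optimization explicit.
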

\begin{proof}
The claim follows by combining Lemma \ref{uwrot2} and \ref{uwrot1}. Up to a factor $(1-O(\eps))$, the worst case is obtained when $|F|=|T|+2/3 \cdot  |F|$,
i.e., $|F|=3 |T|$. This gives a total profit of $3/4\cdot |T \cup F|$. 
\end{proof}


It remains to prove Lemma \ref{uwrescontr1}. Let us remove from $M$ all items that are larger than $\epss N$ in both dimensions. Let $M_2$ be the resulting set: observe that $|M_2|\geq (1-\epss)|M|$.

We next show how to remove from $M_2$ a set of cardinality at most $\eps |M_2|$ such that the remaining items $M_3$
are either \emph{very tall} or \emph{not too tall}, where the exact meaning of this will be given next. We use the notation $[k]=\{1,\ldots,k\}$ for a positive integer $k$.
\begin{lem} 
\label{lem:medWeight}
Given any constant $1/2>\eps>0$, there exists a value $i \in [\lceil1/(2\eps)\rceil]$ such that all items in $M_2$ having height in $((1-2\eps^i)N, (1-\eps^{i+1})N]$  have total cardinality at most $\eps |M_2|$.
\end{lem}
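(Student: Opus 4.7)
The plan is a straightforward pigeonhole / averaging argument over the $k:=\lceil 1/(2\eps)\rceil$ candidate intervals. Define
\[
I_i \;:=\; \bigl((1-2\eps^i)N,\,(1-\eps^{i+1})N\bigr], \qquad i=1,\ldots,k,
\]
and let $n_i$ be the number of items in $M_2$ whose height lies in $I_i$. The lemma asks us to exhibit some $i$ with $n_i\le \eps|M_2|$, so by pigeonhole it suffices to control $\sum_{i=1}^{k} n_i$ by an $O(|M_2|)$ bound, i.e.\ to argue that no height value is counted too many times.

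First I would establish a \emph{bounded multiplicity} property: each height $h\in(0,N]$ lies in at most a constant number of the $I_i$'s (in fact at most $2$). This is purely a check on the interleaving of the thresholds. Because $\eps<1/2$ one has
\[
2\eps \;>\; \eps \;>\; 2\eps^2 \;>\; \eps^2 \;>\; 2\eps^3 \;>\; \eps^3 \;>\;\cdots,
\]
so the full list of thresholds $\{1-2\eps^j\}\cup\{1-\eps^{j+1}\}$ sorts as
\[
1-2\eps \;<\; 1-\eps \;<\; 1-2\eps^2 \;<\; 1-\eps^2 \;<\; 1-2\eps^3 \;<\; 1-\eps^3\;<\;\cdots
\]
This partitions $(0,N]$ into ``basic'' subintervals between consecutive thresholds. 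An elementary inspection shows that any such basic subinterval is contained in at most two of the $I_i$, namely in $I_{i}$ and $I_{i-1}$ for a single index $i$; intuitively, the lower half of $I_{i+1}$ overlaps the upper half of $I_i$, and no third interval reaches that high or low.

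Using this multiplicity bound, summing over $i$ gives $\sum_{i=1}^{k} n_i \;\le\; 2|M_2|$, and therefore by averaging there is an $i^*\in[k]$ with
\[
n_{i^*} \;\le\; \frac{2|M_2|}{k} \;\le\; \frac{2|M_2|}{\lceil 1/(2\eps)\rceil} \;\le\; 4\eps\,|M_2|,
\]
which (after the customary redefinition of the working constant $\eps'=\eps/4$, which only affects $\epss$ in Lemma~\ref{uwrescontr1} by a polynomial factor) proves the desired bound $\le \eps |M_2|$. No genuinely hard step is involved; the only mildly delicate point is the multiplicity check, and even that is just bookkeeping on the ordering of the $\eps^j$ powers.
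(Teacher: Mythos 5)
Your proof is correct and is essentially the paper's own argument: each height lies in at most two of the intervals $I_i$ (because $\eps<1/2$ forces $I_i\cap I_{i+2}=\emptyset$), so $\sum_i n_i\le 2|M_2|$ and averaging over the $\lceil 1/(2\eps)\rceil$ intervals yields the claim. If anything you are more careful than the paper, which asserts the bound $\eps|M_2|$ outright even though the averaging with $\lceil 1/(2\eps)\rceil$ intervals literally gives only $4\eps|M_2|$; your explicit acknowledgment of this constant (to be absorbed by rescaling $\eps$ or, more cleanly, by taking $\lceil 2/\eps\rceil$ intervals) is a harmless bookkeeping discrepancy shared with the original.
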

\sanr{Formally, $[1/(2\eps)]$ is not defined, as we need an integer here. So I replaced it with the rounded up value.}
\begin{proof}
Let $K_i$ be the set of items in $M_2$ with height  in $((1-2\eps^i)N, (1-\eps^{i+1})N]$
for $i \in [\fab{\lceil1/(2\eps)\rceil}]$.
An item can belong to at most two such sets as $\eps < 1/2$. 
Thus, the smallest such set has cardinality  at most $\eps |M_2|$.
\end{proof}

We remove from $M_2$ the elements from the set $K_i$ of minimum cardinality guaranteed by the above lemma, and let $M_3$ be the resulting set. We also define $\eps_s=\eps^i$  for the same $i$. Thus, $\eps_s \geq \eps^{1/2\eps} > \epss/\eps$.
Note that the items in $M_3$ have height either at most $(1-2\eps_s)N$ or above $(1-\eps\cdot \eps_s)N$.

\begin{figure}[h]
\centering
\begin{subfigure}[b]{.34\textwidth}
\resizebox{!}{135pt}{
	\begin{tikzpicture}
		
		
		\draw[thick] (0,0) rectangle (8,8);
		
		\fill[pattern = north east lines, pattern color = lightgray] (0,0) rectangle (1.5,8);
		\fill[pattern = north east lines, pattern color = lightgray] (6.5,0) rectangle (8,8);
		\fill[pattern = north west lines, pattern color = lightgray] (0,0) rectangle (8,1.5);
		\fill[pattern = north west lines, pattern color = lightgray] (0,6.5) rectangle (8,8);
		
		
		\draw[dashed] (-1,1.5) -- (9,1.5);
		\draw[dashed] (-1,6.5) -- (9,6.5);
		\draw[dashed] (1.5,-1) -- (1.5,9);
		\draw[dashed] (6.5,-1) -- (6.5,9);
		
		
		\draw [decorate,decoration={brace,amplitude=7pt}] (8,8) -- (8,6.5); 
		\draw (8.2,7.25) node [anchor = west] {\Large $\gamma N$};
		\draw [decorate,decoration={brace,amplitude=6pt}] (8,1.5) -- (8,0); 
		\draw (8.2,0.75) node [anchor = west] {\Large $\delta N$};
		\draw [decorate,decoration={brace,amplitude=7pt}] (6.5,8) -- (8,8); 
		\draw (7.25,8.2) node [anchor = south] {\Large $\beta N$};
		\draw [decorate,decoration={brace,amplitude=7pt}] (0,8) -- (1.5,8); 
		\draw (0.75,8.2) node [anchor = south] {\Large $\alpha N$};
		
		\draw (-0.6,1.5) node[anchor=north] {\Large $S_{B,\delta}$};
		\draw (1.5,-0.6) node[anchor=east] {\Large $S_{L,\alpha}$};
		\draw (6.5,-0.6) node[anchor=west] {\Large $S_{R,\beta}$};
		\draw (-0.6,6.5) node[anchor=south] {\Large $S_{T,\gamma}$};
		\end{tikzpicture}}
		\caption{Strips $S_{L,\alpha}, S_{R,\beta}, S_{B,\delta}, S_{T,\gamma}$}
	\end{subfigure}
	\hspace{35pt}
	\begin{subfigure}[b]{.4\textwidth}
	\resizebox{!}{135pt}{
		\begin{tikzpicture}
			
			
			\draw[thick] (0,0) rectangle (8,8);
			
			
			\draw[fill=lightgray] (0.5,6.5)  rectangle (5,7.2);
			\draw[fill=lightgray] (1.5,2)  rectangle (2.4,4);
			\draw[fill=darkgray] (0.5,1.5)  rectangle (1.3,5);
			\draw[fill=gray] (5,1)  rectangle (7,2.5);
			
			
			\draw[dashed] (2,-1) -- (2,9);
			
			
			\draw [decorate,decoration={brace,amplitude=7pt}] (0,8) -- (2,8); 
			\draw (0.75,8.2) node [anchor = south] {\Large $\alpha N$};
			
			\draw (2,-0.6) node[anchor=east] {\Large $S_{L,\alpha}$};
			\filldraw[color=white] (-1,0) rectangle (-0.9,0.1);
			\end{tikzpicture}}
			\caption{ $C_{L,\alpha}, D_{L,\alpha}$ are dark and light gray resp.}
		\end{subfigure}
	\caption{Definitions for cardinality 2DK with rotations.}\label{fig:unweighted_rotations_def}
	\end{figure}
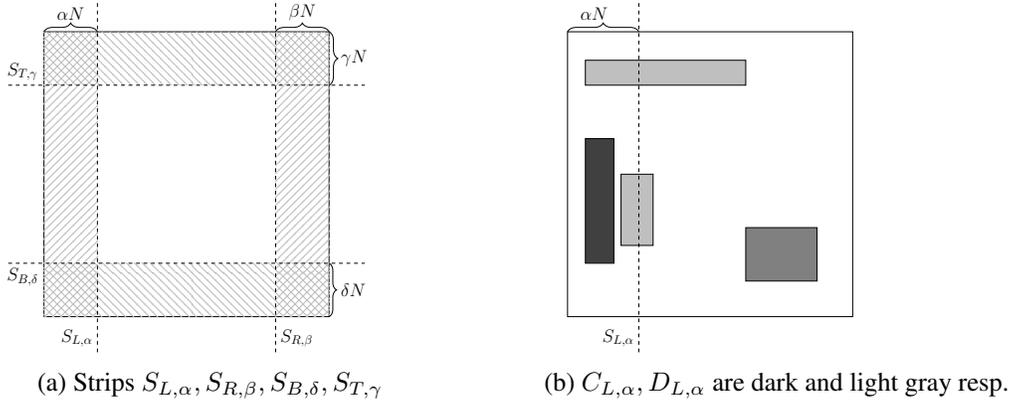

\andy{For any $\delta>0$} denote the strips of width $N$ and height $\delta N$ at the top and bottom of the knapsack by $S_{T,\delta}:=[0,N]\times[(1-\delta)N,N]$ and 
$S_{B,\delta} :=[0,N]\times[0, \delta N]$, resp. Similarly, denote the strips of height $N$ and width $\delta N$ to the left and right of the knapsack
by $S_{L,\delta}:=[0, \delta N] \times [0,N]$ and  $S_{R,\delta} :=[(1-\delta)N, N]\times [0,N]$, resp. 
The set of items in $M_3$ intersected by and fully contained in strip $S_{K, \delta}$ are denoted by $E_{K, \delta}$ and $C_{K, \delta}$, resp. Obviously $C_{K, \delta} \subseteq E_{K, \delta}$, and we define, $D_{K, \delta} = E_{K, \delta} \setminus C_{K, \delta}$ . 
Let $a(I)$ denote the total area of items in $I$, i.e., $a(I)=\sum_{i\in I}\width(i)\cdot \height(i)$.
\begin{lem}
\label{lem:stripHalf}
Either $a(E_{L, \eps_s} \cup E_{R, \eps_s}) \le \frac{(1+8 \eps_s)}{2} N^2$
or $a(E_{T, \eps_s} \cup E_{B, \eps_s}) \le  \frac{(1+8 \eps_s)}{2} N^2$.
\end{lem}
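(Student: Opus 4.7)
The plan is to combine inclusion--exclusion with a short area-bound on the ``corner'' items, i.e., those intersecting both a vertical and a horizontal boundary strip.

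First I would observe that since the items of $M_3$ are packed disjointly inside the knapsack, $a(E_{L,\eps_s}\cup E_{R,\eps_s}\cup E_{T,\eps_s}\cup E_{B,\eps_s}) \le a(M_3) \le N^2$. Writing $A := E_{L,\eps_s}\cup E_{R,\eps_s}$ and $B := E_{T,\eps_s}\cup E_{B,\eps_s}$, inclusion--exclusion gives
\[
a(A) + a(B) \;=\; a(A\cup B) + a(A\cap B) \;\le\; N^2 + a(A\cap B).
\]
Hence if I can show $a(A\cap B)\le 8\eps_s N^2$, then $\min\{a(A),a(B)\}\le (1+8\eps_s)N^2/2$, which is exactly the claim.

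Second, I would bound $a(A\cap B)$ by exploiting that every $i\in M_3\subseteq M_2$ satisfies $\min\{w(i),h(i)\}\le \epss N$. Call $i$ \emph{narrow} if its packed width is at most $\epss N$ and \emph{thin} if its packed height is at most $\epss N$; by definition of $M_2$ every item is narrow, thin, or both. If a narrow item intersects $S_{L,\eps_s}$, its left side is at most $\eps_s N$ and its right side is at most $(\eps_s+\epss)N$, so it is contained in the vertical strip $[0,(\eps_s+\epss)N]\times[0,N]$; together with the symmetric right strip this gives that the narrow items in $A$ have total area at most $2(\eps_s+\epss)N^2$. An entirely symmetric argument, using the horizontal strips $[0,N]\times[(1-\eps_s-\epss)N,N]$ and $[0,N]\times[0,(\eps_s+\epss)N]$, bounds the total area of thin items in $B$ by $2(\eps_s+\epss)N^2$. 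Since every item of $A\cap B$ is narrow or thin, summing yields $a(A\cap B)\le 4(\eps_s+\epss)N^2$.

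Third, I would use the quantitative relation between the parameters: Lemma~\ref{lem:medWeight} gives $\eps_s=\eps^i$ with $i\le \lceil 1/(2\eps)\rceil$, so $\eps_s \ge \eps^{\lceil 1/(2\eps)\rceil} \ge \eps^{1/(2\eps)+1} > \epss$ (using the lemma's hypothesis $\epss<\eps^{1/(2\eps)+1}$). Hence $\eps_s+\epss\le 2\eps_s$ and the corner bound becomes $a(A\cap B)\le 8\eps_s N^2$, completing the argument.

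I do not expect any real obstacle: the proof is a short double-counting argument and does not need the height dichotomy of $M_3$, only the one-small-dimension property inherited from $M_2$. The only step that requires a bit of care is verifying $\epss\le \eps\cdot \eps_s$ from the parameter setup so that $4(\eps_s+\epss)N^2$ can be absorbed into $8\eps_s N^2$.
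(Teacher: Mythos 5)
Your proof is correct and follows essentially the same route as the paper: both arguments rest on the inclusion--exclusion identity $a(A)+a(B)=a(A\cup B)+a(A\cap B)\le N^2+a(A\cap B)$ followed by an $O(\eps_s)N^2$ bound on the intersection term. The only (immaterial) difference is in how that intersection is bounded --- the paper observes that all items of $A\cap B$ except at most four corner-covering items lie inside the four boundary strips of total area $4\eps_s N^2$, whereas you enlarge the strips by $\epss N$ using the one-small-side property of $M_2$; both yield $a(A\cap B)\le 8\eps_s N^2$ via $\epss\le\eps_s$.
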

\begin{proof}
\andyr{Intuitively, I would define $V := E_{L, \eps_s} \cup E_{R, \eps_s}$ since these items are in the vertical strips.}\walr{Changed.}
Let us define $V := E_{L, \eps_s} \cup E_{R, \eps_s}$ 
and $H := E_{T, \eps_s} \cup E_{B, \eps_s}$.
Note that,
$a(V) + a(H)
=
a(V \cup H)+
a(V \cap H)$. 
Clearly $a(V \cup H) \le N^2$ since all items fit into the knapsack.
On the other hand, except possibly four items (the ones that contain at least one of the points $(\eps_s N,\eps_s N),((1-\eps_s) N,\eps_s N),(\eps_s N,(1-\eps_s)N),((1-\eps_s)N, (1-\eps_s)N)$) all other items in $V \cap H$ lie entirely within 
the four strips $S_{L,\eps_s}\cup S_{R,\eps_s}\cup S_{T,\eps_s}\cup S_{B,\eps_s}$.
Thus $a(V \cap H) \le 4\eps_s N^2 +4\eps_{small} N^2 \le 
8\eps_s N^2$, as $\eps_{small} \le \eps_s$. 
We can conclude that $\min\{ a(V),a(H) \} \le \frac{a(V)+a(H)}{2} = \frac{a(V \cup H)+
a(V \cap H)}{2} \le \frac{(1+8 \eps_s)}{2} N^2$.
\end{proof}


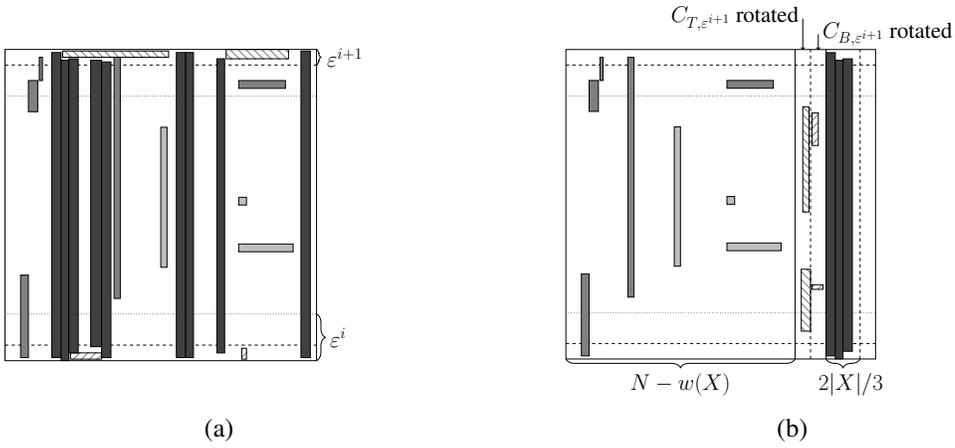
\begin{figure}
\centering
\captionsetup[subfigure]{justification=centering}
	\begin{subfigure}[b]{.35\textwidth}
	\resizebox{!}{150pt}{
\begin{tikzpicture}
\draw (0,0) -- (10,0) -- (10,10)-- (0,10) -- (0,0);
\draw[dashed]  (0,0.5) -- (10,0.5);
\draw[dashed]  (0,9.5) -- (10, 9.5);
\draw[dotted]  (0,1.5) -- (10,1.5);
\draw[dotted]  (0,8.5) -- (10, 8.5);
\filldraw[fill=darkgray, draw=black] (1.5,0.1) rectangle (1.8, 9.9);
\filldraw[fill=darkgray, draw=black] (1.8,0.005) rectangle (2.05, 9.65);
\filldraw[fill=darkgray, draw=black] (2.05,0.25) rectangle (2.35, 9.7);
\filldraw[fill=darkgray, draw=black] (2.75, 0.45) rectangle (3.1, 9.65);
\filldraw[fill=darkgray, draw=black] (3.1,0.1) rectangle (3.4, 9.6);
\filldraw[fill=darkgray, draw=black] (5.5,0.1) rectangle (5.8, 9.9);
\filldraw[fill=darkgray, draw=black] (5.8,0.1) rectangle (6.05, 9.9);
\filldraw[fill=darkgray, draw=black] (6.8,0.25) rectangle (7.05, 9.7);
\filldraw[fill=darkgray, draw=black] (9.5,0.1) rectangle (9.8, 9.95);
\draw[pattern=north west lines, pattern color=gray] (1.85, 9.75) rectangle (5.25, 9.95);
\draw[pattern=north west lines, pattern color=gray] (7.1, 9.7) rectangle (9.1, 10);
\draw[pattern=north east lines, pattern color=gray] (2.1, 0.05) rectangle (3.1, 0.25);
\draw[pattern=north east lines, pattern color=gray] (7.6, 0.05) rectangle (7.75, 0.4);
\filldraw[fill=gray, draw=black] (0.5,0.1) rectangle (0.75, 2.75);
\filldraw[fill=gray, draw=black] (0.75,8) rectangle (1.05, 9);
\filldraw[fill=gray, draw=black] (1.1,9) rectangle (1.2, 9.75);
\filldraw[fill=gray, draw=black] (3.5,2) rectangle (3.7, 9.75);
\filldraw[fill=gray, draw=black] (7.5, 8.75) rectangle (9, 9);
\filldraw[fill=lightgray, draw=black] (5,3) rectangle (5.2, 7.5);
\filldraw[fill=lightgray, draw=black] (7.5,3.5) rectangle (9.25, 3.75);
\filldraw[fill=lightgray, draw=black] (7.5,5) rectangle (7.75, 5.25);

\draw [decorate,decoration={brace,amplitude=4pt}, thick] (10,10) -- (10,9.5); 
\draw (10.2,9.75) node [anchor = west] {\huge $\eps^{i+1}$}; 
\draw [decorate,decoration={brace,amplitude=8pt},thick] (10,1.5) -- (10,0); 
\draw (10.3,0.75) node [anchor = west] {\huge $\eps^{i}$};

\fill[color=white] (0,-1.23) rectangle (1,-0.5);
\fill[color=white] (0,11) rectangle (1,11.5);
\end{tikzpicture}}
\caption{\label{f:uwrot1} }
\end{subfigure}%
\hspace{45pt}
	\begin{subfigure}[b]{.37\textwidth}
\resizebox{!}{150pt}{
\begin{tikzpicture}
\draw (0,0) -- (10,0) -- (10,10)-- (0,10) -- (0,0);
\draw (7.4,0) -- (7.4,10);
\draw[dashed] (7.9,0) -- (7.9,10);
\draw[dashed] (8.4,0) -- (8.4,10);
\draw[dashed]  (0,0.5) -- (10,0.5);
\draw[dashed]  (0,9.5) -- (10, 9.5);
\draw[dotted]  (0,1.5) -- (10,1.5);
\draw[dotted]  (0,8.5) -- (10, 8.5);
\filldraw[fill=gray, draw=black] (0.5,0.1) rectangle (0.75, 2.75);
\filldraw[fill=gray, draw=black] (0.75,8) rectangle (1.05, 9);
\filldraw[fill=gray, draw=black] (1.1,9) rectangle (1.2, 9.75);
\filldraw[fill=gray, draw=black] (2,2) rectangle (2.2, 9.75);
\filldraw[fill=gray, draw=black] (5.2, 8.75) rectangle (6.7, 9);
\filldraw[fill=lightgray, draw=black] (3.5,3) rectangle (3.7, 7.5);
\filldraw[fill=lightgray, draw=black] (5.2,3.5) rectangle (6.95, 3.75);
\filldraw[fill=lightgray, draw=black] (5.2,5) rectangle (5.45, 5.25);

\draw[pattern=north west lines, pattern color=gray] (7.65, 8.15) rectangle (7.85, 4.75);
\draw[pattern=north west lines, pattern color=gray] (7.6, 2.9) rectangle (7.9, 0.9);
\draw[pattern=north east lines, pattern color=gray] (7.95, 7.95) rectangle (8.15, 6.9);
\draw[pattern=north east lines, pattern color=gray] (7.95,2.4) rectangle (8.3, 2.25);

\filldraw[fill=darkgray, draw=black] (8.4,0.1) rectangle (8.7, 9.9);
\filldraw[fill=darkgray, draw=black] (8.7,0.005) rectangle (8.95, 9.65);
\filldraw[fill=darkgray, draw=black] (8.95,0.25) rectangle (9.25, 9.7);
\draw[dashed]  (9.5,0) -- (9.5, 10);

\draw [decorate,decoration={brace,amplitude=8pt}] (7.4,0) -- (0,0); 
\draw (3.7,-0.3) node [anchor = north] {\huge $N-w(X)$}; 
\draw [decorate,decoration={brace,amplitude=8pt}] (9.5,0) -- (8.4,0); 
\draw (9.2,-0.3) node [anchor = north] {\huge $2|X|/3$};
\draw[->] (7.65,11) -- (7.65,10); 
\draw (7.65,11) node [anchor=east] {\huge $C_{T,\eps^{i+1}}$ rotated};
\draw[->] (8.15,10.5) -- (8.15,10); 
\draw (8.15,10.5) node [anchor=west] {\huge $C_{B,\eps^{i+1}}$ rotated};

\end{tikzpicture}}
\caption{\label{f:uwrot4} 
}
\end{subfigure}
	\caption{Case A for cardinality 2DK with rotations. Dark gray rectangles are $X$, light gray rectangles are $Z$, gray (and hatched) rectangles are $Y$, hatched rectangles are $C_{T,\eps^{i+1}}$ and  $C_{B,\eps^{i+1}}$. Figure (a): original packing in $N \times N$, Figure (b): modified packing leaving space for resource contraction on the right.}
	\label{f:uwrotA}
\end{figure}

Now we use Steinberg's Theorem \cite{steinberg1997strip} to prove the following lemma.

\begin{lem}
\label{lem:Stein}
Let $0<\eps_a<1/2$ be a constant and $\tilde{M}:=\{1, \ldots, k\}$ a set of items satisfying 
$\width(i) \le \epss N$ for all $i \in \tilde{M}$. If $a(\tilde{M}) \le (1/2+\eps_a)N^2$. Then, 
a subset of $\tilde{M}$ of cardinality at least $(1-2\eps_s-2\eps_a)|\tilde{M}|$ can be packed into $[0,(1-\eps_s)N]\times [0,N]$.
\end{lem}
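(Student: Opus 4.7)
The plan is to pack a suitably chosen subset directly via Steinberg's Theorem. The key observation is that since every item has width at most $\epss N$ and $\epss$ is chosen much smaller than $\eps_s$ (so in particular $2\epss N \le (1-\eps_s)N$), when we apply Steinberg to $Q=[0,(1-\eps_s)N]\times[0,N]$ the term $(2w_{max}-w)_+$ vanishes. Therefore, for any $\tilde M'\subseteq\tilde M$, Steinberg's condition reduces to
$$2\,a(\tilde M')\le (1-\eps_s)N\cdot N,\qquad\text{i.e.,}\qquad a(\tilde M')\le T:=\tfrac{(1-\eps_s)N^2}{2}.$$

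The selection of $\tilde M'$ will be the greedy one: sort the items in non-decreasing order of area, $a_1\le a_2\le\dots\le a_k$ (where $k=|\tilde M|$), and let $j$ be the largest index with $\sum_{i=1}^{j} a_i\le T$. Setting $\tilde M'=\{1,\ldots,j\}$, Steinberg's theorem immediately yields a packing of $\tilde M'$ into $Q$.

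The only remaining task is to lower bound $j$. I will use the elementary averaging fact that in a non-decreasing sequence the mean of the smallest $j^\ast$ terms is at most the overall mean: $\sum_{i=1}^{j^\ast} a_i \le (j^\ast/k)\,a(\tilde M)$. Combined with $a(\tilde M)\le (1/2+\eps_a)N^2$, one obtains $\sum_{i=1}^{j^\ast} a_i\le T$ whenever $(j^\ast/k)(1/2+\eps_a)\le (1-\eps_s)/2$. Since
$$(1-2\eps_s-2\eps_a)(1/2+\eps_a)=\tfrac{1}{2}-\eps_s-2\eps_s\eps_a-2\eps_a^{2}\le \tfrac{1-\eps_s}{2},$$
picking $j^\ast=\lceil (1-2\eps_s-2\eps_a)k\rceil$ works (with the trivial rounding correction negligible in the regime of interest, where $k$ is large thanks to the hypothesis $|M|\ge 1/\epss^3$ in Lemma~\ref{uwrescontr1}). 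By maximality of $j$, $j\ge j^\ast\ge (1-2\eps_s-2\eps_a)k$, which is the required cardinality bound.

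The only subtlety, rather than obstacle, is keeping track of the relations between $\epss$, $\eps_s$ and $\eps_a$ so that $(2w_{max}-w)_+=0$ and the two-line arithmetic computation above goes through; both are straightforward given the assumed constants.
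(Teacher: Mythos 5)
Your proof is correct and follows essentially the same route as the paper's: sort the items by non-decreasing area, take a greedy prefix, apply Steinberg's theorem after observing that $(2w_{\max}-w)_+=0$ because $w_{\max}\le\epss N\ll(1-\eps_s)N$, and bound the prefix's cardinality by averaging against the total area $\le(1/2+\eps_a)N^2$. The only (cosmetic) difference is that the paper selects the prefix by an area threshold in $[\frac{(1-2\eps_s)}{2}N^2,\frac{(1-\eps_s)}{2}N^2]$ — which always exists since each item has area at most $\eps_s N^2/2$ — and then converts area into cardinality, thereby avoiding the ceiling/rounding correction that your count-first selection requires when $|\tilde{M}|$ is small (a detail you flag, and which can be removed by running your averaging in the opposite direction exactly as the paper does).
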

\begin{proof}
W.l.o.g., assume the items in $\tilde{M}$ are given in nondecreasing order according to their area.
Note that $a(i) \le \epss N^2  \le \san{\frac{\eps_s}{2}} N^2$ for any $i \in \tilde{M}$.
Let $S:=\{1, \ldots, j\}$ be such that $\frac{(1-2\eps_s)}{2}N^2 \le \sum_{i=1}^j a(i)  \le \frac{(1-\eps_s)}{2}N^2$ and 
\andyr{I believe that for this you would need that $a(i) \le  \eps_s/2 N^2$ for each $i \in \tilde{M}$}\sanr{You are right. We get this from $\epss < \eps \eps_s$ and $\eps<1/2$. I adjusted the bound in the proof anyways.}
$\sum_{i=1}^{j +1} a(i) > \frac{(1-\eps_s)}{2}N^2$.
Then from Theorem \ref{thm:steinberg}, $S$ can be packed into $[0,(1-\eps_s)N]\times [0,N]$.
As we considered items in the order of  nondecreasing area, $\frac{|S|}{|\tilde{M}|} \ge \frac{\left(\frac12-\eps_s\right)}{\left(\frac12+\eps_a\right)}$.
Thus, $ |S| \ge 
\left(1- \frac{(\eps_a + \eps_s)}{\left(\frac12+\eps_a\right)}\right)|\tilde{M}|
> (1-2\eps_a -2\eps_s)|\tilde{M}|$. 
\end{proof}

From Lemma \ref{lem:stripHalf}, we can assume w.l.o.g.
that $a(E_{T, \eps_s} \cup E_{B, \eps_s}) \le  \frac{(1+8 \eps_s)}{2}N^2$.
Let $X$ be the set of items in $M_3$ that intersect both $S_{T, \eps_s}$ and $S_{B, \eps_s}$  and $Y := \{ E_{T, \eps_s} \cup E_{B, \eps_s} \} \setminus X$. 
Define $Z:=M_3 \setminus \{ X \cup Y\}$ to be the rest of the items.
Let us define $w(X)=\sum_{i \in X} \width(i)$. Now there are two cases.

\noindent \textbf{Case A.} $w(X) \ge 12\eps \cdot \eps_s N$.
From Lemma \ref{lem:medWeight}, all items in $X$
intersect both $S_{T, \eps\cdot \eps_s}$ and $S_{B, \eps\cdot \eps_s}$. 
So the removal of $X\cup C_{T, \eps\cdot \eps_s}\cup C_{B, \eps\cdot \eps_s}$ creates a set of empty strips of height $N$ and total width of $w(X)$.
By a simple permutation argument, all items in $Y \cup Z$ can be packed inside $[0,N-w(X)]\times [0,N]$, leaving an empty vertical strip of width $w(X)$ on the right side of the knapsack.
Next we rotate $C_{T, \eps\cdot \eps_s}$ and $C_{B, \eps\cdot \eps_s}$ and pack them in two vertical strips, each
of width $\eps\cdot \eps_sN$. Note that $\width(i) \le \eps\cdot \eps_sN$ for all $i \in X$.
Now take items in $X$ by nondecreasing width, till their total width is in $[w(X)-4\eps\cdot \eps_sN, w(X)-3\eps\cdot \eps_sN]$ and pack them into another vertical strip. The cardinality of this set is at least $\frac{(w(X)- 4\eps\cdot \eps_sN)}{w(X)}|X|
\ge \frac{2}{3}|X|$, where the last inequality follow by the Case A assumption. 
Hence, at least $\frac23|X|+|Y|+|Z|  \ge \frac23(|X|+|Y|+|Z|) $ items can be packed into $[0,(1-\eps\cdot \eps_s)N]\times [0,N]$.

\begin{figure}
\captionsetup[subfigure]{justification=centering}
\centering
\hspace{-20pt}
	\begin{subfigure}[b]{.35\textwidth}
	\resizebox{!}{120pt}{
\begin{tikzpicture}
\draw (0,0) -- (10,0) -- (10,10)-- (0,10) -- (0,0);
\draw[dashed]  (0,0.5) -- (10,0.5);
\draw[dashed]  (0,9.5) -- (10, 9.5);
\draw[dotted]  (0,1.5) -- (10,1.5);
\draw[dotted]  (0,8.5) -- (10, 8.5);
\filldraw[fill=darkgray, draw=black] (2.05,0.25) rectangle (2.35, 9.7);
\filldraw[fill=darkgray, draw=black] (3.1,0.1) rectangle (3.4, 9.6);
\filldraw[fill=darkgray, draw=black] (6.8,0.25) rectangle (7.05, 9.7);
\draw[pattern=north west lines, pattern color=gray] (7.1, 9.7) rectangle (9.1, 10);
\draw[pattern=north east lines, pattern color=gray] (2.1, 0.05) rectangle (3.1, 0.25);
\draw[pattern=north east lines, pattern color=gray] (7.6, 0.05) rectangle (7.75, 0.4);
\filldraw[fill=gray, draw=black] (0.5,0.1) rectangle (0.75, 2.75);
\filldraw[fill=gray, draw=black] (0.75,8) rectangle (1.05, 9);
\filldraw[fill=gray, draw=black] (1.1,9) rectangle (1.2, 9.75);
\filldraw[fill=lightgray, draw=black] (5,3) rectangle (5.2, 7.5);
\filldraw[fill=lightgray, draw=black] (7.5,3.5) rectangle (9.25, 3.75);
\filldraw[fill=lightgray, draw=black] (7.5,5) rectangle (7.75, 5.25);

\draw [decorate,decoration={brace,amplitude=4pt}, thick] (10,10) -- (10,9.5); 
\draw (10.2,9.75) node [anchor = west] {\huge $\eps^{i+1}$}; 
\draw [decorate,decoration={brace,amplitude=8pt}, thick] (10,1.5) -- (10,0); 
\draw (10.3,0.75) node [anchor = west] {\huge $\eps^{i}$}; 
\end{tikzpicture}}
\caption{\label{f:uwrot3} 
}
\end{subfigure}%
\hspace{50pt}
	\begin{subfigure}[b]{.35\textwidth}
\resizebox{!}{120pt}{
\begin{tikzpicture}
\draw (0,0) -- (10,0) -- (10,10)-- (0,10) -- (0,0);
\draw[dashed]  (0,0.5) -- (10,0.5);
\draw[dashed]  (0,9.5) -- (10, 9.5);
\draw[dotted]  (0,1.5) -- (10,1.5);
\draw[dotted]  (0,8.5) -- (10, 8.5);

\filldraw[fill=darkgray, draw=black] (0,8.5) rectangle (9.45, 8.8); 
\filldraw[fill=darkgray, draw=black] (0,8.8) rectangle (9.5, 9.1); 
\filldraw[fill=darkgray, draw=black] (0,9.1) rectangle (9.45, 9.35); 

\draw[pattern=north east lines, pattern color=gray] (2.1, 0.05) rectangle (3.1, 0.25);
\draw[pattern=north east lines, pattern color=gray] (7.6, 0.05) rectangle (7.75, 0.4);
\filldraw[fill=gray, draw=black] (0.5,0.1) rectangle (0.75, 2.75);
\filldraw[fill=lightgray, draw=black] (5,3) rectangle (5.2, 7.5);
\filldraw[fill=lightgray, draw=black] (7.5,3.5) rectangle (9.25, 3.75);
\filldraw[fill=lightgray, draw=black] (7.5,5) rectangle (7.75, 5.25);

\draw[->] (10.5,9) -- (10,9); 
\draw (10.5,9) node [anchor=west] {\LARGE $X$ rotated};

\fill[color=white] (0,10.1) rectangle (0.1,10.15);\end{tikzpicture}}
\caption{\label{f:uwrot4} 
}
\end{subfigure}
	\caption{Case B for cardinality 2DK with rotations. Dark gray rectangles are $X$, light gray rectangles are $Z$, gray (and hatched) rectangles are $Y$, hatched rectangles are $C_{T,\eps^{i+1}}$ and  $C_{T,\eps^{i+1}}$. Figure (a): original packing, Figure (b): modified packing leaving space for resource contraction on the top.}
	\label{f:uwrotB}
\end{figure}

\noindent \textbf{Case B.}  $w(X) < 12 \eps \cdot \eps_sN$. Observe that $Y=(E_{T, \eps_s}\setminus X)\dot\cup (E_{B, \eps_s}\setminus X)$, hence $|Y|=|E_{T, \eps_s}\setminus X|+ |E_{B, \eps_s}\setminus X|$. Assume w.l.o.g. that  $|E_{B, \eps_s} \setminus X| \ge |Y|/2 \ge |E_{T, \eps_s}\setminus X|$.
Then remove $E_{T, \eps_s}$. We can pack $X$ on top of $M \setminus E_{T, \eps_s}$
 as $12 \eps\cdot \eps_s \le \eps_s - \eps\cdot \eps_s$ for $\eps \le 1/13$.
This gives a packing of $|X|+|Z|+\frac{|Y|}{2}$ many rectangles.
On the other hand, as $a(X \cup Y)=a(E_{T, \eps_s} \cup E_{B, \eps_s}) \le  \frac{(1+8 \eps_s)}{2}N^2$, from Lemma~\ref{lem:Stein}, it is possible to pack at least 
$(1-2 \eps_{s}-8\eps_s)|X \cup Y| \ge (1-10\eps_s)(|X|+|Y|)$ many items into $[0,(1-\eps\cdot \eps_s)N]\times [0,N]$.

Thus we can always pack a set of items of cardinality at least

$\begin{array}{rl} & \max\{(1-10\eps_s)(|X|+|Y|), |X|+|Z|+\frac{|Y|}{2}\} \\ \ge &
\frac{1}{3}(1-10\eps_s)(|X|+|Y|)+ \frac{2}{3}(|X|+|Z|+\frac{|Y|}{2}) \\
\ge & \frac{2}{3}(1-10\eps_s)(|X|+|Y|+|Z|) \\ = & \frac{2}{3}(1-10\eps_s)|M_3|.\end{array}$\\
This concludes the proof of Lemma \ref{uwrescontr1}.











\section{Weighted Case with Rotations}
\label{sec:weightedRot}

In this section we give a polynomial time $(3/2+\eps)$-approximation algorithm for the weighted 2-dimensional geometric knapsack problem when items are allowed to be rotated by 90 degrees.
In contrary to the unweighted case, where it is possible to remove a constant number of \emph{large} items, the same is not possible in the weighted case, where an item could have a big profit.

We call an item $i$ \emph{massive} if $\width(i) \geq (1 - \eps)N$ and $\height(i) \geq (1 - \eps)N$. The presence of such a big item in the optimal solution requires a different analysis, that we present below. In both the cases, we can show that there exists a container packing with roughly 2/3 of the profit of the optimal solution.

Let us assume that $\eps < 1/6$. We will prove the following result:

\begin{thm}\label{lem:structural_lemma_weighted}Let $\eps > 0$ and let $R$ be a set of items that can be packed into the $N \times N$ knapsack. Then there exists a container packing with $O_\eps(1)$ containers of a subset $R' \subseteq R$ into the $N \times N$ \andy{knapsack} such that $\profit(R') \geq (2/3 - O(\eps)) \profit(R)$, if rotations are allowed.
\end{thm}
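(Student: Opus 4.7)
The argument splits into two cases depending on whether $R$ contains a massive item; the non-massive case is the main one and adapts the machinery of Section~\ref{sec:cardRot} to the weighted setting, while the massive case is handled by reserving a singleton container for the massive item and packing the remainder into an L-strip.

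In the non-massive case, the plan is to produce two constituents of a container packing whose combined profit is at least $(2/3-O(\eps))p(R)$. First I would run the iterative grid/corridor/box decomposition of the shifting argumentation of Section~\ref{sec:weighted} on $R$, absorbing killed items into a loss of at most $O(\eps)p(R)$, so that the remaining items are partitioned into thin items $T$ and fat items $F$ with $p(T)+p(F)\ge(1-O(\eps))p(R)$. By Lemma~\ref{lem:boxProperties} the total height of the horizontal portion of $T$ (and, analogously, the total width of the vertical portion) is at most $\eps_{ring}N \ll \eps^{1/(2\eps)+1}N$; rotating the horizontal thin items so that all of $T$ becomes vertical, $T$ packs into $O_\eps(1)$ vertical containers inside the narrow strip $[(1-\eps^{1/(2\eps)+1})N,N]\times[0,N]$. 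Then I would apply a \emph{weighted} analogue of the Resource Contraction Lemma~\ref{uwrescontr1} to $F$, yielding $F'\subseteq F$ with $p(F')\ge(2/3-O(\eps))p(F)$ that fits inside $[0,(1-\eps^{1/(2\eps)+1})N]\times[0,N]$; Lemma~\ref{lem:containerPack} (with a slight resource-augmentation buffer in the leftover slack) then converts this geometric packing into an $O_\eps(1)$-container packing. Combining both packings into the full knapsack yields profit at least $(1-O(\eps))p(T)+(2/3-O(\eps))p(F)\ge(2/3-O(\eps))p(R)$, using the trivial identity $p(T)+\tfrac{2}{3}p(F)\ge\tfrac{2}{3}(p(T)+p(F))$.

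For the massive case, let $m\in R$ be the (unique) massive item. If $p(m)\ge\tfrac{2}{3}p(R)$ then the singleton container $\{m\}$ already proves the claim. Otherwise I would keep $m$ as a singleton container placed in the bottom-left corner; every other item of $R$ has at least one side of length at most $\eps N$ and fits inside the L-shaped residual region of thickness $\eps N$. I would partition these residual items by the orientation of their short side (using rotations to pick the more convenient one), and pack each class into $O_\eps(1)$ containers along the corresponding leg of the L using the one-sided resource augmentation PTAS of~\cite{jansen2007new} together with the L-packing PTAS of Theorem~\ref{thm:main:Lpacking}, losing only a factor $(1-O(\eps))$ in profit; together with $\{m\}$ this yields a container packing of profit at least $(1-O(\eps))p(R)\ge(2/3-O(\eps))p(R)$.

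The main obstacle is the weighted Resource Contraction Lemma. Lemma~\ref{lem:medWeight} transfers immediately, since its pigeonhole argument applies equally to profit and to cardinality; Lemma~\ref{lem:stripHalf} is purely geometric and unchanged. The permutation argument of Case~A of Section~\ref{sec:cardRot} is oblivious to the weighting of items and extends directly. The delicate step is Case~B, where Steinberg's Theorem~\ref{thm:steinberg} is used to pack items inside $[0,(1-\eps_s)N]\times[0,N]$: in the cardinality proof items are sorted by non-decreasing area to guarantee packing a large fraction of them, but for profit I would instead sort items by non-increasing profit density $p_i/a_i$ and pack a prefix of total area just below $(1-\eps_s)N^2/2$. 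By concavity of the resulting ``profit vs.\ packed area'' curve, the profit fraction captured is at least the packed-area fraction of $a(X\cup Y)$, which is $1-O(\eps_s)$; combining with the alternative packing as in Case~B and averaging then delivers the required $(2/3-O(\eps))$ profit bound.
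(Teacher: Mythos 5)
Your high-level frame (split on the existence of a massive item; pack thin items into a narrow strip and the rest into a contracted knapsack) matches the paper's, but both of your key steps have genuine gaps.

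\textbf{Massive case.} You claim that after reserving a container for $m$, all of $R\setminus\{m\}$ can be packed with profit $(1-O(\eps))p(R\setminus\{m\})$ into the residual L-shaped region of thickness $\eps N$. This is not justified and is too strong. The area of that region is $(2\eps-\eps^2)N^2$, which exactly equals the upper bound on $a(R\setminus\{m\})$, so there is no slack for the resource-augmentation PTAS; the L-packing PTAS of Theorem~\ref{thm:main:Lpacking} does not apply because the residual items need not have a side longer than $N/2$; and in the original packing these items sit in a \emph{ring} around $m$, not an L, so converting to an L costs profit. The paper's Lemma~\ref{lem:massiveitem} instead deletes the items crossing two low-profit strips, defines the sets $M_H$, $M_V$, $H$, $V$, observes each surviving item lies in at least two of the three packable sets $R\setminus\{m\}$, $M_H\cup V\cup\{m\}$, $M_V\cup H\cup\{m\}$, and only concludes a $2/3-O(\eps)$ fraction. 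You would need an argument of this kind; the $(1-O(\eps))$ bound you assert is what the case analysis is there to avoid.

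\textbf{Non-massive case.} The weighted analogue of Lemma~\ref{uwrescontr1} does not follow from the modifications you list. You correctly identify that Lemma~\ref{lem:medWeight} and the Steinberg step adapt via profit-density orderings, but you omit the very first step of the cardinality proof: discarding the up to $1/\epss^{2}$ items with \emph{both} sides larger than $\epss N$ (the passage from $M$ to $M_2$). That step is purely a counting argument using $|M|\ge 1/\epss^{3}$ and fails under weights — those constantly many non-massive ``large'' items can carry almost all of the profit, and the entire subsequent case analysis (Lemma~\ref{lem:stripHalf}, Case~A's strip argument, Lemma~\ref{lem:Stein}) relies on every item having one side at most $\epss N$. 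This is exactly the obstruction the paper flags at the start of Section~\ref{sec:weightedRot}, and it is why the paper replaces the cardinality lemma by the structurally different Lemma~\ref{lem:weightResContract}: it partitions $M$ into two sets that \emph{both} pack into the contracted bin, yielding only a $1/2$ fraction, and recovers $2/3$ by taking the maximum with a second candidate packing that drops $T$ entirely (the $\tfrac13$–$\tfrac23$ averaging of the two bounds). Without a substitute for the large-item removal, your claimed $2/3$-profit weighted resource contraction is unproven, and with it the whole non-massive case.
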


We start by analyzing the case of a set $R$ that has a massive item.

\begin{lem}\label{lem:massiveitem}
Suppose that a set $R$ of items can be packed into a $N \times N$ bin and there is a massive item $m \in R$. Then, there is a container packing with at most $O_\eps(1)$ containers for a subset $R' \subseteq R$ such that $\profit(R') \geq \left(\dfrac{2}{3} - O(\eps) \right) \profit(R)$. 
\end{lem}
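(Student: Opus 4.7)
The plan is to pick the best of three candidate container packings and argue that their pointwise maximum is always at least $(2/3-O(\eps))\profit(R)$. Without loss of generality, push $m$ to the bottom-left corner of the knapsack; then $R\setminus\{m\}$ lies in an L-shaped region formed by a horizontal top arm of size $N\times(N-\height(m))$ and a vertical right arm of size $(N-\width(m))\times N$, both of thickness at most $\eps N$, and every item in $R\setminus\{m\}$ has at least one side of length at most $\eps N$. Partition $R\setminus\{m\}$ into $R_H$ (strictly in the horizontal arm), $R_V$ (strictly in the vertical arm), and $R_C$ (straddling the corner), and by symmetry assume $\profit(R_H\cup R_C)\ge\profit(R\setminus\{m\})/2$.

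The three candidates I would take are: (A) $\{m\}$ alone, as one trivial container of profit $\profit(m)$; (B) a large subset of $R\setminus\{m\}$ packed without $m$, obtained by rotating every item so that its shorter side becomes its height (each item then has height at most $\eps N$), packing the resulting items (of total area at most $2\eps N^{2}$) via NFDH in a strip of height $O(\eps N)$ inside the knapsack, and applying Lemma~\ref{lem:augmentPack} to that strip to obtain an $O_\eps(1)$-container packing of profit $(1-O(\eps))\profit(R\setminus\{m\})$; and (C) $m$ placed as one container at the bottom-left together with (nearly all of) $R_H\cup R_C$ packed inside the horizontal arm, giving profit $\profit(m)+(1-O(\eps))\profit(R_H\cup R_C)\geq\profit(m)+(1/2-O(\eps))\profit(R\setminus\{m\})$. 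For the arm packing in (C) I would first use Lemma~\ref{lem:item-classification} with suitably small $\epsl$ and $\epss$ to drop intermediate items at $O(\eps)$ loss, classify the rest as \emph{wide} (width $>\epsl N$, height $\le\epss N$) or \emph{small} (both sides $\le\epss N$), and then apply a random-strip shifting argument inside the arm (along the lines of the one in Lemma~\ref{lem:containerPack}) to shrink the chosen items into a sub-box of size $N\times(1-\eps)(N-\height(m))$ at $O(\eps)$ further loss; Lemma~\ref{lem:augmentPack} with augmentation factor $\eps/(1-\eps)$ then produces a container packing that fits exactly inside the full arm, and the small items are absorbed into $O_\eps(1)$ area containers in the leftover space.

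Letting $\mu:=\profit(m)/\profit(R)\in[0,1]$, the three candidates give profits (as fractions of $\profit(R)$) of $\mu$, $(1-O(\eps))(1-\mu)$, and $1/2+\mu/2-O(\eps)$ respectively. The pointwise maximum of these three linear functions of $\mu$ is minimized at $\mu=1/3$, where (B) and (C) both evaluate to $2/3-O(\eps)$; combined with the $O_\eps(1)$ container bound from each candidate this proves the lemma. The step I expect to be the main technical difficulty is the shifting inside the thin arm in Candidate (C): since the arm itself has height at most $\eps N$, items in it can have heights comparable to the whole arm height, so a naive random-strip averaging does not immediately guarantee $O(\eps)$ loss. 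The resolution is to fix $\epss$ small enough (using the flexibility in choosing the gap function $f$ in Lemma~\ref{lem:item-classification}) relative to $\eps\cdot(N-\height(m))/N$; if this is impossible because $N-\height(m)$ is itself tiny, then one falls back on a separate case, either by swapping to the vertical arm (by the symmetric choice $\profit(R_V\cup R_C)\ge\profit(R\setminus\{m\})/2$) or by observing that in that parameter regime $\profit(R_H\cup R_C)$ is small enough that Candidates (A) and (B) alone already give $(2/3-O(\eps))\profit(R)$.
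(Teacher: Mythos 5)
Your overall architecture matches the paper's: take the best of a packing of $R\setminus\{m\}$ alone and a packing of $m$ together with (roughly) the better half of $R\setminus\{m\}$, and the arithmetic $\max\{1-\mu,\ \tfrac12+\tfrac{\mu}{2}\}\ge \tfrac23$ is exactly the paper's counting (each item of $\overline{R}$ lies in at least two of three candidate sets). Candidate (B) is sound and is essentially the paper's first packing (the paper uses Steinberg on the area bound $a(R\setminus\{m\})\le \frac{N^2}{2(1+\eps)}$ instead of NFDH, but both routes reach Lemma~\ref{lem:structural_lemma_augm}). The "WLOG $m$ is in the corner" is not literally legitimate, since you cannot slide $m$ past the items packed on its other sides, but this is cosmetic: defining the arms as "completely above or below $m$" and "completely to the left or right of $m$" (the paper's $H$ and $V$) gives the same two-way split.

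The genuine gap is in Candidate (C): you never establish that the items in the thin arm can be turned into an $O_\eps(1)$-container packing, and you have correctly located why your own plan fails — the arm has thickness $N-\height(m)$, which is \emph{not} bounded below by any constant fraction of $N$, so no constant choice of $\epss$ (which must depend only on $\eps$) can make all arm items "small" relative to the arm, and a random strip in the thin direction can destroy $\Omega(1)$ of the profit. Neither fallback closes this: swapping to the vertical arm forfeits the factor $\tfrac12$ you fixed by your WLOG, and a thin arm can still carry arbitrary profit, so there is no regime in which $\profit(R_H\cup R_C)$ is automatically small. The paper's resolution shifts in the \emph{long} direction instead: it chooses, among $1/(3\eps)$ vertical strips of width $\ge 2\eps N$ lying over $m$'s footprint, one strip $S_i$ such that the items with an edge inside it have profit $O(\eps)\profit(R)$ and discards those. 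Every surviving arm item then either crosses $S_i$ entirely — and all such items meet a common vertical line while avoiding $m$, so their total height is at most $N-\height(m)$ and they fit into a \emph{single} horizontal container of size $N\times(N-\height(m))$ — or avoids $S_i$, in which case deleting the strip compresses these items to width $(1-2\eps)N$, and Lemma~\ref{lem:structural_lemma_augm} applied with augmentation along the width (where $(1-2\eps)(1+\eps)N<(1-\eps)N\le\width(m)$) yields $O_\eps(1)$ containers fitting above/below $m$. Without this spanning-versus-nonspanning decomposition (or an equivalent device), Candidate (C) is not a container packing and the proof does not go through.
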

\begin{proof}
Assume, without loss of generality, that $1/(3\eps)$ is an integer. Consider the items in $R \setminus \{m\}$. Clearly, each of them has width or height at most $\eps$; moreover, $a(R \setminus \{m\}) \leq (1 - (1 - \eps)^2) N^2 = (2\eps - \eps^2)N^2 \leq \frac{N^2}{2(1+\eps)}$, as $\eps < 1/6$; thus, by possibly rotating each element so that the height is smaller than $\eps$, by Theorem~\ref{thm:steinberg} all the items in $R \setminus \{m\}$ can be packed in a $N \times \frac{N}{1 + \eps}$ bin; then, by Lemma~\ref{lem:structural_lemma_augm}, there is a container packing for a subset of $R \setminus \{m\}$ with $O_\eps(1)$ containers that fits in the $N \times N$ bin and has profit at least $(1 - O(\eps))\profit(R\setminus\{m\})$

Consider now the packing of $R$. Clearly, the region $[\eps N, (1 - \eps)N]^2$ is entirely contained within the boundaries of the massive item $m$. Partition the region with $x$-coordinate between $\eps N$ and $(1 - \eps)N$ in $k = 1/(3\eps)$ strips of width $3\eps(1 - 2\eps)N \geq 2\eps N$ and height $N$, let them be $S_1, \dots, S_{k}$; let $R(S_i)$ be the set of items in $R$ such that their left or right edge (or both) are contained in the interior of strip $S_i$. Since each item belongs to at most two of these sets, there exists $i$ such that $\profit(R(S_i)) \leq 6\eps \profit(R)$.\\
Symmetrically, we define $k$ horizontal strips $T_1, \dots, T_k$, obtaining an index $j$ such that $\profit(R(T_j)) \leq 6\eps \profit(R)$. Thus, no item in $\overline{R} := R \setminus (R(S_i) \cup R(T_j))$ has a side contained in the interior of $S_i$ or $T_j$, and $\profit(\overline{R}) \geq (1 - 12\eps)\profit(R)$. Let $M_V$ be the set of items in $\overline{R} \setminus \{m\}$ that overlap $T_j$, and let $M_H$ be the set of items in $\overline{R} \setminus \{m\}$ that overlap $S_i$. Clearly, the items in $M_H$ can be packed in a horizontal container with width $N$ and height $N - \height(m)$, and the items in $M_V$ can be packed in a vertical container of width $N - \width(m)$ and height $N$.

Let $H$ be the set of items of $\overline{R} \setminus M_H$ that are completely above the massive item $m$ or completely below it; symmetrically, let $V$ be the set of items of $\overline{R} \setminus M_V$ that are completely to the left or completely to the right of $m$. We will now show that there is a container packing for $M_H \cup V \cup \{m\}$. Since all the elements overlapping \ari{$T_j$} have been removed, $V$ can be packed in a bin of size $(N - \width(m)) \times (1 - 2\eps)N$ (see Figure~\ref{fig:massive_item}). Since $(1 - 2\eps)N \cdot (1 + \eps) < (1 - \eps)N \leq \height(m)$, Lemma~\ref{lem:structural_lemma_augm} implies that there is a container packing of a subset of $V$ with profit at least $(1 - O(\eps))\profit(V)$ in a bin of size $(N - \width(m)) \times \height(m)$ and using $O_\eps(1)$ containers; thus, by adding a horizontal container of the same size as $m$ and a horizontal container of size $N \times (N - \height(m))$, we obtain a container packing for $M_H \cup V \cup \{m\}$ with $O_\eps(1)$ containers and profit at least $(1 - O(\eps))\profit(M_H \cup V \cup \{m\})$. Symmetrically, there is a container packing for a subset of $M_V \cup H \cup \{m\}$ with profit at least $(1 - O(\eps))\profit(M_V \cup H \cup \{m\})$ and $O_\eps(1)$ containers.

Let $R_{MAX}$ be the set of maximum profit among the sets $R\setminus\{m\}$, $M_H \cup V \cup \{m\}$ and $M_V \cup H \cup \{m\}$. By the discussion above, there is a container packing for $R' \subseteq R_{MAX}$ with $O_\eps(1)$ containers and profit at least $(1 - O(\eps)) \profit(R_{MAX})$. Since each element in $\overline{R}$ is contained in at least two of the above three sets, it follows that:

\begin{align*}
\profit(R') & \geq (1 - O(\eps))\profit(R_{MAX}) \geq \left(1 - O(\eps)\right)\left(\frac{2}{3} \profit(\overline{R})\right)\\
      & \geq \left(\frac{2}{3} - O(\eps)\right) \profit(R)
\end{align*}
\end{proof}

\begin{figure}
	\centering
	\begin{subfigure}[b]{.44\textwidth}
	\resizebox{!}{180pt}{
	\begin{tikzpicture}
	
	
	\draw (0,0) rectangle (10,10);
	
	
	\fill[color=gray] (1,1) rectangle (9.25,9.25);
	\draw (1,1) rectangle (9.25,9.25);
	\draw (5.125,5.125) node {\textbf{\huge $m$}};
	
	
	\fill[color=lightgray, pattern=north east lines] (3.5,0.2) rectangle (6.5,0.4);
	\draw (3.5,0.2) rectangle (6.5,0.4);
	
	\fill[color=lightgray, pattern=north east lines] (3,0.4) rectangle (7.5,0.8);
	\draw (3,0.4) rectangle (7.5,0.8);
	
	\fill[color=lightgray, pattern=north east lines] (2.5,0.8) rectangle (6,1);
	\draw (2.5,0.8) rectangle (6,1);
	
	\fill[color=lightgray, pattern=north east lines] (3.5,9.4) rectangle (8,9.6);
	\draw (3.5,9.4) rectangle (8,9.6);
	
	\fill[color=lightgray, pattern=north east lines] (3,9.6) rectangle (6.7,9.8);
	\draw (3,9.6) rectangle (6.7,9.8);
	
	
	\fill[color=lightgray, pattern=north west lines] (0,3.5) rectangle (0.3,7);
	\draw (0,3.5) rectangle (0.3,7);
	
	\fill[color=lightgray, pattern=north west lines] (0.5,4) rectangle (0.8,7.2);
	\draw (0.5,4) rectangle (0.8,7.2);
	
	\fill[color=lightgray, pattern=north west lines] (9.3,2.5) rectangle (9.45,6.5);
	\draw (9.3,2.5) rectangle (9.45,6.5);
	
	\fill[color=lightgray, pattern=north west lines] (9.5,4) rectangle (9.8,7.5);
	\draw (9.5,4) rectangle (9.8,7.5);
	
	
	\fill[color=lightgray] (0.2,2.5) rectangle (0.5,0.5);
	\draw (0.2,2.5) rectangle (0.5,0.5);
	
	\fill[color=lightgray] (0.7,0.3) rectangle (1,0.9);
	\draw (0.7,0.3) rectangle (1,0.9);
	
	\fill[color=lightgray] (0.2,7.5) rectangle (0.6,9.7);
	\draw (0.2,7.5) rectangle (0.6,9.7);
	
	\fill[color=lightgray] (9.5,0.5) rectangle (9.8,3);
	\draw (9.5,0.5) rectangle (9.8,3);
	
	
	\draw[dashed, ultra thick] (4,0) -- (4,1);
	\draw[dashed, ultra thick] (5.5,0) -- (5.5,1);
	
	\draw[dashed, ultra thick] (4,9.25) -- (4,10);
	\draw[dashed, ultra thick] (5.5,9.25) -- (5.5,10);
	
	\draw[dashed, ultra thick] (0,4.5) -- (1,4.5);
	\draw[dashed, ultra thick] (0,6) -- (1,6);
	
	\draw[dashed, ultra thick] (9.25,4.5) -- (10,4.5);
	\draw[dashed, ultra thick] (9.25,6) -- (10,6);
	
	\draw[dashed, ultra thick] (1,-1) -- (1, 11);
	\draw[dashed, ultra thick] (9.25,-1) -- (9.25, 11);
	
	\draw[dashed, ultra thick] (-1,1) -- (11,1);
	\draw[dashed, ultra thick] (-1,9.25) -- (11,9.25);
	
	
	\draw (1,-0.75) node[anchor=east] {\Large{$V$}};
	
	\draw (9.25,-0.75) node[anchor=west] {\Large{$V$}};
	
	\draw (-0.75,1) node[anchor=north] {\Large{$H$}};
	
	\draw (-0.75,9.25) node[anchor=south] {\Large{$H$}};
	
	\draw (0,5.25) node[anchor=east] {\Large{$M_V$}};
	
	\draw (4.75,0) node[anchor=north] {\Large{$M_H$}};
	
	\end{tikzpicture}}
	\caption{Massive item case. Items intersecting strips $M_H$ and $M_V$ (hatched rectangles) cross them completely.}\label{fig:stripes}
	\end{subfigure}
	\hspace{35pt}
		\begin{subfigure}[b]{.44\textwidth}
		\resizebox{!}{180pt}{
		\begin{tikzpicture}
		
			
		\draw[thick] (0,0) rectangle (10,10);
		
		
		\draw[fill=gray] (5,2.5) rectangle (8,5.5);
		\fill[pattern = vertical lines] (5.5,0.5) rectangle (7.5,2);	
		\draw (5.5,0.5) rectangle (7.5,2);	
		\fill[pattern = horizontal lines] (2,6) rectangle (4,9);
		\fill[pattern = north east lines] (2,6) rectangle (4,9);
		\draw (2,6) rectangle (4,9);
		\fill[pattern = north east lines] (2,3) rectangle (4,5);
		\draw (2,3) rectangle (4,5);
		\fill[pattern = north west lines] (8.5,1.5) rectangle (9.5,4.5);
		\draw (8.5,1.5) rectangle (9.5,4.5);
		\fill[pattern = north west lines] (8.2,6.5) rectangle (9.6,8.5);
		\fill[pattern = horizontal lines] (8.2,6.5) rectangle (9.6,8.5);
		\draw (8.2,6.5) rectangle (9.6,8.5);
		
		
		\draw[dashed] (5,-1) -- (5,11);
		\draw[dashed] (8,-1) -- (8,11);
		\draw[dashed] (-1,2.5) -- (11,2.5);
		\draw[dashed] (-1,5.5) -- (11,5.5);
		
		
		\draw (6.5,4) node {\huge $i$};
		\filldraw (5,2.5) circle (2.5pt);\Large
		\draw (5,2.5) node[anchor = north east] { $(x_i,y_i)$};
		\filldraw (8,5.5) circle (2.5pt);
		\draw (8,5.5) node[anchor = south west] {\Large $(x_i',y_i')$};
		\draw [decorate,decoration={brace,amplitude=7pt}] (5,10) -- (8,10); 
		\draw (6.5,10.3) node [anchor = south] {\Large $w_i$};
		\draw [decorate,decoration={brace,amplitude=7pt}] (10,5.5) -- (10,2.5); 
		\draw (10.3,4) node [anchor = west] {\Large $h_i$};
		\draw (5,-0.6) node[anchor = east] {\Large $Left(i)$};
		\draw (8,-0.6) node[anchor = west] {\Large $Right(i)$};
		\draw (-0.6,2.5) node[anchor = north] {\Large $Bottom(i)$};
		\draw (-0.6,5.5) node[anchor = south] {\Large $Top(i)$};
		
		\end{tikzpicture}}
		\caption{$Bottom(i), Top(i), Left(i), Right(i)$ are represented by vertical, horizontal, north east  and north west stripes respectively.}\label{fig:massive_item1}
		\end{subfigure}
		\caption{}\label{fig:massive_item}
\end{figure}

If there is no massive item, we will show existence of  two {\em container packings} 
 and show the maximum of them always packs items with total profit at least $\left(\frac23-O(\eps)\right)$ fraction of the optimal profit. 

First, we follow the corridor decomposition and the classification of items as in Section \ref{sec:weighted} to define
sets $LF, SF, LT, ST, OPT_{small}$.
Let $T:= LT \cup ST$  be the set of thin items.
Also let $APX$ be the best {\em container packing} and $OPT$ be the optimal solution.
Then similar to Lemma \ref{lem:weighted-apx}, we can show $\profit(\apx)\ge (1-\eps)(\profit(LF)+\profit(SF)+\profit(\optsm))$.
Thus, 
\begin{equation}
\label{rotweightpack1}
\profit(\apx)\ge (1-\eps)\profit(OPT)-\profit(T).
\end{equation}

In the second case, we define the set $T$ as above. 
Then in Resource Contraction Lemma (Lemma \ref{lem:weightResContract}), we will show that one can pack $1/2$ of the remaining profit  in the optimal solution, i.e., $\profit(OPT \setminus T)/2$ in a 
knapsack of size $N \times (1-\eps/2)N$.
Now, we can pack $T$ in a horizontal container of height $\eps/4$ and using Lemma \ref{lem:weightResContract} and resource augmentation we can pack $\profit(OPT \setminus T)/2$ in the remaining space $N \times (1-\eps/4)N$.
Thus,
\begin{equation}
\label{rotweightpack1}
\profit(\apx)\ge \profit(T)+(1-\eps)(\profit(OPT)-\profit(T))/2.
\end{equation}
Hence, up to $(1-O(\eps))$ factor, we pack at least $\max\{(\profit(T)+\profit(OPT \setminus T)/2), \profit(OPT \setminus T)\} \ge 2/3 \cdot  \profit(OPT)$, thus proving Theorem~\ref{lem:structural_lemma_weighted}.

Note that using techniques similar to \andy{Appendix} \ref{sec:weighted}, we can get a PTAS for the best container packing.
Now to complete the proof of Theorem~\ref{thm:mainNoRotation}, it only remains to prove Lemma~\ref{lem:weightResContract}.
\begin{lem}(Resource Contraction Lemma)
\label{lem:weightResContract}
If a set of items $M$ contains no massive item and can be packed into a $N \times N$ bin, then it is possible to pack a set $M'$ of profit at least $\profit(M) \cdot \frac{1}{2}$ into a $N \times  (1-\frac{\eps}{2})N$ bin (or a $(1-\frac{\eps}{2})N \times N$ bin), if rotations are allowed.
\end{lem}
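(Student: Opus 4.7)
My plan is to mirror the structure of the cardinality-case Resource Contraction Lemma (Lemma~\ref{uwrescontr1}) but to track profits rather than counts. The non-massive hypothesis plays the role of the ``no item with both sides $>\epss N$'' reduction of the cardinality case: it guarantees that each item has at least one side of length at most $(1-\eps)N$, so in particular each item, after a suitable rotation, fits in either $N\times(1-\eps/2)N$ or $(1-\eps/2)N\times N$ individually. The target ratio $1/2$, weaker than the cardinality ratio $2/3$, will accommodate the integrality gap of the $1$D-knapsack-style selections that the weighted setting forces on us.

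The first step is a weighted shifting argument analogous to Lemma~\ref{lem:medWeight}: by averaging over the $\lceil 1/(2\eps)\rceil$ height ranges $((1-2\eps^i)N,(1-\eps^{i+1})N]$, some range contains items of total profit at most $\eps\cdot\profit(M)$; discarding them and setting $\eps_s:=\eps^{i^*}$ for the chosen index ensures every surviving item has height either at most $(1-2\eps_s)N$ or greater than $(1-\eps\cdot\eps_s)N$, and combined with non-massiveness every very-tall surviving item has width at most $(1-\eps)N$. I would then introduce the strips $S_{L,\eps_s},\ldots,S_{B,\eps_s}$ and the sets $V,H,X,Y,Z$ exactly as in the cardinality proof, and invoke the area argument of Lemma~\ref{lem:stripHalf} to assume WLOG that $a(H)\le(\tfrac{1}{2}+O(\eps_s))N^2$.

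Next I split into the same two cases as the cardinality proof. In Case~A, when $w(X)=\Omega(\eps\cdot\eps_s N)$, I would pack all of $Y\cup Z$ into $[0,N-w(X)]\times[0,N]$ via the rearrangement used there, rotate the items of $X$ to become short-wide, and select a subset fitting in a vertical strip of width $w(X)-O(\eps\cdot\eps_s N)$. Crucially the selection is by greedy profit-density rather than by width; the standard $1$D-knapsack trick (take the greedy prefix, or, if discarding the overflow item loses too much, pack that single item alone after rotating it to fit using non-massiveness) guarantees profit at least $(\tfrac{1}{2}-O(\eps))\profit(X)$ from $X$, and together with $Y\cup Z$ this gives total profit at least $(\tfrac{1}{2}-O(\eps))\profit(M)$ in the reduced knapsack. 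In Case~B, when $w(X)=O(\eps\cdot\eps_s N)$, I would produce two sub-packings: (a) all of $X\cup Z$ plus the more profitable half of $Y$, obtained by a top/bottom center-split (legitimate since items of $Y$ do not cross both horizontal strips); and (b) a Steinberg-based packing of a subset of $X\cup Y$ of profit at least $(1-O(\eps_s))\profit(X\cup Y)$ into $[0,(1-\eps\cdot\eps_s)N]\times[0,N]$, via a profit-weighted analog of Lemma~\ref{lem:Stein}. The standard averaging $\max(\alpha,\beta)\ge(\alpha+\beta)/2$ then shows that the better of (a) and (b) has profit at least $(\tfrac{1}{2}-O(\eps))\profit(M)$.

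The main obstacle is precisely the profit-weighted $1$D-knapsack selection in Case~A and inside the Steinberg packing of Case~B(b): greedy profit-density is only fractionally optimal and may lose one item of profit up to $\profit(M)$, so one must combine the greedy packing with the trivial ``pack only the single highest-profit item alone'' packing (which always fits in one of the two reduced knapsacks since the item is non-massive and can be rotated so that the shorter side is at most $(1-\eps)N\le(1-\eps/2)N$), and use an averaging argument to see that the better of the two achieves the $1/2$ guarantee. The weaker $1/2$ bound (compared to the $2/3$ bound in the cardinality case) arises exactly from this integrality-gap factor of $2$, and the rest of the case analysis transfers from the cardinality proof with only cosmetic changes.
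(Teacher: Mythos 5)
Your plan transplants the cardinality-case argument (the strips, the sets $X,Y,Z$, Steinberg's theorem, Cases A/B) and patches the counting steps with fractional-knapsack selections, but it breaks on items that are large in \emph{both} dimensions, and these are exactly the items the weighted setting forces you to keep. Non-massiveness only excludes items with both sides above $(1-\eps)N$; an item of size, say, $\frac{N}{2}\times 0.9N$ survives every reduction you propose. Such items invalidate three load-bearing steps that you import from the cardinality proof: (i) the area bound of Lemma~\ref{lem:stripHalf} relies on the fact that the up to four items covering a corner point such as $(\eps_s N,\eps_s N)$ each have one side at most $\epss N$, giving $a(V\cap H)\le 8\eps_s N^2$; a single fat corner item can push $\min\{a(V),a(H)\}$ well above $(\tfrac12+O(\eps_s))N^2$. (ii) Case~A hinges on every $i\in X$ having $\width(i)\le\epss N$, so that $X$, $C_{T}$ and $C_{B}$ can be rotated and stacked into thin vertical strips; a tall item of width $N/2$ belongs to $X$ but cannot be handled this way. (iii) Lemma~\ref{lem:Stein} explicitly requires $\width(i)\le\epss N$ for all items fed to Steinberg. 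Your fallback of packing the single most profitable offending item alone does not repair this, because many fat items may share the profit and packing one forfeits the others. A secondary issue: the lemma claims profit at least exactly $\profit(M)/2$, whereas every branch of your argument only yields $(\tfrac12-O(\eps))\profit(M)$.

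The paper's actual proof is structurally different and sidesteps all of this: it exhibits a \emph{partition} of $M$ into two sets $M_1$ and $M\setminus M_1$ and shows that \emph{each} of them, after rotating a few items from the boundary strips, fits into an $N\times(1-\tfrac{\eps}{2})N$ bin; the more profitable half then trivially has profit at least $\profit(M)/2$, with no $\eps$ loss. The case analysis is purely geometric (whether some item crosses from the bottom strip to the top strip, and which side strips it meets), uses non-massiveness only to bound the extent of the separating item, and needs neither Steinberg's theorem, nor area arguments, nor any profit-weighted selection. To salvage your route you would first need a weighted analogue of the ``discard items that are large in both dimensions'' reduction, which is precisely what is unavailable here.
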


\begin{proof}
Let $\eps_s=\eps/2$.
We will partition $M$ into two sets $M_1, M \setminus M_1$ and show that both these sets can be packed into $N\times (1-\eps_s)N$ bin.
If an item $i$ is embedded in position $(x_i, y_i)$, we define $x'_i:=x_i+\width(i), y'_i :=y_i+\height(i)$.

In a packing of a set of items $M$, for item $i$ we define 
$Left(i):=\{k \in M: x'_k \le x_i \}$, 
$Right(i):=\{k \in M: x_k \ge x'_i \}$, 
$Top(i):=\{k \in M: y_k \ge y'_i \}$, 
$Bottom(i):=\{k \in M: y'_k \le y_i \}$, 
i.e., the set of items that lie completely on left, right, top and bottom of $i$ respectively.
Now consider four strips $S_{T,3\eps_s}, S_{B,\eps_s}, S_{L,\eps_s} ,S_{R,\eps_s}$ (see Figure~\ref{fig:resource_contraction}).\\

\begin{figure}
	\captionsetup[subfigure]{justification=centering}
	\begin{subfigure}[b]{.35\textwidth}
	\resizebox{!}{135pt}{
	\begin{tikzpicture}
		
		
		\draw[thick] (0,0) rectangle (8,8);
		
		
		\fill[color=lightgray] (4,0.6)  rectangle (5,2.5);
		\draw (4,0.6)  rectangle (5,2.5);
		\fill[color=lightgray] (5,3.5)  rectangle (5.7,6.2);
		\draw (5,3.5)  rectangle (5.7,6.2);
		\fill[color=lightgray] (0.5,6.5)  rectangle (5,7.2);
		\draw (0.5,6.5)  rectangle (5,7.2);
		\fill[color=lightgray] (1.5,2)  rectangle (2.4,4);
		\draw (1.5,2)  rectangle (2.4,4);
		
		
		\draw[dashed] (-1,0.8) -- (9,0.8);
		\draw[dashed] (-1,5.6) -- (9,5.6);
		\draw[dashed] (0.8,-1) -- (0.8,9);
		\draw[dashed] (7.2,-1) -- (7.2,9);
		
		
		\draw [decorate,decoration={brace,amplitude=6pt}] (8,8) -- (8,5.6); 
		\draw (8.2,6.8) node [anchor = west] {\Large $3\eps_s N$};
		\draw [decorate,decoration={brace,amplitude=6pt}] (8,0.8) -- (8,0); 
		\draw (8.2,0.4) node [anchor = west] {\Large $\eps_s N$};
		\draw [decorate,decoration={brace,amplitude=6pt}] (7.2,8) -- (8,8); 
		\draw (7.7,8.2) node [anchor = south] {\Large $\eps_s N$};
		\draw [decorate,decoration={brace,amplitude=6pt}] (0,8) -- (0.8,8); 
		\draw (0.3,8.2) node [anchor = south] {\Large $\eps_s N$};
		
		\draw (-0.6,0.7) node[anchor=north] {\Large $S_{B,\eps_s}$};
		\draw (0.8,-0.6) node[anchor=east] {\Large $S_{L,\eps_s}$};
		\draw (7.2,-0.6) node[anchor=west] {\Large $S_{R,\eps_s}$};
		\draw (-0.6,5.6) node[anchor=south] {\Large $S_{T,3\eps_s}$};
		\end{tikzpicture}}
		\caption{Case 1: \\$D_{B,\eps_s} \cap D_{T,3\eps_s}=\emptyset$.}
	\end{subfigure}
	\begin{subfigure}[b]{.3\textwidth}
	\resizebox{!}{135pt}{
	\begin{tikzpicture}
		
		
		\draw[thick] (0,0) rectangle (8,8);
		
		
		\draw[fill=gray] (3,0.5) rectangle (4,6.2);
		\draw (3.5,3.35) node {\huge $i$};
		
		
		\draw[dashed] (-0.5,0.8) -- (8.5,0.8);
		\draw[dashed] (-0.5,5.6) -- (8.5,5.6);
		\draw[dashed] (0.8,-1) -- (0.8,9);
		\draw[dashed] (7.2,-1) -- (7.2,9);
		
		\end{tikzpicture}}
		\caption{Case 2A: $i$ does not intersect $S_{L, \eps_s}$ or $S_{R, \eps_s}$.}
	\end{subfigure}
		\begin{subfigure}[b]{.3\textwidth}
		\resizebox{!}{135pt}{
			\begin{tikzpicture}
					
					
				\draw[thick] (0,0) rectangle (8,8);
					
					
				\draw[fill=gray] (0.6,0.6) rectangle (7.3,6);
				\draw (3.95,3.3) node {\huge $i$};
				
				
				\draw[dashed] (-0.5,0.8) -- (8.5,0.8);
				\draw[dashed] (-0.5,5.6) -- (8.5,5.6);
				\draw[dashed] (0.8,-1) -- (0.8,9);
				\draw[dashed] (7.2,-1) -- (7.2,9);
					
			\end{tikzpicture}}
			\caption{Case 2B: $i$ intersects both $S_{L, \eps_s}$ and $S_{R, \eps_s}$.}
		\end{subfigure}
	
	\vspace{3pt}
	\hspace{6.9pt}
	\begin{subfigure}[b]{.3\textwidth}
		\resizebox{!}{121.75pt}{
			\begin{tikzpicture}
						
						
				\draw[thick] (0,0) rectangle (8,8);
						
						
				\draw[fill=gray] (0.6,0.4) rectangle (2.4,6.2);
				\draw (1.5,3.3) node {\huge $i$};
						
						
				\draw[dashed] (-0.5,0.8) -- (8.5,0.8);
				\draw[dashed] (-0.5,5.6) -- (8.5,5.6);
				\draw[dashed] (0.8,-0.5) -- (0.8,8.5);
				\draw[dashed] (7.2,-0.5) -- (7.2,8.5);
				\draw[ultra thick] (4,0) -- (4,8);
						
				\end{tikzpicture}}
				\caption{Case 2C: $i \in D_{L,\eps_s}$ and $x_i'\le N/2$.}
		\end{subfigure}	
		\hspace{7.5pt}
	\begin{subfigure}[b]{.3\textwidth}
	\resizebox{!}{121.75pt}{
	\begin{tikzpicture}
				
				
		\draw[thick] (0,0) rectangle (8,8);
				
				
		\draw[fill=gray] (0.5,0.3) rectangle (5.1,7);
		\draw (2.8,3.65) node {\huge $i$};
				
				
		\draw[dashed] (-0.5,0.8) -- (8.5,0.8);
		\draw[dashed] (-0.5,5.6) -- (8.5,5.6);
		\draw[dashed] (0.8,-0.5) -- (0.8,8.5);
		\draw[dashed] (7.2,-0.5) -- (7.2,8.5);
		\draw[ultra thick] (4,0) -- (4,8);
				
		\end{tikzpicture}}
		\caption{Case 2C: $i \in D_{L,\eps_s}$ and $x_i' > N/2$.}
	\end{subfigure}
	\begin{subfigure}[b]{.3\textwidth}
		\resizebox{!}{121.75pt}{
\begin{tikzpicture}
	
	
	\draw[thick] (0,0) rectangle (8,8);
	
	
	\draw[fill=gray] (0.1,0.5) rectangle (0.6,6.3);
	\draw (0.35,3.4) node {\huge $i$};
	
	
	\draw[dashed] (-0.5,0.8) -- (8.5,0.8);
	\draw[dashed] (-0.5,5.6) -- (8.5,5.6);
	\draw[dashed] (0.8,-0.5) -- (0.8,8.5);
	\draw[dashed] (7.2,-0.5) -- (7.2,8.5);
	
	\end{tikzpicture}}
	\caption{Case 2C: \\ $i \in C_{L,\eps_s}$.}
	\end{subfigure}
	\caption{Cases for Resource Contraction Lemma (Lemma \ref{lem:weightResContract}).}\label{fig:resource_contraction}
\end{figure}
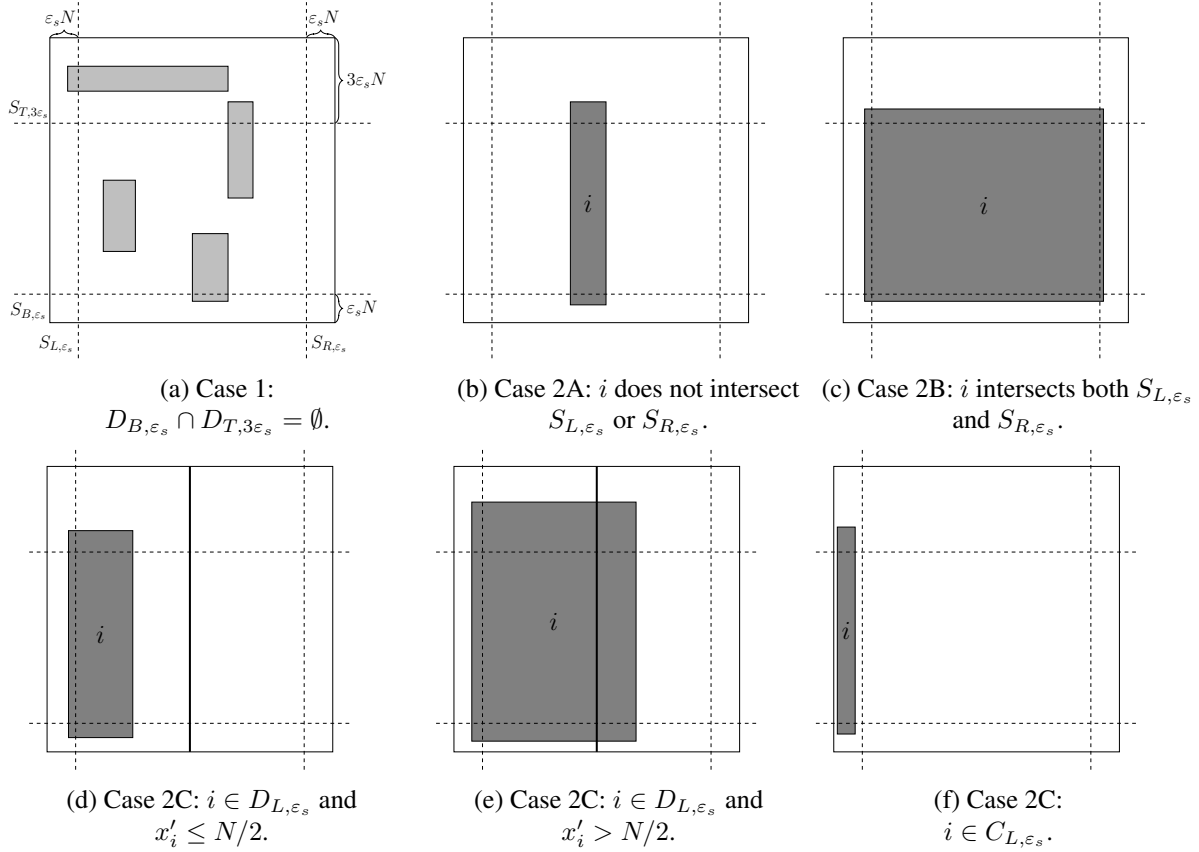

\noindent \textbf{Case 1.} $D_{B,\eps_s} \cap D_{T,3\eps_s}=\andy{\emptyset}$, i.e., no item intersecting $S_{B,\eps_s}$ intersects $S_{T,3\eps_s}$.
Define $M_1:=E_{T,3\eps_s}$. As these items in $M_1$ do not intersect $S_{B,\eps_s}$,
$M_1$ can be packed into a $(N, N(1-\eps_s))$ bin.
For the remaining items, pack $M \setminus (M_1 \cup C_{L, \eps_s} \cup C_{R, \eps_s})$ as it is. 
Now rotate $C_{L, \eps_s}$ and $C_{R, \eps_s}$ and pack on top of $M \setminus (M_1 \cup C_{L, \eps_s} \cup C_{R, \eps_s})$ into two strips of height $\eps_s N$ and width $N$. This packing will have total height $\le (1-3\eps_s+2\eps_s)N \le (1-\eps_s)N$.\\

\noindent \textbf{Case 2.} $D_{B,\eps_s} \cap D_{T,3\eps_s} \neq \andy{\emptyset}$, i.e., there is some item  intersecting $S_{B,\eps_s}$ that also crosses $S_{T,3\eps_s}$. Now, there are three subcases:
\\ \textbf{Case 2A.} \textit{There exists an item $i$ that does neither intersect $S_{L, \eps_s}$ nor $S_{R, \eps_s}$}.
Then item $i$ partitions the items in $M \setminus ( C_{T, 3\eps_s} \cup C_{B, \eps_s} \cup \{ i \} )$
into two sets: $Left(i)$ and $Right(i)$.
W.l.o.g., assume $x_i \le 1/2$. Then remove $Right(i), i, C_{T, 3 \eps_s}$ and $C_{B, \eps_s} $ from the packing.
Now rotate $C_{T, 3 \eps_s}$ and $C_{B, \eps_s}$ to pack right of $Left(i)$.
Define this set $M \setminus (Right(i)\cup \{ i \})$ to be $M_1$. Clearly packing of $M_1$ takes height $N$ and width $x_i+4\eps_s N \le (\frac12+4\eps_s)N \le (1-\eps_s)N$ as $\eps_s \le  \frac{1}{10}$.
As the item $i$ does not intersect the strip $S_{L, \eps_s}$, $(Right(i)\cup \{i\})$ can be packed into height $N$ and width $(1-\eps_s)N$.
\\ \textbf{Case 2B.} \textit{There exists an item $i$ that intersects both $S_{L, \eps_s}$ and $S_{R, \eps_s}$}. 
Consider $M_1$ to be $M \setminus (C_{L, \eps_s} \cup C_{R, \eps_s}  \cup Top(i) )$.
 As there is no massive item, $M_1$ is packed in height $(1-\eps_s)N$ and width $N$.
 Now, pack $Top(i)$ and then  rotate $C_{L, \eps_s}$ and $C_{R, \eps_s}$  to pack on top of it.
 These items can be packed into height $(1-y'_i+2\eps_s)N \le 5\eps_s N \le (1-\eps_s)N$ as $\eps_s \le 1/10$.
\\ \textbf{Case 2C.} \textit{If an item $i$ intersects both $S_{B, \eps_s}$ and $S_{T, 3 \eps_s}$,  then the item $i$ intersects exactly one of $S_{L, \eps_s}$ and $S_{R, \eps_s}$.} 
Consider the set of items in $D_{B, \eps_s} \cap D_{T, 3 \eps_s}$.\\
First, consider the case when the set $D_{B, \eps_s} \cap D_{T, 3 \eps_s}$ contains an item $i \in D_{L, \eps_s}$ (similarly one can consider $i \in D_{R, \eps_s}$). Now if $x'_i \le N/2$, take $M_1:=Right(i)$.
Then, we can rotate $Right(i)$ and pack into height $(1-\eps_s)N$ and width $N$. On the other hand, 
pack $M \setminus \{  M_1 \cup C_{T,3\eps_s} \cup C_{B,\eps_s} \}$ 
as it is. Then rotate $C_{T,3\eps_s} \cup C_{B,\eps_s}$ and pack on its side. Total width $\le (1/2+3\eps_s+\eps_s) N \le (1-\eps_s)N$ as $\eps_s \le 1/6$. Otherwise if $x'_i > N/2$ take $M_1:=Left(i)\cup i$.
Now, consider packing of $M \setminus \{  M_1 \cup C_{T,3\eps_s} \cup C_{B,\eps_s} \}$, rotate $C_{T,3\eps_s} \cup C_{B,\eps_s}$ and pack on its left. Total width $\le (1/2+4\eps_s)N \le (1-\eps_s)N$ as $\eps_s \le 1/10$. 
\arir{We need N/2 as the item i can have large width.}
\\Otherwise, no items in $S_{B, \eps_s} \cap S_{T, 3 \eps_s}$ are in $D_{L, \eps_s} \cup D_{R, \eps_s}$.
So let us assume that $i \in C_{L, \eps_s}$ (similarly one can consider $i \in C_{R, \eps_s}$), then  we take $M_1=E_{T,3\eps_s} \setminus (C_{L,\eps_s} \cup C_{R,\eps_s})$. Then we can rotate $C_{L, \eps_s}$ and $C_{R, \eps_s}$ and pack them on top of $M \setminus (M_1 \cup C_{L, \eps_s} \cup C_{R, \eps_s})$ as in Case 1. 
\end{proof}



\section{Some Tools}

In this section, we review some standard building blocks that we rely on in our construction.

\subsection{Next Fit Decreasing Height}

One of the most recurring tools used as a subroutine in countless results on geometric packing problems is the Next Fit Decreasing Height (NFDH) algorithm, which was originally analyzed in \cite{coffman1980performance} in the context of Strip Packing. We will use a variant of this algorithm to pack items inside a box, and analyze its properties. We provide a full proof for the sake of self-completeness.

Suppose you are given a box $C$ of size $w\times h$, and a set of items $I'$ each one fitting in the box (without rotations). NFDH computes in polynomial time a packing (without rotations) of $I''\subseteq I'$ as follows. It sorts the items $i\in I'$ in non-increasing order of height $h_i$, and considers items in that order $i_1,\ldots,i_n$. Then the algorithm works in rounds $j\geq 1$. At the beginning of round $j$ it is given an index $n(j)$ and a horizontal segment $L(j)$ going from the left to the right side of $C$. Initially $n(1)=1$ and $L(1)$ is the bottom side of $C$. In round $j$ the algorithm packs a maximal set of items $i_{n(j)},\ldots,i_{n(j+1)-1}$, with bottom side touching $L(j)$ one next to the other from left to right (a \emph{shelf}). The segment $L(j+1)$ is the horizontal segment containing the top side of $i_{n(j)}$ and ranging from the left to the right side of $C$. The process halts at round $r$ when either all items have being packed or $i_{n(r+1)}$ does not fit above $i_{n(r)}$. 

We prove the following:
\begin{lem}\label{lem:nfdhPack}
	Assume that, for some given parameter $\eps\in (0,1)$, for each $i\in I'$ one has $\width(i) \leq \eps w$ and $\height(i) \leq \eps h$. Then NFDH is able to pack in $C$ a subset $I''\subseteq I'$ of area at least $a(I'')\geq \min\{a(I'),(1-2\eps)w\cdot h\}$. In particular, if $a(I')\leq (1-2\eps)w\cdot h$, all items in $I'$ are packed.
\end{lem}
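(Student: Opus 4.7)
\medskip

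The plan is to carry out the standard NFDH shelf analysis, exploiting two key observations about the last shelf. First, if all items are packed then $a(I'')=a(I')$ and the lemma holds trivially, so we may assume the algorithm halts at some round $r$ because item $i_{n(r+1)}$ does not fit above the last completed shelf. For $j=1,\ldots,r$, let $H_j:=\height(i_{n(j)})$ denote the height of the $j$-th shelf (the height of its first, and tallest, item, since items were sorted in non-increasing order of height) and let $W_j$ denote the total width of items placed in that shelf. Set $H_{r+1}:=\height(i_{n(r+1)})$. The halting condition directly gives $\sum_{j=1}^{r}H_j+H_{r+1}>h$, and since $H_{r+1}\leq \eps h$ we obtain $\sum_{j=1}^{r+1}H_j>h$.

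Next I would argue that \emph{every} shelf $j\in\{1,\ldots,r\}$ is nearly full horizontally. The key point is that in round $j$ the algorithm packs a \emph{maximal} prefix of items onto the shelf, so $i_{n(j+1)}$ is the first item that fails to fit horizontally; thus $W_j+\width(i_{n(j+1)})>w$, and by the hypothesis $\width(i_{n(j+1)})\leq \eps w$ we get $W_j>(1-\eps)w$. Crucially this applies to shelf $r$ as well: the algorithm only discovers that $i_{n(r+1)}$ cannot fit \emph{vertically} after already determining that it cannot be appended to shelf $r$ horizontally.

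The second key observation is that every item placed in shelf $j$ has height at least $H_{j+1}$, because items are processed in non-increasing height order and $i_{n(j+1)}$ is processed immediately after the items of shelf $j$. Combining the two observations, for each $j\in\{1,\ldots,r\}$ the total area of items packed in shelf $j$ is at least $W_j\cdot H_{j+1}>(1-\eps)w\cdot H_{j+1}$.

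Summing over all shelves and applying the height bound from the halting condition together with $H_1\leq \eps h$,
\[
a(I'')\;>\;(1-\eps)w\sum_{j=1}^{r}H_{j+1}\;=\;(1-\eps)w\sum_{j=2}^{r+1}H_j\;>\;(1-\eps)w\bigl(h-H_1\bigr)\;\geq\;(1-\eps)^2 wh\;\geq\;(1-2\eps)wh,
\]
which is precisely what we need. There is no real obstacle here beyond the (easy to miss) point that shelf $r$ is also horizontally maximal; the rest is a routine accounting of shelf widths and heights in the NFDH layout.
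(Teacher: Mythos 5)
Your proof is correct and follows essentially the same argument as the paper's: the halting condition gives $\sum_{j=1}^{r+1}H_j>h$, each shelf has width at least $(1-\eps)w$ by maximality, and items on shelf $j$ have height at least $H_{j+1}$, yielding $a(I'')\geq(1-\eps)^2wh\geq(1-2\eps)wh$. The only cosmetic difference is that the paper bounds each shelf's item heights via the last item of shelf $j$ before passing to the first item of shelf $j+1$, whereas you go to $H_{j+1}$ directly.
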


\begin{proof}
	The claim trivially holds if all items are packed.  Thus suppose that this is not the case.
	Observe that $\sum_{j=1}^{r+1}\height({i_{n(j)}}) > h$, otherwise item $i_{n(r+1)}$ would fit in the next shelf above $i_{n(r)}$; hence $\sum_{i=2}^{r+1}\height({i_{n(j)}}) > h - \height({i_{n(1)}}) \geq (1 - \eps)h$. Observe also that the total width of items packed in each round $j$ is at least $w-\eps w = (1 - \eps)w$, since $i_{n(j+1)}$, of width at most $\eps w$, does not fit to the right of $i_{n(j+1)-1}$. It follows that the total area of items packed in round $j$ is at least $(w-\eps w)\height({n(j+1)-1})$, and thus
	$$
	a(I'')\geq \sum_{j=1}^{r}(1-\eps)w \cdot \height({n(j+1)-1})\geq (1-\eps )w\sum_{j=2}^{r+1}\height({n(j)})\geq(1-\eps)^2w\cdot h\geq (1 - 2\eps)w\cdot h.
	$$
\end{proof}

\subsection{Maximum Generalized Assignment Problem}
\label{sec:GAP}
In this section we show that there is a PTAS for the Maximum Generalized Assignment Problem (GAP) if the number of bins is constant.
In GAP, we are given a set of $k$ bins with capacity constraints and a set of $n$ items
that have a possibly different size and profit for each bin and the goal is to  
pack a maximum-profit subset of items into the bins.
Let us assume that if item $i$ is packed in bin $j$, then it requires size  $s_{ij} \in \mathbb{Z}$ and profit $p_{ij} \in \mathbb{Z}$.

GAP is known to be APX-hard and the best known polynomial time approximation algorithm has ratio $(1-1/e+\eps)$ \cite{fleischer2011tight, feige2006approximation}. 
In fact, for arbitrarily small constant $\delta > 0$ (which can even be a function of $n$) GAP remains APX-hard even on the following instances: bin capacities are identical, and for each item $i$ and bin $j$, $p_{ij} = 1$, and $s_{ij} = 1$ or $s_{ij}=1+ \delta$ \cite{chekuri2005polynomial}. The complementary case, where item sizes do not vary across bins but profits do, is also APX-hard \cite{chekuri2005polynomial}.
However, when all profits and sizes are same across all bins (i.e., $p_{ij}=p_{ik}$ and $s_{ij}=s_{ik}$ for all bins $j,k$), the problem is known as multiple knapsack problem (MKP) and it admits PTAS \cite{chekuri2005polynomial}.

On the other hand, for our purposes we only need instances where $k = O(1)$. 
A PTAS for GAP for a constant number of bins follows from extending known techniques from the literature \cite{shmoys-tardos, shmoys1993approximation}.
However, we did not find an explicit proof in the literature and thus, for the sake of completeness, in this section we present a full, self-contained description of such an algorithm.

Let $C_j$ be the capacity of bin $j$ for $j \in [k]$.
Let $p(OPT)$ be the cost of the optimal assignment.

\begin{lem}
\label{lem:GAPresaug}
There is a {$O\left({\left(\frac{1 + \eps}{\eps}\right)}^k n^{k+1}\right)$} time algorithm for the maximum generalized assignment problem with $k$ bins, which returns a solution with profit at least $p(OPT)$ if we are allowed to augment the bin capacities by a $(1+\eps)$-factor {for any fixed $\eps>0$}.
\end{lem}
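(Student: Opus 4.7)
\medskip

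The plan is to reduce the problem to a size-discretized version and then solve it exactly by a textbook dynamic program. For each bin $j$, set $\delta_j := \lceil \eps C_j / n \rceil$ and round every item size up to a multiple of $\delta_j$, i.e., $s'_{ij} := \lceil s_{ij} / \delta_j \rceil \delta_j$. The key observation is that for any feasible assignment $S_j \subseteq [n]$ to bin $j$ one has
\[
\sum_{i \in S_j} s'_{ij} \;\leq\; \sum_{i \in S_j}(s_{ij}+\delta_j) \;\leq\; C_j + n\delta_j \;\leq\; (1+\eps)C_j,
\]
after a harmless adjustment (e.g.\ replacing $\eps$ by $\eps/2$ to absorb the additive $+n$ that appears when $C_j$ is very small, or handling the case $C_j<n/\eps$ directly by pseudo-polynomial DP whose state space is already within the stated bound). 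Consequently, the fixed optimal assignment $OPT$ remains feasible when sizes are replaced by $s'_{ij}$ and capacities are relaxed to $(1+\eps)C_j$, so its profit $p(OPT)$ is still achievable on the rounded instance.

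Next I would set up the dynamic program with state $(i,c_1,\ldots,c_k)$, where $i\in\{0,1,\ldots,n\}$ indexes how many items have been processed and each $c_j$ is a multiple of $\delta_j$ in $[0,(1+\eps)C_j]$ denoting the rounded capacity already consumed in bin $j$. The value $V(i,c_1,\ldots,c_k)$ records the maximum profit obtainable by partitioning a subset of $\{1,\ldots,i\}$ among the bins subject to those usages. The transition from stage $i$ to stage $i+1$ considers $k+1$ options for item $i+1$ (discard, or assign it to some bin $j$ provided $c_j+s'_{i+1,j}\leq (1+\eps)C_j$, updating $c_j$ accordingly). The algorithm returns $\max_{(c_1,\ldots,c_k)} V(n,c_1,\ldots,c_k)$ and recovers the actual assignment by standard backtracking.

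For the running time, since $\delta_j\geq \eps C_j/n$ the number of admissible values of $c_j$ per bin is at most $\lfloor(1+\eps)C_j/\delta_j\rfloor+1\leq (1+\eps)n/\eps+1$, so the total number of states is
\[
n \cdot \prod_{j=1}^{k}\!\left(\tfrac{(1+\eps)n}{\eps}+1\right) \;=\; O\!\left(\left(\tfrac{1+\eps}{\eps}\right)^{k}\! n^{k+1}\right),
\]
and each state is processed in $O(k)$ time, matching the claimed bound. Correctness is immediate: the DP explores every feasible packing under the rounded sizes and augmented capacities, and by the paragraph above $OPT$ itself is one such packing, so the returned value is at least $p(OPT)$. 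The only subtle point is the integrality bookkeeping around $\delta_j$, which is the kind of routine adjustment noted above; everything else is standard knapsack-style DP.
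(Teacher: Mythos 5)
Your proposal is correct and is essentially the paper's own argument: discretize each bin's capacity into $O(n/\eps)$ units of granularity roughly $\eps C_j/n$, round item sizes up, observe that the fixed optimal assignment stays feasible under the $(1+\eps)$-augmented capacities, and run the standard $(i,c_1,\ldots,c_k)$ dynamic program. The only (cosmetic) difference is that the paper rescales sizes to integers via $s'_{ij}=\lceil s_{ij}/\mu_j\rceil$ with the real number $\mu_j=\eps C_j/n$, which sidesteps the additive $+n$ slack you have to patch by taking $\delta_j=\lceil \eps C_j/n\rceil$; your proposed fixes for that wrinkle are valid.
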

\begin{proof}
For each $i \in [n]$ and $c_j \in [C_j]$ for $j \in [k]$, let $S_{i, c_1,c_2, \dots, c_k}$ denote a subset of the set of items $\{1, 2, \dots, i\}$ packed into the bins such that the profit is maximized and capacity of bin $j$ is at most $c_j$. 
Let $P[i, c_1, c_2, \dots, c_k]$ denote the profit of $S_{i, c_1,c_2, \dots, c_k}$.
Clearly $P[1, c_1, c_2, \dots, c_k]$ is known for all $c_j \in [C_j]$ for $j \in [k]$. Moreover, we define $P[i, c_1, c_2, \dots, c_k] = 0$ if $c_j < 0$ for any $j \in [k]$.
We can compute the value of $P[i, c_1, c_2, \dots, c_k]$ by using a dynamic program (DP), that exploits the following recurrence:
\begin{align*}
 P[i, c_1, c_2, \dots, c_k] = \max\{&P[i-1, c_1, c_2, \dots, c_k],\\
 & \max_j \{p_{ij}+ P[i-1, c_1, \dots, c_j -s_{ij},  \dots, c_k]\}\}
\end{align*}
With a similar recurrence, we can easily compute a corresponding set $S_{i, c_1,c_2, \dots, c_k}$.

The running time of the above program is $O\Big(n \prod\limits_{j=1}^k C_j\Big)$. If each $C_j$ is polynomially bounded, then this running time is polynomial. Therefore, we now create a modified instance where each bin size is polynomially bounded.

Let $\mu_j=\frac{\eps C_j}{n}$.
For item $i$ and bin $j$, define the modified size  $s'_{ij}= \left\lceil \frac{s_{ij}}{\mu_j} \right\rceil = \left\lceil \frac{n  s_{ij}}{\eps C_j} \right\rceil$
and $C'_{j}= \left\lfloor \frac{(1+\eps)C_j}{\mu_j} \right\rfloor$.
{Note that $C'_j = \left\lfloor\frac{(1+\eps)n}{\eps}\right\rfloor \leq \frac{(1 + \eps)n}{\eps}$, so the above DP requires time at most $O\left(n \cdot \left(\frac{(1 + \eps)n}{\eps}\right)^k\right)$
}

The above DP finds the optimal solution $OPT_{modified}$ for the modified instance.
Now consider the optimal solution for the original instance (i.e., with original item sizes and bin sizes) $OPT_{original}$. 
If we show the same assignment of items to the bins is a feasible solution  (with modified bin sizes and item sizes)  for the modified instance, we get $OPT_{modified} \ge OPT_{original}$  and that will conclude the proof.

Let $S_j$ be the set of items packed in bin $j$ in the $OPT_{original}$.
So, $\sum_{i \in S_j} s_{ij} \le C_j$. Hence,
$$\sum_{i \in S_j} s'_{ij} 
\le  \left\lfloor \sum_{i \in S_j} \left(\frac{s_{ij}}{\mu_j}+1\right) \right\rfloor  \le  \left\lfloor \frac{1}{\mu_j}\left(\sum_{i \in S_j} s_{ij} +|S_j| \mu_j \right)\right\rfloor
\le  \left\lfloor \frac{1}{\mu_j}(C_j + n \mu_j) \right\rfloor \le  \left\lfloor \frac{(1+\eps)C_j}{\mu_j} \right\rfloor=C'_j
$$
Thus $OPT_{original}$ is a feasible solution for the modified instance and the DP will return a packing with profit at least $p(OPT)$ under $\eps$-resource augmentation.
\end{proof}

Now we can show how to employ this result to obtain a feasible solution with an almost optimal profit using the original bin capacities.
\begin{lem}\label{lem:GAP}
{There is an algorithm for maximum generalized assignment problem with $k$ bins that runs in time $O\left(\left(\frac{1 + \eps}{\eps}\right)^k n^{k/\eps^2 + k + 1}\right)$ and returns a solution that has profit at least $(1 - 3\eps)p(OPT)$, for any fixed $\eps > 0$.}
\end{lem}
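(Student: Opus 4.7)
The plan is to combine Lemma~\ref{lem:GAPresaug} with a guessing step to trade the $(1+\eps)$-capacity augmentation for a $(1-O(\eps))$-factor loss in profit. First, for each bin $j$ I enumerate the set $H_j\subseteq I$ of at most $\lceil 1/\eps^2\rceil$ items assigned to $j$ in the (unknown) optimum that have the largest sizes $s_{ij}$; let $H := \bigcup_j H_j$. Since in any feasible packing at most $1/\eps^2$ items of size exceeding $\eps^2 C_j$ can occupy bin $j$, we may assume that every item in $OPT\setminus H$ is assigned in $OPT$ to some bin $j$ with $s_{ij}\leq \eps^2 C_j$. The total number of such guesses is $n^{O(k/\eps^2)}$, which accounts for the $n^{k/\eps^2}$ factor in the running time.

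For each guess, I form the residual instance on items $I\setminus H$ with reduced capacities $C'_j := C_j - \sum_{i\in H_j} s_{ij}$, imposing the extra restriction that item $i$ may be placed in bin $j$ only when $s_{ij}\leq \eps^2 C_j$. Since the restriction is satisfied by the OPT-assignment of $I\setminus H$, applying Lemma~\ref{lem:GAPresaug} with parameter $\eps$ yields an assignment $DP$ with $p(DP)\geq p(OPT\setminus H)$ in which each bin $j$ holds items of total size at most $(1+\eps) C'_j$. This step adds the $O((1+\eps)^k/\eps^k\cdot n^{k+1})$ factor to the running time, which together with the enumeration above matches the claim. Putting back $H_j$ into bin $j$ gives a packing of profit $p(H)+p(DP)\geq p(OPT)$ in bins of size at most $(1+\eps)C_j$.

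To fit back into the original capacities, I shrink each $DP_j$ by repeatedly removing its item of smallest $p_{ij}/s_{ij}$ ratio until the remaining size is at most $C'_j$. Because every remaining item has $s_{ij}\leq \eps^2 C_j$, the total size removed is at most $\eps C'_j+\eps^2 C_j\leq 2\eps C_j$. A fractional-knapsack argument then bounds the profit removed: the removed items have ratio no larger than $p(DP_j)/s(DP_j)$, so the profit loss in bin $j$ is at most $2\eps C_j \cdot p(DP_j)/s(DP_j)$, which is $O(\eps)\cdot p(DP_j)$ whenever $s(DP_j)\geq \Omega(C_j)$. Summing over bins gives a final profit of at least $p(H)+(1-O(\eps))p(DP)\geq (1-O(\eps))(p(H)+p(DP))\geq (1-O(\eps))p(OPT)$, which matches the $(1-3\eps)$ bound for a suitable choice of constants. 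The main obstacle is the profit analysis in the last step: the ratio argument needs to handle the degenerate case where $C'_j$ is a tiny fraction of $C_j$ (so that the denominator $s(DP_j)$ is small), which I would resolve by observing that in such bins $H_j$ nearly saturates $C_j$ and contributes the bulk of the packed profit, so the loss absorbed in the shrinking step remains $O(\eps)\,p(OPT)$.
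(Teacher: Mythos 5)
Your overall strategy (guess $O(1/\eps^2)$ large items per bin, run the resource‑augmentation DP of Lemma~\ref{lem:GAPresaug} on the residue, then repair feasibility) is the same as the paper's, but the repair step has a genuine gap. Guessing the $1/\eps^2$ \emph{largest} items of bin $j$ only guarantees that the leftover items have size at most $\eps^2 C_j$; it does \emph{not} make them small relative to the residual capacity $C'_j=C_j-s(H_j)$. When $C'_j\ll \eps C_j$, a single residual item (of size up to $\min\{\eps^2 C_j,(1+\eps)C'_j\}$) can occupy a constant fraction of the residual bin, so removing the $\eps C'_j$ overshoot by the greedy ratio rule can force you to discard an item carrying a constant fraction of $p(DP_j)$ — e.g., the DP may fill $(1+\eps)C'_j$ with exactly two items of size $0.55(1+\eps)C'_j$ each, and you must drop one of them, losing half the bin's DP profit. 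Your proposed rescue — that in such bins $H_j$ ``contributes the bulk of the packed profit'' — is false, because $H_j$ is selected by size, not by profit: $H_j$ can saturate the capacity while carrying negligible profit, with essentially all of $p(OPT)$ sitting in the tiny residual bin. Since the DP's guarantee is only $p(DP)\ge p(OPT\setminus H)$, losing a constant fraction of $p(DP_j)$ is not absorbed by anything.

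The paper avoids both problems. First, the guessed set $X_j$ is built \emph{adaptively} (the Claim in the proof): one repeatedly peels off the items that are large relative to the \emph{current} residual capacity, stopping as soon as such a layer has profit at most $\eps p(R_j)$; this yields $|X_j|\le 1/\eps^2$ and, crucially, leftover items that are $\eps$-small with respect to $r_j=C_j-s(X_j)$. Second, the order of operations is reversed: instead of running the DP at capacity $C'_j$ and shrinking its \emph{output}, the paper first proves (by the interval‑shifting argument, which needs exactly the ``small relative to $r_j$'' property) that OPT's residual items can already be packed into capacity $(1-\eps)r_j$ at a $2\eps$ profit loss, and then runs the augmentation DP with declared capacity $(1-\eps)r_j$, so that the $(1+\eps)$-augmented output fits in $r_j$ and no post‑processing is needed. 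To fix your proof you would need both of these changes; the post‑hoc ratio‑based shrinking cannot be salvaged as stated.
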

\begin{proof}

First, we claim the following:
\begin{claim}
If a set of items $R_j$ is packed in a bin $B_j$ with capacity $C_j$, then there exists a set of  at most $O(1/\eps^2)$ items $X_j$, and a set of items $Y_j$ with $p(Y_j) \le \eps p(R_j)$ such  that all items in $R_j\setminus (X_j \cup Y_j)$ have size at most $\eps (C_j - \sum_{i \in X_j} s_{ij})$. 
\end{claim}
\begin{proof}
Let $Q_1$ be the set of items $i$ with $s_{ij} {>} \eps C_j $. If $p(Q_1) \le \eps p(R_j)$, we are done by taking $Y_j=Q_1$ and $X_j=\phi$.
Otherwise, define $X_j:=Q_1$ and we continue the next iteration with the remaining items.
Let $Q_2$ be the items with {size greater than} $\eps (C_j - \sum_{i \in X_j} s_{ij})$ in $R_j \setminus X_j$. If $p(Q_2) \le \eps p(R_j)$, we are done by taking $Y_j=Q_2$.
Otherwise define $X_j:= Q_1 \cup Q_2$ and we continue with further iterations till  
we get a set $Q_t$ with $p(Q_t) \le \eps p(R_j)$. Note that we need at most $\frac{1}{\eps}$ iterations since the sets $Q_i$ are disjoint.
Otherwise, $\displaystyle p(R_j) \ge \sum\limits_{i=1}^{1/\eps} p(Q_i) > \sum\limits_{i=1}^{1/\eps} \eps p(R_j) \ge p(R_j)$,  which is a contradiction.
Thus, consider $Y_j=Q_t$ and $X_j=\bigcup_{l=1}^{t-1} Q_l$. One has $|X_j| \le 1/\eps^2$ and $p(Y_j)\le \eps p(R_j)$.  On the other hand, after removing $Q_t$, the remaining  items have size $< \eps (C_j - \sum_{i \in X_j} s_{ij})$. 
\end{proof}
Now consider a bin with bin capacity of $(C_j - \sum_{i \in X_j} s_{ij})$ where all packed items $R'_j$ have sizes $< \eps (C_j - \sum_{i \in X_j} s_{ij})$, then we can divide the bin into $1/\eps$ equal sized intervals $S_{j,1}, S_{j,2}, \dots, S_{j,{1/\eps}}$ of lengths $\eps(C_j - \sum_{i \in X_j} s_{ij})$. Let $R'_{j,l}$ be the set of items intersecting the interval $S_{j,l}$.
 As each packed item can belong to at most two such intervals, the cheapest  set $R''$ among $\{ R'_{j,1 }, \dots, R'_{j,1/\eps} \}$ has profit at most $2 \eps p(R'_j)$. Thus we can remove this set $R''$ and reduce the bin size by a factor of $(1-\eps)$.

Now consider the packing of $k$ bins $B_j$'s in the optimal packing $OPT$. Let $R_j$ be the set of items packed in bin $B_j$. 
Now the algorithm first guesses all $X_j$'s,  a constant number of items, in all $k$ bins. We assign them to corresponding bins in $O(n^{k/\eps^2})$ time. 
Then for bin $j$ we are left with capacity $r_j := C_j - \sum_{i \in X} s_{ij}$.
From previous discussion, we know that there is packing of $R''_j \subseteq R_j \setminus X_j$ of profit $(1-2\eps)p(R_j \setminus X_j)$ in a bin with capacity $(1-\eps)C_j$. 
Thus  we can use resource augmentation algorithm for GAP in Lemma~\ref{lem:GAPresaug} to pack remaining items in $k$ bins where for bin $j$ we use  original capacity to be $(1-\eps)C_j$ for $j \in [k]$ before the resource augmentation.
As Lemma~\ref{lem:GAPresaug} returns the optimal packing on this modified bin sizes we get total profit $\ge (1-3\eps)p(OPT)$. 
\end{proof}


\section{Packing rectangles with resource augmentation}
\label{sec:resource-augmentation}

In this section we prove that it is possible to pack a high profit subset of rectangles into boxes, if we are allowed to augment one side of a knapsack by a small fraction.

The result is essentially proved in \cite{Jansen2009310}, although we introduced some modifications and extensions to obtain the additional properties relative to packing into containers and a guarantee on the area of the obtained packing. For the sake of completeness, we provide a full proof, which follows in spirit the proof of the original result, from which we also borrow several notations. We will prove the following stronger version of Lemma~\ref{lem:augmentPack}:

\sal{We say that a container $C'$ is smaller than a container $C$ if $w(C') \leq w(C)$ and $h(C') \leq h(C)$. Given a container $C$ and a positive $\eps < 1$, we say that a rectangle $R_j$ is $\eps$-small for $C$ if $w_j \leq \eps w(C)$ and $h_j \leq \eps h(C)$.}

\begin{lem}[Resource Augmentation Packing Lemma]\label{lem:structural_lemma_augm}
	Let $\R'$ be a collection of rectangles that can be packed into a box of size $a\times b$, and $\epsau>0$ be a given constant. Then there exists a container packing of $\R''\subseteq \R'$ inside a box of size $a\times (1+\epsau)b$ (resp., $(1+\epsau)a\times b$) such that:
	\begin{enumerate}\itemsep0pt
		\item $p(\R'')\geq (1-O(\epsau))p(\R')$;
		\item the number of containers is $O_{\epsau}(1)$ and their sizes belong to a set of cardinality $n^{O_{\epsau}(1)}$ that can be computed in polynomial time;
		\item the total area of the containers is at most $a(\R') + \epsau ab$.
	\end{enumerate}
\end{lem}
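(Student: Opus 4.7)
The plan is to prove a container-based and area-aware strengthening of the classical resource augmentation result of Jansen and Solis-Oba~\cite{Jansen2009310}, following the usual classify--shift--group template. I focus on the case where the augmented direction is the vertical one (the other case is symmetric). Throughout, we think of every side length, position, and area as being measured relative to the box of size $a\times b$ into which $R'$ is packed.

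\textbf{Step 1 (Classification).} Applying the standard averaging/shifting argument of Lemma~\ref{lem:item-classification} to $R'$, I choose constants $\epsau\geq \epsl\gg \epss>0$ such that the items of $R'$ having at least one side of length in the intermediate range $(\epss\cdot a,\epsl\cdot a]\cup(\epss\cdot b,\epsl\cdot b]$ carry at most $O(\epsau)\cdot p(R')$ profit. Discarding these items, the remaining ones partition into large, horizontal, vertical, and small items with respect to $(a,b)$.

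\textbf{Step 2 (Large items).} There are at most $O_{\epsl}(1)$ large items. For each of them I create a dedicated container of identical size placed at the item's position, rounded to a polynomial-size grid so that the coordinates lie in a set of cardinality $n^{O(1)}$.

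\textbf{Step 3 (Horizontal items).} Consider the restriction of the optimal packing of $R'$ to the horizontal items. I sort them by non-increasing height and split them into $1/\epsau$ groups of roughly equal profit; by an averaging argument, I discard the cheapest group at a cost of $O(\epsau)\cdot p(R')$. Within each remaining group, heights lie in an interval of ratio at most $(1+\epsau)$, so I round all heights up to the maximum in the group. The cumulative upwards shift along any vertical line in the packing is bounded by $\epsau\cdot b$, which is exactly what the augmentation buys us. Finally, within each group I collapse items of equal (rounded) height into $O(1)$ horizontal containers, whose widths and heights are picked from a polynomial-size set of candidates derived from the input. This produces $O_{\epsau}(1)$ horizontal containers.

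\textbf{Step 4 (Vertical items).} Symmetric to Step~3, yielding $O_{\epsau}(1)$ vertical containers.

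\textbf{Step 5 (Small items).} Place large, horizontal and vertical containers as above inside the box of size $a\times(1+\epsau)b$. The complement consists of $O_{\epsau}(1)$ rectangular empty regions, which I turn into area containers. Since small items are $\epss$-small and $\epss\ll \epsau$, Lemma~\ref{lem:nfdhPack} packs into these area containers a subset of the small items whose profit is within a $(1-O(\epsau))$-factor of the packed small profit in $R'$.

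\textbf{Step 6 (Area bound).} Each horizontal (resp.\ vertical) container in Step~3 (resp.\ Step~4) has area at most $(1+O(\epsau))$ times the total area of the items it holds, thanks to the height rounding. Large containers have area exactly that of their items. Area containers cover at most the complementary free area plus $\epsau ab$ coming from the augmentation. Summing yields total container area at most $a(R')+O(\epsau)ab$; rescaling $\epsau$ by a constant absorbs the hidden factor.

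The main obstacle is Step~3: the grouping-and-rounding must simultaneously (i) keep container sizes in a polynomial-size family, (ii) inflate total container area by only $O(\epsau)ab$, and (iii) let the rounded items still fit inside the $a\times(1+\epsau)b$ augmented box. The key observation that makes this work is that on any vertical line the accumulated extra height from rounding is bounded by $\epsau b$, because within each surviving group the heights stretch by at most a $(1+\epsau)$-factor and the groups' contributions can be charged to the augmentation rather than accumulating multiplicatively.
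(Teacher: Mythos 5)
There is a genuine gap, and it sits exactly where you flag the "main obstacle": Steps~3--4 do not work as described, and they omit the part of the argument that the paper spends most of its effort on. In a general two-dimensional packing the horizontal items are not stacked on top of one another; they are scattered at arbitrary positions and interleaved with the vertical items. Sorting them by height and splitting into $1/\epsau$ equal-profit groups does not make the heights within a group lie in a $(1+\epsau)$ ratio (equal profit says nothing about heights), and even if it did, rounding heights up "in place" forces shifts that propagate through vertical items as well, so there is no single "cumulative upwards shift along a vertical line" that telescopes to $\epsau b$. Most importantly, the claim that the rounded items "collapse into $O(1)$ horizontal containers" is asserted without any mechanism for deciding \emph{where} these containers sit or why constantly many stacks suffice to host items that originally occupied unrelated positions. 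This rearrangement is the crux of the lemma.

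The paper's proof handles it quite differently: it rounds the \emph{long} side of wide items to multiples of $\delta^2$ and snaps their $x$-coordinates to a $\delta^2$-grid (this is what the augmentation pays for), packs the short-high items strip by strip into $O(1)$ vertical containers via the Kenyon--R\'emila strip-packing lemma, and then uses a configuration argument: slicing the box into horizontal stripes, observing that only $2^{M-1}$ vertical cross-section patterns exist, and merging regions with equal configuration to obtain a \emph{fractional} packing into $O_{\epsau,\delta}(1)$ horizontal containers. An iterated shifting argument (the sets $\mathcal{K}_0,\mathcal{K}_1,\dots$) then converts the fractional packing into an integral one with small profit loss, and separate rounding lemmas deliver the area bound and the polynomial family of container sizes. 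Your Steps~5 and~6 are broadly in the spirit of the paper's treatment of small items and of the area bound, but they rely on the containers already being in place, so the proposal as written does not establish the lemma. To repair it you would need to supply the discretization-plus-configuration argument (or an equivalent rearrangement lemma) rather than the per-class grouping of Steps~3--4.
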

\sal{In this result, we assume that rectangles in an area container $C$ are $\epsau$-small for $C$.}

Note that we do not allow rotations, that is, rectangles are packed with the same orientation as in the original packing. However, as an existential result we can apply it also to the case with rotations. Moreover, since Lemma~\ref{lem:containersPackPTAS} gives a PTAS for approximating container packings, this implies a simple algorithm that does not need to solve any LP to find the solution, in both the cases with and without rotations.

For simplicity, in this section we assume that widths and heights are positive real numbers in $(0, 1]$, and $a = b = 1$: in fact, all elements, container and boxes can be rescaled without affecting the property of a packing of being a \emph{container packing} with the above conditions. Thus, without loss of generality, we prove the statement for the augmented $1 \times (1 + \epsau)$ box.

\sal{Let $\epsau' = \epsau/2 < \epsau$. We will first obtain a packing where all the elements of each area container $C$ are $\epsau'$-small for $C$, and in Section~\ref{sec:rounding_containers} we will obtain the final packing, where the sizes of each container are taken from a polynomially sized set of choices.}

We will use the following Lemma, that follows from the analysis in \cite{kenyon2000near}:

\begin{lem}[Kenyon and Rémila \cite{kenyon2000near}]\label{lem:Kenyon-Remila}
	Let $\overline{\eps} > 0$, and let $\mathcal{Q}$ be a set of rectangles, each of height and width at most $1$. Let $\mathcal{L} \subseteq \mathcal{Q}$ be the set of rectangles of width at least $\overline{\eps}$, and let $OPT_{SP}(\mathcal{L})$ be the minimum width such that the rectangles in $\mathcal{L}$ can be packed in a box of size $OPT_{SP}(\mathcal{L})\times 1$.
	
	Then $\mathcal{Q}$ can be packed in polynomial time into a box of height $1$ and width $\tilde{w} \le \max\{OPT_{SP}(\mathcal{L}) + \frac{18}{\overline{\eps}^2} w_{\max}, a(\mathcal{Q})(1+\overline{\eps}) + \frac{19}{\overline{\eps}^2}w_{\max}\}$, where $w_{\max}$ is the maximum width of rectangles in $\mathcal{Q}$. Furthermore, all the rectangles with both width and height less than $\overline{\eps}$ are packed into at most $\frac{9}{\overline{\eps}^2}$ boxes, and all the remaining rectangles into at most $\frac{27}{\overline{\eps}^3}$ vertical containers.
\end{lem}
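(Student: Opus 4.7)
The plan is to adapt the structural theorem of Jansen--Solis-Oba so as to produce a true container packing (with only polynomially many candidate sizes) while simultaneously satisfying the tight area guarantee (3). After rescaling we may assume $a=b=1$ and, without loss of generality, that the augmentation is in the height, so we must pack into $[0,1]\times[0,1+\epsau]$. The strategy is to classify items into a few size classes, handle each class with a specialised routine (Kenyon--R\'emila for skewed items, NFDH for small items, direct placement for large items), and use the augmentation strip of area $\epsau$ to absorb the various overheads.

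First I would classify items using two thresholds $\delta \ll \mu$, chosen from a chain of $\Theta(1/\epsau)$ geometrically-spaced pairs by a shifting argument so that \emph{intermediate} items (those with some dimension in $(\delta,\mu]$) carry total profit at most $\epsau\, p(\R')$ and can be discarded. The remaining items split into \emph{large} (both sides $>\mu$), \emph{horizontal} (width $>\mu$, height $\leq\delta$), \emph{vertical} (symmetric), and \emph{small} (both $\leq\delta$). There are at most $1/\mu^{2}=O_\epsau(1)$ large items, and I would simply keep them at their original positions, defining one area container per large item of its exact size; these containers contribute exactly $a(\R'_{\mathrm{large}})$ to the total container area, and they split the complement of the knapsack into $O_\epsau(1)$ rectangular \emph{free regions}.

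Second I would repack horizontal and vertical items inside the free regions. For each free region I would apply Lemma~\ref{lem:Kenyon-Remila} (rotated $90^{\circ}$ for horizontal items) with parameter $\overline{\eps}=\epsau$ to pack the horizontal items originally contained in it into $O_\epsau(1)$ horizontal containers, together with $O_\epsau(1)$ area containers holding the very-small elements produced by the Kenyon--R\'emila partition. The additive overhead $\tfrac{19}{\overline\eps^{2}}h_{\max}$ is bounded by $\tfrac{19}{\epsau^{2}}\delta$, which is at most $\epsau$ once $\delta\ll\epsau^{3}$, and this overhead is absorbed by the height augmentation. Vertical items are handled symmetrically. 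Third I would insert the small items: after placing the large, horizontal, and vertical containers the uncovered region is a union of $O_\epsau(1)$ rectangles, over each of which I lay a $(\delta/\epsau)\times(\delta/\epsau)$ grid and use its cells as area containers; by Lemma~\ref{lem:nfdhPack}, NFDH fits almost all small items into these cells up to an area loss of $O(\epsau)$. Finally, to enforce the polynomial-sizes guarantee in~(2), round the dimensions of each container up to the nearest value in the polynomial-size grid generated by the $O(n)$ item coordinates (and their sums in $O_\epsau(1)$ combinations), introducing only an $O(\epsau)$ profit loss.

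The area bound (3) then follows by adding up: large containers contribute exactly $a(\R'_{\mathrm{large}})$; each Kenyon--R\'emila horizontal or vertical container contributes at most $(1+\overline\eps)$ times the area of the items packed in it plus an additive $O(\overline\eps^{-2}\delta)$ per strip; and each NFDH cell contributes at most the area of its items plus the area of the cell's top shelf, which totals at most $O(\epsau)$ over the grid. The main obstacle I foresee is coordinating the three parameters $(\mu,\delta,\overline\eps)$ so that the profit loss, the number and dimensions of containers, and the total area overhead all simultaneously fit within their respective $O(\epsau)$ budgets; this is where the geometric chain used in the shifting step, combined with the fact that large items occupy $\Omega(\mu^{2})$ area, is essential to ensure the overheads produced by Kenyon--R\'emila and NFDH are dominated by the augmentation strip of area $\epsau$.
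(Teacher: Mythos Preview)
Your proposal is not a proof of the stated lemma. Lemma~\ref{lem:Kenyon-Remila} is the Kenyon--R\'emila strip-packing structural bound; the paper does not prove it but cites it from \cite{kenyon2000near}. What you have written is instead an attempted proof of the Resource Augmentation Packing Lemma (Lemma~\ref{lem:structural_lemma_augm}): you speak of the $a\times b$ box, the parameter $\epsau$, the container packing, and ``the tight area guarantee~(3)'', none of which appear in the Kenyon--R\'emila statement. In fact you \emph{invoke} Lemma~\ref{lem:Kenyon-Remila} as a black box inside your argument, which makes clear you are targeting a different result.

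If one reads your proposal as an attempt at Lemma~\ref{lem:structural_lemma_augm}, there is still a genuine gap. After you fix the $O_\epsau(1)$ large items and cut the complement into rectangular free regions, a single free region will in general contain \emph{both} horizontal and vertical skewed items, interleaved in an arbitrary way; horizontal items may also straddle the boundaries between free regions. Applying Kenyon--R\'emila separately to the horizontal items and to the vertical items ``originally contained in'' a region does not yield two packings that can be superimposed without overlap, and you give no mechanism to separate the two classes spatially beforehand. The paper's proof in Section~\ref{sec:resource-augmentation} addresses exactly this difficulty: it first snaps wide rectangles to a grid of $M=O(1/\delta^2)$ vertical strips, confines the short-high rectangles to $O(1/\delta^3)$ narrow boxes (each repacked via Kenyon--R\'emila), and only then treats those boxes as pseudoitems together with the wide rectangles in a configuration-enumeration / fractional-to-integral shifting argument (Sections~\ref{sec:containers-short-rectangles}--\ref{sec:integral_packing}). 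That two-level structure---first isolate one skewed class into constant-many boxes, then handle the other class by configurations---is precisely what your one-shot ``Kenyon--R\'emila per free region'' skips. A secondary issue: declaring each large item to be an \emph{area} container is incompatible with the definition, since an item of size equal to its container is not $\epsau$-small for it; these should be horizontal or vertical containers of size exactly $w_i\times h_i$.
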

Note that the boxes containing the rectangles that are smaller than $\overline{\eps}$ are not necessarily packed as containers.

We need the following technical lemma:
\begin{lem}\label{lem:ra_intermediate}
	Let $\eps>0$ and let $f(\cdot)$ be any positive increasing function such that $f(x)<x$ for all $x$. Then, there exist positive constant values $\delta,\mu \in \Omega_\eps(1)$, with \sal{$f(\eps)\geq \delta$} and \sal{$f(\delta)\geq \mu$} such that the total profit of all the rectangles whose width or height lies in $(\mu, \delta]$ is at most $\eps\cdot p(\R')$.
\end{lem}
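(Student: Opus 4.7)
}
The plan is to set up a standard averaging/shifting argument over a long enough chain of candidate pairs $(\delta,\mu)$ defined via iterated applications of $f$. Concretely, I will fix an integer $k:=\lceil 2/\eps\rceil$ and define a strictly decreasing sequence $x_0>x_1>\cdots>x_{2k}$ by $x_0:=\eps$ and $x_{i+1}:=f(x_i)$. Since $f$ is positive and $f(y)<y$ for all $y$, each $x_i$ is a positive constant depending only on $\eps$ (and on the fixed function $f$), hence $x_i\in\Omega_\eps(1)$.

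The $k$ candidate pairs will be $(\delta_i,\mu_i):=(x_{2i+1},x_{2i+2})$ for $i=0,\dots,k-1$. Each of these satisfies the required constraints: $f(\eps)=x_1\ge x_{2i+1}=\delta_i$ because the sequence is decreasing and $2i+1\ge 1$, and $f(\delta_i)=f(x_{2i+1})=x_{2i+2}=\mu_i$. The key combinatorial observation is that the intervals $J_i:=(\mu_i,\delta_i]=(x_{2i+2},x_{2i+1}]$ are pairwise disjoint, since they are nested between consecutive terms of a strictly decreasing sequence with an ``odd--even'' indexing that leaves a gap $(x_{2i+1},x_{2i}]$ between them.

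For each $i$, let $\R'_i\subseteq\R'$ be the set of rectangles whose width or height lies in $J_i$. Any single rectangle contributes to at most two of the sets $\R'_i$ (at most one $J_i$ can contain its width, and at most one can contain its height, by disjointness). Therefore
\[
\sum_{i=0}^{k-1} p(\R'_i)\ \le\ 2\,p(\R').
\]
By averaging, there exists an index $i^{\star}\in\{0,\dots,k-1\}$ with $p(\R'_{i^{\star}})\le \tfrac{2}{k}\,p(\R')\le \eps\,p(\R')$ by our choice of $k$. Setting $\delta:=\delta_{i^{\star}}=x_{2i^{\star}+1}$ and $\mu:=\mu_{i^{\star}}=x_{2i^{\star}+2}$ yields the claimed constants, both of which are $\Omega_\eps(1)$ by the remark above.

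No serious obstacle is expected: the argument is a textbook shifting step, and the only small point of care is the factor of $2$ loss due to the fact that \emph{both} dimensions of a rectangle may fall into different intervals, which is compensated by choosing $k=\lceil 2/\eps\rceil$ rather than $\lceil 1/\eps\rceil$. The resulting $\delta$ and $\mu$ are defined algorithmically (one of $O_\eps(1)$ explicit candidates), so the pair can indeed be ``guessed'' in polynomial time, as used earlier in the paper.
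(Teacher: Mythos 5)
Your proof is correct and uses essentially the same averaging/shifting argument as the paper: build a decreasing chain by iterating $f$ starting from $\eps$, observe each rectangle's profit is counted in at most two of the $\Theta(1/\eps)$ candidate intervals (once for its width, once for its height), and pick the cheapest one. The only cosmetic difference is that you interleave gaps between candidate intervals (using $2k$ iterations of $f$ instead of $k$), which is unnecessary since consecutive half-open intervals $(x_{i+1},x_i]$ are already disjoint, but this does not affect correctness.
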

\begin{proof}
	Define $k+1=2/\eps+1$ constants $\eps_1,\ldots,\eps_{k+1}$, with $\eps_1=f(\eps)$ and $\eps_{i}=f(\eps_{i+1})$ for each~$i$. Consider the $k$ ranges of widths and heights of type $(\eps_{i+1},\eps_{i}]$. By an averaging argument there exists one index $j$ such that the total profit of the rectangles in $\R'$ with at least one side length in the range $(\eps_{j+1}N,\eps_{j}N]$ is at most $2\frac{\eps}{2}p(\R')$. It is then sufficient to set $\delta=\eps_j$ and $\mu=\eps_{j+1}$.    
\end{proof}
We use this lemma with $\eps = \epsau'$, and we will specify the function $f$ later. By properly choosing the function $f$, in fact, we can enforce constraints on the value of $\mu$ with respect to $\delta$, which will be useful several times; the exact constraints will be clear from the analysis. Thus, we remove from $\R'$ the rectangles that have at least one side length in $(\mu, \delta]$.

We call a rectangle $R_i$ \emph{wide} if $w_i > \delta$, \emph{high} if $h_i > \delta$, \emph{short} if $w_i \le \mu$ and \emph{narrow} if $h_i \le \mu$.\footnote{Note that the classification of the rectangles in this section is different from the ones used in the main results of this paper, although similar in spirit.}
From now on, we will assume that we start with the optimal packing of the rectangles in $R'$, and we will modify it until we obtain a packing with the desired properties.
We remove from $R'$ all the short-narrow rectangles, initially obtaining a packing. We will show in section \ref{sec:shortnarrow} how to use the residual space to pack them, with a negligible loss of profit.

As a first step, we round up the widths of all the \emph{wide} rectangles in $R'$ to the nearest multiple of $\delta^2$; moreover, we shift them horizontally so that their starting coordinate is an integer multiple of $\delta^2$ (note that, in this process, we might have to shift also the other rectangles in order to make space). Since the width of each wide rectangle is at least $\delta$ and $\frac{1}{\delta}\cdot 2\delta^2 = 2\delta$, it is easy to see that it is sufficient to increase the width of the box to $1 + 2\delta$ to perform such a rounding.

\subsection{Containers for short-high rectangles}\label{sec:containers-short-rectangles}

\begin{figure}
	\centering
	\includegraphics[width=10.7cm]{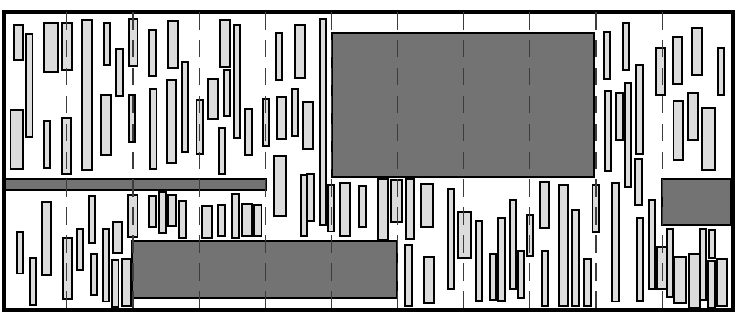}
	\caption{An example of a packing after the short-narrow rectangles have been removed, and the wide rectangles (in dark grey) have been aligned to the $M$ vertical strips. Note that the short-high rectangles (in light gray) are much smaller than the vertical strips.}
	\label{fig:augm-round-x}
\end{figure}

We draw vertical lines across the $1\times (1+2\delta)$ region spaced by $\delta^2$, splitting it into $M := \frac{1+2\delta}{\delta^2}$ vertical strips (see Figure~\ref{fig:augm-round-x}). Consider each maximal rectangular region which is contained in one such strip and does not overlap any wide rectangle; we define a box for each such region that contains at least one short-high rectangle, and we denote the set of such boxes by $\mathcal{B}$.

\begin{figure}
	\centering
	\includegraphics[width=8cm]{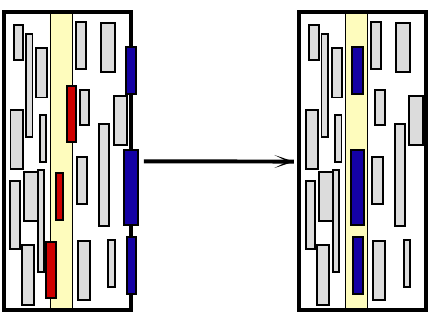}
	\caption{For each vertical box, we can remove a low profit subset of rectangles (red in the picture), to make space for short-high rectangles that cross the right edge of the box (blue).}
	\label{fig:augm-fix-vertical-box}
\end{figure}

Note that some short rectangles might intersect the vertical edges of the boxes, but in this case they overlap with exactly two boxes. Using a standard shifting technique, we can assume that no rectangle is cut by the boxes by losing profit at most $\epsau' OPT$: first, we assume that the rectangles intersecting two boxes belong to the leftmost of those boxes. For each box $B \in \mathcal{B}$ (which has width $\delta^2$ by definition), we divide it into vertical strips of width $\mu$. Since there are $\frac{\delta^2}{\mu} > 2/\epsau'$ strips and each rectangle overlaps with at most $2$ such strips, there must exist one of them such that the profit of the rectangles intersecting it is at most $2\mu p(B) \leq \epsau' p(B)$, where $p(B)$ is the profit of all the rectangles that are contained in or belong to $B$. We can remove all the rectangles overlapping such strip, creating in $B$ an empty vertical gap of width $\mu$, and then we can move all the rectangles intersecting the right boundary of $B$ to the empty space.
\begin{pro}
	The number of boxes in $\mathcal{B}$ is at most $\frac{1+2\delta}{\delta^2} \cdot \frac{1}{\delta} \le \frac{2}{\delta^3}$.
\end{pro}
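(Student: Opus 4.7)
The plan is a direct counting argument exploiting the geometry of the construction. First, I would note that by construction there are exactly $M = \frac{1+2\delta}{\delta^2}$ vertical strips (each of width $\delta^2$), since the horizontal side of the augmented region has length $1+2\delta$ and the vertical lines are spaced by $\delta^2$. So the total number of boxes in $\mathcal{B}$ is at most $M$ times the maximum number of boxes contained in a single strip, and the task reduces to showing that any single strip contains at most $1/\delta$ boxes of $\mathcal{B}$.

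To bound the per-strip count, I would use the defining property of $\mathcal{B}$: each box in $\mathcal{B}$ must contain at least one short-high rectangle. Recall that a rectangle is \emph{high} iff $h_i > \delta$, so any such box has height strictly greater than $\delta$. Moreover, within a fixed strip, all boxes of $\mathcal{B}$ are pairwise disjoint (they are maximal rectangular regions in the strip's complement of wide rectangles, each being one of the horizontal ``slabs'' between two consecutive wide rectangles fully spanning the strip, or between such a wide rectangle and the top/bottom boundary). This follows because the wide rectangles have widths and $x$-coordinates rounded to multiples of $\delta^2$, so each wide rectangle either fully covers the strip's width or is entirely disjoint from it.

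Since the boxes within a strip are disjoint and stacked inside a region of total vertical extent $1$ (the height of the knapsack), and since each of them has height $> \delta$, a strip can contain at most $\lfloor 1/\delta \rfloor \leq 1/\delta$ boxes of $\mathcal{B}$. Multiplying by the number of strips yields
\[
|\mathcal{B}| \;\le\; \frac{1+2\delta}{\delta^2}\cdot \frac{1}{\delta} \;=\; \frac{1+2\delta}{\delta^3} \;\le\; \frac{2}{\delta^3},
\]
where the last inequality uses $\delta \le 1/2$. This concludes the proof. There is essentially no technical obstacle here; the only subtlety is verifying that wide rectangles spanning a strip cleanly split it into vertically stacked boxes, which is immediate from the alignment of wide rectangles to the $\delta^2$-grid performed in the previous paragraph.
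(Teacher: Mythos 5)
Your proof is correct and takes exactly the approach the paper intends: the paper states this proposition without an explicit proof, but the form of the bound---$\frac{1+2\delta}{\delta^2}$ strips times $\frac{1}{\delta}$ boxes per strip---is precisely your decomposition. The two observations you supply (each box of $\mathcal{B}$ contains a high rectangle and hence has height $>\delta$, and the $\delta^2$-alignment of the wide rectangles makes the boxes within a strip disjoint, vertically stacked slabs inside a region of height $1$) are the right ones and fill the gap cleanly.
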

First, by a shifting argument similar to above, we can reduce the width of each box to $\delta^2 - \delta^4$ while losing only an $\epsau'$ fraction of the profit of the rectangles in $B$. Then, for each $B\in \mathcal{B}$, since the maximum width of the rectangles in $B$ is at most $\mu$, by applying Lemma~\ref{lem:Kenyon-Remila} with $\overline{\eps} = \delta^2/2$ we obtain that the rectangles packed inside $B$ can be repacked into a box $B'$ of height $h(B)$ and width at most $w'(B) \le \max\{\delta^2 - \delta^4 + \frac{72}{\delta^4} \mu, (\delta^2 - \delta^4)(1+\frac{\delta^2}{2}) + \frac{76}{\delta^4}\mu\}\le \delta^2$, which is true if we make sure that $\mu \leq \delta^{10} / 76$. Furthermore, the short-high rectangles in $B$ are packed into at most $\dfrac{216}{\delta^6} \leq \dfrac{1}{\delta^7}$ vertical containers, assuming without loss of generality that $\delta \leq 1/216$. \sal{Note that all the rectangles are packed into vertical containers, because rectangles that have both width and height smaller than $\overline{\eps}$ are short-narrow and we already removed them.} Summarizing:

\begin{pro}\label{prop:basic_structure} There is a set $\R^+ \subseteq \R'$ of rectangles with total profit at least $(1-O(\epsau'))\cdot p(\R')$ and a corresponding packing for them in a $1 \times (1+2\delta)$ region such that:
	\begin{itemize}
		\item every wide rectangle in $\R^+$ has its length rounded up to the nearest multiple of $\delta^2$ and it is positioned so that its left side is at a position $x$ which is a multiple of $\delta^2$, and
		\item each box $B \in \mathcal{B}$ storing at least one short-high rectangle has width $\delta^2$, and the rectangles inside are packed into at most $1/\delta^7$ vertical containers.
	\end{itemize}
\end{pro}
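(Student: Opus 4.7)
The plan is to assemble the proposition by threading together the constructions and estimates already spelled out in this section, taking care that each shifting or rounding step loses only a small profit fraction.

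First, I would start from the optimal packing of $\R'$ inside the $1\times 1$ knapsack, apply Lemma~\ref{lem:ra_intermediate} to eliminate the rectangles with a side in $(\mu,\delta]$ at a loss of at most $\epsau'\,p(\R')$, and temporarily discard the short-narrow rectangles (to be reinserted in Section~\ref{sec:shortnarrow}). Next, I would perform the $x$-rounding: round up each wide width to the nearest multiple of $\delta^2$ and align its left side to a multiple of $\delta^2$. Since there are at most $1/\delta$ wide rectangles in any horizontal line and each gains width at most $\delta^2$, horizontally stretching the knapsack to width $1+2\delta$ accommodates this rounding while preserving feasibility and leaving the short-high rectangles untouched (up to some horizontal shifting of the space between wide items).

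Second, I would define $\mathcal{B}$ by cutting the $1\times(1+2\delta)$ region with the $M=\tfrac{1+2\delta}{\delta^2}$ vertical grid lines and taking the maximal rectangular sub-regions disjoint from wide rectangles that contain at least one short-high rectangle; since each column can be chopped into at most $1/\delta$ such sub-regions by the wide rectangles, $|\mathcal{B}|\le 2/\delta^3$ as claimed. Some short-high rectangles might straddle the right edge of a box, so I would apply the standard vertical-slab shifting argument inside each $B$: the $\delta^2/\mu>2/\epsau'$ slabs of width $\mu$ ensure that one of them has profit at most $\epsau'\,p(B)$, and removing it (losing at most $\epsau'$ fraction overall) lets us translate the rectangles crossing the right edge into the freed gap. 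The condition $\mu<\epsau'\delta^2/2$ is exactly the kind of constraint I would impose on the function $f$ in Lemma~\ref{lem:ra_intermediate}.

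Third, I would further shift to reduce the usable width of each $B$ from $\delta^2$ to $\delta^2-\delta^4$ (again by sacrificing a thin vertical slab of width $\delta^4$, which costs only an $O(\epsau')$ fraction of profit after choosing $\mu$ appropriately), and then apply Lemma~\ref{lem:Kenyon-Remila} with $\overline{\eps}=\delta^2/2$ to the rectangles inside $B$, whose maximum width is at most $\mu$. The resulting packing fits in width
\[
\max\left\{(\delta^2-\delta^4)+\tfrac{72}{\delta^4}\mu,\;(\delta^2-\delta^4)(1+\tfrac{\delta^2}{2})+\tfrac{76}{\delta^4}\mu\right\}\le \delta^2,
\]
which is the reason I need $\mu\le \delta^{10}/76$ (another constraint baked into $f$). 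Since we already removed all short-narrow rectangles, the lemma places the remaining short-high rectangles into at most $27/\overline{\eps}^3=216/\delta^6\le 1/\delta^7$ vertical containers, as required.

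Finally, I would collect the losses: the intermediate rectangles cost $\epsau'\,p(\R')$, the slab-shifting inside each $B$ costs another $O(\epsau')$ fraction, and this yields a total profit at least $(1-O(\epsau'))p(\R')$, establishing $\R^+$ and its packing with all the listed properties. The only delicate point is the sequencing of the two nested shifting arguments together with the three constraints on $\mu$ versus $\delta$; once the function $f$ in Lemma~\ref{lem:ra_intermediate} is chosen so that $\mu\le \min\{\epsau'\delta^2/2,\delta^{10}/76\}$, everything slots into place.
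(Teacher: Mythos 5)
Your proposal is correct and follows essentially the same route as the paper: remove intermediate and short-narrow rectangles, round and align wide rectangles at the cost of widening the box to $1+2\delta$, define the boxes $\mathcal{B}$ along the $\delta^2$-grid, resolve boundary-crossing short-high rectangles and shrink each box to width $\delta^2-\delta^4$ via $\mu$-slab shifting, and finish with Lemma~\ref{lem:Kenyon-Remila} at $\overline{\eps}=\delta^2/2$ under the constraint $\mu\le\delta^{10}/76$. The only cosmetic quibble is that each wide rectangle can grow by up to $2\delta^2$ (rounding plus alignment), not $\delta^2$, which is exactly why the budget is $2\delta$ rather than $\delta$ — but your conclusion already accounts for this.
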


\subsection{Fractional packing with $O(1)$ containers}

Let us consider now the set of rectangles $\R^+$ and an almost optimal packing $S^+$ for them according to Proposition~\ref{prop:basic_structure}. We remove the rectangles assigned to boxes in $\mathcal{B}$ and consider each box $B \in \mathcal{B}$ as a single pseudoitem. Thus, in the new almost optimal solution there are just pseudoitems from $\mathcal{B}$ and wide rectangles with right and left coordinates that are multiples of $\delta^2$. We will now show that we can derive a fractional packing with the same profit, and such that the rectangles and pseudoitems can be (fractionally) assigned to a constant number of containers. By \emph{fractional packing} we mean a packing where horizontal rectangles are allowed to be sliced horizontally (but not vertically); we can think of the profit as being split proportionally to the heights of the slices.

Let $\mathcal{K}$ be a subset of the horizontal rectangles of size $K$ that will be specified later. By extending horizontally the top and bottom edges of the rectangles in $\mathcal{K}$ and the pseudoitems in $\mathcal{B}$, we partition the knapsack into at most $2(|K| + |\mathcal{B}|) + 1 \leq 2(K + \frac{2}{\delta^3}) + 1 \leq 2(K + \frac{3}{\delta^3})$ horizontal stripes.

Let us focus on the (possibly sliced) rectangles contained in one such stripe of height $h$. For any vertical coordinate $y \in [0,h]$ we can define the \emph{configuration} at coordinate $y$ as the set of positions where the horizontal line at distance $y$ from the bottom cuts a vertical edge of a horizontal rectangle which is not in $\mathcal{K}$. There are at most $2^{M-1}$ possible configurations in a stripe.

\begin{figure}
	\centering
	\includegraphics[width=14.1cm]{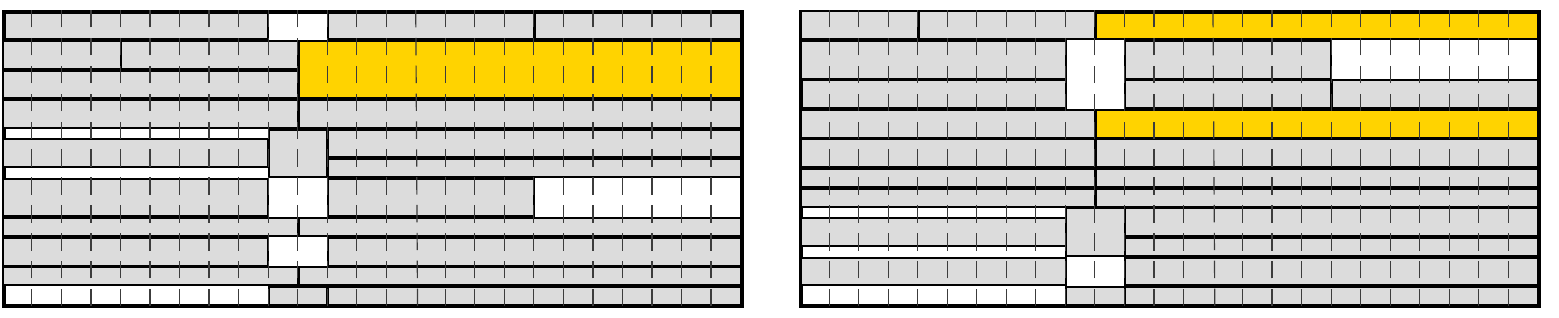}
	\caption{
		Rearranging the rectangles in a horizontal stripe. On the right, rectangles are repacked so that regions with the same configuration appear next to each other. Note that the yellow rectangle has been sliced, since it partakes in two regions with different configurations.
	}
	\label{fig:augm-fractional-repacking}
\end{figure}

We can further partition the stripe in maximal contiguous regions with the same configuration. Note that the number of such regions is not bounded, since configurations can be repeated. But since the rectangles are allowed to be sliced, we can rearrange the regions so that all the ones with the same configuration appear next to each other; see Figure~\ref{fig:augm-fractional-repacking} for an example. After this step is completed, we define up to $M$ horizontal containers per each configuration, where we repack the sliced horizontal rectangles. Clearly, all sliced rectangles are repacked.

Thus, the number of horizontal containers that we defined per each stripe is bounded by $M2^{M-1}$, and the total number overall is at most
\[
2\left(K + \frac{3}{\delta^3}\right) M 2^{M-1} = \left(K + \frac{3}{\delta^3}\right) M 2^{M}.
\]
\subsection{Existence of an integral packing}\label{sec:integral_packing}

%
%

We will now show the existence of an integral packing, at a small loss of profit.

Consider a fractional packing in $N$ containers. Since each rectangle slice is packed in a container of exactly the same width, it is possible to pack all but at most $N$ rectangles integrally by a simple greedy algorithm: choose a container, and greedily pack in it rectangles of the same width, until either there are no rectangles left for that width, or the next rectangle does not fit in the current container. In this case, we discard this rectangle and close the container, meaning that we do not use it further. Clearly, only one rectangle per container is discarded, and no rectangle is left unpacked.

The only problem is that the total profit of the discarded rectangles can be large. To solve this problem, we use the following shifting argument. Let $\mathcal{K}_0 = \emptyset$ and $K_0 = 0$. For convenience, let us define $f(K) = \left(K + \frac{3}{\delta^3}\right) M 2^{M}$.

First, consider the fractional packing obtained by choosing $\mathcal{K} = \mathcal{K}_0$, so that $K = K_0 = 0$. Let $\mathcal{K}_1$ be the set of discarded rectangles obtained by the greedy algorithm, and let $K_1 = |\mathcal{K}_1|$. Clearly, by the above reasoning, the number of discarded rectangles is bounded by $f(K_0)$. If the profit $p(\mathcal{K}_1)$ of the discarded rectangles is at most $\epsau' p(OPT)$, then we remove them and there is nothing else to prove. Otherwise, consider the fractional packing obtained by fixing $\mathcal{K} = \mathcal{K}_0 \cup \mathcal{K}_1$. Again, we will obtain a set $\mathcal{K}_2$ of discarded rectangles such that $K_2 := |\mathcal{K}_2| \leq f(K_0 + K_1)$. Since the sets $\mathcal{K}_1, \mathcal{K}_2, \dots$ that we obtain are all disjoint, the process must stop after at most $1/\epsau'$ iterations. Setting $p := M2^{M}$ and $q := \frac{3}{\delta^3} M2^{M}$, we have that $K_{i+1} \leq p(K_0 + K_1 + \dots K_i) + q$ for each $i \geq 0$. Crudely bounding it as $K_{i+1} \leq i \cdot pq \cdot K_i$, we immediately obtain that $K_i \leq (pq)^i$. Thus, in the successful iteration, the size of $\mathcal{K}$ is at most $K_{1/\epsau'-1}$ and the number of containers is at most $K_{1/\epsau'} \leq (pq)^{1/\epsau'} = (\frac{3}{\delta^2}M^2 2^{2M})^{1/\epsau'} = O_{\epsau', \delta}(1)$.

\subsection{Rounding down horizontal and vertical containers}\label{sec:shrinking_horiz_vert}

As per the above analysis, the total number of horizontal containers is at most  $(\frac{3}{\delta^2}M^2 2^{2M})^{\epsau'}$ and the total number of vertical containers is at most $\frac{2}{\delta^3} \cdot \frac{1}{\delta^7} = \frac{2}{\delta^{10}}$.

We will now show that, at a small loss of profit, it is possible to replace each horizontal and each vertical container defined so far with a constant number of smaller containers, so that the total area of the new containers is at most as big as the total area of the rectangles originally packed in the container. Note that in each container we consider the rectangles with the original widths (not rounded up). We use the following lemma:

\begin{lem}\label{lem:shrink_knapsack_container} Let $C$ be a horizontal (resp. vertical) container defined above, and let $\R_C$ be the set of rectangles packed in $C$. Then, it is possible to pack a set $\R'_C \subseteq \R_C$ of profit at least $(1 - 3\epsau')p(\R_C)$ in a set of at most $\left\lceil\log_{1 + \epsau'} (\frac{1}{\delta})\right\rceil / \epsau'^2$ horizontal (resp. vertical) containers that can be packed inside $C$ and such that their total area is at most $a(\R_C)$. 
\end{lem}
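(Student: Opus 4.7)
I will prove the statement for a horizontal container $C$ of size $w(C)\times h(C)$; the vertical case is symmetric. Recall that every rectangle in $\R_C$ is wide, i.e.\ has width $w_i \in [\delta, w(C)]$, while the heights are unconstrained. The strategy is to combine a \emph{geometric shifting in width} (which controls how many sub-container widths we need) with a \emph{linear shifting along the stacking direction} (which converts the $1+\epsau'$ area overshoot into an area \emph{saving}). Since each shift is paid for by dropping a group of items, the profit budget $3\epsau' p(\R_C)$ is natural: one $\epsau'$ per shift plus one $\epsau'$ buffer for overflow items.

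\emph{Step 1: Geometric grouping by width.} Define thresholds $w_j := w(C)(1+\epsau')^{-j}$ for $j=0,1,\dots,m-1$ with $m := \lceil \log_{1+\epsau'}(1/\delta)\rceil$, and put $W_j := \{\,i\in\R_C : w_i\in (w_{j+1},w_j]\,\}$. For each non-empty class I tentatively introduce a horizontal sub-container of width $w_j$ and height $H_j:=\sum_{i\in W_j} h_i$, into which the items of $W_j$ can be restacked (since $w_i\le w_j$ for all $i\in W_j$). The total tentative area satisfies
\[
\sum_{j=0}^{m-1} w_j H_j \;\le\; (1+\epsau')\sum_{j=0}^{m-1}\sum_{i\in W_j} w_i h_i \;=\; (1+\epsau')\,a(\R_C),
\]
because $w_j < (1+\epsau')w_i$ for every $i\in W_j$. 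This overshoots the target $a(\R_C)$ by a factor $1+\epsau'$, which I will next offset by shortening each sub-container by a $(1-\epsau')$ factor in height.

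\emph{Step 2: Shifting in the stacking direction.} Inside each $W_j$, I split the vertical stack into $K:=\lceil 1/\epsau'\rceil$ consecutive slabs of height roughly $\epsau' H_j$, assigning each item to the slab that contains its bottom edge; items taller than $\epsau' H_j$ are promoted to private singleton slabs (there are at most $1/\epsau'$ of these per class). I then discard the slab of minimum total profit, which by an averaging argument costs at most $\epsau' p(W_j)$, and a further $\epsau' p(W_j)$ handles the overflow of items that protrude beyond their assigned slab. The surviving items of $W_j$ are packed into at most $O(1/\epsau'^{\,2})$ horizontal sub-containers of width $w_j$ whose heights sum to at most $(1-\epsau')H_j$. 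Summing the two steps:
\[
w_j(1-\epsau')H_j \;\le\; (1-\epsau')(1+\epsau')\sum_{i\in W_j} w_i h_i \;=\; (1-\epsau'^2)\sum_{i\in W_j} w_i h_i \;\le\; \sum_{i\in W_j} w_i h_i,
\]
so the total area of the new sub-containers is bounded by $\sum_j\sum_{i\in W_j} w_i h_i = a(\R_C)$.

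\emph{Step 3: Counting and feasibility.} The number of sub-containers is at most $m\cdot O(1/\epsau'^{\,2}) = \lceil \log_{1+\epsau'}(1/\delta)\rceil/\epsau'^{\,2}$. Each sub-container has width $w_j \le w(C)$; the total used height across all classes is $\sum_j (1-\epsau')H_j \le \sum_j H_j \le h(C)$, since the original stacking inside $C$ already fitted. Hence the new sub-containers can be stacked inside $C$. The total discarded profit over all classes is at most $2\epsau'\sum_j p(W_j)+\epsau' p(\R_C) = 3\epsau' p(\R_C)$, as required.

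\emph{Main obstacle.} The delicate point is forcing the precise $(1-\epsau'^{\,2})$ cancellation between the $(1+\epsau')$ width overshoot of Step 1 and the $(1-\epsau')$ height shrinkage of Step 2. This hinges on controlling the overflow caused by tall boundary items in the slab decomposition: they must be quarantined into private sub-containers (costing the extra $1/\epsau'$ factor in the container count) without inflating either the profit loss or the height sum past the respective targets. Once this bookkeeping is set up, the area, profit, and feasibility bounds fall out directly from the two geometric/linear shifting inequalities.
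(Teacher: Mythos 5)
Your overall strategy coincides with the paper's: group the rectangles of $\R_C$ into $\lceil\log_{1+\epsau'}(1/\delta)\rceil$ geometric width classes (incurring a $(1+\epsau')$ area overshoot), then recover the area by a shifting argument in the stacking direction, quarantining tall items into $O(1/\epsau')$ private containers per class. Step 1 and the counting/feasibility bookkeeping in Step 3 are fine.

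The gap is in Step 2, and it hits exactly the inequality the lemma exists to provide. Assigning each item to the fixed slab of height $\epsau' H_j$ that contains its \emph{bottom edge} does not guarantee that discarding a slab frees $\epsau' H_j$ of stack height. The run of items whose bottoms lie in $[y,y+\epsau' H_j)$ spans only from the top of the item straddling $y$ up to (at least) $y+\epsau' H_j$, so its total height is only bounded below by $\epsau' H_j-h_{\mathrm{straddle}}$, which can be made arbitrarily small even when all items are short (take an item of height exactly $\epsau' H_j$ ending just above $y$, followed by a tiny item, followed by an item whose bottom already lies in the next slab). If that tiny-content slab happens to be the cheapest one, you save essentially no height, the surviving stack still has height close to $H_j$, and the total container area is $\approx w_jH_j\le(1+\epsau')a(\R_j)$ rather than $\le a(\R_j)$; the claimed $(1-\epsau')(1+\epsau')\le 1$ cancellation never happens. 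Your ``main obstacle'' paragraph attributes the danger to tall boundary items, but promoting tall items to singletons does not fix this: the failure is caused by short items straddling slab boundaries. The repair is the one the paper uses: instead of fixed slabs, greedily partition the stack into \emph{consecutive groups of items} whose cumulative height lies in $[\epsau' h(C_j),2\epsau' h(C_j))$ (possible once every remaining item has height at most $\epsau' h(C_j)$); there are at least $1/(2\epsau')$ groups, so the cheapest has profit at most $2\epsau' p(\R_j)$ and, crucially, height at least $\epsau' h(C_j)$ by construction. The paper also gives each extracted tall item a private container of \emph{exactly} its own dimensions (no width rounding), so those contribute no excess area, and iterates the ``tall versus short'' split at most $1/\epsau'$ times; you should adopt both of these points to make the area bound go through.
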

\begin{proof}
	Without loss of generality, we prove the result only for the case of a horizontal container.
	
	Since $w_i \geq \delta$ for each rectangle $R_i \in \R_C$, we can partition the rectangles in $\R_C$ into at most $\left\lceil\log_{1 + \epsau'} (\frac{1}{\delta})\right\rceil$ groups $\R_1, \R_2, \dots$, so that in each $\R_j$ the widest rectangle has width bigger than the smallest by a factor at most $1+\epsau'$; we can then define a container $C_j$ for each group $\R_j$ that has the width of the widest rectangle it contains and height equal to the sum of the heights of the contained rectangles.
	
	Consider now one such $C_j$ and the set of rectangles $\R_j$ that it contains, and let $P := p(\R_j)$. Clearly, $w(C_j) \leq (1+\epsau') w_i$ for each $R_i \in \R_j$, and so $a(C_j) \leq (1+\epsau')a(\R_j)$. If all the rectangles in $\R_j$ have height at most $\epsau' h(C_j)$, then we can remove a set of rectangles with total height at least $\epsau' h(C)$ and profit at most $2\epsau' p(\R_j)$. Otherwise, let $\mathcal{Q}$ be the set of rectangles of height larger than $\epsau' h(C_j)$, and note that $a(Q) \geq \epsau' h(C_j)w(C_j)/(1 + \epsau')$. If the $p(\mathcal{Q}) \leq \epsau' P$, then we remove the rectangles in $\mathcal{Q}$ from the container $C_j$ and reduce its height as much as possible, obtaining a smaller container $C'_j$; since $a(C'_j) \leq a(C_j) - \epsau' a(C_j) = (1 - \epsau')a(C_j) \leq (1 - \epsau')(1 + \epsau')a(\R_j) < a(\R_j)$, then the proof is finished. Otherwise, we define one container for each of the rectangles in $\mathcal{Q}$ (which are at most $1/\epsau'$) of exactly the same size, and we still shrink the container with the remaining rectangles as before; note that there is no lost area for each of the newly defined container. Since at every non-terminating iteration a set of rectangles with profit larger than $\epsau' P$ is removed, the process must end within $1/\epsau'$ iterations.
	
	Note that the total number of containers that we produce for each initial container $C_j$ is at most $1/\epsau'^2$, and this concludes the proof.
\end{proof}

Thus, by applying the above lemma to each horizontal and each vertical container, we obtain a modified packing where the total area of the horizontal and vertical containers is at most the area of the rectangles of $\R'$ (without the short-narrow rectangles, which we will take into account in the next subsection), while the number of containers increases at most by a factor $\left\lceil\log_{1 + \epsau'} (\frac{1}{\delta})\right\rceil / \epsau'^2$.

\subsection{Packing short-narrow rectangles}\label{sec:shortnarrow}

\begin{sloppypar}
	Consider the integral packing obtained from the previous subsection, which has at most $K' := \left(\frac{2}{\delta^{10}} + (\frac{3}{\delta^2}M^2 2^{2M})^{\epsau'}\right)\left\lceil\log_{1 + \epsau'} (\frac{1}{\delta})\right\rceil / \epsau'^2$ containers. We can create a non-uniform grid extending each side of the containers until they hit another container or the boundary of the knapsack. Moreover, we also add horizontal and vertical lines spaced at distance $\epsau'$. We call \emph{free cell} each face defined by the above lines that does not overlap a container of the packing; by construction, no free cell has a side bigger than $\epsau'$. The number of free cells in this grid plus the existing containers is bounded by $K_{TOTAL} = {(2K' + 1/\epsau')}^2 = O_{\epsau', \delta}(1)$. We crucially use the fact that this number does not depend on value of $\mu$.
\end{sloppypar}

Note that the total area of the free cells is no less than the total area of the short-narrow rectangles, as a consequence of the guarantees on the area of the containers introduced so far. We will pack the short-narrow rectangles into the free cells of this grid using NFDH, but we only use cells that have width and height at least $\frac{8\mu}{\epsau'}$; thus, each short-narrow rectangle will be assigned to a cell whose width (resp. height) is larger by at least a factor $8/\epsau'$ than the width (resp. height) of the rectangle. Each discarded cell has area at most $\frac{8\mu}{\epsau'}$, which implies that the total area of discarded cells is at most $\frac{8 \mu K_{TOTAL}}{\epsau'}$. Now we consider the selected cells in an arbitrary order and pack short narrow rectangles into them using NFDH, defining a new area container for each cell that is used. Thanks to Lemma~\ref{lem:nfdhPack}, we know that each new container $C$ (except maybe the last one) that is used by NFDH contains rectangles for a total area of at least $(1 - \epsau'/4)a(C)$. Thus, if all rectangles are packed, we remove the last container opened by NFDH, and we call $S$ the set of rectangles inside, that we will repack elsewhere; note that $a(S) \leq \epsau'^2 \leq \epsau'/3$, since all the rectangles in $S$ were packed in a free cell. Instead, if not all rectangles are packed by NFDH, let $S$ be the residual rectangles. In this case, the area of the unpacked rectangles is $a(S) \leq \frac{8 \mu K_{TOTAL}}{\epsau'} + \epsau'/4 \leq \epsau'/3$, assuming that $\mu \leq \frac{\epsau'^2}{96 K_{TOTAL}}$.

In order to repack the rectangles of $S$, we define a new area container $C_S$ of height $1$ and width $\epsau'/2$. Since $a(C_S) = \epsau'/2 \geq (\epsau'/3) / (1 - 2\epsau')$, all elements from $S$ are packed in $C_S$ by NFDH, and the container can be added to the knapsack by further enlarging its width from $1 + 2\delta$ to $1 + 2\delta + \epsau'/2 < 1 + \epsau'$. 

The last required step is to guarantee the necessary constraint on the total area of the area containers, similarly to what was done in Section~\ref{sec:shrinking_horiz_vert} for the horizontal and vertical containers.

Let $D$ be any full area container (that is, any area container except for $C_S$). We know that the area of the rectangles $R_D$ in $D$ is $a(R_D) \geq (1 - \epsau')a(D)$, since each rectangle $R_i$ inside $D$ has width less than $\epsau' w(D)/2$ and height less than $\epsau' h(D)/2$, by construction. We remove rectangles from $R_D$ in non-decreasing order of profit/area ratio, until the total area of the residual rectangles is between $(1 - 4\epsau')a(D)$ and $(1 - 3\epsau')a(D)$ (this is possible, since each element has area at most $\epsau'^2 a(D)$); let $R'_D$ be the resulting set. We have that $p(R'_D) \geq (1 - 4\epsau') p(R_D)$, due to the greedy choice. Let us define a container $D'$ of width $w(D)$ and height $(1 - \epsau')h(D)$. It is easy to verify that each rectangle in $R_D$ has width (resp. height) at most $\epsau' w(D')$ (resp. $\epsau' h(D')$). Moreover, since $a(R'_D) \leq (1 - 3\epsau')a(D) \leq (1 - 2\epsau')(1 - \epsau')a(C) \leq (1 - 2\epsau')a(C')$, then all elements in $R'_D$ are packed in $D'$. By applying this reasoning to each area container (except $C_S$), we obtain property (3) of Lemma~\ref{lem:structural_lemma_augm}.

Note that the constraints on $\mu$ and $\delta$ that we imposed are $\mu \leq \frac{\delta^{10}}{76}$ (from Section~\ref{sec:containers-short-rectangles}), and $\mu \leq \frac{\epsau'^2}{96 K_{TOTAL}}$. It is easy to check that both of them are satisfied if we choose $\sal{f(x) = (\epsau' x)^C}$ for a big enough constant $C$ that depends only on $\delta$ and $\epsau'$.

\subsection{Rounding containers to a polynomial set of sizes}\label{sec:rounding_containers}
In this subsection we show that it is possible to round down the size of each horizontal, vertical or area container so that the resulting sizes can be chosen from a polynomially sized set, while incurring in a marginal loss of profit.

\begin{sloppypar}
	For a set $\R$ of rectangles, we define $WIDTHS(\R) = \{w_j \, | \, R_j \in \R\} $ and $HEIGHTS(R) = \{h_j \, | \, R_j \in \R\}$.
\end{sloppypar}

Given a finite set $P$ of real numbers and a fixed natural number $k$, we define the set $P^{(k)} = \{(p_1 + p_2 + \dots + p_l) + i p_{l+1} \, | \, p_j \in P \text{ } \forall \, j, l\le k, 0 \leq i \leq n, i\mbox{ integer}\}$; note that if $|P| = O(n)$, then $|P^{(k)}| = O(n^{k+2})$. Moreover,  if $P \subseteq Q$, then obviously $P^{(k)} \subseteq Q^{(k)}$, and if $k' \leq k''$, then $P^{(k')} \subseteq P^{(k'')}$.

\begin{lem}\label{lem:round_knapsack_container}
	Let $\eps > 0$, and let $\R$ be a set of rectangles packed in a horizontal or vertical container $C$. Then, for any $k \geq 1/\eps$, there is a set $\R' \subseteq \R$ with profit $p(\R') \geq (1 - \eps)p(\R)$ that can be packed in a container $C'$ smaller than $C$ such that $w(C') \in WIDTHS(\R)^{(k)}$ and $h(C') \in HEIGHTS(R)^{(k)}$.
\end{lem}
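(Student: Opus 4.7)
I handle the horizontal container case; the vertical case is entirely symmetric by swapping the roles of width and height. For a horizontal container, items in $\R$ are stacked vertically, so the width side is trivial: setting $w(C') := \max_{i \in \R'} w_i$ puts $w(C')$ in $WIDTHS(\R)^{(1)} \subseteq WIDTHS(\R)^{(k)}$ with $w(C') \leq w(C)$ for any $\R' \subseteq \R$. The remaining task is therefore to pick $\R' \subseteq \R$ with $p(\R') \geq (1-\eps) p(\R)$ and a value $h(C') \in HEIGHTS(\R)^{(k)}$ satisfying $\sum_{i \in \R'} h_i \leq h(C') \leq h(C)$. When $|\R| \leq k$, one can simply take $\R' := \R$ and $h(C') := \sum_{i \in \R} h_i$, a sum of at most $k$ heights from $HEIGHTS(\R)$, which lies in $HEIGHTS(\R)^{(k)}$ directly (with $l = |\R|$, $i = 0$), and is obviously at most $h(C)$.

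For the substantive case $|\R| > k$, the plan is to sort the items by non-increasing height and use the top $k$ items as the ``individual'' terms $p_1, \ldots, p_k$, contributing $\sum_{j=1}^k h_{(j)}$ to $h(C')$, while the remaining items' total height will be approximated from above by an integer multiple $i\cdot p_{k+1}$ of a single common height $p_{k+1} \in HEIGHTS(\R)$. To control the rounding, I plan to apply a linear-grouping/shifting argument on the items beyond the top $k$: partition them into $\lceil 1/\eps \rceil \leq k$ groups that are consecutive in the sorted order and each have profit roughly $\eps\, p(\R)$. By averaging, at least one such group has profit at most $\eps\, p(\R)$; dropping that group creates slack in the stack that can absorb the excess $i\cdot p_{k+1} - \sum_{i \notin T} h_i$ introduced by rounding the remaining bulk items' total height upward to a multiple of $p_{k+1}$, while keeping $h(C') \leq h(C)$. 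The natural choice for $p_{k+1}$ is the boundary height between the dropped group and the groups below it, which guarantees $p_{k+1} \in HEIGHTS(\R)$ and that the rounding excess is bounded by $p_{k+1}$ times a quantity absorbed by the dropped slack.

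The main obstacle I anticipate is coordinating three constraints simultaneously: (i) the rounding of the bulk items' total to a multiple of $p_{k+1}$ must introduce an excess absorbed by the slack freed up by the dropped group, so that $h(C') \leq h(C)$ still holds; (ii) the integer $i$ must satisfy $i \leq n$ and $p_{k+1}$ must be in $HEIGHTS(\R)$, so that $h(C')$ genuinely belongs to $HEIGHTS(\R)^{(k)}$; and (iii) the dropped profit must be at most $\eps\, p(\R)$. The hypothesis $k \geq 1/\eps$ is used crucially in the averaging step to guarantee the existence of a cheap group, and also gives the flexibility of up to $k$ individual terms $p_j$ so that the top-$k$ items contribute individually rather than being forced into the bulk form. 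The careful bookkeeping between the height of the dropped group (which determines the available slack) and the height spacing used for the rounding (which determines the excess) is, I expect, the most delicate part of the argument.
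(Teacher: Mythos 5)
Your overall architecture matches the paper's: the width is handled by shrinking to $w_{max}(\R)$, the case of few items is immediate, and in the main case the tallest items contribute as individual terms $p_1,\ldots,p_l$ while the rest are absorbed into the $i\,p_{l+1}$ term, with something cheap discarded to pay for the upward rounding. However, the specific mechanism you propose for the delicate step would fail. You take $p_{k+1}$ to be the boundary height between the dropped group and the groups \emph{below} it. The bulk items lying \emph{above} the dropped group (not among the top $k$, but taller than the boundary) all have height strictly larger than $p_{k+1}$, and neither their number nor their total height is controlled by $p_{k+1}$. Consequently $i=\lceil(\text{total remaining bulk height})/p_{k+1}\rceil$ can be far larger than $n$ (take $n-k-2$ bulk items of height about $h(C)/n$ followed by a few items of height $h(C)/n^{10}$, with the cheap group falling among the latter), and then $h(C')$ does not belong to $HEIGHTS(\R)^{(k)}$, whose definition caps $i$ at $n$ --- a constraint you yourself list as item (ii) but never discharge. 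A secondary issue: your groups are formed by profit, so when the bulk has fewer than $\lceil 1/\eps\rceil$ items the cheapest group can be empty, leaving no slack to absorb the rounding excess.

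The paper resolves both problems with a simpler device you should adopt: let $\R_{TALL}$ be the $1/\eps$ tallest rectangles and discard the single least profitable item $R_j$ of $\R_{TALL}$ (profit loss at most $\eps\,p(\R)$ by averaging over the $1/\eps$ candidates). Because $h_j$ is by construction an upper bound on the height of \emph{every} rectangle outside $\R_{TALL}$, rounding the total height of $\R\setminus\R_{TALL}$ up to a multiple $i\,h_j$ gives $i\le n$ automatically, and the rounding excess (at most $h_j$) is exactly the slack freed by deleting $R_j$, so $h(C')=h(\R_{TALL}\setminus\{R_j\})+i\,h_j\le h(\R)\le h(C)$, with $l=1/\eps-1\le k$ individual terms. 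In short: the height serving as the rounding unit must dominate all heights that get lumped into the multiple, and the cleanest way to obtain both the domination and the slack is to sacrifice one tall item rather than a group deep in the sorted order.
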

\begin{proof}
	Without loss of generality, we prove the thesis for an horizontal container $C$; the proof for vertical containers is symmetric. Clearly, the width of $C$ can be reduced to $w_{max}(\R)$, and $w_{max}(\R) \in WIDTHS(\R) \subseteq {WIDTHS(\R)}^{(k)}$.
	
	If $|\R| \leq 1/\eps$, then $\sum_{R_i \in \R} h_i \in HEIGHTS(\R)^{(k)}$ and there is no need to round the height of $C$ down. Otherwise, let $\R_{TALL}$ be the set of the $1/\eps$ rectangles in $\R$ with largest height (breaking ties arbitrarily), let $R_j$ be the least profitable of them, and let $\R' = \R \setminus \{R_j\}$. Clearly, $p(\R') \geq (1 - \eps)p(\R)$.
	Since each element of $\R' \setminus \R_{TALL}$ has height at most $h_j$, it follows that $h(\R \setminus \R_{TALL}) \leq (n - 1/\eps) h_j$. Thus, letting $i = \left\lceil h(\R' \setminus \R_{TALL}) / h_j \right\rceil \leq n$, all the rectangles in $\R'$ fit in a container $C'$ of width $w_{max}(\R)$ and height $h(C') := h(\R_{TALL}) + i h_j \in {HEIGHTS(R)}^{(k)}$. Since $h(\R_{TALL}) + i h_j \leq h(\R_{TALL}) + h(\R' \setminus \R_{TALL}) + h_j = h(\R) \leq h(C)$, this proves the result.
\end{proof}

\begin{lem}\label{lem:round_area_container}
	Let $\eps > 0$, and let $\R$ be a set of rectangles that are assigned to an area container $C$. Then there exists a subset $\R' \subseteq \R$ with profit $p(\R') \geq (1 - 3\eps)p(\R)$ and a container $C'$ smaller than $C$ such that: $a(\R') \leq a(C)$, $w(C') \in {WIDTHS(\R)}^{(0)}$, $h(C') \in {HEIGHTS(\R)}^{(0)}$, and each $R_j \in \R'$ is $\dfrac{\eps}{1-\eps}$-small for $C'$. 
\end{lem}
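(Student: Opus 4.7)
The plan has two natural steps: first, shrink the container $C$ to $C'$ whose dimensions lie in the permissible sets; second, discard a small-profit subset of $\R$ to satisfy the area constraint. Throughout I use the fact that $\R$ is assigned to an area container, so each $R_j \in \R$ satisfies $w_j \leq \eps\, w(C)$ and $h_j \leq \eps\, h(C)$.

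\textbf{Step 1 (Rounding the container).} Let $w_{\max} := \max_{R_j \in \R} w_j$ and $h_{\max} := \max_{R_j \in \R} h_j$. I set
$$w(C') := \lfloor w(C)/w_{\max}\rfloor \cdot w_{\max}, \qquad h(C') := \lfloor h(C)/h_{\max}\rfloor \cdot h_{\max}.$$
By definition $w(C')$ is an integer multiple of $w_{\max} \in WIDTHS(\R)$ with coefficient at most $n$ (capping at $n$ in the degenerate case where the floor exceeds $n$), hence $w(C') \in WIDTHS(\R)^{(0)}$; the same argument gives $h(C') \in HEIGHTS(\R)^{(0)}$. The rounding loses at most one multiple of $w_{\max}$ (resp.\ $h_{\max}$), so the assumption $w_{\max}\le \eps w(C)$ yields $w(C') \geq (1-\eps)\, w(C)$, and similarly $h(C') \geq (1-\eps)\, h(C)$. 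The smallness property follows at once from
$$\frac{w_j}{w(C')} \le \frac{w_{\max}}{(1-\eps)\, w(C)} \le \frac{\eps}{1-\eps},$$
and analogously for heights, so every $R_j \in \R$ is $\frac{\eps}{1-\eps}$-small for $C'$.

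\textbf{Step 2 (Discarding low-profit items for area).} If $a(\R) \leq a(C')$, take $\R' := \R$ and we are done. Otherwise, sort the rectangles in $\R$ by non-decreasing ratio $p_j/a(R_j)$ and let $S$ be the shortest prefix with $a(S) \geq a(\R) - a(C')$. Set $\R' := \R \setminus S$, so that $a(\R') \leq a(C') \leq a(C)$. Minimality of $S$ gives $a(S) \leq a(\R) - a(C') + \max_j a(R_j) \leq (2\eps - \eps^2)\, a(C) + \eps^2\, a(C) = 2\eps\, a(C)$, where I used $a(C') \geq (1-\eps)^2 a(C)$ and $a(R_j) \leq \eps^2 a(C)$. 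Since $S$ consists of the worst-ratio items, its average profit/area ratio is at most that of $\R$, giving
$$p(S) \leq a(S)\cdot \frac{p(\R)}{a(\R)} \leq \frac{2\eps\, a(C)}{(1-2\eps)\, a(C)}\, p(\R) \leq 3\eps\, p(\R)$$
for $\eps \le 1/6$, where the middle inequality uses $a(\R) \geq a(C') \geq (1-2\eps)\, a(C)$. Thus $p(\R') \geq (1-3\eps)\, p(\R)$, as required.

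\textbf{Expected difficulty.} The rounding step is essentially mechanical given the $\eps$-smallness built into area containers; the main subtlety is the profit accounting in Step 2, which requires that, under the assumption that discarding is necessary, the total area of $\R$ is comparable to that of $C$ (so the average-ratio bound is not too weak). The worst case is precisely when $\R$ fills $C$ almost completely, and it is there that the $(1-2\eps)$ slack between $a(C')$ and $a(C)$ controls the final $3\eps$ loss. A minor technicality is ensuring the integer coefficient used in $WIDTHS(\R)^{(0)}$ does not exceed $n$; this is handled by capping, which only strengthens the smallness guarantee and does not affect the analysis.
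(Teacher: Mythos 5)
Your proof is correct and follows essentially the same route as the paper's: round $w(C)$ and $h(C)$ down to integer multiples of $w_{\max}$ and $h_{\max}$ (losing a factor $1-\eps$ per dimension, which gives the $\frac{\eps}{1-\eps}$-smallness), then greedily discard the worst profit/area-ratio items to meet the area budget. The only differences are cosmetic --- the paper keeps a maximal best-ratio prefix of area at most $(1-2\eps)a(C)$ instead of removing a minimal worst-ratio prefix, and enforces the coefficient bound by shrinking $C$ up front rather than capping at $n$; do note that your parenthetical claim that capping ``only strengthens the smallness guarantee'' is backwards (a smaller container makes the items relatively larger), though this corner case ($n<1/\eps$) is glossed over in exactly the same way in the paper's own proof.
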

\begin{proof}
	Without loss of generality, we can assume that $w(C) \leq n w_{max}(\R)$ and $h(C) \leq n h_{max}(\R)$: if not, we can first shrink $C$ so that these conditions are satisfied, and all the rectangles still fit in $C$.
	
	Define a container $C'$ so that it has width $w(C') = w_{max}(\R) \left\lfloor w(C)/w_{max}(\R) \right\rfloor$ and height $h(C') = h_{max}(\R) \left\lfloor h(C)/h_{max}(\R) \right\rfloor$, that is, $C'$ is obtained by shrinking $C$ to the closest integer multiples of $w_{max}(\R)$ and $h_{max}(\R)$. Observe that $w(C') \in {WIDTHS(\R)}^{(0)}$ and $h(C') \in {HEIGHTS(\R)}^{(0)}$. Clearly, $w(C') \geq w(C) - w_{max}(\R) \geq w(C) - \eps w(C) = (1 - \eps) w(C)$, and similarly $h(C') \geq (1 - \eps) h(C')$. Hence $a(C') \geq (1 - \eps)^2 a(C) \geq (1 - 2\eps) a(C)$.
	
	We now select a set $\R' \subseteq \R$ by greedily choosing elements from $\R$ in non-increasing order of profit/area ratio, adding as many elements as possible without exceeding a total area of $(1 - 2\eps) a(C)$. Since each element of $\R$ has area at most $\eps^2 a(C)$, then either all elements are selected (and then $p(\R') = p(\R)$), or the total area of the selected elements is at least $(1 - 2\eps - \eps^2)a(C) \geq (1 - 3\eps)a(C)$. By the greedy choice, we have that $p(\R') \geq (1 - 3\eps) p(\R)$.
	
	Since each rectangle in $\R$ is $\frac{\eps}{1 - \eps}$-small for $C'$, this proves the thesis.
\end{proof}

By applying Lemmas \ref{lem:round_knapsack_container} and \ref{lem:round_area_container} \sal{with $\eps=\epsau'$} to all the containers \sal{and noting that $\dfrac{\epsau'}{1-\epsau'} \leq \epsau$}, we completed the proof of Lemma~\ref{lem:structural_lemma_augm}.

\begin{remark}
	Note that in the above, the size of the container is rounded to a family of sizes that depends on the rectangles inside; of course, they are not known in advance in an algorithm that enumerates over all the container packings. On the other hand, if the instance is a set $\mathcal{\R}$ of $n$ rectangles, then for any natural number $k$ we have that $WIDTHS(\R)^{(k)} \subseteq WIDTHS(\R)^{(k)}$ and $HEIGHTS(\R)^{(k)} \subseteq WIDTHS(\mathcal{\R})^{(k)}$ for any $\R \subseteq \R$; clearly, $WIDTHS(\mathcal{\R})^{(k)} \times HEIGHTS(\R)^{(k)}$ has a polynomial size and can be computed explicitly.
	
	Similarly, when finding container packings for the case with rotations, one can compute the set $SIZES(\R) := WIDTHS(\R) \cup HEIGHTS(\R)$, and consider containers of width and height in $SIZES(\R)^{(k)}$ for a sufficiently high constant $k$. 
\end{remark}

\end{document}